\long\def\comment#1{} 
\apptocmd{\normalsize}{%
  \setlength{\abovedisplayskip}{8pt}%
  \setlength{\belowdisplayskip}{8pt}%
  \setlength{\abovedisplayshortskip}{8pt}%
  \setlength{\belowdisplayshortskip}{8pt}%
}{}{}
\newtheorem{thm}{Theorem}[]
\newtheorem{assumption}{Assumption}[]
\newtheorem{lemma}{Lemma}[section]
\newtheoremstyle{remarkstyle}    
  {3pt}    
  {3pt}    
  {\normalfont}  
  {}       
  {\bfseries}  
  {.}      
  { }      
  {}       
\theoremstyle{remarkstyle}
\newtheorem{remark}{Remark}[]
\newcommand\norm[1]{\left\lVert#1\right\rVert}  
\renewcommand{\baselinestretch}{1.5}
\def\spacingset#1{\renewcommand{\baselinestretch}%
{#1}\small\normalsize} \spacingset{1}
\setlist[itemize]{noitemsep, topsep=0pt}
\setlist[enumerate]{noitemsep, topsep=0pt}
\newcommand{\defeq}{\vcentcolon=}
\DeclareMathOperator*{\rank}{rank}
\DeclareMathOperator{\E}{\mathbb{E}}
\DeclareMathOperator{\M}{\mathbb{M}}
\DeclareMathOperator{\W}{\mathbb{W}}
\DeclareMathOperator{\R}{\mathbb{R}}
\DeclareMathOperator{\I}{\mathbb{I}}
\newcommand\lam[2]{\lambda_{#1,#2}}
\renewcommand{\th}{\theta}
\newcommand{\Lam}{\Lambda}
\newcommand{\Lams}{\Lambda_k^s}
\newcommand{\Lamz}{\Lambda_k^0}
\newcommand{\Lamsp}{\Lambda_k^{s^{\prime}}}
\newcommand{\Lamzp}{\Lambda_k^{0^{\prime}}}
\newcommand{\Lamsr}{\Lambda_k^{s,r}}
\newcommand{\Lamsnr}{\Lambda_k^{s,-r}}
\newcommand{\sumkit}{\sum\limits_{k=1}^K\sum\limits_{i=1}^N\sum\limits_{t=1}^T}
\newcommand{\sumit}{\sum\limits_{i=1}^N\sum\limits_{t=1}^T}
\newcommand{\sumk}{\sum\limits_{k=1}^K}
\DeclareRobustCommand{\p}{^\prime}
\newcommand{\h}{\expandafter\hat}
\begin{document}

\begin{center}
        \pagenumbering{arabic} 
    
        {\Large{\textbf{A Data-Adaptive Factor Model Using Composite Quantile Approach}}\medskip}
        \vskip 7mm
        {\sc Seeun Park and Hee-Seok Oh\footnote{Corresponding author: heeseok@stats.snu.ac.kr}}\\
{Department of Statistics, Seoul National University, Seoul 08826, Korea}
\end{center}
\vskip 5mm
\noindent 
{\bf Abstract}: 
This paper proposes a data-adaptive factor model (DAFM), a novel framework for extracting common factors that explain the structures of high-dimensional data. 
DAFM adopts a composite quantile strategy to adaptively capture the full distributional structure of the data, thereby enhancing estimation accuracy and revealing latent patterns that are invisible to conventional factor models. 
In this paper, we develop a practical algorithm for estimating DAFM by minimizing an objective function based on a weighted average of check functions across quantiles. We also establish the theoretical properties of the estimators, including their consistency and convergence rates. Furthermore, we derive their asymptotic distributions by introducing approximated estimators from a kernel-smoothed objective function, and propose two consistent methods for determining the number of factors. Simulation studies demonstrate that DAFM outperforms existing factor models across different data distributions, and real data analyses on volatility and forecasting further validate its effectiveness.

\vskip 5mm
\noindent {\it \bf Keywords}: Multiple quantiles; Quantile factor model; Data structure; Hidden factors; Location-scale-shift model

\doublespacing
\section{Introduction} \label{sec:Introduction}

Factor models are an essential tool for reducing the dimensionality of data and summarizing its features through a few latent factors. The classical factor model described by \cite{lawley1962factor} has been extended in numerous ways to address different data characteristics and application needs, such as the approximate factor model (AFM), the dynamic factor model (DFM), and others \citep{geweke1977dynamic, chamberlain1982arbitrage,mcardle1994structural,ansari2000bayesian}. These extensions have significantly increased the flexibility and applicability of factor models, including their use for high-dimensional panel data \citep{ forni2000generalized,forni2017dynamic,stock2002forecasting}. When dealing with large datasets in factor models, principal component analysis (PCA) methods are often employed for estimation due to their computational simplicity \citep{stock2002forecasting, bai2003inferential}. However, PCA relies on ordinary least squares, which limits its applicability to data that meet specific moment constraints. Much research has been done to address this limitation. For example, \cite{he2022large} proposed a robust factor estimation method that assumes an elliptical distribution for the joint distribution of factors and idiosyncratic errors. This method applies PCA to a spatial Kendall’s tau matrix instead of the sample covariance matrix and fits a linear regression model to obtain factor scores. In addition, \cite{chen2021quantile} introduced the quantile factor model (QFM), which minimizes a loss function based on the check function rather than the traditional quadratic loss function. This approach allows QFM to extract factors corresponding to a specific quantile of data and improves the robustness of the factor model. As demonstrated by \cite{chen2021quantile}, QFM with a quantile level of 0.5 successfully captures the true factors in data with heavy-tailed distributions. In a similar vein, \cite{ando2020quantile} developed a framework designed to capture quantile-dependent factor structures tailored explicitly for application to large financial panel datasets. Unlike \cite{chen2021quantile}, their model also incorporates observable regressors.

While QFM is effective at capturing the true factors in heavy-tailed data, it may not capture some of the hidden information in the data because it only represents a single quantile of the factor structure. Additionally, selecting an inappropriate quantile level that does not align with the intrinsic characteristics of the data can lead to incomplete or misleading factor estimates. In this study, we propose a new factor model, called the data-adaptive factor model (DAFM), 
which employs a composite quantile approach that adaptively reflects the underlying structure of the entire data distribution. By aggregating information across multiple quantile levels, the proposed model more effectively captures the overall structure of the data, distinguishing itself from mean-based or single-quantile factor models.
Our concept aligns with that of \cite{zou2008composite}, which extends quantile regression to composite quantile regression by considering multiple quantile levels simultaneously. The key idea is to define an objective function that combines the check functions at individual quantile levels. For the proposed model, we estimate the common factors and quantile-dependent loadings that minimize the objective function. This study also introduces a flexible weighting scheme for the check functions; for example, these weights can be estimated based on the importance of the information at each quantile level, or they can be chosen uniformly if there is no specific rationale for the choice. By utilizing DAFM, we expect the factors to incorporate more sophisticated information and reveal some features that may be hidden at a particular single-quantile level.

We note that \cite{huang2023composite} used a similar approach to extract factors by considering multiple quantile levels; however, this model assumes that the loadings are fixed across quantiles and only the intercept varies across quantiles. In contrast, our model differs in that it allows the loadings to vary over quantiles, which accommodates a broader range of data-generating processes and provides greater flexibility in capturing the underlying structure of data. In particular, the factor model presented by \cite{huang2023composite} is a special case of the proposed model because the quantile-varying intercepts can be expressed through the quantile-varying loadings. Furthermore, \cite{huang2023composite} assumed that the idiosyncratic errors are independent and identically distributed (i.i.d.), whereas, in our study, it is sufficient for them to be independent. 
\cite{chen2025universalfactormodels} discussed a universal factor model (UFM) in their recent working paper, which considers a model form similar to DAFM. Although the factor structure of UFM appears similar to that of DAFM, the objectives and theoretical developments of the two methods differ significantly. The primary contribution of UFM is to unify AFM and QFM under a common framework, thereby showing that these models can be induced as special cases of UFM. They relax the conventional strong factor assumption and allow factors to have only a weak influence on the mean or on conditional quantiles at most quantile levels. The simulation focuses on simplified one-factor settings, verifying that the method can recover factors and remain robust to weak factors without testing different error distributions or conducting real data analysis. In contrast, DAFM is driven by the need to manage data with a broad spectrum of distributions, including those with heavy tails, heteroskedasticity, and asymmetry. We combine information across multiple quantile levels to obtain improved factors, validate performance under complex error structures, and further confirm effectiveness in economic and financial applications. Another difference lies in how the objective function and asymptotic theory are developed. DAFM explicitly defines the objective as a weighted average of check functions, using kernel smoothing only to achieve asymptotic normality. Conversely, UFM incorporates kernel smoothing into its definition and additionally applies inverse density weighting to derive asymptotic distributions, thus imposing stronger assumptions from the outset. Moreover, its implementation requires more tuning parameters, such as bandwidths.

The more specific contributions of this paper are as follows. First, we present an algorithm for estimating factors and quantile-dependent loadings by minimizing an objective function through iterative parameter updates. Second, we establish the consistency of the estimators and derive their convergence rates, ensuring that our approach provides reliable estimates. The asymptotic distributions are also investigated by introducing modified estimators that minimize a kernel-smoothed objective function. Third, we propose two consistent estimators of the number of factors: one is based on an information criterion, and the other is related to the ranks of the estimated loadings. We prove that both methods accurately identify the true number of factors with probability converging to 1. Fourth, we conduct a simulation study to evaluate the performance of the proposed model in two scenarios: a basic location-shift model with an additional error term and a location-scale-shift model where the error term is scaled by a common factor. In both cases, we show the benefits of considering multiple quantile levels compared to QFMs applied to each quantile. Finally, we apply the proposed model to two economic datasets, demonstrating its ability to extract the underlying structure of the data and its effectiveness in forecasting.

The R codes used in the experiments are available at \url{https://github.com/p-seeun/data-adaptive-factor}. All proofs are given in the supplement.


\section{The proposed model and its estimation} \label{sec:CQFM}
\subsection{Data-adaptive factor model} 

\cite{chen2021quantile} proposed the following quantile factor model (QFM) 
\[ 
Q_{X_{it}}(\tau|f_t(\tau))=\lambda^{\prime}_i(\tau) f_t(\tau),\quad i=1,\ldots,N \text{ and } t=1,\ldots,T,
\] 
where $\lambda_i(\tau)$ and $f_t(\tau) \in \mathbb{R}^r$ are loadings and factors at a given quantile $\tau \in (0,1)$. The loadings and factors can be estimated by minimizing the objective function, 
\[ 
\frac{1}{NT}\sum\limits_{i=1}^N \sum\limits_{t=1}^T \rho_{\tau}(X_{it}-\lambda_{i}^{'}f_t),
\]
where $\rho_\tau(\epsilon)=(\tau-\mathds{1}\{ \epsilon \leq 0 \})\epsilon $ is the check function. We note that the estimated factor $\hat{f}_t$ depends on $\tau$ and describes a single quantile of the data. In contrast, the proposed data-adaptive factor model (DAFM) considers the data structure simultaneously at multiple quantiles $0 < \tau_1, \tau_2, \ldots, \tau_K < 1$ to estimate the common factor that captures a more accurate distributional
features of the data. To do this, we model that the quantiles of data are described by a common factor structure with quantile-dependent loadings, denoted as 
\begin{equation} \label{eq:CQFM str}
    Q_{X_{it}}(\tau_k|f_t) = \lambda_{k,i}^{\prime} f_t,\quad i=1,\ldots,N \text{ and } t=1,\ldots,T, 
\end{equation}
for each quantile index $k=1,\ldots,K$. This modeling is reasonable, as it encompasses both the basic form of a factor model and more complex variations. For example, a data-generating process with three factors, 
$X_{it}=\lambda_{1i}f_{1t}+\lambda_{2i}f_{2t}+\lambda_{3i}f_{3t}+\epsilon_{it}$, can be expressed as
\[
Q_{X_{it}}(\tau_k|f_t) = [\lambda_{1i} \, \lambda_{2i} \, \lambda_{3i} \, Q_{\epsilon_{it}}(\tau_k)] \cdot [f_{1t} \, f_{2t} \, f_{3t} \, 1]^{\prime},\quad k=1,\ldots,K.
\]
Also, for a location-scale-shift model where the error term is scaled by a common factor, $X_{it}=\lambda_{1i}f_{1t}+\lambda_{2i}f_{2t}+\lambda_{3i}f_{3t}\epsilon_{it}$, it can be expressed as
\begin{equation} \label{eq:LS CQFM}
    Q_{X_{it}}(\tau_k|f_t) = [\lambda_{1i} \; \lambda_{2i} \; \lambda_{3i}\cdot Q_{\epsilon_{it}}(\tau_k)] \cdot [f_{1t} \; f_{2t} \; f_{3t}]^{\prime},\quad k=1,\ldots,K.
\end{equation} 
These two models will be discussed in detail in the simulation study. For implementing DAFM, we define the objective function based on $K$ quantile levels as  
\begin{equation} \label{eq:obj ftn}
\frac{1}{NT}\sum\limits_{k=1}^K \sum\limits_{i=1}^N \sum\limits_{t=1}^T  w_k \rho_{\tau_k}(X_{it}-\lambda_{k,i}^{'} \, f_t),
\end{equation}
where $w_k$ denotes the positive weight of each check function at quantile level $\tau_k$. The estimators of loadings $\lambda_{1,1},\ldots,\lambda_{1,N},\ldots,\lambda_{K,1},\ldots,\lambda_{K,N}$ and factors $f_1, \ldots f_T$ in (\ref{eq:CQFM str}) are obtained by minimizing the objective function of (\ref{eq:obj ftn}). It is important to note that the factors $\{f_t\}$, unlike the loadings, do not depend on the quantile index $k$. These are the composite quantile factors obtained by aggregating the loss functions across multiple quantile factor models at the given quantile levels, which describe the overall data distribution. Additionally, for $K=1$, the proposed model reduces to the QFM structure of \cite{chen2021quantile}.

Since DAFM considers multiple quantile levels, the extracted factors are expected to capture more information on data, such as skewness and heavy tails, while uncovering features that might be hidden when using only a single quantile level. For example, if we apply QFM at $\tau=0.5$ to the location-scale-shift model with three factors of (\ref{eq:LS CQFM}) and the median of error $\epsilon_{it}$ is close to zero, the third factor $f_{3t}$ may not be well-identified. Conversely, simultaneously incorporating quantiles distant from 0.5 can significantly enhance the identification of $f_{3t}$ and improve the overall factor estimation.

In this paper, we adopt a fixed-effects methodology that treats the true factors as given parameters rather than random variables. This approach is commonly used in some studies of factor models, particularly those involving quantiles (see, e.g., \cite{ma2021estimation}, \cite{ando2020quantile}, \cite{chen2021quantile}, and \cite{kong2022matrix}). While extending it to a random setting poses challenges to the technical proofs of its theoretical properties, it remains theoretically valid and practically useful within the current framework. We leave further exploration of the random setting for future work. Here, we assume that the number of factors, $r$, is known; how to estimate $r$ is discussed in Section \ref{sec:Factor number}. 

\subsection{Normalization of factors and loadings} \label{subsec:Normalization}

To rewrite the model of (\ref{eq:CQFM str}) in a matrix form, let $F=[ f_1,\ldots,f_T ]^{\prime} \in \mathbb{R}^{T \times r}$ and $\Lambda_k=[\lambda_{k,1},\ldots,\lambda_{k,N}]^{\prime} \in \mathbb{R}^{N \times r},~k=1,\ldots,K$. The matrix representing the $\tau_k$-th quantile of $X$ given $F$, denoted as $Q_X(\tau_k|F)$, is given by
$Q_X(\tau_k|F) = \Lambda_k F^{\prime}.$ It is a well-known fact that the loadings and factors are not separately identifiable \citep{bai2002determining}, that is, $\Lambda_k F^\prime = \Lambda_k H H^{-1} F^{\prime}$ for an arbitrary $r \times r$ invertible matrix $H$, so a normalization method is needed to determine $\Lambda_k$ and $F$ \citep{bai2013principal}. We take the following normalization,
\begin{equation} \label{eq:normalization}
    \frac{F^{\prime}F}{T} = I_r \; \text{ and } \; \frac{\Lambda_{k^*}^{\prime} \Lambda_{k^*} }{N}\; \text{is a diagonal matrix with non-increasing diagonal elements}
\end{equation}
for a specific quantile index $k^* \in \{1,\ldots,K\}$. The condition about $k^*$ will be mentioned in Section \ref{sec:Theoretical properties} when analyzing the theoretical properties of the estimators. This normalization allows to identify $\Lambda_{k^*}$ and $F$, and once $F$ is determined, the loadings $\Lambda_k$ for all $k \neq k^{*}$ can be also identified. 

\subsection{Estimation algorithm} \label{subsec:Estimation of the model}

We present an algorithm to minimize the objective function of (\ref{eq:obj ftn}). The non-convexity of the objective function with respect to the parameters makes it difficult to find a global minimizer. However, given $F=[f_1, \ldots, f_T]'$, the objective function becomes convex for each $\lambda_{k,i}$. Similarly, when $\Lambda_k=[\lambda_{k,1}, \ldots, \lambda_{k,N}]^{\prime}$ is given for $k=1,\ldots,K$, the function becomes convex with respect to each $f_t$. Thus, we propose the following iterative steps to exploit these convexities:
\begin{adjustwidth}{3.5em}{0em}
\begin{enumerate}
    \item[Step 1.] Choose an initial parameter $F^{(0)}=[f_1^{(0)}, \ldots, f_T^{(0)}]^{\prime}$ and set $l=1$.
    \item[Step 2.] Given $F^{(l-1)}=[f_1^{(l-1)},\ldots,f_T^{(l-1)}]^{\prime}$, estimate the loading matrices $\Lambda^{(l-1)}_k=[\lambda_{k,1}^{(l-1)},\ldots,\lambda_{k,N}^{(l-1)}]^{\prime}$ for $k=1,\ldots,K$ by minimizing
    $\sum\limits_{k=1}^{K}\sum\limits_{t=1}^{T}\sum\limits_{i=1}^{N} w_k\rho_{\tau_k}(X_{it}-\lambda^{'}_{k,i}f^{(l-1)}_t)$.
    \item[Step 3.] Given $\Lambda_k^{(l-1)}$ for $k=1,\ldots,K$, estimate the factor matrix $F^{(l)}$ by minimizing
     $\sum\limits_{k=1}^{K}\sum\limits_{t=1}^{T}\sum\limits_{i=1}^{N} w_k\rho_{\tau_k}(X_{it}-\lambda^{(l-1)'}_{k,i}f_t)$.
    Then, set $l=l+1$.
    \item[Step 4.] Repeat Steps 2 and 3 until the objective function converges.
    \item[Step 5.]  Normalize the final factor matrix $F^{(L)}$ and the loading matrices $\Lambda_1^{(L)},\ldots,\Lambda_K^{(L)}$ according to the normalization condition in (\ref{eq:normalization}), where $L$ denotes the total number of iterations.
\end{enumerate}
\end{adjustwidth}
\vspace{3mm}

We have some remarks about Steps 2, 3, and 5. 
\begin{itemize}
\item[(i)] The optimization problem in Step 2 is equivalent to solving
\[
 \lambda^{(l-1)}_{k,i}=\underset{\lambda}{\operatorname{argmin}}\sum\limits_{t=1}^{T} \rho_{\tau_k}(X_{it}-\lambda^{'}f_t^{(l-1)})
\]
for each $k$ and $i$, which is also equivalent to fitting the $\tau_k$-th quantile regression between $X_{it}$ and $f_t^{(l-1)}$ for $t=1,\ldots,T$.

\item[(ii)] The optimization problem in Step 3 can be expressed as
\begin{center}
    $f^{(l)}_{t}=\underset{f}{\operatorname{argmin}}\sum\limits_{k=1}^{K}\sum\limits_{i=1}^{N} w_k\rho_{\tau_k}(X_{it}-\lambda_{k,i}^{(l-1)'}f)$
\end{center}
for each $t$. This problem can be solved by the alternating direction method of multipliers (ADMM) algorithm after reformulating it as
\begin{center}
    $\underset{f,\epsilon}{\operatorname{minimize}}\sum\limits_{k=1}^K\sum\limits_{i=1}^N \rho_{\tau_k}(\epsilon_{ki})$~~subject to $\Lambda^*f+\epsilon=X^*$,
\end{center}
where
\begin{align*}
    \Lambda^{*}&=(w_1\lambda_{1,1}^{(l)}, \ldots, w_1\lambda_{1,N}^{(l)}, \ldots, w_K\lambda_{K,1}^{(l)}, \ldots, w_K\lambda_{K,N}^{(l)})^{\prime}\in \mathbb{R}^{NK \times r},\\
    \epsilon&= (\epsilon_{11},\ldots,\epsilon_{1N},\ldots,\epsilon_{K1},\ldots,\epsilon_{KN})^{\prime} \in \mathbb{R}^{NK},\\
    X^{*}&=(w_1 X_{1t}, \ldots,w_1 X_{Nt}, \ldots, w_K X_{1t}, \ldots, w_K X_{Nt})^\prime \in \mathbb{R}^{NK}.
\end{align*} 
Note that the constraint implies $w_k \lambda_{k,i}^{\prime} f_t+ \epsilon_{ki}=w_k X_{it}$ in elementwise form. \cite{pietrosanu2021advanced} suggested the ADMM algorithm for estimating the composite quantile regression. We have partially adapted and reformulated their algorithm for this study.

\item[(iii)] Suppose that $F^{(L)}$ and $\Lambda_1^{(L)},\ldots,\Lambda_K^{(L)}$ are obtained after Step 4. The normalized estimators according to (\ref{eq:normalization}), denoted as $\Tilde{F}^{(L)}$ and $\Tilde{\Lambda}^{(L)}_1,\ldots,\Tilde{\Lambda}^{(L)}_K$, are obtained as follows. 
First, we perform the diagonalization, 
\[
\left(\frac{F^{(L)^{\prime}} F^{(L)}}{T}\right)^{\frac{1}{2}} \cdot \left(\frac{\Lambda_{k^*}^{(L)^{\prime}} \Lambda_{k^*}^{(L)}}{N}\right) \cdot \left(\frac{F^{(L)^{\prime}} F^{(L)}}{T}\right)^{\frac{1}{2}} = U D U^{\prime}, 
\]
and then define $H$ as $H = \left(\frac{F^{(L)^{\prime}} F^{(L)}}{T}\right)^{-\frac{1}{2}} U$. Using $H$, the normalized factors and loadings are given by
$\tilde{F}^{(L)} = F^{(L)} H$ and $\tilde{\Lambda}^{(L)}_1 = \Lambda^{(L)}_1 \left(H^{-1}\right)^{\prime}, \ldots, \tilde{\Lambda}^{(L)}_K = \Lambda^{(L)}_K \left(H^{-1}\right)^{\prime}$.  It can be easily verified that $\Tilde{F}^{(L)}$ and $\Tilde{\Lambda}^{(L)}_1,\ldots,\Tilde{\Lambda}^{(L)}_K$ satisfy the normalization condition, and that $\Tilde{\Lambda}^{(L)}_k \Tilde{F}^{(L)^\prime} = \Lambda_k^{(L)}F^{(L)^{\prime}}$ holds for all $k=1,\ldots,K$. 
\end{itemize}

\section{Theoretical properties of the proposed estimators} \label{sec:Theoretical properties}


\subsection{Consistency and convergence rate}

We denote the true factors and loadings as $\{f_t^0:t=1,\ldots,T\}$ and $\{\lambda_{k,i}^0: k=1,\ldots,K \text{ and } i=1,\ldots,N\}$, respectively. Define a vector of loadings and factors as 
\[
\theta=\big(\lambda_{1,1}^{\prime},\ldots,\lambda^{\prime}_{1,N},\ldots,\lambda_{K,1}^{\prime},\ldots,\lambda^{\prime}_{K,N}, f_1^{\prime}, \ldots, f_T^{\prime}\big)^\prime \in \mathbb{R}^{(KN+T)r},
\]
and the true parameter as 
\[
\theta^0=\big(\lambda_{1,1}^{0^\prime},\ldots,\lambda_{1,N}^{0^\prime},\ldots,\lambda_{K,1}^{0^\prime},\ldots,\lambda_{K,N}^{0^\prime}, f_1^{0^\prime}, \ldots, f_T^{0^\prime}\big)^\prime \in \mathbb{R}^{(KN+T)r}.
\]
Let $\mathcal{A}$, $\mathcal{F} \subset \mathbb{R}^{r}$ be the sets of possible loadings and factors, respectively. We further define a parameter space as
\[
\Theta^r=\big\{\theta \in \mathbb{R}^M: \lambda_{k,i} \in \mathcal{A}, f_t \in \mathcal{F} \text{ for all } k,i,t; \{\lambda_{k,i}\} \text{ and } \{f_t\} \text{ satisfy the normalizations in (\ref{eq:normalization})} \big\}.
\]
Let $M_{NT}(\theta)$ denote the objective function of (\ref{eq:obj ftn}), i.e.,
\[ 
M_{NT}(\theta) = \frac{1}{NT}\sum\limits_{k=1}^K \sum\limits_{i=1}^N \sum\limits_{t=1}^T  w_k\rho_{\tau_k}\big(X_{it}-\lambda_{k,i}^{'} \, f_t\big). 
\]
The estimator $\hat{\theta}=\big(\hat{\lambda}_{1,1}^{\prime},\ldots,\hat{\lambda}_{1,N}^{\prime},\ldots,\hat{\lambda}_{K,1}^{\prime},\ldots,\hat{\lambda}_{K,N}^{\prime}, \hat{f}_1^{\prime}, \ldots, \hat{f}_T^{\prime}\big)^\prime$ is then defined as
\[
\hat{\theta} = \underset{\theta \in \Theta^r}{\operatorname{argmin}} \, M_{NT}(\theta).
\]
Let $\Lambda_k^0=[\lambda^0_{k,1},\ldots,\lambda^0_{k,N}]^{\prime}$ and $F^0=[f^0_1,\ldots,f^0_T]^{\prime}$ denote the true matrices of loadings and factors, respectively, and $\hat{\Lambda}_k=[\hat{\lambda}_{k,1},\ldots,\hat{\lambda}_{k,N}]^{\prime}$ and $ \hat{F}=[\hat{f}_1,\ldots,\hat{f}_T]^{\prime}$ be the corresponding estimators. Additionally, let $L_{NT}=\min\{\sqrt{N},\sqrt{T}\}$ and $\mbox{sgn}(A)$ denote a diagonal matrix where each diagonal element is the sign of the corresponding diagonal element of matrix $A$. The norm $\|\cdot\|$ denotes the Frobenius norm, defined as $\|A\|_F \defeq \sqrt{\operatorname{Trace}(A^\prime A)}$.

We establish the consistency of the estimator $\hat{\theta}$ under the following assumptions. 

\begin{assumption}\label{assump:consistency}

\begin{itemize}\setlength{\itemsep}{0ex}
\item[(a)] $\mathcal{A}$ and $\mathcal{F}$ are compact.
\item[(b)] $\theta_0 \in \Theta^r$ and for $\Lambda^{0^{\prime}}_{k^*} \Lambda^0_{k^*}/N = \text{diag}(\nu_1(N),\ldots,\nu_r(N))$, $\nu_i(N) \rightarrow \nu_i$ as $N \rightarrow \infty$, where $\nu_1 > \nu_2 > \cdots >\nu_r>0$. 
\item[(c)] Let $h_{it}$ be the conditional probability density function of $x_{it}$ given $f_t^0$. For any compact set $C\subset \mathbb{R}$, there exists some constant $c$ such that $h_{it}(x) \geq c$, $\forall x \in C$ for all $i,~t$. 
\item[(d)] Given $\{f_t^0: t=1,\ldots,T\}$, for any $k$, the idiosyncratic error $\epsilon_{k,it}=X_{it}-\lambda_{k,i}^{\prime}f_t$ is independent across $i,~t$.
\end{itemize}
\end{assumption}

\begin{thm}[Consistency]\label{thm:consistency} 
Under Assumption \ref{assump:consistency},
\[
  \big\|\hat{F}-F^0 \hat{S}\big\|/\sqrt{T}=O_p\big(L_{NT}^{-1}\big) \text{\, and \,} \big\|\hat{\Lambda}_k- \Lambda^0_k \hat{S}\big\|/\sqrt{N} = O_p\big(L_{NT}^{-1}\big) \, , \, k=1,\ldots,K, 
\]
where $\hat{S}=\mbox{sgn}\big(\hat{F}^{\prime}F^0/T\big)$. 
\end{thm}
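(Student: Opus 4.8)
The plan is to follow the standard M-estimation route for factor models with non-smooth objective functions, adapted to the composite-quantile setting. First I would show that the estimator is consistent in an averaged sense, i.e.\ that $\frac{1}{NT}\sum_{k}\sum_{i}\sum_{t}\big(\hat\lambda_{k,i}'\hat f_t - \lambda_{k,i}^{0\prime}f_t^0\big)^2 = o_p(1)$, by exploiting that $\hat\theta$ minimizes $M_{NT}$ and that $M_{NT}(\theta)-\E[M_{NT}(\theta)]$ converges uniformly over the compact parameter space $\Theta^r$ (Assumption \ref{assump:consistency}(a)), with the uniform law of large numbers justified by independence of the $\epsilon_{k,it}$ across $i,t$ (Assumption \ref{assump:consistency}(d)) and the Lipschitz-in-parameter property of the check function. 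The population objective $\E[M_{NT}(\theta)]$ is minimized (up to the identification ambiguity) at $\theta^0$ because $\rho_{\tau_k}$ is minimized in expectation at the conditional $\tau_k$-quantile; the density lower bound in Assumption \ref{assump:consistency}(c) gives a local quadratic lower bound of the form $\E[M_{NT}(\theta)] - \E[M_{NT}(\theta^0)] \gtrsim \frac{1}{NT}\sum_{k,i,t}\big(\lambda_{k,i}'f_t - \lambda_{k,i}^{0\prime}f_t^0\big)^2$ on the relevant region, which converts the objective gap into control of the averaged squared discrepancy.

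Second, I would upgrade this averaged-discrepancy bound to the stated Frobenius-norm rates. Writing the averaged discrepancy as $\frac{1}{NT}\sum_k \|\hat\Lambda_k\hat F' - \Lambda_k^0 F^{0\prime}\|^2$, one uses the normalization in (\ref{eq:normalization}) together with Assumption \ref{assump:consistency}(b) — in particular that $\Lambda_{k^*}^{0\prime}\Lambda_{k^*}^0/N$ converges to a diagonal matrix with \emph{distinct} positive eigenvalues $\nu_1>\cdots>\nu_r>0$ — to pin down the rotation. The distinct-eigenvalue condition is exactly what forces the rotation matrix relating $\hat F$ to $F^0$ to be (asymptotically) a signed permutation that reduces to $\hat S = \operatorname{sgn}(\hat F'F^0/T)$; an argument via the singular value decomposition of $\hat F'F^0/T$ and a perturbation bound (Davis–Kahan type) for the eigenvectors of the perturbed matrix $\hat\Lambda_{k^*}'\hat\Lambda_{k^*}/N$ relative to $\Lambda_{k^*}^{0\prime}\Lambda_{k^*}^0/N$ gives $\|\hat F - F^0\hat S\|^2/T$ and $\|\hat\Lambda_k - \Lambda_k^0\hat S\|^2/N$ both bounded by a constant times the averaged squared discrepancy plus lower-order terms.

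Third, to obtain the sharp $O_p(L_{NT}^{-1})$ rate rather than merely $o_p(1)$, I would carry out a refined empirical-process argument on the centered objective: expand $M_{NT}(\theta)-M_{NT}(\theta^0)$, use a Knight-type identity or a direct Bahadur-style linearization of the check-function difference, and bound the stochastic part $\frac{1}{NT}\sum_{k,i,t} w_k\big[\rho_{\tau_k}(\epsilon_{k,it}-\delta_{k,it}) - \rho_{\tau_k}(\epsilon_{k,it})\big]$ minus its conditional mean by a maximal inequality over shells $\{\theta: 2^{j-1}r_{NT} \le \|\theta-\theta^0\| \le 2^j r_{NT}\}$. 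Balancing the resulting $O_p\big(\sqrt{(KN+T)r}\cdot \text{(shell radius)}/\sqrt{NT}\big)$ stochastic fluctuation against the quadratic lower bound $\gtrsim (\text{shell radius})^2$ from step one yields the rate $L_{NT}^{-1} = \min\{\sqrt N,\sqrt T\}^{-1}$, after translating parameter-space norms back to the matrix norms via the normalization.

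The main obstacle I anticipate is the combination of non-smoothness with the high-dimensional, growing parameter vector: unlike classical quantile regression, here the number of parameters $(KN+T)r$ grows with the sample, so the empirical-process bounds must be uniform over a non-compact-in-dimension family, and the usual bracketing/chaining arguments need to be executed carefully (as in \cite{chen2021quantile}) with the additional bookkeeping of $K$ quantile levels and weights $w_k$. A secondary delicate point is establishing the quadratic lower bound uniformly: one must rule out directions in which $\lambda_{k,i}'f_t$ stays close to $\lambda_{k,i}^{0\prime}f_t^0$ for all $k$ while $\theta$ is far from $\theta^0 \hat S$ — this is where the identification via Assumption \ref{assump:consistency}(b) at the single index $k^*$, propagated to the other $k$ through the shared factor $F$, does the real work, and checking that it survives the rotation/sign ambiguity is the crux.
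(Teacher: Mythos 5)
Your proposal follows essentially the same three-step route as the paper: (i) establish $d(\hat\theta,\theta_0)=o_p(1)$ from uniform convergence of the centered objective plus a quadratic lower bound from the density condition; (ii) convert the averaged discrepancy $d(\cdot,\theta_0)$ into the individual Frobenius-norm bounds via the normalization and the distinct-eigenvalue condition at $k^*$ (the paper's Lemma 2, proved with Weyl/eigenvector-perturbation arguments of exactly the Davis–Kahan type you describe); and (iii) obtain the $L_{NT}^{-1}$ rate by a shell-peeling argument balancing the local entropy bound $\E[\sup_{\Theta^r(\delta)}|\W_{NT}|]\lesssim \delta/L_{NT}$ against the quadratic gap. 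The only cosmetic difference is that the paper implements the chaining via $\psi_2$-Orlicz norms, packing numbers, and Hoeffding's inequality (Lemmas 1 and 3 of the supplement) rather than invoking bracketing, but the structure, the role of each assumption, and the rate arithmetic all match.
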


\begin{remark}
Assumption \ref{assump:consistency}(a) ensures that the norm of each $f_t$ and $\lambda_{k,i}$ constituting the elements of $\Theta^r$ is bounded and is also used in the proof to establish the separability of the stochastic process for $\theta$. Assumption \ref{assump:consistency}(b) is necessary for ordering the factors, as applied in \cite{bai2003inferential}. 
Assumption 1(d) is required to apply Hoeffding's inequality in the proof and can also be found in \cite{ando2020quantile} and \cite{chen2021quantile}.
\end{remark}

\begin{remark}
Compared to the proof in \cite{chen2021quantile}, our analysis presents additional technical challenges due to the composite nature of the objective function. While \cite{chen2021quantile} dealt with a single quantile level, enabling the use of quantile-specific empirical process techniques, our framework involves multiple quantile levels, which require controlling the behavior of aggregated quantities across quantiles. As a result, key elements such as estimation errors and distances become more complex, and additional technical steps are involved in the proofs. For instance, to establish that $\underset{\theta \in \Theta^r}{sup}|\mathbb{W}_{NT}(\theta)|=o_p(1)$ in Lemma 1 of the supplement, we apply the Cauchy–Schwarz inequality to derive a new upper bound on the probability related to $\mathbb{W}_{NT}(\theta)$ after invoking Hoeffding's inequality (see the definition of $\mathbb{W}_{NT}(\theta)$ on page 1 of the supplement).
    
\end{remark}

\begin{remark}\label{remark:b} In practice, $k^*$ is unknown. However, it is not essential to identify $k^*$ in the normalization step to ensure consistency for the following reason: if we assume only the existence of $k^*$ such that Assumption \ref{assump:consistency}(b) is satisfied when the estimators are normalized with respect to $k^*$ as in (\ref{eq:normalization}), then the consistency of the estimator still holds, regardless of the quantile index $k$ selected for the condition (\ref{eq:normalization}) in estimation. Suppose that we have $\check{F}, \check{\Lambda}_1, \ldots, \check{\Lambda}_K$ by normalizing the minimizer of $M_{NT}(\theta)$ with an arbitrary quantile level index instead of $k^*$ that is unknown. Define $\check{U}$ from the diagonalization of $\frac{\check{\Lambda}^{\prime}_{k^*}\check{\Lambda}_{k^*}}{N}$, $\check{U} \check{D} \check{U}^{\prime}.$ It can be verified that the new estimators $\bar{F}=\check{F}\check{U}, \bar{\Lambda}_1 = \check{\Lambda}_1 \check{U}, \ldots, \bar{\Lambda}_K=\check{\Lambda}_K \check{U}$ satisfy the original normalization condition (\ref{eq:normalization}) with $k^*$. Then, for $\check{S}:=\mbox{sgn}((\check{F}{\check{U}})^{\prime}F^0/T)=\mbox{sgn}(\bar{F}^{\prime}F^0/T)=:\bar{S}$, it holds that 
\[
   \|\check{F}-F^0 \check{S}{\check{U}}^{\prime}\|/\sqrt{T} = \|\bar{F}-F^0 \bar{S}\|/\sqrt{T} =   O_p\big(L_{NT}^{-1}\big).
\]
For loading matrices, the same result holds as   
\[
\|\check{\Lambda}_k- \Lambda^0_k \check{S}{\check{U}}^{\prime}\|/\sqrt{N} = O_p(L_{NT}^{-1}) \, , \, k=1,\ldots,K.
\]
The above facts imply that the estimators, when normalized with respect to an arbitrary $k$, are consistent for the true parameters up to a rotation matrix.
\end{remark}

\subsection{Asymptotic distributions}
A significant challenge in achieving the asymptotic distributions of the estimators is that the objective function, $M_{NT}(\theta)$, is not differentiable and, therefore, can not be expanded using the Taylor series method. In the literature, a common strategy to address this difficulty is using a kernel smoothing technique to make the objective function differentiable. For instance, \cite{horowitz1998bootstrap} modified the least absolute deviations (LAD) objective function by applying kernel smoothing to attain asymptotic refinements for the bootstrap of a median regression model. \cite{galvao2016smoothed} investigated the asymptotic properties of a new estimator, a fixed effect smoothed quantile regression (FE-SQR) estimator, obtained by minimizing the smoothed quantile regression objective function. Furthermore, \cite{chen2021quantile} employed a similar method by introducing kernel smoothing to analyze the asymptotic distributions of estimators in their quantile factor model. Following this approach, we introduce a new objective function using kernel smoothing to handle the non-smoothness of the original objective function, 
\[
S_{NT}(\theta) = \frac{1}{NT}\sum\limits_{k=1}^K \sum\limits_{i=1}^N \sum\limits_{t=1}^T  w_k\varrho_{\tau_k}\big(X_{it}-\lambda_{k,i}^{'} \, f_t\big),
\]
where
$\varrho_{\tau_k}(\epsilon) = \left\{\tau_k-K\left( \frac{\epsilon}{h} \right) \right\}\cdot \epsilon.$
Here, $K(\epsilon) = \int_\epsilon^1 k(s) \,ds$ serves as the survival function of $k$, and $k(\cdot)$ is a continuous kernel function with support on $[-1,1]$. The bandwidth parameter $h$ satisfies $h \rightarrow 0$ as $N,T \rightarrow \infty$. The smoothness of $S_{NT}(\theta)$ is achieved by replacing the indicator function $\mathds{1}\{\epsilon \leq 0\}$ in the check function $\rho_{\tau_k}(\epsilon)$ with the smoothed term $K(\epsilon/h)$. We then define the new estimator as
$
\tilde{\theta}=\big(\tilde{\lambda}_{1,1}^{\prime},\ldots,\tilde{\lambda}^{\prime}_{1,N},\ldots,\tilde{\lambda}_{K,1}^{\prime},\ldots,\tilde{\lambda}^{\prime}_{K,N}, \tilde{f}_1^{\prime}, \ldots, \tilde{f}_T^{\prime}\big)^\prime \in \mathbb{R}^{(KN+T)r},
$
which is obtained by minimizing the new objective function,
$\tilde{\theta} = \underset{\theta \in \Theta^r}{\operatorname{argmin}} \, S_{NT}(\theta).$ 
To derive the asymptotic distributions of $\tilde{\lambda}_{k,i}$ and $\tilde{f}_t$,
we define
\[
\Psi_{N,t} = \frac{1}{N}\sum\limits_{k=1}^K \sum\limits_{i=1}^N w_k h_{it}(\lambda_{k,i}^{0^{\prime}}f_t^0)\lambda_{k,i}^0 \lambda_{k,i}^{0^\prime} 
\text{\; and \;} 
\Phi_{T,k,i} = \frac{1}{T} \sum\limits_{t=1}^T h_{it}(\lambda_{k,i}^{0^{\prime}}f_t^0)f_t^0 f_t^{0^\prime},
\]
where $h_{it}$ is the conditional density function of $x_{it}$ given $f^0_t$. For the details of $h_{it}$, see  Assumptions 1(c) and 2(e). Furthermore, we define their limits as
$\Psi_t = \underset{N \rightarrow \infty}{\lim}  \Psi_{N,t}$ and $
\Phi_{k,i} = \underset{T \rightarrow \infty}{\lim} \Phi_{T,k,i}.$
For any function $g$, we write
$g^{(j)}(s) = \left( \frac{\partial}{\partial s}\right)^j g(s)$ for $ j\geq 1$ and $g^{(0)}(s)=g(s).$
For $A>0$ and $B>0$, the symbol $A \sim B$ implies that there exists a constant $c>0$ such that 
$c^{-1}A \leq B \leq cA$.
\begin{assumption}\label{assump:normality}
\begin{itemize} \setlength{\itemsep}{-0.5ex}
\item[(a)] For any $t,~k$, and $i$, $\Psi_t \succ 0$ and $ \Phi_{k,i} \succ 0$.
\item[(b)] For any $k$ and $k^\prime \in \{1,\ldots,K\}$, $\underset{N \rightarrow \infty}{\lim}\frac{\Lambda_k^{0^\prime}\Lambda_{k^{\prime}}^0}{N} = \Sigma_{k k^{\prime}}$.
\item[(c)] \mbox{$\lambda_{k,i}^0$ is an interior point of $\mathcal{A}$ for each $k,~i$, and $f_t^0$ is an interior point of $\mathcal{F}$ for each $t$.}
\item[(d)] $k(\cdot)$ is a kernel function of order $m$, i.e, 
\[
\int k(s) ds = 1 \text{\, and \,} \int s^j k(s) ds = 0 , \; j=0,\ldots,m-1.
\]
Also, $k(s) = k(-s)$ and it is twice continuously differentiable.
\item[(e)] The density function $h_{it}$ is $m+2$ times continuously differentiable. 
For any compact set $C\subset \mathbb{R}$, $h_{it}^{(j)}$ for $j=0,\ldots,m+2$ are uniformly bounded on $C$ across all $i$ and $t$; i.e., there exist constants $h_l$ and $h_u$, depending on $C$, such that for all $i$ and $t$:
\[
h_l \leq h_{it}^{(j)}(s)\leq h_u, \forall s \in C, \text{\; for\; } j=0,\ldots,m+2 .
\]
\item[(f)] For sufficiently large $N$ and $T$, $N \sim T$ and $h \sim T^{-c}$ for $\frac{1}{m}<c<\frac{1}{6}$.
\end{itemize}
\end{assumption}

\begin{thm}[Asymptotic distributions]\label{thm:normality} 
Let $\tilde{S}=\mbox{sgn}\left(\tilde{F}^{\prime} F_0 / T\right)$. Under Assumptions \ref{assump:consistency} and \ref{assump:normality}, for each $t$,
\[
\sqrt{N} \left( \tilde{f}_t-\tilde{S}f_t^0 \right) \xrightarrow{d} \mathcal{N}\left(0, \Psi_t^{-1} \sum\limits_{k=1}^K\sum\limits_{k^\prime=1}^K \left\{ w_k w_{k^\prime} \operatorname{min}(\tau_k, \tau_{k^\prime}) (1-\operatorname{max}(\tau_k, \tau_{k^\prime})) \Sigma_{k k^\prime} \right\} \Psi_t^{-1} \right),
\]
and for each $k$ and $i$,
\[
\sqrt{T}\left(\tilde{\lambda}_{k,i}-\tilde{S} \lambda_{k, i}^0\right) \xrightarrow{d} \mathcal{N}\left(0, \tau_k(1-\tau_k) \Phi_{k,i}^{-2}\right).
\]
\end{thm}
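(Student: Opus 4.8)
The plan is to exploit the smoothness of $S_{NT}$. Under Assumption \ref{assump:normality}(d)--(e) the map $\theta\mapsto S_{NT}(\theta)$ is twice continuously differentiable, and by Assumption \ref{assump:normality}(c) together with consistency the minimizer $\tilde\theta$ is eventually interior, so $\tilde\theta$ satisfies the first-order conditions, which with the smoothed score $\varrho_{\tau_k}'(\epsilon)=\tau_k-K(\epsilon/h)+(\epsilon/h)k(\epsilon/h)$ read
\[
\sum_{t=1}^{T}\varrho_{\tau_k}'\!\big(X_{it}-\tilde\lambda_{k,i}'\tilde f_t\big)\tilde f_t=0
\quad\text{and}\quad
\sum_{k=1}^{K}\sum_{i=1}^{N}w_k\,\varrho_{\tau_k}'\!\big(X_{it}-\tilde\lambda_{k,i}'\tilde f_t\big)\tilde\lambda_{k,i}=0 .
\]
As a preliminary step I would note that $\tilde\theta$ is consistent at the rate of Theorem \ref{thm:consistency}: since $\sup_{\theta}|M_{NT}(\theta)-S_{NT}(\theta)|=O(h)=o(1)$, the proof of Theorem \ref{thm:consistency} carries over and gives $\|\tilde F-F^0\tilde S\|/\sqrt T=O_p(L_{NT}^{-1})$ and $\|\tilde\Lambda_k-\Lambda_k^0\tilde S\|/\sqrt N=O_p(L_{NT}^{-1})$, with $\tilde S$ stabilizing; a standard refinement then provides uniform pointwise bounds (with an explicit rate) on $b_t:=\tilde f_t-\tilde S f_t^0$ and $a_{k,i}:=\tilde\lambda_{k,i}-\tilde S\lambda_{k,i}^0$. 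Throughout, the population objects $\Psi_t,\Phi_{k,i},\Sigma_{kk'}$ are understood for the rotated truth $\{\tilde S\lambda_{k,i}^0,\tilde S f_t^0\}$, so that $\tilde S$ acts by conjugation on both the Hessian limits and the score covariances and ultimately cancels in the sandwich formulas; I suppress this below.

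I would then linearize each first-order condition around the rotated truth. Using $\tilde S^2=I_r$, one has $X_{it}-\tilde\lambda_{k,i}'\tilde f_t=\epsilon_{k,it}-\lambda_{k,i}^{0\prime}\tilde S'b_t-a_{k,i}'\tilde S f_t^0-a_{k,i}'b_t$, and a second-order Taylor expansion of $\varrho_{\tau_k}'$ in the loading equation gives (after dividing by $T$), schematically,
\[
0=\frac1T\sum_{t}\varrho_{\tau_k}'(\epsilon_{k,it})\tilde f_t-\Big(\frac1T\sum_{t}\varrho_{\tau_k}''(\epsilon_{k,it})\tilde f_t\tilde f_t'\Big)a_{k,i}-\frac1T\sum_{t}\varrho_{\tau_k}''(\epsilon_{k,it})\tilde f_t\lambda_{k,i}^{0\prime}\tilde S'b_t+R_{k,i},
\]
with $R_{k,i}$ collecting third-order terms that carry a factor $h^{-2}$ from $\varrho_{\tau_k}'''$. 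A direct kernel computation using Assumption \ref{assump:normality}(d)--(e) gives $\E[\varrho_{\tau_k}''(\epsilon_{k,it})\mid f_t^0]\to h_{it}(\lambda_{k,i}^{0\prime}f_t^0)$, so the Hessian average converges (by a law of large numbers, after replacing $\tilde f_t$ by $\tilde S f_t^0$) to $\Phi_{k,i}$, and $\E[\varrho_{\tau_k}'(\epsilon_{k,it})\mid f_t^0]=O(h^m)$, whose contribution $\sqrt T\cdot O(h^m)=o(1)$ since $c>1/m$. For the leading stochastic term I would apply a Lyapunov CLT directly to $\frac1{\sqrt T}\sum_t\varrho_{\tau_k}'(\epsilon_{k,it})\tilde S f_t^0$, using that the summands are independent across $t$ (Assumption \ref{assump:consistency}(d)) and uniformly bounded (compactness), that $\mathrm{Var}(\varrho_{\tau_k}'(\epsilon_{k,it})\mid f_t^0)\to\mathrm{Var}(\psi_{\tau_k}(\epsilon_{k,it}))=\tau_k(1-\tau_k)$ where $\psi_{\tau_k}(\epsilon)=\tau_k-\mathds{1}\{\epsilon\le0\}$, and that $\frac1T\sum_t f_t^0f_t^{0\prime}=I_r$ by the normalization in (\ref{eq:normalization}); this yields the limit $\mathcal{N}(0,\tau_k(1-\tau_k)I_r)$. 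Solving the linearized equation for $\sqrt T\,a_{k,i}$ and combining with $\Phi_{k,i}$ produces the claimed $\mathcal{N}(0,\tau_k(1-\tau_k)\Phi_{k,i}^{-2})$, once I show the remaining pieces are $o_p(T^{-1/2})$: for the cross term in $b_t$ I would substitute the factor expansion $b_t=\Psi_{N,t}^{-1}\frac1N\sum_{k',i'}w_{k'}\varrho_{\tau_{k'}}'(\epsilon_{k',i't})\lambda_{k',i'}^0+(\text{lower order})$, which turns it into an average over both $t$ and $i'$ of independent, mean-$O(h^m)$, bounded terms, hence $O_p((NT)^{-1/2})+O(h^m)=o_p(T^{-1/2})$ under Assumption \ref{assump:normality}(f); and $R_{k,i}$ is controlled by the pointwise rates, the balance $N\sim T$, and $c<1/6$.

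The factor first-order condition is handled symmetrically: the analogous linearization gives $0=\frac1N\sum_{k,i}w_k\varrho_{\tau_k}'(\epsilon_{k,it})\tilde\lambda_{k,i}-\Psi_{N,t}\tilde S b_t-(\text{cross term in }a_{k,i})+(\text{remainder})$, with $\Psi_{N,t}\to\Psi_t\succ0$ by Assumption \ref{assump:normality}(a), and the cross term is killed by inserting the loading expansion of $a_{k,i}$ and using the same double-averaging. The new ingredient is the covariance structure of the leading term $\frac1{\sqrt N}\sum_{k,i}w_k\varrho_{\tau_k}'(\epsilon_{k,it})\lambda_{k,i}^0$: since $\varrho_{\tau_k}'(\epsilon)\to\psi_{\tau_k}(\epsilon)$, its variance converges to that of $\frac1{\sqrt N}\sum_{k,i}w_k\psi_{\tau_k}(\epsilon_{k,it})\lambda_{k,i}^0$; writing $U_{it}=F_{X_{it}\mid f_t^0}(X_{it})\sim\mathrm{Unif}(0,1)$ one has $\psi_{\tau_k}(\epsilon_{k,it})=\tau_k-\mathds{1}\{U_{it}\le\tau_k\}$, so by Assumption \ref{assump:consistency}(d) the summands are independent across $i$, while for $i=i'$, $\mathrm{Cov}(\psi_{\tau_k}(\epsilon_{k,it}),\psi_{\tau_{k'}}(\epsilon_{k',it}))=\min(\tau_k,\tau_{k'})-\tau_k\tau_{k'}=\min(\tau_k,\tau_{k'})(1-\max(\tau_k,\tau_{k'}))$. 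With Assumption \ref{assump:normality}(b) this gives the limiting variance $\sum_{k}\sum_{k'}w_kw_{k'}\min(\tau_k,\tau_{k'})(1-\max(\tau_k,\tau_{k'}))\Sigma_{kk'}$; a Lyapunov CLT over the $NK$ summands ($N\to\infty$) plus inversion of $\Psi_t$ then yields the stated limit law for $\sqrt N(\tilde f_t-\tilde S f_t^0)$.

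The step I expect to be the main obstacle is the incidental-parameter coupling: in each first-order condition the other block of estimates enters through a cross term that, after the $\sqrt T$ (respectively $\sqrt N$) scaling, is only $O_p(1)$ under a crude bound and becomes negligible only after one substitutes the partner block's own linear expansion, whereupon it collapses to a doubly-averaged, independent-summand quantity. Making all of this rigorous at once forces the preliminary rate refinement, the balance $N\sim T$, and — because $\varrho_{\tau_k}''$ and $\varrho_{\tau_k}'''$ inflate the higher-order remainders by $h^{-1}$ and $h^{-2}$ while the smoothing bias is $O(h^m)$ — the precise bandwidth window $1/m<c<1/6$ of Assumption \ref{assump:normality}(f), which is exactly what simultaneously pushes the bias and all the $h$-inflated remainders below the $\sqrt N$ and $\sqrt T$ rates.
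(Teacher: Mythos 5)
Your outline follows essentially the same route as the paper's proof: establish that the smoothed minimizer $\tilde\theta$ inherits the Frobenius-norm rate of Theorem~\ref{thm:consistency}, use the interior first-order conditions of $S_{NT}$, Taylor-expand around the (rotated) truth, identify $\Psi_t,\Phi_{k,i}$ as the limiting Hessian blocks via $\mathbb{E}[\varrho_{\tau_k}''(\epsilon_{k,it})\mid f_t^0]\to h_{it}(\lambda_{k,i}^{0\prime}f_t^0)$, apply a Lyapunov CLT to the leading score term with the cross-quantile covariance $\min(\tau_k,\tau_{k'})(1-\max(\tau_k,\tau_{k'}))\Sigma_{kk'}$, and retire the $h^{m}$ bias and $h^{-1},h^{-2}$-inflated remainders using $N\sim T$ and $1/m<c<1/6$. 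All of that matches the paper both in structure and in the final variance computations.

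The gap is in the two steps you label ``a standard refinement then provides uniform pointwise bounds'' and ``I would substitute the factor expansion $b_t=\Psi_{N,t}^{-1}\frac1N\sum w_{k'}\varrho_{\tau_{k'}}'(\epsilon_{k',i't})\lambda_{k',i'}^0+(\text{lower order})$.'' Neither the pointwise rates $\|\tilde f_t-\tilde Sf_t^0\|=O_p((\sqrt N h)^{-1})$, $\|\tilde\lambda_{k,i}-\tilde S\lambda_{k,i}^0\|=O_p((\sqrt T h)^{-1})$ nor the explicit stochastic expansion of $b_t$ that you substitute follow from the Frobenius-norm consistency alone, and they cannot be obtained by linearizing the per-$t$ and per-$(k,i)$ first-order conditions in isolation, because the two blocks of incidental parameters enter each other's linearized equations at order $O_p(1)$ after the $\sqrt N$ (resp.\ $\sqrt T$) scaling. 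The paper resolves this by forming the full $(KN+T)r$-dimensional score $\mathcal{S}(\theta)=\mathcal{S}^*(\theta)+\partial\mathbb{P}_{NT}(\theta)/\partial\theta$, where the penalty $\mathbb{P}_{NT}$ vanishes identically on the normalized parameter set but whose second derivative repairs the rank deficiency of the raw Hessian of $S_{NT}$ (which is singular at $\theta_0$ precisely because of the rotational indeterminacy $\Lambda_kF'=\Lambda_kHH^{-1}F'$). It then proves $\mathcal{H}(\theta_0)\succ 0$ and the near-block-diagonality $\|\mathcal{H}(\theta_0)^{-1}-\mathcal{H}_d^{-1}\|_{\max}=O(T^{-1})$; inverting the joint Newton expansion $\tilde\theta-\theta_0=\mathcal{H}(\theta_0)^{-1}\mathcal{S}^*(\tilde\theta)-\mathcal{H}(\theta_0)^{-1}\mathcal{S}^*(\theta_0)-0.5\mathcal{H}(\theta_0)^{-1}\mathcal{R}(\tilde\theta)$ is what simultaneously delivers the pointwise rates, the explicit expansion of $b_t$ (and of $a_{k,i}$), and the control of the remainder $\mathcal{R}(\tilde\theta)$ and the off-diagonal coupling $\mathcal{D}=\mathcal{H}^{-1}-\mathcal{H}_d^{-1}$. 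Your plan treats these outputs as given, which is the place where a reader would ask ``how?''; they are the bulk of the technical work, not a standard refinement.
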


\begin{remark} The assumptions follow those used in  \cite{chen2021quantile}, except for (b), which is necessary for obtaining the limit of covariance of smoothed check functions at two quantile levels, each multiplied by the corresponding loading vector. 
Similar assumptions can also be found in the literature on smoothed quantile regression, such as \cite{galvao2016smoothed} and \cite{horowitz1998bootstrap}. Assumption \ref{assump:normality}(f) implies that, unlike in Theorem \ref{thm:consistency}, $N$ and $T$ are required to grow at the same rate to ensure the asymptotic normality of the estimators.
\end{remark}

\begin{remark}
    For the proof of Theorem \ref{thm:consistency}, we consider the vector $\mathcal{S}(\theta)\in \mathbb{R}^{(KN+T)r}$, defined on page 13 of the supplement, to capture key quantities. The stochastic expansion is derived from the following equation:
    \[
    \mathcal{S}(\tilde{\theta})=\mathcal{S}\left(\theta_{0}\right)+\mathcal{H}(\theta_0) \cdot\left(\tilde{\theta}-\theta_{0}\right)+0.5 \mathcal{R}(\tilde{\theta}),
    \]
    where $\mathcal{R}(\theta)$ denotes the remainder term and $\mathcal{H}(\theta)$ is the derivative of $\mathcal{S}(\theta)$. 
    Unlike in \cite{chen2021quantile}, $\mathcal{S}(\theta)$ is constructed to incorporate information from loadings across all quantile levels, and as a result, the derivative $\mathcal{H}(\theta)$ becomes substantially more complex, with dimension $(KN+T)r \times (KN+T)r$. Establishing its invertibility and other essential properties requires new definitions of parameter-dependent functions and more refined analytical techniques. Moreover, the resulting stochastic expansion of $\tilde{f}_t -f_t^0$ is more complex than in \cite{chen2021quantile}, as it incorporates contributions from the loadings and check functions across multiple quantiles. In particular, additional covariance terms arise, and we must handle terms of the form
    \[E\left(\varrho_k^{(1)}(X_{i t}-\lambda_{k, i}^{0^{\prime}} f_{t}^0) \cdot \varrho_{k^\prime}^{(1)}(X_{i t}-\lambda_{k^\prime, i}^{0^{\prime}} f_{t}^0)\right) \, , \quad k \neq k^\prime, \] which are absent in a single-quantile setting. 
    
\end{remark}

\begin{remark}
Suppose that we normalized the estimators using the quantile level index $\ell$, instead of $k^*$ in condition (\ref{eq:normalization}). Let $\check{F}, \check{\Lambda}_1, \ldots, \check{\Lambda}_K$ be the estimators obtained by normalizing the minimizer of $S_{NT}(\theta)$ with index $\ell$. Define a rotation matrix $\check{U}$ from the diagonalization $\frac{\check{\Lambda}^{\prime}_{k^*}\check{\Lambda}_{k^*}}{N} = \check{U} \check{D} \check{U}^{\prime}$, where the diagonal elements of $\check{D}$ are arranged in non-increasing order, and $\check{U}$ has non-negative diagonal elements. Similarly, as in Remark \ref{remark:b}, let $\bar{F}=\check{F}\check{U}$ and $\bar{\Lambda}_k = \check{\Lambda}_k \check{U}$ for all $k$, so that $\bar{F}, \bar{\Lambda}_{1} , \ldots, \bar{\Lambda}_K$ satisfy the original normalization condition. Note that $\bar{\lambda}_{k,i} = \check{U}^\prime \check{\lambda}_{k,i}$ and $\bar{f}_t =\check{U}^\prime \check{f}_t$. Under the additional assumption that $\Sigma_{\ell\ell}$ has distinct eigenvalues, it can be shown that $\check{U} \xrightarrow{p} U$, where $U$ is defined as the eigenvector matrix of $\Sigma_{\ell\ell}$. Specifically, $\Sigma_{\ell\ell}$ is diagonalized as $\Sigma_{\ell\ell} = UBU^\prime$, where $B$ is a diagonal matrix with non-increasing elements, and the diagonal entries of $U$ are non-negative. Using the relation $\frac{\bar{\Lambda_\ell^\prime}\bar{\Lambda}_\ell}{N} = \check{U} \, \frac{\check{\Lambda_\ell^\prime}\check{\Lambda}_\ell}{N} \, \check{U}^{\prime} $ and the fact that $\frac{\check{\Lambda_\ell^\prime}\check{\Lambda}_\ell}{N}$ is diagonal with non-increasing elements, we establish $\check{U} \xrightarrow{p} U$ by applying perturbation theory for eigenvectors (see Section 6.2 of \cite{franklin2012matrix}) and noting that $\|\bar{\Lambda}_\ell- \Lambda^0_\ell \bar{S}\|/\sqrt{N} = O_p(L_{NT}^{-1})$, where $\bar{S}=\mbox{sgn}\left(\bar{F}^{\prime} F_0 / T\right)$. As a result, we have
\begin{gather*}
\sqrt{N}\big(\check{f}_t- \check{U} \check{S}f_t^0\big) \xrightarrow{d}  \mathcal{N} \left(0, U \, \Psi_t^{-1} \sum\limits_{k=1}^K\sum\limits_{k^\prime=1}^K \left\{ w_k w_{k^\prime} \operatorname{min}(\tau_k, \tau_{k^\prime}) (1-\operatorname{max}(\tau_k, \tau_{k^\prime})) \Sigma_{k k^\prime} \right\} \Psi_t^{-1} \, U^\prime \right), \\[2pt]
\sqrt{T}\left(\check{\lambda}_{k,i}- \check{U}\check{S} \lambda_{k, i}^0\right) \xrightarrow{d} \mathcal{N}\left(0, \tau_k(1-\tau_k) U \, \Phi_{k,i}^{-2} \, U^\prime \right),
\end{gather*}
where $\check{S} = \mbox{sgn}\left((\check{F}\check{U})^{\prime} F_0 / T\right)$.
\end{remark}

\section{Factor number estimation} \label{sec:Factor number}

We propose two consistent estimators of the number of factors. The first estimator is based on the information criterion proposed by \cite{bai2002determining} for approximate factor models. The second one is related to the ranks of the estimated loading matrices, which is based on the criterion of \cite{chen2021quantile}. To derive these estimators, we need to analyze the estimators of factors and loadings, assuming that the number of factors is $s \in \{1,\ldots,m\}$ for $m>r$. When the number of factors is $s$, let $\lambda_{k,i}^s, f_t^s \in \mathbb{R}^s$ be vectors of loadings and factors, respectively, and let $\Lambda_k^s=[\lambda_{k,1}^s,\ldots,\lambda_{k,N}^s]^{\prime} \in \mathbb{R}^{N \times s}$ and $F=[f_1,\ldots,f_T]^{\prime} \in \mathbb{R}^{T \times s}$ be the matrices of loadings and factors. In a similar sense, we define a vector of parameters,
\[
\theta^s=\big(\lambda_{1,1}^{s^\prime},\ldots,\lambda^{s^\prime}_{1,N},\ldots,\lambda_{K,1}^{s^\prime},\ldots,\lambda^{s^\prime}_{K,N}, f_1^{s^\prime}, \ldots, f_T^{s^\prime}\big) \in \mathbb{R}^{(KN+T)s},
\]
and let $\mathcal{A}^s$ and $\mathcal{F}^s$ be compact subsets of $\mathbb{R}^s$. We set the normalization condition for these parameters as
\begin{equation} \label{eq:normalization2}
    \frac{1}{T}(F^s)^{\prime}F^s = I_r \; \text{ and } \; \frac{1}{N}(\Lambda_{k^*}^s)^{\prime} \Lambda_{k^*}^s \; \text{is a diagonal matrix with non-increasing elements}.
\end{equation}
Also, define the new parameter space,
\begin{equation*}
\resizebox{1.0\hsize}{!}{$\Theta^s=\{\theta \in \mathbb{R}^{(KN+T)s} : \lambda^s_{k,i} \in \mathcal{A}^s, f^s_t \in \mathcal{F}^s \text{ for all } k,i,t, \{\lambda^s_{k,i}\} \text{ and } \{f^s_t\} \text{ satisfy the normalizations in (\ref{eq:normalization2})} \}$}.
\end{equation*}
The estimator 
$\hat{\theta}^s=\big(\hat{\lambda}_{1,1}^{s^\prime},\ldots,\hat{\lambda}^{s^\prime}_{1,N},\ldots,\hat{\lambda}_{K,1}^{s^\prime},\ldots,\hat{\lambda}^{s^\prime}_{K,N}, \hat{f}_1^{s^\prime}, \ldots, \hat{f}_T^{s^\prime}\big) \in \mathbb{R}^{(KN+T)s}$
is then defined to be
$\hat{\theta}^s =  \underset{\theta \in \Theta^s}{\operatorname{argmin}} \, M_{NT}(\theta).$
Now, Theorem \ref{thm:factornumber1} below provides an information criterion that estimates the factor number. 

\begin{assumption}\label{assump:facnum}
\begin{itemize}
\item[(a)] For any compact set $C\subset \mathbb{R}$, there exists some constant $c>0$ such that $h_{it}(x) \leq c$, $\forall x \in C$ for all $i,t$, for $h_{it}$ in Assumption \ref{assump:consistency}(c).

\item[(b)]  For $\ell>r$, let $\Lambda^{*}_{k}=[\Lambda^0_{k} \; \mathbf{0}_{N \times (\ell-r)}] \in \mathbb{R}^{N \times \ell}$ and $F^{*}=[F^0 \; V] \in \mathbb{R}^{T \times \ell}$ for some $V$ such that $F^{*^{\prime}}F^*/T=I_r$. Denoting each row of $\Lambda^*_{k}$ and $F^*$ as $\lambda^*_{k,i}$ and $f_t^*$, respectively, it holds that $\lambda^*_{k,i} \in \mathcal{A^\ell}$ and $f^*_t \in \mathcal{F^\ell}$.
\end{itemize}
\end{assumption}

\begin{thm}[]\label{thm:factornumber1} 
Suppose that Assumptions \ref{assump:consistency} and \ref{assump:facnum} hold. For a sequence $\{P_{NT}\}$ such that $P_{NT} \rightarrow 0$ and $P_{NT}L_{NT}^2 \rightarrow \infty$ as $N,~T \rightarrow \infty$, define
\[
\hat{r}=\underset{1 \leq \ell \leq s}{\operatorname{argmin}} M_{NT}(\hat{\theta}^{\ell}) + \ell \cdot P_{NT},
\]
for $s>r$. Then, it holds that
$P(\hat{r}=r) \rightarrow 1$ as $N,~T \rightarrow \infty$
\end{thm}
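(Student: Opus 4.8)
The plan is to show separately that over-estimation and under-estimation of $r$ both occur with probability tending to zero. The argument hinges on understanding the behaviour of $M_{NT}(\hat\theta^\ell)$ as a function of $\ell$: for $\ell \geq r$ the penalized objective is essentially flat (the fit cannot improve by more than a vanishing amount), while for $\ell < r$ there is a strictly positive gap. First I would record the monotonicity $M_{NT}(\hat\theta^{\ell+1}) \leq M_{NT}(\hat\theta^\ell)$, which is immediate because any $s$-factor configuration embeds into an $(s+1)$-factor one by appending a zero column to each $\Lambda_k$ and an arbitrary orthonormal column to $F$; Assumption \ref{assump:facnum}(b) guarantees this augmented point lies in $\Theta^{s+1}$, so the larger model attains at least as small a value.

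For the over-fitting direction ($r < \ell \leq s$), I would establish that $M_{NT}(\hat\theta^r) - M_{NT}(\hat\theta^\ell) = O_p(L_{NT}^{-2})$. The upper bound $M_{NT}(\hat\theta^r) \geq M_{NT}(\hat\theta^\ell)$ is the monotonicity just noted. For the matching lower bound, the key is that the consistency-type machinery developed for Theorem \ref{thm:consistency} extends to any fixed $\ell \geq r$: since $\ell$ factors can represent the true $r$-factor structure exactly, $\hat\theta^\ell$ still consistently recovers $\theta^0$ (up to rotation and up to the spurious extra directions), and a second-order expansion of the smoothed-then-compared objective around $\theta^0$ — exactly as in the proof that $\|\hat F - F^0\hat S\|/\sqrt T = O_p(L_{NT}^{-1})$ — shows $M_{NT}(\hat\theta^\ell) - M_{NT}(\theta^{0,\ell}) $ and $M_{NT}(\hat\theta^r) - M_{NT}(\theta^0)$ are both $O_p(L_{NT}^{-2})$, where $\theta^{0,\ell}$ denotes the embedded true parameter; using Assumption \ref{assump:facnum}(a) to get a quadratic lower bound (via the lower density bound of Assumption \ref{assump:consistency}(c)) on the population criterion and Assumption \ref{assump:facnum}(a)'s upper density bound to control the stochastic fluctuation. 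Hence $M_{NT}(\hat\theta^r) + r P_{NT} - \big(M_{NT}(\hat\theta^\ell) + \ell P_{NT}\big) = O_p(L_{NT}^{-2}) - (\ell - r)P_{NT}$, and since $P_{NT} L_{NT}^2 \to \infty$ this is negative with probability tending to one, ruling out $\hat r > r$.

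For the under-fitting direction ($1 \leq \ell < r$), I would show there is a constant $\delta > 0$ with $P\big(M_{NT}(\hat\theta^\ell) - M_{NT}(\hat\theta^r) > \delta\big) \to 1$. The point is that no $\ell$-dimensional factor structure can match the $\tau_k$-conditional quantiles of the data for all $k$: because the true loading matrix $\Lambda_{k^*}^0$ has rank $r$ (Assumption \ref{assump:consistency}(b), $\nu_r > 0$), projecting onto an $\ell < r$ dimensional factor space necessarily leaves an irreducible approximation error of order one in the relevant quantile-fitting sense, and the strict positivity of the density on compacts (Assumption \ref{assump:consistency}(c)) converts that $L^2$-type gap into a strictly positive gap in the population check-function loss, uniformly over $\Theta^\ell$. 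Combining this with a uniform law of large numbers for $M_{NT}(\theta) - \E M_{NT}(\theta)$ over $\Theta^\ell$ (the same Hoeffding/separability apparatus cited in Remark 2 and Lemma 1 of the supplement) gives the claimed lower bound; then $M_{NT}(\hat\theta^\ell) + \ell P_{NT} > M_{NT}(\hat\theta^r) + r P_{NT}$ eventually since $P_{NT} \to 0$.

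I expect the main obstacle to be the over-fitting direction — specifically, proving the sharp rate $M_{NT}(\hat\theta^r) - M_{NT}(\hat\theta^\ell) = O_p(L_{NT}^{-2})$ uniformly in the spurious extra directions. This requires re-running the consistency and second-order-expansion arguments of Theorems \ref{thm:consistency} and (the smoothed analogue behind) \ref{thm:normality} in the mis-specified-with-extra-factors regime, where the extra columns of $F$ are only weakly pinned down; one must argue the objective is insensitive to them at the $L_{NT}^{-2}$ scale. The composite structure compounds this, since the bound must hold simultaneously across all $K$ quantile levels with their weights $w_k$, mirroring the extra technical steps flagged in Remarks 2 and 4.
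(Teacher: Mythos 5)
Your proposal matches the paper's proof in its essential architecture: split into under-fitting ($\ell<r$) and over-fitting ($\ell>r$) directions; for over-fitting establish $M_{NT}(\hat\theta^r)-M_{NT}(\hat\theta^\ell)=O_p(L_{NT}^{-2})$ and invoke $P_{NT}L_{NT}^2\to\infty$; for under-fitting establish a gap bounded away from zero arising from the rank deficiency, and invoke $P_{NT}\to 0$. You also correctly flag that the hardest step is extending the consistency-rate machinery to $\Theta^\ell$ with $\ell>r$, where the spurious extra directions are not pinned down — the paper handles exactly this by reproving the maximal-inequality bound $\E[\sup_{\Theta^\ell(\delta)}|\mathbb{W}_{NT}|]\lesssim\delta/L_{NT}$ and the rate $d(\hat\theta^\ell,\theta_0)=O_p(L_{NT}^{-1})$ over the enlarged space.

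Two corrections, though. First, the reference to a ``second-order expansion of the smoothed-then-compared objective'' is a wrong turn: the factor-number proof works entirely with the unsmoothed $M_{NT}$ and the empirical process $\mathbb{W}_{NT}$; smoothing appears only in Theorem \ref{thm:normality}. The actual route is the decomposition $M_{NT}(\hat\theta^\ell)-M_{NT}(\hat\theta^r)=\mathbb{W}_{NT}(\hat\theta^\ell)-\mathbb{W}_{NT}(\hat\theta^r)+\E[M^*_{NT}(\hat\theta^\ell)]-\E[M^*_{NT}(\hat\theta^r)]$ combined with $|\E[M^*_{NT}(\theta)]|\lesssim d^2(\theta,\theta_0)$ — the upper bound from Assumption \ref{assump:facnum}(a), and the lower bound from Assumption \ref{assump:consistency}(c), which you have slightly cross-attributed. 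Second, your under-fitting argument correctly identifies the rank of $\Lambda^0_{k^*}$ as the obstruction, but ``an irreducible approximation error of order one in the relevant quantile-fitting sense'' is not yet a proof. The concrete lever is: from $\E[M^*_{NT}(\hat\theta^\ell)]\gtrsim d^2(\hat\theta^\ell,\theta_0)\gtrsim\nu_r(N)\cdot\|M_{\hat F^\ell}F^0\|^2/T$, and then Weyl's inequality applied to $I_r-F^{0\prime}\hat F^\ell\hat F^{\ell\prime}F^0/T^2$ together with $\operatorname{rank}(\hat F^\ell)\leq\ell<r$ (so the $r$-th eigenvalue of $F^{0\prime}\hat F^\ell\hat F^{\ell\prime}F^0/T^2$ is zero) yields $\|M_{\hat F^\ell}F^0\|^2/T\geq 1$, hence a gap $\gtrsim\nu_r>0$. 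Without that specific linear-algebra step, the argument does not close.
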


\begin{remark} Theorem \ref{thm:factornumber1} does not require the computation of the normalized estimators because $M_{NT}(\hat{\theta}^{\ell})$ only relies on the estimate of the common components $\hat{\Lambda}_k^{\ell} \hat{F}^{\ell^{\prime}}$ for all $k$. 
\end{remark}

\begin{remark} Assumption \ref{assump:facnum}(a) combined with Assumption \ref{assump:consistency}(c) implies that the conditional probability density function is uniformly bounded both above and away from zero. Assumption \ref{assump:facnum}(b) along with Assumption \ref{assump:consistency}(b) ensures that
\[ 
\theta^*= \big(\lambda^*_{1,1},\ldots,\lambda^*_{1,N},\ldots,\lambda^*_{K,1},\ldots,\lambda^*_{K,N},f^*_1,\ldots, f^*_T\big) \in \Theta^\ell.
\]
Then, using $\Lambda^{*}_k F^{*^{\prime}}=\Lambda_k^0 F^{0^\prime}$, we obtain $M_{NT}(\hat{\theta}^\ell)-M_{NT}(\theta^0)=M_{NT}(\hat{\theta}^\ell)-M_{NT}(\theta^*)\leq 0$, which is utilized in the proof.
\end{remark}

The next theorem addresses the selection of factor numbers based on the rank of the estimated loading matrices. This method involves computing the eigenvalues of $\hat{\Lambda}^{s^\prime}_k\hat{\Lambda}^s_k/N$ for some fixed $s>r$. Therefore, unlike Theorem \ref{thm:factornumber1}, the normalizing step is inevitable. However, it is important to note that the eigenvalues of $\hat{\Lambda}^{s\prime}_k\hat{\Lambda}^s_k/N$ for all $k$ remain consistent regardless of the choice of $k^*$ used in the normalization condition (\ref{eq:normalization}) in the estimation. Thus, the estimators can be normalized with any arbitrary $k$ instead of $k^*$ to estimate the number of factors using the following theorem. We define $\hat{\sigma}^k_{1}(N) \geq \cdots \geq \hat{\sigma}^k_{s}(N)$ as the eigenvalues of $\hat{\Lambda}^{s^\prime}_k\hat{\Lambda}^s_k/N$.

\begin{thm}[]\label{thm:factornumber2} 
Suppose that Assumptions \ref{assump:consistency} and \ref{assump:facnum} hold. For a sequence $\{\kappa_{NT}\}$ such that $\kappa_{NT} \rightarrow 0$ and $\kappa_{NT}L_{NT} \rightarrow \infty$ as $N,~T \rightarrow \infty$, define
\[
\hat{r} = \underset{k=1,\ldots,K}{\operatorname{max}}\sum\limits_{i=1}^s I\big({\hat{\sigma}}^k_{i}(N)>\kappa_{NT}\big),
\]
for $s>r$. Then, it holds that $P(\hat{r}=r) \rightarrow 1$ as $N,~T \rightarrow \infty$.
\end{thm}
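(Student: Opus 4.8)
The plan is to show that, after normalization with an arbitrary quantile index, the first $r$ eigenvalues $\hat{\sigma}^k_i(N)$ for $i=1,\ldots,r$ are bounded away from zero in probability for \emph{at least one} $k$ (namely $k^*$, but since the eigenvalues are invariant under the rotation ambiguity this in fact holds for every $k$ with a genuine $r$-dimensional loading span), while the remaining eigenvalues $\hat{\sigma}^k_i(N)$ for $i=r+1,\ldots,s$ are $o_p(1)$ — in fact $O_p(L_{NT}^{-2})$ — uniformly over $k$. Combined with the rate condition $\kappa_{NT}\to 0$ and $\kappa_{NT}L_{NT}\to\infty$, this will force $\sum_{i=1}^s I(\hat{\sigma}^k_i(N)>\kappa_{NT})$ to equal $r$ for the maximizing $k$ and to be at most $r$ for every other $k$, with probability tending to $1$.

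First I would invoke the over-specified consistency result: applying Theorem \ref{thm:consistency} in the $s$-factor parameter space $\Theta^s$ (legitimate because, by Assumption \ref{assump:facnum}(b) together with Assumption \ref{assump:consistency}(b), the augmented parameter $\theta^*$ with $\Lambda^*_k=[\Lambda^0_k\;\mathbf{0}]$ and $F^*=[F^0\;V]$ lies in $\Theta^s$ and achieves the same common component $\Lambda^0_kF^{0\prime}$), one gets $\|\hat\Lambda^s_k-\Lambda^*_k\hat S\|/\sqrt N=O_p(L_{NT}^{-1})$ for each $k$, where $\hat S=\mathrm{sgn}(\hat F^{s\prime}F^*/T)$ is an $s\times s$ rotation. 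The key point is that this approximation holds in the $s$-dimensional augmented space, so $\hat\Lambda^s_k$ is close (up to rotation) to a matrix whose last $s-r$ columns vanish. Then I would write $\hat\Lambda^{s\prime}_k\hat\Lambda^s_k/N = \hat S'(\Lambda^{*\prime}_k\Lambda^*_k/N)\hat S + (\text{cross terms and remainder})$, where the cross terms and remainder are $O_p(L_{NT}^{-1})$ in Frobenius norm by Cauchy–Schwarz applied to the consistency bound and the boundedness of $\mathcal A^s$ from Assumption \ref{assump:consistency}(a). Since $\Lambda^{*\prime}_k\Lambda^*_k/N$ has exactly $r$ nonzero eigenvalues (those of $\Lambda^{0\prime}_k\Lambda^0_k/N$) and $s-r$ zero eigenvalues, and since eigenvalues are invariant under the orthogonal conjugation by $\hat S$, Weyl's inequality gives $|\hat\sigma^k_i(N)-\sigma^k_i(\Lambda^{0\prime}_k\Lambda^0_k/N)|=O_p(L_{NT}^{-1})$ for $i\le r$ (using the convention that the $(r{+}1),\ldots,s$ eigenvalues of the target are $0$) and $\hat\sigma^k_i(N)=O_p(L_{NT}^{-1})$ for $i>r$. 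For the quantile index $k^*$, Assumption \ref{assump:consistency}(b) guarantees the $r$ nonzero eigenvalues converge to $\nu_1>\cdots>\nu_r>0$, so $\hat\sigma^{k^*}_i(N)$ is bounded away from $0$ for $i\le r$; hence $\sum_{i=1}^s I(\hat\sigma^{k^*}_i(N)>\kappa_{NT})=r$ w.p.$\to1$, which gives $\hat r\ge r$.

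For the reverse inequality $\hat r\le r$, I would argue that for \emph{every} $k$, the span of the true loadings $\Lambda^0_k$ has dimension at most $r$, so $\Lambda^{0\prime}_k\Lambda^0_k/N$ has at most $r$ nonzero eigenvalues and $\sigma^k_{r+1}(\Lambda^{0\prime}_k\Lambda^0_k/N)=0$; combined with the Weyl bound above, $\hat\sigma^k_i(N)=O_p(L_{NT}^{-1})$ for all $i\ge r+1$ and all $k$. Since $\kappa_{NT}L_{NT}\to\infty$, i.e. $L_{NT}^{-1}=o(\kappa_{NT})$, we get $P(\hat\sigma^k_i(N)>\kappa_{NT})\to0$ for each such $i,k$, and a union bound over the finitely many pairs $(k,i)$ with $k\le K$, $r<i\le s$ yields $\sum_{i=1}^s I(\hat\sigma^k_i(N)>\kappa_{NT})\le r$ for all $k$ simultaneously, w.p.$\to1$. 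Taking the maximum over $k$ preserves this, so $\hat r\le r$ w.p.$\to1$, and the two inequalities combine to give $P(\hat r=r)\to1$.

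The main obstacle I anticipate is making the "at most $r$ nonzero eigenvalues for every $k$" step fully rigorous: the consistency theorem only controls $\hat\Lambda^s_k$ up to a \emph{single common} rotation $\hat S$ tied to $\hat F$, and one must be careful that the perturbation argument for the small eigenvalues does not secretly rely on the ordering/gap condition (which is only assumed at $k^*$). The resolution is that Weyl's inequality for the top $s-r$ (smallest) eigenvalues needs no spectral gap — only the Frobenius-norm closeness of $\hat\Lambda^{s\prime}_k\hat\Lambda^s_k/N$ to a rank-$\le r$ matrix — so the bound $\hat\sigma^k_i(N)=O_p(L_{NT}^{-1})$ for $i>r$ goes through for all $k$ uniformly. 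A secondary technical point is controlling the cross terms $\hat S'(\Lambda^{*\prime}_k/\sqrt N)(\hat\Lambda^s_k-\Lambda^*_k\hat S)/\sqrt N$ and the quadratic remainder: both are handled by Cauchy–Schwarz together with $\|\Lambda^*_k\|/\sqrt N=O(1)$ (from Assumption \ref{assump:consistency}(a)) and the $O_p(L_{NT}^{-1})$ consistency rate, so no new ideas beyond those already used in Theorem \ref{thm:factornumber1} are required.
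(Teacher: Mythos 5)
Your proposal is essentially the paper's proof and is correct in structure: split into showing $\hat r\ge r$ (via $k^*$ and the eigenvalue gap from Assumption \ref{assump:consistency}(b)) and $\hat r\le r$ (via Weyl's inequality applied to $\hat\Lambda_k^{s\prime}\hat\Lambda_k^s/N$ versus the rank-$\le r$ matrix $\mathrm{diag}(\Lambda_k^{0\prime}\Lambda_k^0/N,\,\mathbf{0})$, for each $k$). Your recognition that Weyl needs no spectral gap for the small eigenvalues, only Frobenius closeness to a rank-$\le r$ matrix, is exactly the point the paper exploits.

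One imprecision worth flagging: you cannot literally ``apply Theorem \ref{thm:consistency} in the $s$-factor parameter space $\Theta^s$.'' The augmented truth $\Lambda^*_{k^*}=[\Lambda^0_{k^*}\;\mathbf{0}]$ has $\Lambda^{*\prime}_{k^*}\Lambda^*_{k^*}/N=\mathrm{diag}(\nu_1(N),\ldots,\nu_r(N),0,\ldots,0)$, which violates Assumption \ref{assump:consistency}(b): the last $s-r$ eigenvalues are zero and tied, so the strict ordering $\nu_1>\cdots>\nu_s>0$ fails and the rotation-identification argument underlying Theorem \ref{thm:consistency}'s proof does not go through verbatim. The paper instead proves a separate over-specified consistency result (its Lemma facnum\_3, with supporting Lemmas facnum\_1--facnum\_2) giving $d(\hat\theta^s,\theta_0)=O_p(L_{NT}^{-1})$ for the \emph{common component}, and then a dedicated perturbation argument translating that into a bound on $\|\hat\Lambda_k^{s,r}-\Lambda_k^0 S\|/\sqrt N$, $\|\hat\Lambda_k^{s,-r}\|/\sqrt N$, and $\|\hat F^{s,r}-F^0 S\|/\sqrt T$ for a single sign matrix $S$. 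The eigenvalue gap at $k^*$ enters there to pin down the rotation on the genuine $r$-dimensional factor block; the zero directions need not be identified up to sign, only shown to vanish at rate $O_p(L_{NT}^{-1})$. Your ``main obstacle'' paragraph correctly anticipates this asymmetry, but the opening appeal to Theorem \ref{thm:consistency} as if it applied unmodified would not survive scrutiny. With that appeal replaced by the over-specified consistency lemma, your proof matches the paper's.
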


\begin{remark} This estimator is related to the ranks of the estimated loading matrices because $\hat{r}$ can be represented as the maximum among the rank estimators of $\hat{\Lambda}^{s^\prime}_{k}\hat{\Lambda}^s_{k}/N$ for $k=1,\ldots,K$, which in fact, $k^*$ yields the maximum value with probability converging to 1. In the proof of Theorem \ref{thm:factornumber2}, it can be shown that
\[
P\Big(\hat{r} = \sum\limits_{i=1}^s I\big(\hat{\sigma}^{k^*}_{i}(N)>\kappa_{NT}\big)\Big) \rightarrow 1 \text{ as } N,~T \rightarrow \infty.
\]
Specifically, we mainly use the property that $\hat{\sigma}^{k}_{r+1}(N) = O_p\big(L_{NT}^{-1}\big)$ for all $k$, and that $P(\hat{\sigma}_r^{k^*}(N) \leq \kappa_{NT}) \rightarrow 0$ since $\hat{\sigma}_r^{k^*}(N) > \frac{\nu_{r}}{2}>0$ for sufficiently large $N$ in the proof. Then, $\kappa_{NT}$ is served as a threshold 
for the eigenvalues to distinguish between $\hat{\sigma}_r^{k^*}(N)$ and $\hat{\sigma}_{r+1}^{k^*}(N)$.
\end{remark}

\section{Simulation study} \label{sec:Simulation study}

For the simulation study, we generate data from two model types, each with different error distributions. The first model is a location-shift model, where the factors only affect the location of the data, and the second model is a location-scale-shift model, where some factors also influence the scale of the data. The accuracy of the true factor estimation in each case is measured by calculating the adjusted $R^2$ value of regressing each true factor on the estimated factor. 
For comparison, we include the robust PCA method of \cite{he2022large}, the QFM of \cite{chen2021quantile}, and the model proposed by \cite{huang2023composite}, hereafter referred to as `CQF-H'. 
Here, the number of factors is assumed to be known.
We note that although the UFM of \cite{chen2025universalfactormodels} is relevant to our framework, we do not include it in the comparison because our simulation settings and objectives address substantially different issues from those of UFM.

\subsection{Location-shift model} \label{subsec:L shift model}

We first consider a location-shift model generated by three factors. The data-generating process is adopted from \cite{chen2021quantile} as 
$X_{it}=\lambda_{1i}f_{1t}+\lambda_{2i}f_{2t}+\lambda_{3i}f_{3t}+\epsilon_{it}$, where $f_{1t}=0.8 f_{1,t-1}+ e_{1t}$,  $f_{2t}=0.5 f_{2,t-1}+ e_{2t}$, $f_{3t}=0.2 f_{3,t-1}+ e_{3t}$,~ $e_{1t},~e_{2t},~e_{3t} \overset{i.i.d}{\sim} N(0,1)$, and $\lambda_{1i},~\lambda_{2i},~\lambda_{3i} \overset{i.i.d}{\sim} N(0,1)$. The model can be written in the DAFM form as $Q_{X_{it}}(\tau_k|F) = \lambda_{k,i}^{\prime} f_t$ with $\lambda_{k,i}=(\lambda_{1i}, \lambda_{2i}, \lambda_{3i}, Q_{\epsilon_{it}}(\tau_k))^{\prime}$ and $f_t = (f_{1t}, f_{2t}, f_{3t}, 1)^{\prime}$. The error term $\epsilon_{it}$ is generated from three distributions with their own characteristics:
\begin{itemize}
    \item $\epsilon_{it} \sim t(2)$: A $t$-distribution with 2 degrees of freedom, characterized by heavy tails.
    \item $\epsilon_{it} \sim 0.5 \cdot N(2,0.5) + 0.5 \cdot N(-2,0.5)$: A Gaussian mixture distribution that is bimodal with modes at 2 and $-2$.
    \item $\epsilon_{it} \sim skewt(\xi=0, \omega=4, \alpha=4, \nu=3)$: A skewed $t$-distribution that is right-skewed with location parameter $\xi=0$, scale parameter $\omega=4$, shape parameter $\alpha=4$, and degrees of freedom $\nu=3$.
\end{itemize}
For analysis, the theoretical mean of each error term has been subtracted. We consider several combinations of $(N,T)$ with values $(50,50), (100,50), (100,100), (100,150)$, and $ (150,150).$ To implement DAFM, we select a set of quantile levels $(\tau_1, \tau_2, \tau_3, \tau_4, \tau_5)=(0.1, 0.3, 0.5, 0.7, 0.9)$ to capture a diverse range of quantiles within the data distribution. In this scenario, we assign weights based on the known error distribution. To reflect the influence of each quantile level when capturing the factors, we set $w_k = f(F^{-1}(\tau_k))$, where $f$ and $F$ represent the probability density function and the cumulative density function of the error term $\epsilon_{it}$, respectively. Table \ref{table:L shift} lists the adjusted $R^2$ values of regressing each true factor on the factors estimated by DAFM, as well as those obtained by the robust PCA method, QFM, and CQF-H. For QFM, the model is fitted at five individual quantile levels for comparison, denoted as QFM($\tau$), $\tau=0.1,0.3,0.5,0.7,0.9$, in the first column. Also, CQF-H is implemented using the same set of quantile levels as those selected for DAFM. For each distribution of $\epsilon_{it}$, the results are computed for each combination of $(N,T)$. 
We observe that DAFM outperforms all the comparison methods in most cases. Notably, the performance of QFM varies across quantile levels, with the optimal quantile level depending on the distribution of $\epsilon_{it}$. For example, when $\epsilon_{it}$ follows the $t$-distribution, QFM performs best at $\tau=0.5$. In contrast, for the skewed $t$-distribution, QFM tends to perform better at lower quantile levels, $\tau=0.1$ or $\tau=0.3$, compared to higher quantile levels. For the Gaussian mixture distribution, it performs well at $\tau=0.1$ or $0.9$. The proposed DAFM consistently outperforms QFM in most cases by incorporating information from all quantile levels simultaneously. CQF-H also benefits from combining quantile levels and generally outperforms QFMs at quantile levels; however, in certain cases, it shows lower $R^2$ values than QFM at the optimal quantile level, while DAFM consistently achieves higher $R^2$ values in all scenarios. 

\begin{sidewaystable}[] 
\setlength{\tabcolsep}{11pt}
\def\arraystretch{1.3}
\caption{Location-shift model: the adjusted $R^2$ of the true factors fitted by several methods according to different settings of $(N,T)$.}
\label{table:L shift}
\scalebox{0.77}{
\begin{tabular}{llllllllllllllll}
\hline
(N,T) & \multicolumn{3}{c}{(50,50)} & \multicolumn{3}{c}{(100,50)} & \multicolumn{3}{c}{(100,100)} & \multicolumn{3}{c}{(100,150)} & \multicolumn{3}{c}{(150,150)} \\ \cmidrule(rl){2-4} \cmidrule(rl){5-7} \cmidrule(rl){8-10} \cmidrule(rl){11-13} \cmidrule(rl){14-16} 
 & $f_{1t}$ & $f_{2t}$ & $f_{3t}$ & $f_{1t}$ & $f_{2t}$ & $f_{3t}$  & $f_{1t}$ & $f_{2t}$ & $f_{3t}$ & $f_{1t}$ & $f_{2t}$ & $f_{3t}$  & $f_{1t}$ & $f_{2t}$ & $f_{3t}$  \\ \hline
\multicolumn{14}{c}{$\epsilon_{it} \sim t(2)$}  \\ \hline

\multicolumn{1}{l|}{Rob.PCA}  & 0.889 & 0.833 & \multicolumn{1}{l|}{0.76}  & 0.934 & 0.88 & \multicolumn{1}{l|}{0.854} & 0.947 & 0.9   & \multicolumn{1}{l|}{0.863}         & 0.951 & 0.903 & \multicolumn{1}{l|}{0.877}         & 0.967 & 0.931 & 0.911          \\
\multicolumn{1}{l|}{QFM(0.1)} & 0.8   & 0.64  & \multicolumn{1}{l|}{0.551} & 0.888 & 0.787         & \multicolumn{1}{l|}{0.75}  & 0.934 & 0.861 & \multicolumn{1}{l|}{0.82} & 0.938 & 0.878 & \multicolumn{1}{l|}{0.844}         & 0.962 & 0.919 & 0.901          \\
\multicolumn{1}{l|}{QFM(0.3)} & 0.965 & 0.936 & \multicolumn{1}{l|}{0.921} & 0.984 & 0.971         & \multicolumn{1}{l|}{0.966} & 0.987 & 0.975 & \multicolumn{1}{l|}{0.969}         & 0.988 & 0.975 & \multicolumn{1}{l|}{0.97} & 0.992 & 0.984 & 0.981          \\
\multicolumn{1}{l|}{QFM(0.5)} & \textbf{0.974} & \textbf{0.956} & \multicolumn{1}{l|}{\textbf{0.949}} & 0.988 & 0.98 & \multicolumn{1}{l|}{0.976} & 0.991 & 0.982 & \multicolumn{1}{l|}{0.978}         & 0.991 & 0.983 & \multicolumn{1}{l|}{0.979}         & 0.994 & 0.989 & 0.986          \\
\multicolumn{1}{l|}{QFM(0.7)} & 0.963 & 0.936 & \multicolumn{1}{l|}{0.925} & 0.984 & 0.97 & \multicolumn{1}{l|}{0.965} & 0.987 & 0.975 & \multicolumn{1}{l|}{0.969}         & 0.988 & 0.975 & \multicolumn{1}{l|}{0.969}         & 0.992 & 0.984 & 0.98           \\
\multicolumn{1}{l|}{QFM(0.9)} & 0.801 & 0.648 & \multicolumn{1}{l|}{0.563} & 0.899 & 0.798         & \multicolumn{1}{l|}{0.734} & 0.933 & 0.863 & \multicolumn{1}{l|}{0.826}         & 0.939 & 0.879 & \multicolumn{1}{l|}{0.846}         & 0.96  & 0.919 & 0.899          \\
\multicolumn{1}{l|}{CQF-H}     & 0.974 & 0.954 & \multicolumn{1}{l|}{0.947} & 0.988 & 0.978 & \multicolumn{1}{l|}{0.975} & 0.990 & 0.982 & \multicolumn{1}{l|}{0.977} & 0.991 & 0.981 & \multicolumn{1}{l|}{0.976} & 0.994 & 0.987 & 0.985 \\
\multicolumn{1}{l|}{DAFM}     & 0.973 & 0.954 & \multicolumn{1}{l|}{0.947} & \textbf{0.989} & \textbf{0.98} & \multicolumn{1}{l|}{\textbf{0.977}} & \textbf{0.991} & \textbf{0.983} & \multicolumn{1}{l|}{\textbf{0.98}} & \textbf{0.992} & \textbf{0.983} & \multicolumn{1}{l|}{\textbf{0.98}} & \textbf{0.995} & \textbf{0.989} & \textbf{0.987}

\\ \hline
\multicolumn{14}{c}{$\epsilon_{it} \sim 0.5 \, N(-2,0.5)+0.5 \, N(2,0.5)$}  \\ \hline
\multicolumn{1}{l|}{Rob.PCA}  & 0.955 & 0.924 & \multicolumn{1}{l|}{0.905} & 0.976 & 0.96  & \multicolumn{1}{l|}{0.953} & 0.982 & 0.966 & \multicolumn{1}{l|}{0.957} & 0.982 & 0.967 & \multicolumn{1}{l|}{0.958} & 0.989 & 0.977 & 0.972 \\
\multicolumn{1}{l|}{QFM(0.1)} & 0.954 & 0.904 & \multicolumn{1}{l|}{0.876} & 0.987 & 0.974 & \multicolumn{1}{l|}{0.97}  & 0.994 & 0.987 & \multicolumn{1}{l|}{0.985} & 0.994 & 0.989 & \multicolumn{1}{l|}{0.986} & 0.997 & 0.993 & 0.992 \\
\multicolumn{1}{l|}{QFM(0.3)} & 0.922 & 0.872 & \multicolumn{1}{l|}{0.834} & 0.974 & 0.953 & \multicolumn{1}{l|}{0.942} & 0.985 & 0.972 & \multicolumn{1}{l|}{0.966} & 0.988 & 0.976 & \multicolumn{1}{l|}{0.97}  & 0.994 & 0.988 & 0.985 \\
\multicolumn{1}{l|}{QFM(0.5)} & 0.857 & 0.758 & \multicolumn{1}{l|}{0.661} & 0.91  & 0.813 & \multicolumn{1}{l|}{0.784} & 0.911 & 0.825 & \multicolumn{1}{l|}{0.771} & 0.908 & 0.825 & \multicolumn{1}{l|}{0.778} & 0.928 & 0.85  & 0.815 \\
\multicolumn{1}{l|}{QFM(0.7)} & 0.923 & 0.877 & \multicolumn{1}{l|}{0.832} & 0.974 & 0.951 & \multicolumn{1}{l|}{0.945} & 0.986 & 0.973 & \multicolumn{1}{l|}{0.967} & 0.988 & 0.977 & \multicolumn{1}{l|}{0.97}  & 0.994 & 0.988 & 0.985 \\
\multicolumn{1}{l|}{QFM(0.9)} & 0.96  & 0.924 & \multicolumn{1}{l|}{0.903} & 0.989 & 0.967 & \multicolumn{1}{l|}{0.968} & 0.994 & 0.988 & \multicolumn{1}{l|}{0.985} & 0.994 & 0.989 & \multicolumn{1}{l|}{0.986} & 0.996 & 0.993 & 0.991 \\
\multicolumn{1}{l|}{CQF-H}  & 0.966 & 0.939 & \multicolumn{1}{l|}{0.923 } & 0.987 &  0.978 & \multicolumn{1}{l|}{0.974} & 0.991 & 0.983 & \multicolumn{1}{l|}{0.978} & 0.992 & 0.985 & \multicolumn{1}{l|}{0.980} & 0.994 & 0.982 & 0.986 \\
\multicolumn{1}{l|}{DAFM} & \textbf{0.974} & \textbf{0.957} & \multicolumn{1}{l|}{\textbf{0.943}} & \textbf{0.991} & \textbf{0.984} & \multicolumn{1}{l|}{\textbf{0.982}} & \textbf{0.995} & \textbf{0.991} & \multicolumn{1}{l|}{\textbf{0.988}} & \textbf{0.996} & \textbf{0.992} & \multicolumn{1}{l|}{\textbf{0.99}} & \textbf{0.997} & \textbf{0.995} & \textbf{0.994}

\\
\hline

\multicolumn{14}{c}{$\epsilon_{it} \sim skewt(\xi=0, \omega=4, \alpha=4, df=3) $}  \\ \hline
\multicolumn{1}{l|}{Rob.PCA}  & 0.709 & 0.507 & \multicolumn{1}{l|}{0.398} & 0.828 & 0.659 & \multicolumn{1}{l|}{0.577} & 0.871 & 0.746 & \multicolumn{1}{l|}{0.687} & 0.881 & 0.771 & \multicolumn{1}{l|}{0.719} & 0.918 & 0.837 & 0.798          \\
\multicolumn{1}{l|}{QFM(0.1)} & 0.785 & 0.624 & \multicolumn{1}{l|}{0.53}  & 0.912 & 0.797 & \multicolumn{1}{l|}{0.767} & 0.956 & 0.917 & \multicolumn{1}{l|}{0.897} & 0.961 & 0.923 & \multicolumn{1}{l|}{0.905} & 0.976 & 0.951 & 0.939          \\
\multicolumn{1}{l|}{QFM(0.3)} & 0.869 & 0.773 & \multicolumn{1}{l|}{0.709} & 0.932 & \textbf{0.891} & \multicolumn{1}{l|}{0.854} & 0.96  & 0.924 & \multicolumn{1}{l|}{0.904} & 0.962 & 0.927 & \multicolumn{1}{l|}{0.91}  & 0.976 & 0.95  & 0.94           \\
\multicolumn{1}{l|}{QFM(0.5)} & 0.815 & 0.682 & \multicolumn{1}{l|}{0.6}   & 0.905 & 0.815 & \multicolumn{1}{l|}{0.779} & 0.936 & 0.873 & \multicolumn{1}{l|}{0.844} & 0.939 & 0.88  & \multicolumn{1}{l|}{0.854} & 0.959 & 0.918 & 0.899          \\
\multicolumn{1}{l|}{QFM(0.7)} & 0.702 & 0.506 & \multicolumn{1}{l|}{0.404} & 0.821 & 0.646 & \multicolumn{1}{l|}{0.56}  & 0.875 & 0.75  & \multicolumn{1}{l|}{0.671} & 0.883 & 0.773 & \multicolumn{1}{l|}{0.717} & 0.917 & 0.829 & 0.795          \\
\multicolumn{1}{l|}{QFM(0.9)} & 0.29  & 0.152 & \multicolumn{1}{l|}{0.1}   & 0.411 & 0.206 & \multicolumn{1}{l|}{0.153} & 0.49  & 0.226 & \multicolumn{1}{l|}{0.172} & 0.472 & 0.227 & \multicolumn{1}{l|}{0.155} & 0.602 & 0.303 & 0.207          \\
\multicolumn{1}{l|}{CQF-H}  & 0.866 & 0.769 & \multicolumn{1}{l|}{0.721} & 0.937 & 0.880 & \multicolumn{1}{l|}{0.863} & 0.954 & 0.913 & \multicolumn{1}{l|}{0.893} & 0.957 & 0.915 & \multicolumn{1}{l|}{0.892} & 0.972 & 0.940 &  0.931 \\
\multicolumn{1}{l|}{DAFM}     & \textbf{0.881} & \textbf{0.796} & \multicolumn{1}{l|}{\textbf{0.728}} & \textbf{0.945} & 0.891 & \multicolumn{1}{l|}{\textbf{0.871}} & \textbf{0.961} & \textbf{0.924} & \multicolumn{1}{l|}{\textbf{0.907}} & \textbf{0.964} & \textbf{0.928} & \multicolumn{1}{l|}{\textbf{0.913}} & \textbf{0.976} & \textbf{0.952} & \textbf{0.942}
         
  \\ \hline

\end{tabular}
}
\end{sidewaystable}

\begin{sidewaystable}[] 
\setlength{\tabcolsep}{11pt}
\def\arraystretch{1.3}
\caption{Location-scale-shift model: the adjusted $R^2$ of the true factors fitted by several methods according to different settings of $(N,T)$.}
\label{table:LS shift}
\scalebox{0.77}{
\begin{tabular}{llllllllllllllll}
(N,T) & \multicolumn{3}{c}{(50,50)} & \multicolumn{3}{c}{(100,50)} & \multicolumn{3}{c}{(100,100)} & \multicolumn{3}{c}{(100,150)} & \multicolumn{3}{c}{(150,150)} \\ \cmidrule(rl){2-4} \cmidrule(rl){5-7} \cmidrule(rl){8-10} \cmidrule(rl){11-13} \cmidrule(rl){14-16} 
 & $f_{1t}$ & $f_{2t}$ & $f_{3t}$ & $f_{1t}$ & $f_{2t}$ & $f_{3t}$  & $f_{1t}$ & $f_{2t}$ & $f_{3t}$ & $f_{1t}$ & $f_{2t}$ & $f_{3t}$  & $f_{1t}$ & $f_{2t}$ & $f_{3t}$  \\ \hline
\multicolumn{14}{c}{$\epsilon_{it} \sim N(0,1)$}  \\ \hline

\multicolumn{1}{l|}{Rob.PCA}  & 0.991 & 0.984 & \multicolumn{1}{l|}{0.048}         & 0.995 & 0.991 & \multicolumn{1}{l|}{0.053} & 0.996 & 0.992 & \multicolumn{1}{l|}{0.041} & 0.996 & 0.992 & \multicolumn{1}{l|}{0.012} & 0.997      & 0.995 & 0.022          \\
\multicolumn{1}{l|}{QFM(0.1)} & 0.985 & 0.967 & \multicolumn{1}{l|}{0.801}         & 0.994 & 0.97  & \multicolumn{1}{l|}{0.889} & 0.986 & 0.983 & \multicolumn{1}{l|}{0.916} & 0.996 & 0.993 & \multicolumn{1}{l|}{0.923} & 0.998      & 0.996 & 0.947          \\
\multicolumn{1}{l|}{QFM(0.3)} & 0.996 & 0.994 & \multicolumn{1}{l|}{0.575}         & 0.998 & 0.997 & \multicolumn{1}{l|}{0.764} & 0.999 & 0.998 & \multicolumn{1}{l|}{0.773} & 0.999 & 0.998 & \multicolumn{1}{l|}{0.778} & 0.999      & 0.999 & 0.842          \\
\multicolumn{1}{l|}{QFM(0.5)} & 0.997 & 0.995 & \multicolumn{1}{l|}{0.001}         & 0.999 & 0.998 & \multicolumn{1}{l|}{0.002} & 0.999 & 0.998 & \multicolumn{1}{l|}{0.004} & 0.999 & 0.998 & \multicolumn{1}{l|}{0.001} & 0.999      & 0.999 & 0.001          \\
\multicolumn{1}{l|}{QFM(0.7)} & 0.996 & 0.993 & \multicolumn{1}{l|}{0.589}         & 0.998 & 0.997 & \multicolumn{1}{l|}{0.765} & 0.999 & 0.997 & \multicolumn{1}{l|}{0.779} & 0.999 & 0.998 & \multicolumn{1}{l|}{0.78}  & 0.999      & 0.999 & 0.843          \\
\multicolumn{1}{l|}{QFM(0.9)} & 0.981 & 0.962 & \multicolumn{1}{l|}{0.815}         & 0.992 & 0.983 & \multicolumn{1}{l|}{0.891} & 0.996 & 0.992 & \multicolumn{1}{l|}{0.917} & 0.996 & 0.993 & \multicolumn{1}{l|}{0.923} & 0.998      & 0.987 & 0.948          \\
\multicolumn{1}{l|}{CQF-H}  & 0.995 & 0.992 & \multicolumn{1}{l|}{0.058} & 0.998 & 0.996 & \multicolumn{1}{l|}{0.067} & 0.998 & 0.996 & \multicolumn{1}{l|}{0.043} & 0.998 & 0.996 & \multicolumn{1}{l|}{0.013} & 0.999 & \textbf{1} & 0.026 \\
\multicolumn{1}{l|}{DAFM}     & \textbf{0.998} & \textbf{0.997} & \multicolumn{1}{l|}{\textbf{0.92}} & \textbf{0.999} & \textbf{0.999} & \multicolumn{1}{l|}{\textbf{0.953}} & \textbf{0.999} & \textbf{0.999} & \multicolumn{1}{l|}{\textbf{0.965}} & \textbf{0.999} & \textbf{0.999} & \multicolumn{1}{l|}{\textbf{0.967}} & \textbf{1} & 0.999 & \textbf{0.977}

 \\ \hline
\multicolumn{14}{c}{$\epsilon_{it} \sim t(2) $}  \\ \hline
\multicolumn{1}{l|}{Rob.PCA}  & 0.913 & 0.84  & \multicolumn{1}{l|}{0.038} & 0.961 & 0.924 & \multicolumn{1}{l|}{0.036} & 0.961 & 0.918 & \multicolumn{1}{l|}{0.028}         & 0.955 & 0.923 & \multicolumn{1}{l|}{0.01}  & 0.973 & 0.949 & 0.007         \\
\multicolumn{1}{l|}{QFM(0.1)} & 0.853 & 0.721 & \multicolumn{1}{l|}{0.528} & 0.932 & 0.833 & \multicolumn{1}{l|}{0.705} & 0.97  & 0.908 & \multicolumn{1}{l|}{0.771}         & 0.975 & 0.939 & \multicolumn{1}{l|}{0.781} & 0.989 & 0.975 & 0.863         \\
\multicolumn{1}{l|}{QFM(0.3)} & 0.992 & 0.986 & \multicolumn{1}{l|}{0.486} & 0.997 & 0.995 & \multicolumn{1}{l|}{0.675} & 0.998 & 0.995 & \multicolumn{1}{l|}{0.731}         & 0.998 & 0.996 & \multicolumn{1}{l|}{0.726} & 0.999 & 0.997 & 0.804         \\
\multicolumn{1}{l|}{QFM(0.5)} & 0.994 & \textbf{0.991} & \multicolumn{1}{l|}{0.002} & \textbf{0.998} & \textbf{0.997} & \multicolumn{1}{l|}{0.002} & 0.998 & 0.997 & \multicolumn{1}{l|}{0.004}         & 0.999 & 0.997 & \multicolumn{1}{l|}{0.001} & 0.999 & 0.998 & 0             \\
\multicolumn{1}{l|}{QFM(0.7)} & 0.991 & 0.986 & \multicolumn{1}{l|}{0.459} & 0.997 & 0.995 & \multicolumn{1}{l|}{0.682} & 0.998 & 0.996 & \multicolumn{1}{l|}{0.72} & 0.998 & 0.995 & \multicolumn{1}{l|}{0.73}  & 0.999 & 0.998 & 0.805         \\
\multicolumn{1}{l|}{QFM(0.9)} & 0.868 & 0.7   & \multicolumn{1}{l|}{0.522} & 0.927 & 0.854 & \multicolumn{1}{l|}{0.7}   & 0.959 & 0.915 & \multicolumn{1}{l|}{0.769}         & 0.975 & 0.938 & \multicolumn{1}{l|}{0.78}  & 0.987 & 0.959 & 0.857         \\
\multicolumn{1}{l|}{CQF-H}  & 0.990 & 0.982 & \multicolumn{1}{l|}{0.053} & 0.996 & 0.993 & \multicolumn{1}{l|}{0.091} & 0.997 & 0.993 & \multicolumn{1}{l|}{0.047} & 0.997 & 0.994 & \multicolumn{1}{l|}{0.018} & 0.998 & 0.996 & 0.018 \\
\multicolumn{1}{l|}{DAFM}     & \textbf{0.995} & 0.99  & \multicolumn{1}{l|}{\textbf{0.797}} & 0.998 & 0.997 & \multicolumn{1}{l|}{\textbf{0.879}} & \textbf{0.999} & \textbf{0.997} & \multicolumn{1}{l|}{\textbf{0.92}} & \textbf{0.999} & \textbf{0.997} & \multicolumn{1}{l|}{\textbf{0.925}} & \textbf{0.999} & \textbf{0.999} & \textbf{0.95}

\\ \hline 
\multicolumn{14}{c}{$\epsilon_{it} \sim 0.5 \, N(-2,0.5)+0.5 \, N(2,0.5)$}  \\ \hline

\multicolumn{1}{l|}{Rob.PCA}  & 0.957 & 0.914 & \multicolumn{1}{l|}{0.041}         & 0.977 & 0.954 & \multicolumn{1}{l|}{0.05}  & 0.982 & 0.963 & \multicolumn{1}{l|}{0.042} & 0.983 & 0.966 & \multicolumn{1}{l|}{0.018} & 0.989 & 0.977 & 0.017          \\
\multicolumn{1}{l|}{QFM(0.1)} & 0.987 & 0.956 & \multicolumn{1}{l|}{0.963}         & 0.981 & 0.981 & \multicolumn{1}{l|}{0.983} & 0.992 & 0.975 & \multicolumn{1}{l|}{0.989} & 0.998 & 0.997 & \multicolumn{1}{l|}{0.99}  & 0.999 & 0.998 & 0.994          \\
\multicolumn{1}{l|}{QFM(0.3)} & 0.985 & 0.97  & \multicolumn{1}{l|}{0.81} & 0.996 & 0.994 & \multicolumn{1}{l|}{0.965} & 0.998 & 0.996 & \multicolumn{1}{l|}{0.972} & 0.998 & 0.996 & \multicolumn{1}{l|}{0.971} & 0.999 & 0.998 & 0.986          \\
\multicolumn{1}{l|}{QFM(0.5)} & 0.942 & 0.836 & \multicolumn{1}{l|}{0.017}         & 0.967 & 0.913 & \multicolumn{1}{l|}{0.011} & 0.975 & 0.944 & \multicolumn{1}{l|}{0.008} & 0.977 & 0.951 & \multicolumn{1}{l|}{0.001} & 0.985 & 0.967 & 0.002          \\
\multicolumn{1}{l|}{QFM(0.7)} & 0.987 & 0.978 & \multicolumn{1}{l|}{0.822}         & 0.996 & 0.994 & \multicolumn{1}{l|}{0.97}  & 0.998 & 0.996 & \multicolumn{1}{l|}{0.977} & 0.998 & 0.996 & \multicolumn{1}{l|}{0.975} & 0.999 & 0.998 & 0.986          \\
\multicolumn{1}{l|}{QFM(0.9)} & 0.977 & 0.937 & \multicolumn{1}{l|}{0.963}         & 0.993 & 0.988 & \multicolumn{1}{l|}{0.986} & 0.987 & 0.991 & \multicolumn{1}{l|}{0.99}  & 0.998 & 0.997 & \multicolumn{1}{l|}{0.99}  & 0.996 & 0.982 & 0.993          \\
\multicolumn{1}{l|}{CQF-H}  & 0.966  & 0.928 & \multicolumn{1}{l|}{0.043} & 0.981 & 0.963 & \multicolumn{1}{l|}{0.065} & 0.985 & 0.970 & \multicolumn{1}{l|}{0.040} & 0.985 & 0.971 & \multicolumn{1}{l|}{0.024} & 0.990 & 0.979 & 0.026 \\
\multicolumn{1}{l|}{DAFM}     & \textbf{0.995} & \textbf{0.991} & \multicolumn{1}{l|}{\textbf{0.99}} & \textbf{0.998} & \textbf{0.997} & \multicolumn{1}{l|}{\textbf{0.995}} & \textbf{0.999} & \textbf{0.998} & \multicolumn{1}{l|}{\textbf{0.996}} & \textbf{0.999} & \textbf{0.998} & \multicolumn{1}{l|}{\textbf{0.996}} & \textbf{0.999} & \textbf{0.999} & \textbf{0.997}
\\ \hline
\end{tabular}
}
\end{sidewaystable}

\vspace{-1mm} 
\subsection{Location-scale-shift model} \label{subsec:LS shift model}
\vspace{-1mm} 
We now consider a data-generating process of a three-factor model,
\vspace{-2mm} 
\[ 
X_{it} = \lambda_{1i} f_{1t} + \lambda_{2i} f_{2t} + \lambda_{3i} f_{3t} \epsilon_{it}. 
\] 
\vspace{-2mm} 
It is a location-scale-shift model where the first two factors affect the location and the third factor, $f_{3t}$, affects the scale of the data by being involved in the error term. We consider the case of $\lambda_{3i}f_{3t}>0$, by generating 
$f_{1t}=0.8 f_{1,t-1}+ e_{1t}, ~\, f_{2t}=0.5 f_{2,t-1}+ e_{2t}, ~\, f_{3t}=|e_{3t}|, \; e_{1t}, e_{2t}, e_{3t} \overset{i.i.d}{\sim} N(0,1)$,
and
$ 
\lambda_{1i}, \lambda_{2i} \overset{i.i.d}{\sim} N(0,1), \quad \lambda_{3i} =|\alpha_i|  \, \text{ for } \, \alpha_i \sim N(0,1).
$
This model can be written in DAFM form as $Q_{X_{it}}(\tau|f_t) = \lambda_{k,i}^{\prime}f_t$, with $\lambda_{k,i}=(\lambda_{1i},\lambda_{2i},\lambda_{3i}\cdot Q_{\epsilon_{it}}(\tau))^{\prime}$ and $f_t=(f_{1t}, f_{2t}, f_{3t})^{\prime}$. In this setting, we consider three noise types of $\epsilon_{it}$, a normal distribution, $N(0,1)$, $t(2)$ distribution, and the Gaussian mixture distribution, $0.5 \cdot N(2,0.5) + 0.5 \cdot N(-2,0.5)$. The combinations of $(N,T)$ and the choice of quantile levels are the same as in the previous location-shift model. The adjusted $R^2$ values of regressing each true factor on the estimated factors are listed in Table \ref{table:LS shift}.

If the $\tau$-th quantile of $\epsilon_{it}$ is zero when performing QFM, then QFM at $\tau$ would fail to identify $f_{3t}$ since $Q_{X_{it}}(\tau|f_t)=\lambda_{i1} f_{1t} + \lambda_{i2} f_{2t}$. In other words, considering the $\tau$-th quantile of the data masks the information in $f_{3t}$. Thus, the $R^2$ values of regressing $f_{3t}$ on the estimated factors by QFM(0.5) are all close to zero. Also, QFMs at $\tau=0.3$ and 0.7, which are near 0.5, do not perform well enough to estimate $f_{3t}$. On the other hand, QFMs at $\tau=0.1$ and 0.9 do a relatively good job of identifying $f_{3t}$ but are less accurate at capturing the other factors, $f_{1t}$ and $f_{2t}$. Next, when applying the proposed DAFM, the weighting scheme used in the location-shift model is not appropriate for this case. Specifically, assigning a high weight to the 0.5-th quantile based on its high density only contributes to the estimation of $f_{1t}$ and $f_{2t}$, while it significantly hinders the estimation of $f_{3t}$. Thus, the weights assigned to the quantile levels affect the estimation of each factor in different ways, so assigning weights based on their density in this case does not guarantee good factor estimation overall. Therefore, we choose uniform weights in this model. As shown in Table \ref{table:LS shift}, DAFM effectively improves the estimation of all three factors compared to QFM and yields the highest $R^2$ values, especially in capturing $f_{3t}$. In contrast, CQF-H fails to identify $f_{3t}$, yielding $R^2$ values close to zero, which indicates that incorporating quantile-dependent intercepts with fixed loadings is insufficient to capture error-embedded factors; however, it can be effectively identified by allowing loadings to vary across quantiles. 

Overall, the first simulation results show that DAFM better handles the cases in which CQF-H works well, and the second results highlight that the proposed model performs effectively even when CQF-H fails. This indicates that the proposed model can accommodate a broader range of data and attain more accurate factor estimation than CQF-H.

\section{Real data analysis} \label{sec:Real data analysis}

In this section, we apply the proposed model to two economic datasets to demonstrate its effectiveness in explaining the underlying data structure and forecasting future data. For real data analysis, we consider not only the quantile levels of 0.1, 0.3, 0.5, 0.7, and 0.9 but also the extreme levels of 0.01 and 0.99, as \cite{chen2021quantile} identified that factors derived from extreme quantiles provide useful information in certain economic datasets. We compare a uniformly weighted DAFM, called $\text{DAFM}_{unif}$, with QFMs applied at each quantile level, CQF-H with the same set of quantile levels as DAFM, and the mean-based factor model. We also consider three different weighting schemes for DAFM as follows: (i)  $\text{DAFM}_{low}$: assigns twice the weight to the lower quantiles, 0.01 and 0.1; (ii) $\text{DAFM}_{med}$: assigns twice the weight to the medium quantiles, 0.3, 0.5, and 0.7;  (iii) $\text{DAFM}_{high}$: assigns twice the weight to the upper quantiles, 0.9 and 0.99.

\subsection{Identifying hidden structures in monthly stock returns} 
\label{subsec:Volatility}

We now present an example of how DAFM effectively extracts information regarding the data structure, particularly the structure of idiosyncratic volatility. The dataset we use consists of monthly stock return data from the Center for Research in Security Prices (CRSP) for the most recent 20 years, from January 2004 to December 2023, which implies $T=240$ time points. It consists of $N=233$ firms with ordinary common shares, which have share codes of 12 and 14, and includes only firms with no missing values. Share codes 12 and 14 represent companies incorporated outside the U.S. and closed-end funds, respectively. We select these share codes to ensure an appropriate data size, as other codes among ordinary common shares include either too many or too few firms for suitable analysis. The CRSP stock return data are not publicly available, but can be accessed through institutions that subscribe to WRDS (Wharton Research Data Services) at \url{https://wrds-www.wharton.upenn.edu/pages/get-data/center-research-security-prices-crsp}.

We first estimate the number of factors. For the mean factor model and CQF-H, the number of factors is determined using the criterion by \cite{bai2002determining} and \cite{huang2023composite}, respectively. For QFM, we use the rank estimator of \cite{chen2021quantile}, with $\kappa_{NT}=L_{NT}^{-2/3}\cdot  \hat{\sigma}_1(N)$, where $\hat{\sigma}_1(N)$ denotes the largest eigenvalue of $\hat{\Lambda}^\prime \hat{\Lambda}/{N} $. The estimated number of factors is notably influenced by the choice of $\kappa_{NT}$. Thus, for DAFM, we select a similar form of the estimator as used in QFM to ensure a fair comparison. Specifically, we use $\kappa_{NT}(k) = L_{NT}^{-2/3} \cdot  \hat{\sigma}_1^k(N)$ in Theorem \ref{thm:factornumber2}. Note that the proof of Theorem \ref{thm:factornumber2} remains valid even though the chosen $\kappa_{NT}$ depends on $k$, given the additional assumption that the eigenvalues $\hat{\sigma}_1^k(N)$ converge as $N \rightarrow \infty$ for all $k$, which can also be implied by Assumption \ref{assump:normality}(b). The estimated number of factors is listed in Table \ref{table:Factor number}. The number of factors is determined to be 2 for DAFM in all four weighting schemes, and for QFMs, 1 or 2 is selected, depending on the quantile level $\tau$. For CQF-H, it is determined to be 1, and the number of mean factors is chosen to be 8.

\begin{table}[]
\centering
\setlength{\tabcolsep}{8pt}
\caption{The estimated number of factors for the CRSP stock return data under different factor models. DAFM, with four weighting schemes, DAFMs, estimates the number of factors as 2.}
\label{table:Factor number} \footnotesize
{\renewcommand{\arraystretch}{1.5}
\begin{tabular}{cccccccccc}
\hline
\multirow{2}{*}{DAFMs} & \multicolumn{7}{c}{QFM($\tau$)}  & \multirow{2}{*}{CQF-H}  &  \multirow{2}{*}{Mean FM} \\ \cline{2-8}
          &  \multicolumn{1}{c}{$0.01$} & \multicolumn{1}{c}{0.1} & \multicolumn{1}{c}{0.3} & \multicolumn{1}{c}{0.5} & \multicolumn{1}{c}{0.7} & \multicolumn{1}{c}{0.9} & 0.99 &  \\ \hline
2  & 1 & \multicolumn{1}{c}{1}    & \multicolumn{1}{c}{2}   & \multicolumn{1}{c}{2}   & \multicolumn{1}{c}{2}   & \multicolumn{1}{c}{1}   & \multicolumn{1}{c}{1}  &1  & 8 \\ \hline
\end{tabular}
}
\end{table}

In the literature, \cite{herskovic2016common} explored a common factor structure in the idiosyncratic volatilities of stocks across U.S. firms. They found that after removing common mean factors, the volatilities of residuals exhibit similar patterns of variation across firms. In particular, the common idiosyncratic volatility (CIV) computed from these residuals can be a common factor of the volatilities and has been identified as containing vital information. The CIV is computed as follows. First, to estimate monthly firm-level idiosyncratic volatility, we fit the 
linear regression model to the daily stock return data to obtain residuals, 
$x_{it} = \beta \hat{F}_t + \epsilon_{it},$ for $i=1,\ldots,233,$ 
where $\hat{F}_t$ is the estimated mean factor from data at day $t$ in the corresponding month. The idiosyncratic volatility for each month is then estimated as the standard deviation of the residuals in that month. Finally, CIV is computed as the equally weighted average of the idiosyncratic volatilities across different firms.
\cite{herskovic2016common} demonstrated that shocks to CIV affect asset prices and help explain some deviations from the expected behavior of asset prices. In addition, they empirically showed that CIV can act as a proxy for the idiosyncratic risks faced by individual consumers, with evidence linking CIV to household income risk, employment risk, house price dispersion, and wage growth per job. They also indicated that a firm's exposure to CIV shocks can account for variations in average stock returns among firms. These findings underscore the significance of CIVs in capturing a broad spectrum of economic risks impacting both firms and households.

\begin{figure}
    \centering
    \includegraphics[width=0.65\linewidth]{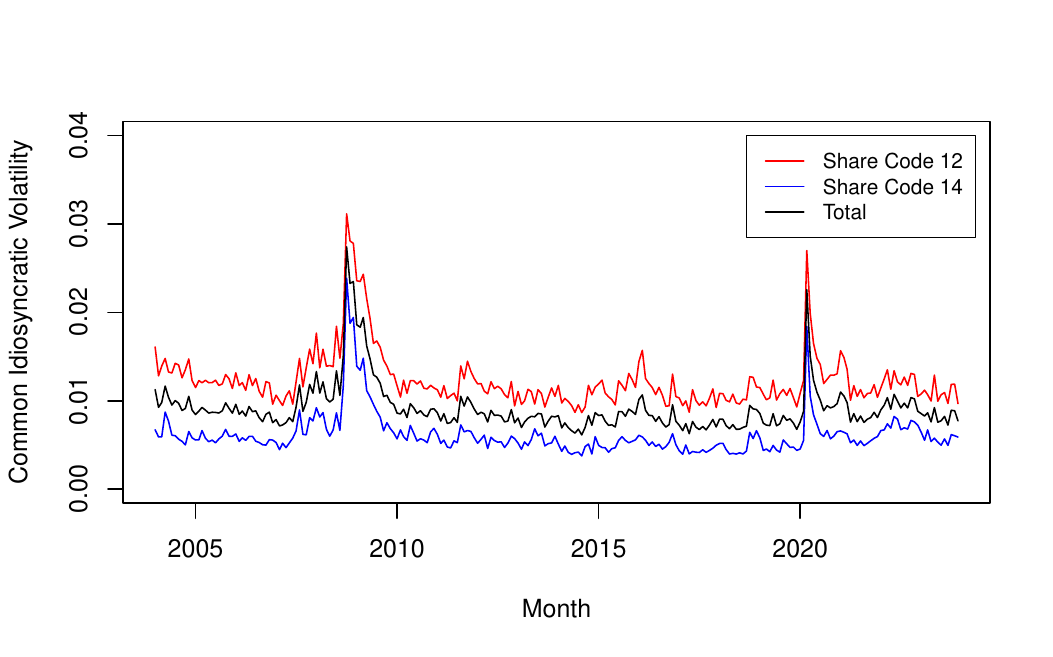}
    \vspace{-3mm}
    \caption{Time series plots of common idiosyncratic volatility computed by averaging the volatilities within three groups of firms.}
    \label{fig:CIVs}
\end{figure}

Figure \ref{fig:CIVs} displays the average idiosyncratic volatilities for all firms and two subsets of firms, identified by share codes 12 and 14, respectively. These CIVs exhibit similar dynamics across all groups of firms, aligning with the findings discussed in \cite{herskovic2016common}, which indicate common variation in firm-level idiosyncratic volatilities regardless of divisions by size and industry category. 

Since there is a strong factor structure in idiosyncratic volatilities even after removing common mean factors, we can expect the stock returns to follow a location-scale shift model involving a factor structure coupled with an error term. In this analysis, the DAFM factors estimated from the monthly stock return data effectively extract some information about CIV, unlike the QFM, CQF-H, and mean PCA factors. To evaluate the explanatory power of the estimated factors for CIV, we compute the adjusted $R^2$ values of regressing CIV on the estimated factors. We calculate CIV not only from the full dataset but also separately for firms with share codes 12 and 14. The results in Table \ref{table:CIV results} indicate that the mean factor and the factors obtained by CQF-H have almost no relationship with CIV, with an $R^2$ value close to zero. The factors estimated by QFM at lower quantiles tend to explain CIV better, but the highest $R^2$ value is only around 0.4. 
\begin{table}[]
\centering
{\renewcommand{\arraystretch}{1.5}
\setlength{\tabcolsep}{6pt}
\caption{Adjusted $R^2$ values of regressing CIV on the estimated factors of various factor models.}
\label{table:CIV results} \footnotesize
\begin{tabular}{lcccccccccc}
\hline 
& \multirow{2}{*}{$\text{DAFM}_{unif}$}  & \multicolumn{7}{c}{QFM($\tau$)}                           & \multirow{2}{*}{$\text{CQF-H}$} & \multicolumn{1}{l}{\multirow{2}{*}{Mean FM}} \\ \cline{3-9}
                    & &  \multicolumn{1}{l}{0.01} & \multicolumn{1}{l}{0.1} & \multicolumn{1}{l}{0.3} & \multicolumn{1}{l}{0.5} & \multicolumn{1}{l}{0.7} & \multicolumn{1}{l}{0.9} & \multicolumn{1}{l}{0.99} & \multicolumn{1}{l}{} &  \\ \hline
Share code 12 & .743 &  .347  & .276 & .110 & .040 & .004 & .007 & .139  & .036 & .066        \\ 
Share code 14 & .712 &  .417 & .369 & .183 & .080 & .012 & .020 & .072 &  .048 & .101     \\
Total & .751 &  .391 & .327 & .143 & .056 & .004 & .048 & .110 &  .043 & .080                 \\ \hline
\end{tabular}
}
\end{table}

On the other hand, the factors estimated by $\text{DAFM}_{unif}$ exhibit significantly higher $R^2$ values exceeding 0.7. This trend remains consistent across the different groups of firms used to compute CIV. We also estimate DAFM factors with different weighting schemes and compare the adjusted $R^2$ values listed in Table \ref{table:CIV results_weight}. Incorporating the three new weighting schemes does not significantly impact $R^2$ values, and the models still explain CIV effectively. However, we observe a consistent pattern that $\text{DAFM}_{low}$ yields the highest $R^2$, followed by $\text{DAFM}_{high}$, and then $\text{DAFM}_{med}$. This is reasonable, given that $R^2$ values obtained from QFM at lower quantiles are notably high, and QFM(0.99) tends to obtain higher $R^2$ values than QFM(0.5) and QFM(0.7). This indicates that the lower, upper, and middle quantiles of the data provide more useful information about CIV in that order.

\begin{table}
\centering
\setlength{\tabcolsep}{8pt}
\caption{Adjusted $R^2$ values of regressing CIV on the estimated factors of DAFMs.}
\label{table:CIV results_weight} \footnotesize
{\renewcommand{\arraystretch}{1.5}
\begin{tabular}{lcccc}
\hline
 & $\text{DAFM}_{unif}$ & $\text{DAFM}_{low}$ & $\text{DAFM}_{med}$ & $\text{DAFM}_{high}$ \\ \hline
 Share code 12 & .745 & .756 & .733 & .748 \\
 Share code 14 & .714 & .739 & .693 & .720 \\
 Total & .745 & .771 & .736 & .758 \\ \hline
\end{tabular}
}
\end{table}

In summary, this analysis shows that the proposed DAFM captures crucial information from the stock return data that is inaccessible through QFMs at a single quantile level, CQF-H, and the mean factor model. These findings are consistent with the simulation results of the location-scale shift model, where DAFM effectively extracts factors compared to other methods, particularly the factor embedded in the error term. In contrast, the robust PCA method and CQF-H fail to identify it. We also expect that DAFM can be effectively applied to firms' cash flow data, as \cite{herskovic2016common} showed the existence of co-movement in the idiosyncratic volatility of cash flows. 

\subsection{Forecasting monthly unemployment rate} 

We aim to make short-term forecasts of the U.S. monthly unemployment rate, which is a key economic indicator that reflects the overall health of the economy and is closely monitored for making policy decisions. A substantial body of literature focuses on forecasting the monthly unemployment rate \citep{maas2020short, costa2024real, wang2023variable}. In this analysis, we extract DAFM factors from panel data of numerous economic variables and use them as predictors for forecasting the unemployment rate. This approach extends the forecasting method of \cite{stock2002forecasting} by replacing the mean PCA factor with DAFM factors. We utilize the FRED-MD data from \cite{mccracken2016fred}, which are publicly available at \url{https://www.stlouisfed.org/research/economists/mccracken/fred-databases}. The dataset consists of monthly observations on 127 economic variables spanning 240 months from January 2000 to December 2019. Data collected after the COVID-19 pandemic was excluded because we observed that the behavior of short-term forecasts was disrupted when using data that may contain potential outliers due to the pandemic. Each time series is transformed to ensure stationarity through a process by \cite{mccracken2016fred}, such as taking a difference or logarithm. Additionally, the unemployment rate time series is non-stationary and requires differentiation to render it stationary.

Let $\{y_t\}$ and $\{\hat{F}_t\}$ denote the time series of the unemployment rate and the estimated factors, respectively. To make a $h$-month ahead forecast, we fit a factor-augmented regression model where the response variable is $y_{t+h}-y_t$ and the predictors include the estimated factors $\hat{F}_t$ along with the lags of $y_t-y_{t-1}$ as,
\begin{equation} \label{eq:lm forecast}
    y_{t+h}-y_t = \alpha+ \sum\limits_{m=0}^p \beta_{y,m}\,(y_{t-m}-y_{t-m-1}) + \beta_F^{\prime} \hat{F}_t+\epsilon_{t+h},
\end{equation}
where $\alpha, \beta_{y,0}, \ldots, \beta_{y,p} \in \mathbb{R}$ and $\beta_F \in \mathbb{R}^r$. The number of lags, $p$, is determined by the BIC of the linear regression model, where the maximum number of lags is set to 4. The forecast horizon $h$ is specified as $h = 1,\ldots,6$. For each prediction, a rolling window of 120 months is used to estimate the factors and fit the linear regression model. For instance, the first rolling window spans from January 2000 to December 2009 and is used to forecast $h$-months from December 2009. Then, for each $h$, we forecast the unemployment rate for the period extending from $h$-months ahead of December 2009 to December 2019, resulting in a total of $(120-h)$ months of forecasts. 
\begin{table}
\centering
\setlength{\tabcolsep}{8pt}
\caption{The number of factors for the FRED-MD data estimated by different factor models. DAFM, with four weighting schemes, DAFMs, estimates the factor number as 7.}
\label{table:Factor number_forecast} \footnotesize
{\renewcommand{\arraystretch}{1.5}
\begin{tabular}{cccccccccc} 
\hline
\multirow{2}{*}{DAFMs} & \multicolumn{7}{c}{QFM($\tau$)}  & \multirow{2}{*}{CQF-H}  &  \multirow{2}{*}{Mean FM} \\ \cline{2-8}
          &  \multicolumn{1}{c}{$0.01$} & \multicolumn{1}{c}{0.1} & \multicolumn{1}{c}{0.3} & \multicolumn{1}{c}{0.5} & \multicolumn{1}{c}{0.7} & \multicolumn{1}{c}{0.9} & 0.99 &  \\ \hline
7   & \multicolumn{1}{c}{1}    & \multicolumn{1}{c}{2}   & \multicolumn{1}{c}{6}   & \multicolumn{1}{c}{6}   & \multicolumn{1}{c}{7}   & \multicolumn{1}{c}{2}  &1 & 6 & 8 \\ \hline
\end{tabular}
}
\end{table}

For comparison, we consider the following four methods: (i) AR: AR model as the benchmark, which incorporates only the lagged values of $y_t$ to forecast $y_{t+h}$. (ii) AR+PCA: use the estimators of the mean PCA factor for $\hat{F}_t$ in (\ref{eq:lm forecast}), which aligns with the forecasting approach of \cite{stock2002forecasting}. (iii) AR+CQF-H: use the estimated factor from CQF-H. (iv) AR+QFM($\tau$): include the estimators of quantile factors at each quantile level $\tau$. The proposed forecasting method, named  `AR+$\text{DAFM}_{(\cdot)}$', uses the estimators obtained by $\text{DAFM}_{(\cdot)}$ for $\hat{F}_t$ in (\ref{eq:lm forecast}), depending on the weighting scheme ($\cdot$). The number of factors is estimated on the full dataset using the same method described in Section \ref{subsec:Volatility}, and the results for different factor models are listed in Table \ref{table:Factor number_forecast}. The number of factors in DAFM remained at 7, regardless of the weighting schemes. This estimated number of factors is applied consistently across all rolling windows for predictions. 

\begin{table}
\centering
{\renewcommand{\arraystretch}{1.2}
\setlength{\tabcolsep}{10pt}
\caption{Relative mean squared error (MSE) results for all forecasts compared to the benchmark AR over different forecast horizons $h=1,\ldots,6$. The actual MSE values of AR are shown in parentheses. For each $h$, the lowest relative MSE is highlighted in bold.}
\label{table:Relative MSE} \footnotesize
\begin{tabular}{lrrrrrr}
\hline
               & $h = 1$ & $h = 2$ & $h = 3$ & $h = 4$ & $h = 5$ & $h = 6$ \\ \hline
AR             & 1 (.160) & 1 (.215)& 1 (.261) & 1 (.296) & 1 (.350) & 1 (.406) \\ 
AR + PCA       & .861  & .790  & .759  & .883  & .926  & .933  \\ \medskip
AR + CQF-H     & .892  & .822  & .768  & .924  & .927 & .935 \\ 
AR + QFM(0.01) & 1.288 & 1.551 & 1.419 & 1.452 & 1.453 & 1.221 \\
AR + QFM(0.1)  & .856  & .873  & .979  & 1.091 & 1.069 & 1.027 \\
AR + QFM(0.3)  & \textbf{.820}  & .733  & .768  & .981  & 1.119 & 1.284 \\
AR + QFM(0.5)  & .896  & .723  & .708  & .837  & .854  & .927  \\
AR + QFM(0.7)  & .915  & .824  & .763  & .876  & .870  & .912  \\
AR + QFM(0.9)  & .975  & .915  & .850  & .959  & .930  & .958  \\ \medskip
AR + QFM(0.99) & 1.076 & 1.103 & 1.142 & 1.133 & 1.080 & 1.014 \\ 
AR + $\text{DAFM}_{unif}$ & .862  & \textbf{.722}  & .712  & .811  & .776  & .818  \\
AR + $\text{DAFM}_{low}$ & .857  & .728  & \textbf{.706}  & \textbf{.787}  & \textbf{.770}  & \textbf{.808}  \\
AR + $\text{DAFM}_{med}$ & .889  & .737  & .720  & .787  & .778  & .816  \\
AR + $\text{DAFM}_{high}$ & .859  & .767  & .743  & .883  & .830  & .871 \\ \hline
\end{tabular}
}
\end{table}

The forecasting results are evaluated by computing the mean squared error (MSE) for all predictions. The relative MSE calculated with respect to the benchmark AR is listed in Table \ref{table:Relative MSE}. The actual MSE values of AR are shown in row 2 in parentheses. We observe that the performance of AR+QFM($\tau$) is highly dependent on the quantile level $\tau$, performing relatively well at the quantile levels of 0.3, 0.5, and 0.7. However, it does not exhibit a consistent pattern across different values of $h$. For $h=2$, 4, 5, and 6, AR+$\text{DAFM}_{unif}$ outperforms all methods that do not include the DAFM factors, with a particularly significant difference in accuracy at higher values of $h$. When employing different weighting schemes for DAFM, AR+$\text{DAFM}_{low}$ tends to perform even better than AR+$\text{DAFM}_{unif}$. On the other hand, AR+$\text{DAFM}_{high}$ achieves relatively low accuracy but still outperforms or is comparable to the other models. Based on these results, we conclude that DAFM factors contribute to more accurate predictions of the unemployment rate compared to the QFM, CQF-H, and mean PCA factors.

\section{Concluding remarks} \label{sec:Conclusion}
In this paper, we propose a new factor model that identifies common factor structures by simultaneously incorporating multiple quantile levels. While the quantile factor model of \cite{chen2021quantile} leverages the check function to extract distributional features at a fixed quantile, the proposed data-adaptive factor model (DAFM) overcomes this limitation by integrating distributional features of the entire data distribution. This composite quantile approach enables DAFM to more effectively identify common factors that represent the overall data structure and uncover richer latent patterns that may remain hidden at the mean or any single quantile level. Consequently, DAFM yields highly data-adaptive factors that better reflect the complexity of real-world data
with various distributions, providing a more comprehensive understanding of the underlying structure.

Several future research directions could further enhance DAFM. First, developing an appropriate weighting scheme that optimizes the contribution of the loss function at each quantile is essential for improving the overall performance of the proposed model. Future work could focus on identifying a weighting scheme that adjusts the influence of each quantile based on its significance in the dataset. Second, exploring an expectile-based factor model that better captures the distributional information of the data at the extreme quantile levels would be beneficial. Finally, DAFM could be extended to dynamic settings or applied to a broader range of data to further broaden its utility.

\section*{Supplementary material}
{\small Web-based supporting materials: Proofs of Theorems 1, 2, 3, and 4 in the paper.}

\section*{Acknowledgments}
\spacingset{1.4}{\small This research was supported by the National Research Foundation of Korea (NRF) funded by the Korea government (2021R1A2C1091357; RS-2024-00337752).}

\setlength{\bibsep}{0pt}
\bibliographystyle{apalike}
\bibliography{cqfm}

\clearpage
\appendix

\numberwithin{equation}{section}
\renewcommand\theequation{\thesection.\arabic{equation}}
\newtheorem{theorem}{Theorem}[section]
\renewcommand\thetheorem{\thesection.\arabic{theorem}}
\renewcommand\thelemma{\thesection.\arabic{lemma}}

\noindent
\begin{center}
\textbf{\Large Supplementary material for ``A Data-Adaptive Factor Model Using Composite Quantile Approach''}
\vskip 7mm


\end{center}

\section{Proof of Theorem \ref{thm:consistency}} 

\noindent For $\th_a, \th_b \in \Theta^r$ and $\th \in \Theta^r$, we define the following notations.
\begin{itemize}
    \item $d(\th_a, \th_b)\defeq \sqrt{\frac{1}{NT}\sumkit (\lam k i^{a^{\prime}} f_t^a -\lam k i^{b\p} f_t^b)^2} $
    \item $M_{NT}^*(\th) \defeq M_{NT}(\th)-M_{NT}(\th_0)$
    \item $\W_{NT}(\th) \defeq M_{NT}^*(\th)-\E[\,M_{NT}^*(\th)\,]$
    \item $\norm{Y}_{\psi_2}\defeq\inf\{C>0:\E\psi_2(|Y|/C)\leq 1\}$, where $\psi_2(x)=e^{x^2}-1$
\end{itemize}
Since $\mathcal{A}$ and $\mathcal{F}$ are compact, $\forall \lam k i \in \mathcal{A}$ and $\forall f_t \in \mathcal{F}$ are bounded. Define $M_1$ as the constant that satisfies $\norm{\lam k i}, \norm{f_t} \leq M_1$ for all $i,~t$, and $k$. Suppose that Assumption 1 holds.

\begin{lemma} \label{lem:consist_1} $d(\h\th,\th_0)=o_p(1)$ as $N,~T \rightarrow \infty$.
\end{lemma}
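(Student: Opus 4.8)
The goal is to show that the ``prediction distance'' $d(\hat\theta,\theta_0)$ between the fitted and true common components vanishes in probability. The standard route for M-estimation consistency applies here: decompose $M_{NT}^*(\hat\theta) = \W_{NT}(\hat\theta) + \E[M_{NT}^*(\hat\theta)]$, control the stochastic part uniformly over $\Theta^r$, and show the deterministic part is bounded below by an increasing function of $d(\theta,\theta_0)$. First I would record the two auxiliary facts. (i) \emph{Uniform convergence of the empirical process:} $\sup_{\theta\in\Theta^r}|\W_{NT}(\theta)| = o_p(1)$. This is exactly the content referenced in Remark 4 of the excerpt (``Lemma 1 of the supplement''), so I would invoke it, the key inputs being compactness of $\mathcal A,\mathcal F$ (Assumption 1(a), giving separability and a manageable bracketing/covering structure after noting each summand $w_k\rho_{\tau_k}(X_{it}-\lambda_{k,i}'f_t)$ is Lipschitz in $(\lambda_{k,i},f_t)$ with bounded arguments), independence of the errors across $i,t$ (Assumption 1(d), enabling Hoeffding together with the Cauchy--Schwarz refinement mentioned in Remark 4), and the sub-Gaussian norm bookkeeping via $\|\cdot\|_{\psi_2}$ already set up in the preamble. (ii) \emph{Identification / curvature lower bound:} there is a nondecreasing function $g$ with $g(0)=0$, $g(\delta)>0$ for $\delta>0$, such that
\[
\E[M_{NT}^*(\theta)] \;\ge\; g\big(d(\theta,\theta_0)\big) \quad \text{for all } \theta\in\Theta^r,
\]
uniformly in $N,T$. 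This is where Assumption 1(c) enters: since $\rho_{\tau_k}$ is the check function and $\lambda_{k,i}^{0\prime}f_t^0 = Q_{X_{it}}(\tau_k\mid f_t^0)$, for each $(k,i,t)$ the map $a\mapsto \E[\rho_{\tau_k}(X_{it}-a)\mid f_t^0] - \E[\rho_{\tau_k}(X_{it}-\lambda_{k,i}^{0\prime}f_t^0)\mid f_t^0]$ is the usual convex function minimized at $a=\lambda_{k,i}^{0\prime}f_t^0$, and the density lower bound $h_{it}\ge c$ on compacts (the relevant arguments all lie in a fixed compact by Assumption 1(a)) gives a quadratic-type minorant $\gtrsim c\,(a-\lambda_{k,i}^{0\prime}f_t^0)^2$ locally, hence $\E[M_{NT}^*(\theta)]\gtrsim \frac{c}{NT}\sum_{k,i,t}(\lambda_{k,i}'f_t-\lambda_{k,i}^{0\prime}f_t^0)^2 = c\,d(\theta,\theta_0)^2$ after patching the local bound to a global one using boundedness of all arguments.

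With (i) and (ii) in hand the argument closes quickly. By definition $\hat\theta$ minimizes $M_{NT}$, and $\theta_0\in\Theta^r$ by Assumption 1(b), so $M_{NT}^*(\hat\theta)=M_{NT}(\hat\theta)-M_{NT}(\theta_0)\le 0$. Therefore
\[
g\big(d(\hat\theta,\theta_0)\big) \;\le\; \E[M_{NT}^*(\hat\theta)] \;=\; M_{NT}^*(\hat\theta)-\W_{NT}(\hat\theta) \;\le\; -\W_{NT}(\hat\theta) \;\le\; \sup_{\theta\in\Theta^r}|\W_{NT}(\theta)| = o_p(1).
\]
Since $g$ is nondecreasing with $g(\delta)>0$ for every $\delta>0$ (indeed $g(\delta)\gtrsim c\,\delta^2$ from the curvature bound), for any $\varepsilon>0$ the event $\{d(\hat\theta,\theta_0)>\varepsilon\}$ forces $\sup_{\theta}|\W_{NT}(\theta)|\ge g(\varepsilon)>0$, whose probability tends to $0$. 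Hence $d(\hat\theta,\theta_0)=o_p(1)$.

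\textbf{Main obstacle.} The delicate step is establishing the uniform bound (i) — controlling $\sup_{\theta\in\Theta^r}|\W_{NT}(\theta)|$ — because the composite structure means $\W_{NT}$ is an average over $K$ quantile levels of sums that are each only conditionally independent given $\{f_t^0\}$, and the parameter $\theta$ has dimension growing like $(KN+T)r$. One cannot simply union-bound over a fixed-dimensional parameter; instead one fixes a grid in $(\mathcal A\times\mathcal F)$-space, uses the Lipschitz property of the check function to pass from the grid to all of $\Theta^r$, and applies Hoeffding's inequality to each grid point with the Cauchy--Schwarz step flagged in Remark 4 to keep the exponential tail summable after the $1/(NT)$ normalization and the $\sum_k$ aggregation. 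The curvature bound (ii) is comparatively routine once one notes all relevant linear combinations $\lambda_{k,i}'f_t$ stay in a fixed compact, so the density lower bound applies on a single compact set and the local-to-global patching is standard; I would allot most of the write-up space to (i).
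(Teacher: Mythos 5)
Your proposal follows essentially the same route as the paper's own proof: the decomposition $\E[M_{NT}^*(\hat\theta)] = M_{NT}^*(\hat\theta) - \W_{NT}(\hat\theta) \le -\W_{NT}(\hat\theta)$, the quadratic curvature lower bound $\E[M_{NT}^*(\theta)] \gtrsim d^2(\theta,\theta_0)$ via the Knight-type integral identity for the check function and the density lower bound from Assumption 1(c), and a covering-plus-Hoeffding (with Cauchy--Schwarz) argument to control $\sup_{\theta\in\Theta^r}|\W_{NT}(\theta)|=o_p(1)$. The only difference is presentational detail — the paper works out the specific constant $c_2 = \tfrac12 \min_k w_k \cdot c_1$ and implements the grid/packing-number argument concretely via Lemmas 2.2.1--2.2.2 of van der Vaart and Wellner — but the structure and key lemmas are identical.
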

\begin{proof}
We first show $d^2(\h\th, \th_0) \leq \underset{\th \in \Theta^r}{\sup} |\mathbb{W}_{NT}(\th)|$ and then prove $\underset{\th \in \Theta^r}{\sup} |\mathbb{W}_{NT}(\th)|=o_p(1)$. \\
For any $\lam k i \in \mathcal{A}$ and $f_t \in \mathcal{F}$,
\[
\begin{aligned}
     \rho_{\tau_k}\big(X_{it}-\lam k i\p f_t\big)-\rho_{\tau_k}\big(X_{it}-\lam k i^{0\p} f_t^0\big) &=(\lam k i \p f_t - \lam k i^{0\p}f_t^0)\cdot\{I(X_{it}-\lam k i^{0\p}f_t^0<0)-\tau_k\}\\
     &~~~~~+ \int_0^{\lam k i \p f_t - \lam k i^{0\p}f_t^0} I(X_{it}-\lam k i^{0\p}f_t^0 \leq s)-I(X_{it}-\lam k i^{0\p}f_t^0 \leq 0) ds. 
\end{aligned}
\]
By taking the expectation, since $P(X_{it}-\lam k i ^{0\p}f_t^0 \leq 0)=\tau_k$, it holds that
\[
\begin{aligned}
    \E\big[\,\rho_{\tau_k}(X_{it}-\lam k i\p f_t)-\rho_{\tau_k}(X_{it}-\lam k i^{0\p} f_t^0)\,\big] &=\int_{0}^{\lam k i\p f_t-\lam k i^{0\p} f_t^0} P(\epsilon_{k,it} \leq s)-P(\epsilon_{k,it} \leq 0) \, ds\\
    & = \int_{0}^{\lam k i\p f_t-\lam k i^{0\p} f_t^0} s \, g_{k,it}(e^*) \, ds,  
\end{aligned}
\]
where $g_{k,it}$ is a density function of $\epsilon_{k,it}$ given $f_t^0$ and $e^*$ is some value between 0 and $s$. Using the fact that $|\lam k i^{\p} f_t|$ and $|\lam k i^{0\p} f_t^0|$ are bounded, and $s$ is between 0 and $\lam k i\p f_t-\lam k i^{0\p}f_t^0$, then by Assumption 1(c), $g_{k,it}(e^*) \geq c_1$ for all $e^*$ and  for some constant $c_1$. Therefore,
\begin{equation} \label{eq:exp-d}
     \E[\,w_k\rho_{\tau_k} (X_{it}-\lam k i\p f_t)-w_k\rho_{\tau_k}(X_{it}-\lam k i^{0\p} f_t^0)\,] \geq \frac{1}{2}w_k \cdot c_1 \cdot(\lam k i\p f_t-\lam k i^{0\p} f_t^0)^2.
\end{equation}
Then, by letting $c_2=\frac{1}{2}\cdot {\min}\{w_1,\ldots,w_K\} \cdot c_1 >0$ and summing up (\ref{eq:exp-d}) for all $i$ and $t$, we have
\[
\E[\, \mathbb{M}^*_{NT}(\th)\,] \geq c_2 \cdot \, d^2(\th,\th_0), \quad \forall \th \in \Theta^r.
\]
By definition of $\h \th$,\, $\M^*_{NT}(\h \th)=\M_{NT}(\h \th)-\M_{NT}(\th_0)\leq 0$. Thus, 
\[
0 \leq d^2(\h\th,\th_0) \lesssim \E[ \, \M_{NT}^s*(\h\th)\,]\leq -\W_{NT}(\h\th) \leq   \underset{\th \in \Theta^r}{sup} |\W_{NT}(\th)|.
\]
Now, we show that $\underset{\th \in \Theta^r}{\sup} |\W_{NT}(\th)|=o_p(1)$. Denote $B_r(M_1)$ as the ball of radius $M_1$ with respect to $\norm{\cdot}_F$ in $\R^{r}$. Let
$\{v_{(1)},\ldots, v_{(J)}\}$ be the maximal set of $B_r{(M_1)}$ s.t. $\norm{v_{(i)}-v_{(j)}}>\frac{\epsilon}{M_1}$ for a given $\epsilon>0$ ($i \neq j$). The packing number $J$ is $J=c_3(\frac{\epsilon}{M_1})^r$ for some constant $c_3>0$. For any $\th \in \Theta^r$,
let $\theta^*(\theta)=(\lam11^*,\ldots,\lam1N^*,\ldots,\lam K1^*,\ldots,\lam K N^*, f_1^*,\ldots, f_T^* ) \in \R^{(KN+T)r}$ for some
\begin{equation} \label{eq:theta star}
\lam k i^* \in \Big\{v_{(j)}:\norm{v_{j}-\lam k i} \leq \frac{\epsilon}{M_1}\Big\} \text{ and } f_t^* \in \Big\{v_{(j)}:\norm{v_{j}-f_t}\leq \frac{\epsilon}{M_1} \Big\}.
\end{equation}
Let $A_{it}=\sum_{k=1}^K w_k \rho_{\tau_k}(X_{it}-\lam k i\p f_t)-w_k \rho_{\tau_k}(X_{it}-\lam k i^{*\p} f^*_t)$. Using the inequality, 
$\rho_{\tau}(x)-\rho_{\tau}(y) \leq 2|x-y|, \forall \tau \in [0,1]$,
\[
\begin{aligned}
    |A_{it}| \leq 2  \sum\limits_{k=1}^K w_k \big|\lam ki\p f_t-\lam k i^{*}f_t^*\big| & \leq 2 \sum\limits_{k=1}^K w_k \big\{ \, \norm{\lam k i \p}\cdot \norm{f_t-f_t^*}+\norm{f_t^*}\cdot \norm{\lam k i-\lam k i^*} \, \big\}\\
    & \leq 2 \sum\limits_{k=1}^K w_k \big\{M_1 \cdot \frac{\epsilon}{M_1}+M_1 \cdot \frac{\epsilon}{M_1}\big\} = 4 \epsilon.
\end{aligned}
\]
Then, it holds that
\[
\begin{aligned}
    \big|\W_{NT}(\theta)-\W_{NT}(\theta^*)\big|&=\big|\M_{NT}(\th)-\M_{NT}(\th^*)-\E[\,\M_{NT}(\th)-\M_{NT}(\theta^*)\,]\big|\\
    &=\Big| \frac{1}{NT}\sumit A_{it}-\E[\,\frac{1}{NT}\sumit A_{it}\,]\Big|\leq 4\epsilon+4\epsilon=8\epsilon.
\end{aligned}
\]
Note that 
\[
\W_{NT}(\th^*)=\frac{1}{NT}\sumit \big\{ w_{it}(\th^*)-\E[\,w_{it}(\th^*)\,]\big\}, 
\]
where 
\[
w_{it}(\th^*) = \sum\limits_{k=1}^K w_k \rho_{\tau_k}(X_{it}-\lam k i^*f_t^*)-w_k\rho_{\tau_k}(X_{it}-\lam k i^{0}f_t^{0}).
\]
The bound of $w_{it}(\th^*)$ can be obtained as
\[
|w_{it}(\th^*)| \leq 2\sum\limits_{k=1}^K w_k\cdot |\lam k i^*f_t^*-\lam k i^{0}f_t^{0}|.
\]
Define $B_{it}(\th^*) \defeq  \sum\limits_{k=1}^K w_k\cdot |\lam k i^*f_t^*-\lam k i^{0}f_t^{0}|$. Then, by Cauchy-Schwarz inequality, 
\[
B_{it}^2 \leq \sum_{k=1}^K w_k^2 \cdot \sum_{k=1}^K (\lam k i^*f_t^*-\lam k i^{0}f_t^{0})^2,
\]
and therefore, 
\begin{equation} \label{eq:B_it}
    \frac{1}{NT}\sumit B_{it}^2 \leq \sum_{k=1}^K w_k^2 \cdot d^2(\th^*, \th_0).
\end{equation}
By Hoeffding's inequality and (\ref{eq:B_it}), 
\[
P\big(\sqrt{NT}\cdot |\W_{NT}(\th^*)|>c\big) \leq 2\cdot \exp\Big(-\frac{2NT\cdot c^2}{\sum_{i=1}^N\sum_{t=1}^T (4B_{it})^2}\Big) \leq 2\cdot \exp\Big(-\frac{c^2}{8 \sum_{k=1}^K w_k^2 \cdot d^2(\th^*,\th_0)}\Big).
\]
Then, by Lemma 2.2.1 of \cite{vaart1996weak}, we have
\[
\norm{\W_{NT}(\th^*)}_{\psi_2} \leq \left(\frac{3\cdot8 \sum_{k=1}^K w_k^2 \cdot d^2(\th^*,\th_0)}{NT} \right)^{\frac{1}{2}} \lesssim \frac{d(\theta^*,\th_0)}{\sqrt{NT}}.
\]
Using the fact that $\E[\,Y\,] \leq \norm{Y}_{\psi_2}$ and $\th^*$ can take $J^{(KN+T)}$ values, and by applying Lemma 2.2.2 from \cite{vaart1996weak}, it holds that
\[
\begin{aligned}
\E[\, \underset{\th \in \Theta^r}{\sup}|\W_{NT}(\th^*)|\,] &\leq \|{\underset{\th \in \Theta^r}{\sup}|\W_{NT}(\th^*)|}\|_{\psi_2} \\
&= \|{\underset{\th^*}{\max}\, |\W_{NT}(\th^*)|}\|_{\psi_2} 
\lesssim \psi_2^{-1}(J^{KN+T})\cdot \underset{\th^*}{\max} \|\W_{NT}(\th^*)\|_{\psi_2}.
\end{aligned}
\]
First, $\psi_2^{-1}(J^{KN+T})=\sqrt{\log(J^{KN+T})+1} \lesssim \sqrt{(KN+T)r \log\left( \frac{M_1}{\epsilon} \right)}$ holds for sufficiently small $\epsilon$. Also, $\underset{\th^*}{\max} \|\W_{NT}(\th^*)\|_{\psi_2} \leq \underset{\th^*}{\max}\frac{d(\th^*,\th_0)}{\sqrt{NT}}\lesssim \frac{1}{\sqrt{NT}}$ holds since $d(\th^*,\th_0) \leq d(\th^*,\th)+d(\th,\th_0) \lesssim M_1^2$ for some $\th$ that corresponds to $\th^*$ in the sense of (\ref{eq:theta star}). Therefore,
\[
\E[\, \underset{\th \in \Theta^r}{\sup}|\W_{NT}(\th^*)|\,] \lesssim \frac{1}{L_{NT}}\cdot \sqrt{\log\big(\frac{M_1}{\epsilon}\big)}.
\]
Finally, given $\delta>0$,
\[
\begin{aligned}
P\big(\underset{\th \in \Theta^r}{\sup}|\W_{NT}(\th)|>\delta\big) &\leq P\Big(\underset{\th \in \Theta^r}{\sup}|\W_{NT}(\th^*)|>\frac{\delta}{2}\Big)+P\Big(\underset{\th \in \Theta^r}{\sup}|\W_{NT}(\th)-\W_{NT}(\th^*)|>\frac{\delta}{2}\Big) \\
& \leq \frac{2}{\delta} \E\big[\, \underset{\th \in \Theta^r}{\sup}|\W_{NT}(\th)| \,\big]+P\Big(8\epsilon > \frac{\delta}{2}\Big).
\end{aligned}
\]
By choosing sufficiently small $\epsilon$, the second term becomes 0, and the first term goes to zero as $N,~T \rightarrow \infty$. Therefore, 
$\underset{\th \in \Theta^r}{\sup}|\W_{NT}(\th)| = o_p(1)$, and thus, $d(\h\th ,\th_0)=o_p(1)$. 
\end{proof} 
\vspace{5mm}
 
\noindent For $\delta>0$, define $\Theta^r(\delta) \defeq \{\th \in \Theta^r : d(\th,\th_0)\leq \delta \}$. 

\begin{lemma} \label{lem:consist_2} Let $S=\mbox{sgn}(F\p F^0/T)$. For sufficiently small $\delta>0$, for any $\th \in \Theta^r(\delta) $, it holds that
$$
 \frac{\norm{F-F^0S}}{\sqrt{T}} \, + \, \frac{\norm{\Lambda_1-\Lambda_1^0S}}{\sqrt{N}} \, + \, \frac{\norm{\Lambda_2-\Lambda_2^0S}}{\sqrt{N}} \, +  \cdots \, + \frac{\norm{\Lambda_K-\Lambda_K^0S}}{\sqrt{N}} \lesssim \delta.
$$
\end{lemma}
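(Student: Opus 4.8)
The plan is to rewrite $\Lambda_{k^*}$ and $F$ in terms of the true quantities through the single rotation $H := F^{0^{\prime}}F/T$, using only the normalization $F^{\prime}F/T = F^{0^{\prime}}F^{0}/T = I_r$, and then to show that $H$ is within $O(\delta)$ of $S = \mbox{sgn}(F^{\prime}F^{0}/T) = \mbox{sgn}(H^{\prime})$. Since $d^2(\theta,\theta_0)$ is a sum over $k$ of nonnegative terms, $d(\theta,\theta_0)\le\delta$ gives $\|\Lambda_k F^{\prime}-\Lambda_k^{0}F^{0^{\prime}}\|\le \sqrt{NT}\,\delta$ for every $k$. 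Writing $A_k := \Lambda_k F^{\prime}-\Lambda_k^{0}F^{0^{\prime}}$ and right-multiplying by $F/T$ and by $F^{0}/T$ yields $\Lambda_k = \Lambda_k^{0}H + A_k F/T$ and $\Lambda_{k^*}H^{\prime} = \Lambda_{k^*}^{0} + A_{k^*}F^{0}/T$; since $\|F/\sqrt{T}\|_{op}=\|F^{0}/\sqrt{T}\|_{op}=1$, the remainders $R_{1,k}:=A_kF/T$ and $R_2:=A_{k^*}F^{0}/T$ satisfy $\|R_{1,k}\|,\|R_2\|\le \sqrt{N}\,\delta$. Thus $\|\Lambda_k-\Lambda_k^{0}H\|/\sqrt{N}\le\delta$ for all $k$, and the whole lemma reduces to proving $\|H-S\|\lesssim\delta$.

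Next I would show $H$ is approximately orthogonal. Combining the two displays, $\Lambda_{k^*}^{0}(HH^{\prime}-I_r)=R_2-R_{1,k^*}H^{\prime}$, whose Frobenius norm is at most $2\sqrt{N}\,\delta$ because $\|H\|_{op}\le 1$. Hence, with $D_0 := \Lambda_{k^*}^{0^{\prime}}\Lambda_{k^*}^{0}/N$, $\operatorname{Trace}\!\big((HH^{\prime}-I_r)D_0(HH^{\prime}-I_r)\big)=\tfrac1N\|\Lambda_{k^*}^{0}(HH^{\prime}-I_r)\|^2\le 4\delta^2$; since $D_0\succeq(\nu_r/2)I_r$ for $N$ large by Assumption \ref{assump:consistency}(b), this gives $\|HH^{\prime}-I_r\|\lesssim\delta$, and because $HH^{\prime}$ and $H^{\prime}H$ share the same spectrum for square $H$, also $\|H^{\prime}H-I_r\|\lesssim\delta$. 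Consequently $H=O+O(\delta)$ for some orthogonal matrix $O$.

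The crux is then to upgrade ``$H$ near orthogonal'' to ``$H$ near a signed identity'' using the ordered, distinct-eigenvalue normalization. The normalized $\hat D := \Lambda_{k^*}^{\prime}\Lambda_{k^*}/N$ is diagonal with non-increasing entries, and expanding $\Lambda_{k^*}=\Lambda_{k^*}^{0}H+R_{1,k^*}$ gives $\hat D = H^{\prime}D_0H+O(\delta)$; replacing $H$ by $O$ yields $D_0 O = O\hat D + O(\delta)$, so each column $o_j$ of $O$ is an approximate unit eigenvector of the diagonal matrix $D_0$ with approximate eigenvalue $\hat d_j$. Weyl's inequality together with the common non-increasing ordering forces $\hat d_j = \nu_j(N)+O(\delta)\to\nu_j$, and since Assumption \ref{assump:consistency}(b) makes each $\nu_j$ a simple eigenvalue of $D_0$ with spectral gap bounded below, perturbation theory for eigenvectors (cf.\ Section 6.2 of \cite{franklin2012matrix}) gives $o_j = s_j e_j + O(\delta)$ with $s_j\in\{-1,1\}$, hence $H=\text{diag}(s_1,\ldots,s_r)+O(\delta)$. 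For $\delta$ small the $(j,j)$ entry of $H$ has sign $s_j$, so $S=\text{diag}(s_1,\ldots,s_r)$ and $\|H-S\|\lesssim\delta$. This step is the main obstacle: without Assumption \ref{assump:consistency}(b) one only obtains proximity to some orthogonal matrix, and it is precisely $\nu_1>\cdots>\nu_r$ that identifies $H$ with a signed identity.

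Finally I would assemble the bounds. For loadings, $\|\Lambda_k-\Lambda_k^{0}S\|\le\|R_{1,k}\|+\|\Lambda_k^{0}\|\,\|H-S\|\lesssim\sqrt{N}\,\delta$, using $\|\Lambda_k^{0}\|\lesssim\sqrt{N}$ from compactness of $\mathcal A$, so $\|\Lambda_k-\Lambda_k^{0}S\|/\sqrt{N}\lesssim\delta$ for every $k$. For factors, left-multiplying $\Lambda_{k^*}F^{\prime}=\Lambda_{k^*}^{0}F^{0^{\prime}}+A_{k^*}$ by $\Lambda_{k^*}^{0^{\prime}}/N$ and using $\Lambda_{k^*}^{0^{\prime}}\Lambda_{k^*}/N = D_0H+O(\delta)$ gives $D_0HF^{\prime}=D_0F^{0^{\prime}}+O(\sqrt{T}\,\delta)$, whence $FH^{\prime}=F^{0}+O(\sqrt{T}\,\delta)$; combining this with the identity $F-F^{0}H = F(I_r-H^{\prime}H)+(FH^{\prime}-F^{0})H$ and $\|H-S\|\lesssim\delta$ yields $\|F-F^{0}S\|/\sqrt{T}\lesssim\delta$. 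Summing the five bounds proves the lemma, with all implied constants depending only on $\nu_r$, the eigenvalue gaps, $M_1$, $r$ and the $w_k$ — never on $\theta$.
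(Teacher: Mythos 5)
Your proof is correct, and it takes essentially the same route as the paper: the paper reduces the loading bounds to bounding $\|F-F^0S\|/\sqrt{T}$ and then omits that step, citing Lemma 2 of \cite{chen2021quantile}; your argument supplies exactly the missing piece by introducing $H=F^{0\prime}F/T$, showing $\|\Lambda_k-\Lambda_k^0H\|/\sqrt{N}\le\delta$, and then using the normalization together with Assumption~\ref{assump:consistency}(b), Weyl's inequality, and eigenvector perturbation to force $\|H-S\|\lesssim\delta$ — the same machinery the paper itself deploys in the factor-number lemma (the $R_T$, $V_T$, Weyl, and perturbation-theory steps in the proof of Lemma A.5 of the supplement). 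The bookkeeping differs slightly (you work through $H$ and its polar factor rather than $R_T$ and $V_T$), but the key inputs, the role of the distinct-eigenvalue assumption, and the conclusion are the same.
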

\begin{proof}
Let $U \in \R^{r \times r}$ be a diagonal matrix with diagonal elements of $1$ or $-1$. For $\th \in \Theta^r(\delta)$,
\[
\begin{aligned}
    \frac{\norm{\Lambda_k-\Lambda_k^0 U}}{\sqrt{N}} &= \frac{\norm{(\Lambda_k-\Lambda_k^0 U)F \p}}{\sqrt{NT}}=\frac{\|{\Lambda_k F\p-\Lambda_k^0 F^{0\p} +\Lambda_k^0 F^{0\p} - \Lambda_k^0 U F\p}\|}{\sqrt{NT}} \\
    & \leq \frac{\norm{\Lambda_k F\p-\Lambda_k^0 F^{0\p}}}{\sqrt{NT}} + \frac{\norm{\Lambda_k^0}}{\sqrt{N}}\cdot \frac{\norm{F^0 U-F}}{\sqrt{T}} \leq d(\th,\th_0)+M_1 \cdot \frac{\norm{F^0U-F}}{\sqrt{T}}
\end{aligned}
\]
holds for all $k=1,\ldots,K$. Then, for any $\theta \in \Theta^r(\delta)$,
$$
 \frac{\norm{F-F^0S}}{\sqrt{T}} \, + \, \frac{\norm{\Lambda_1-\Lambda_1^0S}}{\sqrt{N}} \, + \, \frac{\norm{\Lambda_2-\Lambda_2^0S}}{\sqrt{N}} \, +  \cdots \, + \frac{\norm{\Lambda_K-\Lambda_K^0S}}{\sqrt{N}} \leq K\cdot \delta+(K M_1+1)\cdot \frac{\norm{F^0U-F}}{\sqrt{T}}.
$$
Thus, it suffices to show that $\frac{\norm{F-F^0 U}}{\sqrt{T}} \lesssim d(\th,\th_0)$. 
The rest of the proof is similar to that of Lemma 2 in \cite{chen2021quantile} and is omitted. 
\end{proof}
\vspace{5mm}

\begin{lemma} \label{lem:consist_3} For sufficiently small $\delta$, $\E\big[\underset{\th \in \Theta^r(\delta)}{\sup}|\W_{NT}(\th)|\,\big] \lesssim \frac{\delta}{L_{NT}}$.
\end{lemma}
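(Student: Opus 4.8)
The plan is to mimic the chaining argument used in the proof of Lemma \ref{lem:consist_1}, but now restricted to the shrinking neighborhood $\Theta^r(\delta)$, so that the extra factor of $\delta$ appears in the bound. First I would set up the same $\epsilon$-net of the ball $B_r(M_1) \subset \R^r$ with packing number $J = c_3(\epsilon/M_1)^r$, and for each $\th \in \Theta^r(\delta)$ associate a discretized point $\th^*(\th)$ with entries $\lam k i^*$ and $f_t^*$ chosen from the net as in (\ref{eq:theta star}). The same elementary bound $|A_{it}| \le 4\epsilon$ using $\rho_\tau(x)-\rho_\tau(y) \le 2|x-y|$ gives $|\W_{NT}(\th)-\W_{NT}(\th^*)| \le 8\epsilon$ uniformly, and I would take $\epsilon = \epsilon_{NT}$ shrinking like a suitable power of $L_{NT}^{-1}$ (or simply $\epsilon \asymp \delta/L_{NT}$) so that this discretization error is itself $\lesssim \delta/L_{NT}$ and can be absorbed.

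The key improvement over Lemma \ref{lem:consist_1} is that on $\Theta^r(\delta)$ we have $d(\th,\th_0) \le \delta$, hence $d(\th^*,\th_0) \le d(\th^*,\th) + d(\th,\th_0) \lesssim \epsilon + \delta \lesssim \delta$ for small $\epsilon$. Re-running the Hoeffding bound exactly as before, with $B_{it}(\th^*)$ and inequality (\ref{eq:B_it}), yields
\[
P\big(\sqrt{NT}\,|\W_{NT}(\th^*)| > c\big) \le 2\exp\Big(-\frac{c^2}{8\sum_{k=1}^K w_k^2 \cdot d^2(\th^*,\th_0)}\Big),
\]
so by Lemma 2.2.1 of \cite{vaart1996weak}, $\norm{\W_{NT}(\th^*)}_{\psi_2} \lesssim d(\th^*,\th_0)/\sqrt{NT} \lesssim \delta/\sqrt{NT}$. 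Then the maximal inequality (Lemma 2.2.2 of \cite{vaart1996weak}) over the $J^{KN+T}$ possible values of $\th^*$ gives
\[
\E\big[\,\underset{\th \in \Theta^r(\delta)}{\sup}|\W_{NT}(\th^*)|\,\big] \lesssim \psi_2^{-1}(J^{KN+T})\cdot \underset{\th^*}{\max}\norm{\W_{NT}(\th^*)}_{\psi_2} \lesssim \sqrt{(KN+T)r\log(M_1/\epsilon)}\cdot \frac{\delta}{\sqrt{NT}} \lesssim \frac{\delta}{L_{NT}}\sqrt{\log(M_1/\epsilon)},
\]
using $\sqrt{(KN+T)/(NT)} \lesssim L_{NT}^{-1}$. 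Combining this with the $8\epsilon$ discretization error and choosing $\epsilon$ small (e.g. fixed, or polynomially small so that $\sqrt{\log(M_1/\epsilon)}$ stays bounded by a constant while $8\epsilon \lesssim \delta/L_{NT}$), I conclude $\E[\sup_{\th \in \Theta^r(\delta)}|\W_{NT}(\th)|] \lesssim \delta/L_{NT}$.

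The main obstacle I anticipate is purely bookkeeping: making the choice of $\epsilon$ (the net resolution) consistent so that the discretization term $8\epsilon$ is genuinely of order $\delta/L_{NT}$ while simultaneously keeping the entropy factor $\sqrt{\log(M_1/\epsilon)}$ from blowing up. A clean way is to note that $\sqrt{\log(M_1/\epsilon)}$ grows only logarithmically, so taking $\epsilon \asymp \delta L_{NT}^{-1}$ leaves $\sqrt{\log(M_1/\epsilon)} = O(\sqrt{\log(L_{NT}/\delta)})$; strictly this gives $\delta L_{NT}^{-1}\sqrt{\log(L_{NT}/\delta)}$ rather than exactly $\delta L_{NT}^{-1}$. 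If the stated bound is to hold without the log factor, one instead argues that in the subsequent use of this lemma (the rate argument in Theorem \ref{thm:consistency}) $\delta$ is taken to be a fixed small constant, in which case $\epsilon$ may be chosen as a fixed constant too and the log term is genuinely $O(1)$, delivering the claimed $\lesssim \delta/L_{NT}$; I would follow that route, matching the structure of Lemma 2 and the peeling step in \cite{chen2021quantile}.
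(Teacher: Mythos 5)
There is a genuine gap. Your argument is a single-scale $\epsilon$-net bound (Lemma 2.2.2 of \cite{vaart1996weak} applied to the finitely many net points), and as you correctly observe, that approach can only deliver $\E\big[\sup_{\th \in \Theta^r(\delta)}|\W_{NT}(\th)|\big] \lesssim 8\epsilon + \frac{\delta}{L_{NT}}\sqrt{\log(M_1/\epsilon)}$. These two terms cannot simultaneously be made $\lesssim \delta/L_{NT}$ by any choice of $\epsilon$: to kill the additive discretization error $8\epsilon$ one needs $\epsilon \lesssim \delta/L_{NT} \to 0$, and any such $\epsilon$ makes $\sqrt{\log(M_1/\epsilon)} \to \infty$. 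Your proposed workaround --- take $\delta$ and $\epsilon$ to be fixed constants --- does not rescue this: with $\epsilon$ fixed, $8\epsilon$ is a constant and certainly not $O(\delta/L_{NT})$ as $N,T\to\infty$; and ``polynomially small $\epsilon$'' cannot make the log term $O(1)$ either, since $\sqrt{\log(M_1/\epsilon)}$ is unbounded whenever $\epsilon \to 0$. Moreover the lemma is later applied in a peeling/shelling argument (e.g.\ the proof of Theorem~\ref{thm:consistency}, and explicitly in Lemma~\ref{lem:normality_2}, where shells $S_j$ with $d(\th,\th_0) \asymp 2^j/L_{NT}$ are used), so the radius $\delta$ in Lemma~\ref{lem:consist_3} genuinely must be allowed to shrink at rate $L_{NT}^{-1}$; you cannot treat it as a fixed constant.

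The missing idea is multi-scale chaining. The paper applies Theorem 2.2.4 of \cite{vaart1996weak} (Dudley's entropy integral) directly to the separable process $\{\W_{NT}(\th) : \th \in \Theta^r(\delta)\}$ using the $\psi_2$-Lipschitz bound $\|\sqrt{NT}(\W_{NT}(\th_a) - \W_{NT}(\th_b))\|_{\psi_2} \lesssim d(\th_a, \th_b)$, which yields
\[
\E\Big[\sup_{\th \in \Theta^r(\delta)}|\W_{NT}(\th)|\Big] \lesssim \frac{1}{\sqrt{NT}}\int_0^{\delta}\sqrt{\log D(\epsilon, d, \Theta^r(\delta))}\, d\epsilon.
\]
Because (via Lemma~\ref{lem:consist_2}) $\Theta^r(\delta)$ sits inside a union of $d^*$-balls of radius $\lesssim \delta$, one gets $D(\epsilon, d, \Theta^r(\delta)) \lesssim (C\delta/\epsilon)^{(KN+T)r}$ for $\epsilon \le \delta$ --- note this packs the small set $\Theta^r(\delta)$, not the full ball $B_r(M_1)$ as in your net. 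The entropy integral then factors as $\sqrt{(KN+T)r}\int_0^{\delta}\sqrt{\log(C\delta/\epsilon)}\,d\epsilon = \sqrt{(KN+T)r}\cdot\delta\int_0^1\sqrt{\log(C/u)}\,du = O(\sqrt{KN+T}\cdot\delta)$, with \emph{no} log factor, because the log singularity is integrable over the range of scales. Dividing by $\sqrt{NT}$ and using $\sqrt{(KN+T)/(NT)} \lesssim L_{NT}^{-1}$ gives the clean $\delta/L_{NT}$. That integral cancellation of the logarithm is precisely what a single-scale net cannot reproduce, and is the essential content your proposal is missing.
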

\begin{proof}
We show that $\E\big[\underset{\th \in \Theta^r(\delta)}{\sup}|\W_{NT}(\th)|\,\big] \lesssim \int_{0}^{\delta} \sqrt{\log{D(\epsilon,d,\Theta^r(\delta))}} d\epsilon$, and then prove \\$\int_{0}^{\delta} \sqrt{\log{D(\epsilon,d,\Theta^r(\delta))}}=O(\sqrt{KN+T}\cdot \delta)$.\\
Since $\Theta^r(\delta)$ is compact and $\W_{NT}(\th)$ is continuous in $\th$, $\{\W_{NT}(\th):\th \in \Theta^r(\delta)\}$ is a separable stochastic process. Also, 
by the same process in Lemma \ref{lem:consist_1}, it can be shown that
\[
\norm{\sqrt{NT}|\W_{NT}(\th_a)-\W_{NT}(\th_b)|}_{\psi_2} \leq M_2 d(\th_a,\th_b).
\]
Then, by Theorem 2.2.4 of \cite{vaart1996weak}, 
\[
\begin{aligned}
\Big\|\underset{\th \in \Theta^r(\delta)}{\sup}  \sqrt{NT}|\W_{NT}(\th)-\W_{NT}(\th_0)|\Big\|_{\psi_2} &\leq \Big\|\underset{\substack{d(\th_a,\th_b)\leq \delta \\ \th_a,\th_b \in \Theta^r(\delta)}}{\sup}\sqrt{NT}|\W_{NT}(\th_a)-\W_{NT}(\th_b)|\Big\|_{\psi_2} \\
& \leq M_2 \cdot \left[\int_0^{\delta}\psi_2^{-1}(D(\epsilon,d,\Theta^r(\delta))d\epsilon + \delta \psi_2^{-1}(D^2(\delta,d,\Theta^r(\delta))\right]\\
& \leq M_2 \cdot \left[\int_0^{\delta}\psi_2^{-1}(D(\epsilon,d,\Theta^r(\delta)) + \psi_2^{-1}(D^2(\epsilon,d,\Theta^r(\delta))  d\epsilon \right]\\
& \leq 3M_2 \cdot \int_0^{\delta}\psi_2^{-1}(D(\epsilon,d,\Theta^r(\delta))). 
\end{aligned}
\]
Therefore, using $\E[Y] \leq \norm{Y}_{\psi_2}$, we get $\E[\underset{\th \in \Theta^r(\delta)}{\sup}|\W_{NT}(\th)|\,] \lesssim \int_0^{\delta}\log(D(\epsilon,d,\Theta^r(\delta))) d\epsilon$.
Next, define 
\[
S \defeq \{U \in \R^{r \times r}: U=\text{diag}(u_1,\ldots, u_r), \, u_i \in \{-1,1\}\}. 
\]
By Lemma \ref{lem:consist_2}, for some constant $M_3>0$, $\Theta^r(\delta) \subset \underset{U \in S}{\cup}\Theta^r(\delta;U)$, where
\[
\Theta^r(\delta;U) \defeq \Big\{\th \in \Theta^r:   \frac{\norm{F-F^0 U}}{\sqrt{T}}\, + \, \sumk \frac{\norm{\Lambda_k-\Lambda_k^0 U }}{\sqrt{N}}\leq M_3 \delta \Big\}.
\]
Since $|S|=2^r$, it suffices to show that $\int_{0}^{\delta} \sqrt{\log{D(\epsilon,d,\Theta^r(\delta;U))}}=O(\sqrt{KN+T}\cdot \delta)$ for $U \in S$. Without loss of generality, let $U=\mathbb{I}_r$. For $\th_a,~ \th_b \in \Theta^r$, 
\[
\begin{aligned}
d(\th_a,\th_b) &= \frac{1}{\sqrt{NT}}\sqrt{\sumk \norm{\Lambda_k^a F^{a\p}-\Lambda_k^b F^{b\p}}^2} \\
& \leq \sumk \frac{1}{\sqrt{NT}} \norm{\Lambda_k^a F^{a\p}-\Lambda_k^b F^{b\p}} \\
& \leq \sumk \frac{1}{\sqrt{NT}} \norm{\Lambda_k^a F^{a\p}-\Lambda_k^a F^{b\p}+\Lambda_k^a F^{b\p}-\Lambda_k^b F^{b\p}}\\ 
& \leq \sumk \Big\{ \frac{\norm{\Lambda_k^a}}{\sqrt{N}}\cdot \frac{\norm{F^a-F^b}}{\sqrt{T}} +\frac{\norm{\Lambda_k^a-\Lambda_k^b}}{\sqrt{N}}\Big\} \\
& \leq \max(K M_1,1) \cdot\left[ \, \frac{\norm{F^a-F^b}}{\sqrt{T}} \, + \, \sumk \frac{\norm{\Lambda_k^a-\Lambda_k^b}}{\sqrt{N}} \, \right] \\
& \leq \max(K M_1,1) \cdot \sqrt{K+1}\cdot \sqrt{\frac{\norm{F^a-F^b}^2}{T}+\sumk \frac{\norm{\Lambda_k^a-\Lambda_k^b}^2}{N}}.
\end{aligned}
\]
Define $d^*(\th_a,\th_b)=M_4 \sqrt{\frac{\norm{F^a-F^b}^2}{T}+\sumk \frac{\norm{\Lambda_k^a-\Lambda_k^b}^2}{N}}$ for $M_4=\max(K M_1,1) \cdot \sqrt{K+1}$. Then, $d^*$ is a metric in $\Theta^r$ and $d(\th_a,\th_b)\leq d^*(\th_a,\th_b)$. Also, since 
\[
\sqrt{\frac{\norm{F^a-F^b}^2}{T}+\sumk \frac{\norm{\Lambda_k^a-\Lambda_k^b}^2}{N}} \leq \frac{\norm{F^a-F^b}}{\sqrt{T}} \, + \, \sumk \frac{\norm{\Lambda_k^a-\Lambda_k^b}}{\sqrt{N}},
\] 
if $\th \in \Theta^r(\delta;\I_r)$, then $d^*(\th,\th_0) \leq M_3 M_4 \cdot \delta$. By defining $\Theta^{r*}(\delta)\defeq \{\th: d^*(\th,\th_0) \leq M_3 M_4 \delta\}$, it holds that $\Theta^r(\delta;\I_r) \subset \Theta^{r*}(\delta)$. The rest of the proof is similar to that of Lemma 3 in \cite{chen2021quantile} and is omitted. 
\end{proof}
\vspace{2mm}

\noindent \textbf{Proof of Theorem 1:}
It can be proved in the same way as Theorem 1 in \cite{chen2021quantile} using Lemmas \ref{lem:consist_1}, \ref{lem:consist_2}, and \ref{lem:consist_3}. 

\vspace{5mm}

\section{Proof of Theorem \ref{thm:normality}} 
Now, we prove Theorem \ref{thm:normality}. Suppose that Assumptions 1 and 2 hold. To simplify the notation, we will write $\varrho_k(\cdot)$ instead of $\varrho_{\tau_k}(\cdot)$ in the proof.
We define 
\[ \mathbb{S}_{N T}^{*}(\theta)=\frac{1}{N T} \sum_{k=1}^K \sum_{i=1}^{N} \sum_{t=1}^{T} w_k \left[\varrho_k \left(X_{i t}-\lambda_{k,i}^{\prime} f_{t}\right)-\varrho_k\left(X_{i t}-\lambda_{k,i}^{0\p} f_{t}^0\right)\right].\]
Next, we denote $\varrho_k^{(j)}(\epsilon) = (\frac{\partial}{\partial \epsilon})^{j} \varrho_k(\epsilon)$ for $j=1,2,3$. For fixed $\lambda_{k,i} \in \mathcal{A}$ and $ f_{t} \in \mathcal{F}$,  we define the following notations:
\begin{itemize}
    \item $\bar{\varrho}_k^{(j)}\left(X_{i t}-\lambda_{i}^{\prime} f_{t}\right)=\mathbb{E}\left[\varrho_k^{(j)}\left(X_{i t}-\lambda_{i}^{\prime} f_{t}\right)\right] \quad$
    \item $\tilde{\varrho}_k^{(j)}\left(X_{i t}-\lambda_{i}^{\prime} f_{t}\right)=\varrho_k^{(j)}\left(X_{i t}-\lambda_{i}^{\prime} f_{t}\right)-\bar{\varrho}_k^{(j)}\left(X_{i t}-\lambda_{i}^{\prime} f_{t}\right)$ 
\end{itemize}
If one of the functions $\varrho_k, \bar{\varrho}_k,$ and $\tilde{\varrho}_k$ is computed at true parameters, $\lambda_{k,i}^0$ and $f_t^0$, we simply add a subscript, as $\varrho_{k,it}, \bar{\varrho}_{k,it},$ and $ \tilde{\varrho}_{k,it}$ respectively, instead of writing the full arguments. It is also the same for the derivatives. For example, $\varrho_{k,it}=\varrho_k\left(X_{i t}-\lambda_{k, i}^{0\p} f_{t}^0\right)$ and $ \tilde{\varrho}_{k,i t}^{(j)}=\tilde{\varrho}_k^{(j)}\left(X_{i t}-\lambda_{k, i}^{0\p} f_{t}^0\right)$. Also, we let $f(i,t) = \bar{O}(g)$ denote that $f(i,t)/g$ is uniformly bounded by across all $i$ and $t$. The stochastic order $\bar{O}_p(\cdot)$ is defined in a similar manner.
In this proof, we assume that $\tilde{S}=\mbox{sgn}(\tilde{F}\p F^0/T)=\mathbb{I}_r$ for simplicity. 

Lemma \ref{lem:normality_1} is based on the properties of kernel density estimators, and the proof is omitted for brevity.
\begin{lemma} \label{lem:normality_1}
    \begin{itemize}
    \item[(i)] For any $k$, $h^{j-1} \cdot \sup_{\epsilon}\left|\varrho_k^{(j)}(\epsilon)\right|$ is bounded for $j=1,2,3$.
    \item[(ii)] $\bar{\varrho}_{k,it}^{(1)}=\bar{O}\left(h^{m}\right), ~\bar{\varrho}^{(2)}_k\left(X_{i t}-\lambda_{k,i}^{\prime} f_{t}\right)=h_{i t}\left(\lambda_{k,i}^{\prime} f_{t}\right)+\bar{O}\left(h^{m}\right)$, and $\bar{\varrho}_k^{(3)}\left(X_{i t}-\lambda_{i}^{\prime} f_{t}\right)=-h_{i t}^{(1)}\left(\lambda_{k,i}^{\prime} f_{t}\right)+\bar{O}\left(h^{m}\right)$.
    \item[(iii)] $\mathbb{E}\left(\varrho_{k,i t}^{(1)}\right)^{2}=\tau_k(1-\tau_k)+\bar{O}(h)$, and $\mathbb{E}\left[\left(\varrho_{k,i t}^{(2)}\right)^{2}\right]=\bar{O}(h^{-1})$.
    \item[(iv)] For $k \neq k\p$, $\mathbb{E}\left(\varrho_{k,i t}^{(1)} \varrho_{k\p,i t}^{(1)}\right)=\min(\tau_k, \tau_{k\p})(1-\max(\tau_k, \tau_{k\p}))+\bar{O}(h)$.
\end{itemize}
\end{lemma}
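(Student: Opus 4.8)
The plan is to begin by differentiating $\varrho_{\tau_k}$ in closed form, since everything else flows from explicit expressions. Using $K(u)=\int_u^1 k(s)\,ds$, so $K'(u)=-k(u)$, the product rule gives $\varrho_k^{(1)}(\epsilon)=\tau_k-K(\epsilon/h)+(\epsilon/h)k(\epsilon/h)$, $\varrho_k^{(2)}(\epsilon)=h^{-1}\{2k(\epsilon/h)+(\epsilon/h)k'(\epsilon/h)\}$, and $\varrho_k^{(3)}(\epsilon)=h^{-2}\{3k'(\epsilon/h)+(\epsilon/h)k''(\epsilon/h)\}$. Part~(i) is then immediate: each bracketed factor is a fixed continuous (indeed $C^1$, resp.\ $C^0$) function of $u=\epsilon/h$ supported on $[-1,1]$ by Assumption~2(d), hence uniformly bounded, and multiplying $\varrho_k^{(j)}$ by $h^{j-1}$ cancels exactly the leading power of $h^{-1}$. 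For later use I would also record the convenient identity $\varrho_k^{(1)}(\epsilon)=\tau_k-\mathcal K(\epsilon/h)$ with $\mathcal K(u):=K(u)-uk(u)$, a bounded function that coincides with the indicator $\mathbb 1\{u\le 0\}$ for all $|u|\ge 1$.

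For part~(ii) the idea is to integrate by parts to lower the power of $h^{-1}$ before Taylor expanding, so that the $\bar O(h^m)$ (not merely $\bar O(h)$) rate emerges from the kernel's vanishing moments. Substituting $x=b+hu$ (with $b$ the relevant argument, $\lambda_{k,i}^{0\prime}f_t^0$ for the $(1)$ case at the truth, and $\lambda_{k,i}^{\prime}f_t$ for the $(2)$ and $(3)$ cases), one gets $\bar\varrho_k^{(2)}(X_{it}-b)=\int\{2k(u)+uk'(u)\}h_{it}(b+hu)\,du=\int k(u)h_{it}(b+hu)\,du-h\int uk(u)h_{it}^{(1)}(b+hu)\,du$, the second step integrating the $uk'(u)$ term by parts with boundary terms vanishing because $k(\pm1)=0$. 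Taylor-expanding $h_{it}(b+hu)$ and $h_{it}^{(1)}(b+hu)$ to order $m$ about $b$ and using $\int k=1$, $\int u^jk(u)\,du=0$ for $1\le j\le m-1$ (Assumption~2(d)) annihilates the intermediate terms and leaves $h_{it}(b)+\bar O(h^m)$; Assumption~2(e) (uniform bounds on $h_{it}^{(j)}$, $j\le m+2$, over the compact range of $b$ guaranteed by compactness of $\mathcal A,\mathcal F$) makes the remainder uniform in $i,t$. The bound $\bar\varrho_{k,it}^{(1)}=\bar O(h^m)$ is the same computation for $\varrho_k^{(1)}$ \emph{at the true parameters}: there the leading term is $\tau_k-F_{it}(\lambda_{k,i}^{0\prime}f_t^0)=0$ since $\lambda_{k,i}^{0\prime}f_t^0$ is by construction the $\tau_k$-quantile of $x_{it}$ given $f_t^0$, and the next $m-1$ terms vanish by the moment conditions; $\bar\varrho_k^{(3)}$ requires one further integration by parts and yields $-h_{it}^{(1)}(\lambda_{k,i}^{\prime}f_t)+\bar O(h^m)$ in the same manner.

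For parts~(iii) and~(iv) the plan is to exploit that the smoothed score agrees with the unsmoothed one off a small set. Since $\varrho_k^{(1)}(\epsilon)=\tau_k-\mathcal K(\epsilon/h)$ equals $\tau_k-\mathbb 1\{\epsilon\le 0\}$ unless $|\epsilon|<h$, the variable $\varrho_{k,it}^{(1)}$ differs from $\tau_k-\mathbb 1\{x_{it}\le a_{k,it}\}$ (where $a_{k,it}:=\lambda_{k,i}^{0\prime}f_t^0$) only on $\{|x_{it}-a_{k,it}|<h\}$, an event of probability $\bar O(h)$ because $h_{it}$ is uniformly bounded above (Assumption~2(e)); as both quantities are uniformly bounded, $\mathbb E[(\varrho_{k,it}^{(1)})^2]=\mathbb E[(\tau_k-\mathbb 1\{x_{it}\le a_{k,it}\})^2]+\bar O(h)=\tau_k(1-\tau_k)+\bar O(h)$. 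For $k\neq k'$, the same reduction gives $\mathbb E[\varrho_{k,it}^{(1)}\varrho_{k',it}^{(1)}]=\mathbb E[(\tau_k-\mathbb 1\{x_{it}\le a_{k,it}\})(\tau_{k'}-\mathbb 1\{x_{it}\le a_{k',it}\})]+\bar O(h)$; for $h$ small the two transition windows are disjoint since $|a_{k,it}-a_{k',it}|\ge|\tau_k-\tau_{k'}|/\sup_{i,t}\sup_C h_{it}>0$ uniformly (Assumption~2(e) together with the density lower bound in Assumption~1(c)), and the unsmoothed expectation is computed directly from the nesting $\mathbb 1\{x\le a_k\}\mathbb 1\{x\le a_{k'}\}=\mathbb 1\{x\le\min(a_k,a_{k'})\}$, producing $\min(\tau_k,\tau_{k'})(1-\max(\tau_k,\tau_{k'}))$. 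Finally $\mathbb E[(\varrho_{k,it}^{(2)})^2]=h^{-1}\int\{2k(u)+uk'(u)\}^2 h_{it}(a_{k,it}+hu)\,du=\bar O(h^{-1})$, the integrand being a bounded, compactly supported function times a uniformly bounded density.

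The main obstacle is part~(ii): obtaining the sharp $\bar O(h^m)$ bias rate rather than a crude $\bar O(h)$ forces one to arrange the integration(s) by parts so that every remaining integral carries at most $h^0$ and then to use \emph{all} $m-1$ vanishing kernel moments, while simultaneously verifying that each $\bar O(\cdot)$ is uniform over $i,t$ and over the compact set in which the argument $\lambda_{k,i}^\prime f_t$ ranges — this is precisely where Assumptions~2(d)--(e) do the real work. The remaining parts are routine once the closed forms for $\varrho_k^{(j)}$ and the ``agrees off a set of probability $\bar O(h)$'' principle are in place.
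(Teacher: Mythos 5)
The paper gives no proof of this lemma, noting only that it ``is based on the properties of kernel density estimators, and the proof is omitted for brevity,'' so there is nothing to compare against line by line; your proof supplies exactly the omitted calculation, and it is correct. Your closed forms $\varrho_k^{(1)}(\epsilon)=\tau_k-K(\epsilon/h)+(\epsilon/h)k(\epsilon/h)$, $\varrho_k^{(2)}(\epsilon)=h^{-1}\{2k(\epsilon/h)+(\epsilon/h)k'(\epsilon/h)\}$, $\varrho_k^{(3)}(\epsilon)=h^{-2}\{3k'(\epsilon/h)+(\epsilon/h)k''(\epsilon/h)\}$ check out; part~(i) follows since each bracket is a bounded compactly supported function of $u=\epsilon/h$. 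Your part~(ii) is also right: substituting $u=(x-b)/h$, integrating the $uk'$ (resp.\ $uk''$) term by parts (boundary terms vanish because $k$ and $k'$ vanish at $\pm1$ by continuity and compact support), and then Taylor-expanding $h_{it}$ or $H_{it}$ about $b$ and using the vanishing kernel moments gives $\bar O(h^m)$ remainders that are uniform by Assumption~2(e). Parts (iii)--(iv) work exactly as you say, via $\varrho_k^{(1)}(\epsilon)=\tau_k-\mathcal K(\epsilon/h)$ with $\mathcal K(u)=K(u)-uk(u)$ agreeing with $\mathbb 1\{u\le 0\}$ off $(-1,1)$ and the density upper bound giving probability $\bar O(h)$ for the transition event, and then the elementary identity $\min(\tau_k,\tau_{k'})-\tau_k\tau_{k'}=\min(\tau_k,\tau_{k'})(1-\max(\tau_k,\tau_{k'}))$.

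Two cosmetic remarks. First, Assumption~2(d) as printed asserts $\int s^jk(s)\,ds=0$ for $j=0,\ldots,m-1$, which contradicts $\int k=1$; you correctly read it as $j=1,\ldots,m-1$ (the standard $m$-th order kernel), and your proof needs exactly that. Second, in part~(iv) the disjoint-window digression is harmless but unnecessary: writing $\varrho_{k,it}^{(1)}\varrho_{k',it}^{(1)}-(\tau_k-I_k)(\tau_{k'}-I_{k'})=(\varrho_{k,it}^{(1)}-(\tau_k-I_k))\varrho_{k',it}^{(1)}+(\tau_k-I_k)(\varrho_{k',it}^{(1)}-(\tau_{k'}-I_{k'}))$ and bounding each summand by (boundedness)$\times$(an event of probability $\bar O(h)$) gives the $\bar O(h)$ error without needing the two windows $\{|x-a_{k,it}|<h\}$ and $\{|x-a_{k',it}|<h\}$ to be disjoint.
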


\begin{lemma} \label{lem:normality_2} $d\left(\tilde{\theta}, \theta_{0}\right)=O_{P}\left(1 / L_{N T}\right)$ as $N, T \rightarrow \infty.$
\end{lemma}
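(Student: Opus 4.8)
The plan is to mimic the consistency argument for $\hat\theta$ (Lemma~\ref{lem:consist_1}) but now applied to the smoothed objective $S_{NT}(\theta)$, controlling the extra bias terms introduced by kernel smoothing. Concretely, I would first show the lower-bound analogue of (\ref{eq:exp-d}): for any $\theta\in\Theta^r$,
\[
\E\big[\mathbb{S}_{NT}^*(\theta)\big]\;\gtrsim\; d^2(\theta,\theta_0)\;-\;C\,h^{m}\,d(\theta,\theta_0)
\]
for some constant $C$. To get this, write $\varrho_k(\epsilon)=\rho_{\tau_k}(\epsilon)+\big(\mathds 1\{\epsilon\le 0\}-K(\epsilon/h)\big)\epsilon$, so $\mathbb{S}_{NT}^*(\theta)=\mathbb{M}_{NT}^*(\theta)+(\text{smoothing remainder})$. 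The expectation of $\mathbb{M}_{NT}^*$ is bounded below by $c_2\,d^2(\theta,\theta_0)$ exactly as in Lemma~\ref{lem:consist_1} (this only used Assumption~1(c), which still holds). The smoothing remainder has expectation of order $h^m$ per $(i,t)$ by a Taylor expansion of $h_{it}$ together with the order-$m$ kernel property in Assumption~\ref{assump:normality}(d) — this is the content of Lemma~\ref{lem:normality_1}(ii). Summing and applying Cauchy--Schwarz gives the displayed bound, where the $h^m d(\theta,\theta_0)$ term comes from pairing the per-$(i,t)$ factor $\lambda_{k,i}'f_t-\lambda_{k,i}^{0\prime}f_t^0$ against a uniformly-$\bar O(h^m)$ coefficient.

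Next I would run the empirical-process step for $\mathbb{W}_{NT}^S(\theta):=\mathbb{S}_{NT}^*(\theta)-\E[\mathbb{S}_{NT}^*(\theta)]$ restricted to a shrinking neighborhood. The key observation is that $\varrho_k$ is Lipschitz in $\epsilon$ uniformly in $k$ (Lemma~\ref{lem:normality_1}(i) with $j=1$), with Lipschitz constant bounded independently of $h$, so the chaining/packing estimates of Lemma~\ref{lem:consist_2} and Lemma~\ref{lem:consist_3} go through verbatim with $\rho_{\tau_k}$ replaced by $\varrho_k$. In particular the sub-Gaussian increment bound $\|\sqrt{NT}\,(\mathbb{W}_{NT}^S(\theta_a)-\mathbb{W}_{NT}^S(\theta_b))\|_{\psi_2}\lesssim d(\theta_a,\theta_b)$ holds (via Hoeffding plus Cauchy--Schwarz, exactly as in the proof of Lemma~\ref{lem:consist_1}), and hence $\E[\sup_{\theta\in\Theta^r(\delta)}|\mathbb{W}_{NT}^S(\theta)|]\lesssim \delta/L_{NT}$, the smoothed analogue of Lemma~\ref{lem:consist_3}.

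I would then combine these two ingredients by the standard ``peeling'' (shelling) argument used to upgrade consistency to a rate. First, the global version of Lemma~\ref{lem:consist_1} gives $d(\tilde\theta,\theta_0)=o_p(1)$ — here the $h^m d$ term is negligible since $h\to 0$. Having localized to $d(\tilde\theta,\theta_0)\le\delta$ for small $\delta$, I decompose the event $\{d(\tilde\theta,\theta_0)\in(2^{j}/L_{NT},\,2^{j+1}/L_{NT}]\}$ for $j=J_0,\dots,\log(\delta L_{NT})$. On each shell, using $\mathbb{S}_{NT}^*(\tilde\theta)\le 0$ (definition of $\tilde\theta$) together with $\E[\mathbb{S}_{NT}^*(\tilde\theta)]\gtrsim d^2-Ch^m d$ and the sup bound on $|\mathbb{W}_{NT}^S|$ over that shell, I get
\[
c_2\,\big(2^{j}/L_{NT}\big)^2 \;-\; C h^m\,\big(2^{j+1}/L_{NT}\big) \;\le\; \sup_{d(\theta,\theta_0)\le 2^{j+1}/L_{NT}}|\mathbb{W}_{NT}^S(\theta)|,
\]
whose probability is controlled by Markov's inequality and the $\delta/L_{NT}$ bound; since $h^m/L_{NT}=o(L_{NT}^{-2})$ under Assumption~\ref{assump:normality}(f) (indeed $h^m\sim T^{-mc}$ with $mc>1$, and $N\sim T$ so $L_{NT}^{-2}\sim T^{-1}$), the bias term is dominated by the leading $d^2$ term and the sum over shells $j\ge J_0$ is made arbitrarily small by taking $J_0$ large. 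This yields $d(\tilde\theta,\theta_0)=O_p(L_{NT}^{-1})$.

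The main obstacle I anticipate is bookkeeping the smoothing-bias term so that it is genuinely lower-order: one must verify that the $\bar O(h^m)$ bounds in Lemma~\ref{lem:normality_1}(ii) are uniform over the \emph{entire} compact parameter space $\Theta^r$ (not just near $\theta_0$), which requires Assumption~\ref{assump:normality}(e) (uniform bounds on $h_{it}^{(j)}$ on compact sets) applied on the compact set $\{\lambda'f:\lambda\in\mathcal A,\ f\in\mathcal F\}$, and that the rate condition $1/m<c$ in Assumption~\ref{assump:normality}(f) is exactly what makes $h^m\,L_{NT}$ negligible relative to $1$. Everything else is a transcription of the three consistency lemmas with $\rho_{\tau_k}\rightsquigarrow\varrho_k$, using that the Lipschitz constant of $\varrho_k$ does not blow up as $h\to 0$.
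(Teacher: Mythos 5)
Your argument is correct and follows essentially the same route as the paper's proof: both establish a quadratic lower bound $\E[\mathbb{S}_{NT}^*(\theta)]\gtrsim d^2(\theta,\theta_0)$ modulo a smoothing-bias term from the Taylor expansion of $\E[\mathbb{S}_{NT}^*]$ combined with Lemma~\ref{lem:normality_1}(ii), derive the localized maximal inequality $\E\big[\sup_{\theta\in\Theta^r(\delta)}|\mathbb{W}^S_{NT}(\theta)|\big]\lesssim \delta/L_{NT}$ using the $h$-uniform Lipschitz bound on $\varrho_k$ (Lemma~\ref{lem:normality_1}(i), $j=1$), and then peel over dyadic shells. Two minor differences from the paper are worth noting. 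First, for the preliminary $o_p(1)$ step the paper does not center the smoothed process; instead it bounds $d^2(\tilde\theta,\theta_0)\lesssim 2\sup_\theta|\mathbb{M}_{NT}(\theta)-\mathbb{S}_{NT}(\theta)|+\sup_\theta|\mathbb{W}_{NT}(\theta)|$, using the crude uniform estimate $\sup_\theta|\mathbb{M}_{NT}-\mathbb{S}_{NT}|=O_p(h)$ to reduce to the already-proved Lemma~\ref{lem:consist_1}, whereas you work directly with the centered smoothed process $\mathbb{W}_{NT}^S$ throughout; both routes are valid. Second, for the bias term you obtain $\E[\mathbb{S}_{NT}^*(\theta)]\gtrsim d^2(\theta,\theta_0)-Ch^m d(\theta,\theta_0)$ by pairing $\bar{\varrho}_{k,it}^{(1)}=\bar O(h^m)$ against $\lambda_{k,i}'f_t-\lambda_{k,i}^{0\prime}f_t^0$ via Cauchy--Schwarz, while the paper records the cruder $-\E[\mathbb{S}_{NT}^*(\theta)]\leq -0.5\min_k w_k\, h_\ell\, d^2(\theta,\theta_0)+O(h^m)$ using boundedness of the parameter space. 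Your sharper form only needs $h^m L_{NT}=o(1)$ on the peeling shells, whereas the paper's needs the stronger $h^m L_{NT}^2=o(1)$; Assumption~\ref{assump:normality}(f) with $c>1/m$ gives the stronger one anyway (indeed $h^m L_{NT}^2\sim T^{1-mc}\to 0$), so the conclusion is the same. Your concern about uniformity of the $\bar O(h^m)$ bounds over all of $\Theta^r$ is also the right thing to check and is indeed handled by Assumption~\ref{assump:normality}(e) on the compact range $\{\lambda'f:\lambda\in\mathcal A,\,f\in\mathcal F\}$.
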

\begin{proof}
    By proof of Theorem \ref{thm:consistency}, $d^{2}\left(\tilde{\theta}, \theta_{0}\right) \lesssim \bar{\mathbb{M}}_{N T}^{*}(\tilde{\theta})$. Also by definition of $\tilde{\th}$, we have $\mathbb{S}_{NT}(\tilde{\th}) \leq \mathbb{S}_{NT}(\th_0)$. Then,
    \begin{align*}
        d^{2}\left(\tilde{\theta}, \theta_{0}\right) &\lesssim \mathbb{E}[\,\mathbb{M}_{N T}^{*}(\tilde{\theta})\,] = \mathbb{M}_{N T}^{*}(\tilde{\th}) - \mathbb{W}_{NT}(\tilde{\th}) \\
        & \leq \mathbb{M}_{N T}(\tilde{\theta})-\mathbb{S}_{N T}(\tilde{\theta})+\mathbb{S}_{N T}\left(\theta_{0}\right)-\mathbb{M}_{N T}\left(\theta_{0}\right)-\mathbb{W}_{N T}(\tilde{\theta}) \\
        & \leq 2 \cdot \sup _{\theta \in \Theta^{r}}\left|\mathbb{M}_{N T}(\theta)-\mathbb{S}_{N T}(\theta)\right| + \sup _{\theta \in \Theta^{r}}\left|\mathbb{W}_{N T}(\theta)\right|.
    \end{align*}
    It can be shown that $\sup _{\theta \in \Theta^{r}}\left|\mathbb{M}_{N T}(\theta)-\mathbb{S}_{N T}(\theta)\right|=O_p(h)$, and since $\sup _{\theta \in \Theta^{r}}\left|\mathbb{W}_{N T}(\theta)\right|=o_p(1)$ is proved in Lemma \ref{lem:consist_1}, we have $d\left(\tilde{\theta},\theta_{0}\right)=o_p(1)$. \\
    For $\mathbb{U}_{NT}(\th)=\mathbb{M}_{N T}(\theta)-\mathbb{E}[\,\mathbb{M}_{N T}(\theta)\,]-\mathbb{M}_{N T}(\theta_0)+\mathbb{E}[\,\mathbb{M}_{N T}(\theta_0)\,]$, by a similar way with proof of Lemma \ref{lem:consist_3}, we can prove
    $\mathbb{E}\left[\sup _{\theta \in \Theta^{r}(\delta)}\left|\mathbb{U}_{N T}(\theta)\right|\right] \lesssim \frac{\delta}{L_{N T}}$. Also, using the same process in proof of Theorem \ref{thm:consistency}, we can show that 
    \[
    P\left[L_{N T} \cdot d\left(\tilde{\theta}, \theta_{0}\right)>2^{V}\right] \leq \sum_{j>V, 2^{j-1} \leq \eta L_{N T}} P\left[\inf _{\theta \in S_{j}} \mathbb{S}_{N T}^{*}(\theta) \leq 0\right]+P\left[d\left(\tilde{\theta}, \theta_{0}\right) \geq \eta\right], 
    \]
    where $S_{j}=\left\{\theta \in \Theta^{r}: 2^{j-1}<L_{N T} \cdot d\left(\theta, \theta_{0}\right) \leq 2^{j}\right\}$. First, since $d\left(\tilde{\theta},\theta_{0}\right)=o_p(1)$, $P\left[d\left(\tilde{\theta}, \theta_{0}\right) \geq \eta\right] \rightarrow 0$ as $N, T \rightarrow \infty$ for any $\eta>0$. Then, it suffices to show that \[\sum_{j>V, 2^{j-1} \leq \eta L_{N T}} P\left[\inf _{\theta \in S_{j}} \mathbb{S}_{N T}^{*}(\theta) \leq 0\right] \lesssim \sum\limits_{j>V}2^{-j}.\]
    By expanding $\mathbb{S}_{NT}^*(\th)$ around $\th_0$ and taking expectations, 
    {\small
    \[
    \mathbb{E}[\,\mathbb{S}_{N T}^{*}(\theta)\,]=\frac{1}{N T} \sumk \sum_{i=1}^{N} \sum_{t=1}^{T} w_k \bar{\varrho}_{k,it}^{(1)} \cdot\left(\lambda_{k,i}\p f_{t}-\lambda_{k,i}^{0\p} f_{t}^0\right)+\frac{1}{N T} \sumk \sum_{i=1}^{N} \sum_{t=1}^{T} 0.5\cdot w_k \cdot \bar{\varrho}_k^{(2)}\left(c_{k,i t}^{*}\right) \cdot\left(\lambda_{k,i}^{\prime} f_{t}-\lambda_{k,i}^{0\p} f_{t}^0\right)^{2}, 
    \] 
    }
    where $c_{k,i t}^{*}$ is some value between $\lambda_{k,i}\p f_t$ and $\lambda_{k,i}^{0\p} f_t^0$. By Lemma \ref{lem:normality_1}, for fixed $\th \in S_j$,
    \[
    -\E[\,\mathbb{S}_{N T}^{*}(\theta)\,] \leq-0.5 \cdot \min_k{w_k} \cdot h_\ell \cdot d^2\left(\theta, \theta_{0}\right)+O\left(h^{m}\right) \leq -\min_k{w_k} \cdot h_\ell \cdot \frac{2^{2 j-3}}{L_{N T}^{2}}+O\left(h^{m}\right).
    \]
    The rest of the proof is similar to that of Lemma S.3 in \cite{chen2021quantile} and is omitted. 
\end{proof}

For Lemmas \ref{lem:normality_3} and \ref{lem:normality_4}, we define the following notations:
\begin{itemize}
    \item $\mathbb{S}_{k,i,T}^{*}(\lambda, F)=\frac{1}{T} \sum_{t=1}^{T}\left[\varrho_k\left(X_{i t}-\lambda^{\prime} f_{t}\right)-\varrho_k\left(X_{i t}-\lambda_{k,i}^{0\p} f_t^0\right)\right]$
    \item $\mathbb{M}_{k,i,T}^{*}(\lambda, F)=\frac{1}{T} \sum_{t=1}^{T}\left[\rho_k\left(X_{i t}-\lambda^{\prime} f_{t}\right)-\rho_k\left(X_{i t}-\lambda_{k,i}^{0\p} f_t^0\right)\right]$
    \item $\mathbb{S}_{K,N,t}^{*}(\Lambda_1,\cdots,\Lambda_K, f)=\frac{1}{N} \sumk \sum\limits_{i=1}^{N} w_k\left[\varrho_k\left(X_{i t}-\lambda_{k,i}^{\prime} f\right)-\varrho_k\left(X_{i t}-\lambda_{k,i}^{0\p} f_t^0\right)\right]$
    \item $\mathbb{M}_{K,N,t}^{*}(\Lambda_1,\cdots,\Lambda_K, f)=\frac{1}{N} \sumk \sum\limits_{i=1}^{N} w_k\left[\rho_k\left(X_{i t}-\lambda_{k,i}^{\prime} f\right)-\rho_k\left(X_{i t}-\lambda_{k,i}^{0\p} f_t^0\right)\right]$
\end{itemize}

\begin{lemma} \label{lem:normality_3} $\|\tilde{\lambda}_{k,i}-\lambda_{k,i}^0\|=o_p(1)$ for each $k$,$i$ and $\|\tilde{f}_t-f^0_t\|=o_p(1)$ for each $t$.
\end{lemma}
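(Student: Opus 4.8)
The plan is to exploit the additively separable structure of the smoothed objective $S_{NT}(\theta)$: given the factor matrix it decomposes over $(k,i)$ into the blocks $\mathbb{S}^*_{k,i,T}(\lambda,\tilde F)$, and given the loading matrices it decomposes over $t$ into the blocks $\mathbb{S}^*_{K,N,t}(\tilde\Lambda_1,\ldots,\tilde\Lambda_K,f)$. Passing to an un-normalized minimizer of $S_{NT}$, whose common components coincide with those of $\tilde\theta$ up to a rotation and which (under the standing convention $\tilde S=\mathbb{I}_r$) carries the same conclusion, I may assume that $\tilde\lambda_{k,i}$ minimizes $\lambda\mapsto\mathbb{S}^*_{k,i,T}(\lambda,\tilde F)$ over $\mathcal A$ and $\tilde f_t$ minimizes $f\mapsto\mathbb{S}^*_{K,N,t}(\tilde\Lambda_1,\ldots,\tilde\Lambda_K,f)$ over $\mathcal F$. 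The inputs are Lemmas \ref{lem:normality_2} and \ref{lem:consist_2}, which give $\|\tilde F-F^0\|/\sqrt T=o_p(1)$ and $\|\tilde\Lambda_k-\Lambda^0_k\|/\sqrt N=o_p(1)$ for every $k$, and from there I argue each subproblem is an $M$-estimation problem over a fixed compact set with a well-separated population minimizer.

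For the loadings, fix $(k,i)$ and set $\hat g(\lambda)=\frac1T\sum_t\varrho_k(X_{it}-\lambda'\tilde f_t)$ and $g(\lambda)=\frac1T\sum_t\varrho_k(X_{it}-\lambda'f^0_t)$. Since $\varrho_k$ is globally Lipschitz (Lemma \ref{lem:normality_1}(i), $j=1$), Cauchy--Schwarz and compactness of $\mathcal A$ give $\sup_{\lambda\in\mathcal A}|\hat g(\lambda)-g(\lambda)|\le CM_1\|\tilde F-F^0\|/\sqrt T=o_p(1)$, hence $g(\tilde\lambda_{k,i})\le g(\lambda^0_{k,i})+o_p(1)$ as $\lambda^0_{k,i}\in\mathcal A$. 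I then split $g-g(\lambda^0_{k,i})$ into its mean and a centred part. A second-order Taylor expansion of $\mathbb{E}g(\lambda)-\mathbb{E}g(\lambda^0_{k,i})$ in $\lambda$, with gradient at $\lambda^0_{k,i}$ equal to $-\frac1T\sum_t\bar\varrho^{(1)}_{k,it}f^0_t=\bar O(h^m)$ times a bounded vector and Hessian $\frac1T\sum_t\bar\varrho^{(2)}_k(\cdot)f^0_tf^{0\prime}_t$, combined with $\bar\varrho^{(2)}_k(\cdot)=h_{it}(\cdot)+\bar O(h^m)\ge c/2$ on the relevant compact range (Lemma \ref{lem:normality_1}(ii), Assumption \ref{assump:consistency}(c)) and the normalization $F^{0\prime}F^0/T=I_r$, yields $\mathbb{E}g(\lambda)-\mathbb{E}g(\lambda^0_{k,i})\ge\tfrac{c}{4}\|\lambda-\lambda^0_{k,i}\|^2-Ch^m\|\lambda-\lambda^0_{k,i}\|$ for $T$ large. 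For the centred part, a uniform law of large numbers over $\mathcal A$ — proved exactly as in Lemma \ref{lem:consist_1}, via Hoeffding's inequality applied to the bounded increments $\varrho_k(X_{it}-\lambda'f^0_t)-\varrho_k(X_{it}-\lambda^{0\prime}_{k,i}f^0_t)$ (independent across $t$ by Assumption \ref{assump:consistency}(d)) and an Orlicz-norm chaining bound — gives $\sup_{\lambda\in\mathcal A}|\{g(\lambda)-\mathbb{E}g(\lambda)\}-\{g(\lambda^0_{k,i})-\mathbb{E}g(\lambda^0_{k,i})\}|=o_p(1)$. Chaining the three bounds, $\tfrac{c}{4}\|\tilde\lambda_{k,i}-\lambda^0_{k,i}\|^2\le Ch^m\|\tilde\lambda_{k,i}-\lambda^0_{k,i}\|+o_p(1)\le 2CM_1h^m+o_p(1)=o_p(1)$, so $\|\tilde\lambda_{k,i}-\lambda^0_{k,i}\|=o_p(1)$.

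The factor argument is the mirror image. Fix $t$ and set $\hat\phi(f)=\frac1N\sum_k\sum_i w_k\varrho_k(X_{it}-\tilde\lambda'_{k,i}f)$, $\phi(f)=\frac1N\sum_k\sum_i w_k\varrho_k(X_{it}-\lambda^{0\prime}_{k,i}f)$. Lipschitzness and Cauchy--Schwarz give $\sup_{f\in\mathcal F}|\hat\phi(f)-\phi(f)|\le CM_1\sum_k w_k\|\tilde\Lambda_k-\Lambda^0_k\|/\sqrt N=o_p(1)$, hence $\phi(\tilde f_t)\le\phi(f^0_t)+o_p(1)$. Expanding $\mathbb{E}\phi(f)-\mathbb{E}\phi(f^0_t)$ to second order in $f$ gives a gradient of order $\bar O(h^m)$ at $f^0_t$ and a Hessian bounded below, for $N,T$ large, by a positive multiple of $\sum_k w_k\Lambda^{0\prime}_k\Lambda^0_k/N$; the $k^*$-block alone is $\succ0$ uniformly by Assumption \ref{assump:consistency}(b), giving the minorization $\mathbb{E}\phi(f)-\mathbb{E}\phi(f^0_t)\ge c'\|f-f^0_t\|^2-Ch^m\|f-f^0_t\|$. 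The uniform LLN over $\mathcal F$ (now $NK$ summands, independent across $i$, $K$ fixed) is proved as before, and the same three-way combination delivers $\|\tilde f_t-f^0_t\|=o_p(1)$.

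The main obstacle is not any single inequality but the bookkeeping required to make the reduction to the per-index objectives rigorous: keeping the replacement of $\tilde F$ (resp.\ $\tilde\Lambda_k$) by the true values \emph{uniform} over the compact parameter set, and verifying that the $\bar O(h^m)$ smoothing biases in the gradient and Hessian are dominated by the quadratic term — which holds because $h^m\to0$ under Assumption \ref{assump:normality}(f). The uniform-LLN steps, although the most analytically involved, are routine repetitions of the empirical-process estimates already established in Lemmas \ref{lem:consist_1} and \ref{lem:consist_3}.
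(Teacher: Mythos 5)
Your proof is correct and reaches the same conclusion, but it takes a genuinely different route through the argmax-consistency step. The paper proceeds by showing, via a three-term triangle inequality (smoothing bias $\lesssim h$, factor replacement $\lesssim\|\tilde F-F^0\|/\sqrt T$, and a ULLN for the unsmoothed per-index objective), that $\mathbb{S}^*_{k,i,T}(\lambda,\tilde F)$ converges uniformly to $\E[\mathbb{M}^*_{k,i,T}(\lambda,F^0)]$; it then invokes a well-separated-minimizer property of the un-smoothed population objective (via Proposition~3.1 of \cite{galvao2016smoothed}) together with Theorem~2.1 of \cite{newey1994large}. You instead stay with the smoothed objective throughout: replace $\tilde F$ by $F^0$ uniformly, split the resulting objective into its mean and centred parts, derive a quantitative quadratic minoration of the mean using the kernel-bias bounds in Lemma~\ref{lem:normality_1}(ii) and the positivity of $h_{it}$ from Assumption~\ref{assump:consistency}(c), and control the centred part by a per-index Hoeffding--chaining bound. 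The payoffs differ: the paper's argument is shorter and leans on a textbook consistency theorem, whereas yours is self-contained and would produce an explicit (if suboptimal) rate of order $h^{m/2}+T^{-1/4}$ if pushed, though only $o_p(1)$ is needed here and the sharp rate is established separately in Lemma~\ref{lem:normality_4}. Both proofs share the same minor gap in asserting that $\tilde\lambda_{k^*,i}$ minimizes its per-$(k,i)$ block despite the joint normalization constraint on $\Lambda_{k^*}$; you explicitly flag the pass to an un-normalized minimizer, which is the standard and correct repair. Your Hessian lower bounds (diagonal normalization $F^{0\prime}F^0/T=I_r$ for the loading step; the $\Lambda^{0\prime}_{k^*}\Lambda^0_{k^*}/N\to\operatorname{diag}(\nu_1,\ldots,\nu_r)\succ0$ block for the factor step) and the dominance of $\bar O(h^m)$ by the quadratic term under Assumption~\ref{assump:normality}(f) are both handled correctly.
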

\begin{proof}
    We first show $\|\tilde{\lambda}_{k,i}-\lambda_{k,i}^0\|=o_p(1)$. By definition of $\tilde{\th}$, it holds that $\tilde{\lambda}_{k,i}=\underset{\lambda \in \mathcal{A}}{\arg \min \,} \mathbb{S}_{k,i,T}^{*}(\lambda, \tilde{F})$. We show that $\sup _{\lambda \in \mathcal{A}}\left|\mathbb{S}_{k,i,T}^{*}(\lambda, \tilde{F})-\E[\mathbb{M}_{k,i,T}^{*}(\lambda,F^0)]\right|=o_{P}(1)$ and $\lambda_{k,i}^0$ is the unique minimizer of $\E[\mathbb{M}_{i, T}^{*}(\lambda,F^0)]$. Then, $\|\tilde{\lambda}_{k,i}-\lambda_{k,i}^0\|=o_p(1)$ is proved by Theorem 2.1 of \cite{newey1994large}.
    
By triangular inequality,
{\small
    \begin{align*}
 \sup _{\lambda \in \mathcal{A}}\left|\mathbb{S}_{k,i,T}^{*}(\lambda, \tilde{F})-\E[\mathbb{M}_{k,i,T}^{*}(\lambda,F^0)] \right| 
 & \leq \sup _{\lambda \in \mathcal{A}}\left|\mathbb{S}_{k,i,T}^{*}(\lambda, \tilde{F})-\mathbb{M}_{k,i,T}^{*}(\lambda, \tilde{F})\right|+\sup _{\lambda \in \mathcal{A}}\left|\mathbb{M}_{k,i,T}^{*}(\lambda, \tilde{F})-\mathbb{M}_{k,i,T}^{*}\left(\lambda, F^0\right)\right| \\
& \quad +\sup _{\lambda \in \mathcal{A}}\left|\mathbb{M}_{k,i,T}^{*}\left(\lambda, F^0\right)-\E[\mathbb{M}_{k,i,T}^{*}(\lambda,F^0)]\right|.
\end{align*}
}
It can be easily shown that 
\begin{align*}
\sup_{\lambda \in \mathcal{A}}&\left|\mathbb{S}_{k,i,T}^{*}(\lambda, \tilde{F})-\mathbb{M}_{k,i,T}^{*}(\lambda, \tilde{F})\right| \\
\quad = &
\sup_{\lambda \in \mathcal{A}}\left|\frac{1}{T} \sum\limits_{t=1}^{T} \left[\varrho_k\left(X_{i t}-{\lambda}^{\prime} \tilde{f}_t\right)-\rho_k\left(X_{i t}-{\lambda}^{\prime} \tilde{f}_t\right)-\left\{\varrho_k\left(X_{i t}-\lambda_{k,i}^{0\p} f_t^0\right)-\rho_k\left(X_{i t}-\lambda_{k,i}^{0\p} f_t^0\right) \right\}\right] \right|
\lesssim h,
\end{align*} 
and ~$\sup_{\lambda \in \mathcal{A}}\left|\mathbb{M}_{k,i,T}^{*}\left(\lambda, F^0\right)-\E[\,\mathbb{M}_{k,i,T}^{*}(\lambda,F^0)\,]\right|  =o_{p}(1)$. Also, using Lemmas \ref{lem:consist_2} and \ref{lem:normality_2},
\[\sup _{\lambda \in \mathcal{A}}\left|\mathbb{M}_{k,i,T}^{*}(\lambda, \tilde{F})-\mathbb{M}_{k,i,T}^{*}\left(\lambda, F^0\right)\right|  \lesssim \sup _{\lambda \in \mathcal{A}}\|\lambda\| \cdot \frac{1}{T} \sum_{t=1}^{T}\left\|\tilde{f}_{t}-f_{0 t}\right\| 
\lesssim \frac{\left\|\tilde{F}-F^0\right\|}{\sqrt{T}} \lesssim d(\tilde{\th},\th_0)=o_{p}(1).\]
Therefore, $\sup _{\lambda \in \mathcal{A}}\left|\mathbb{S}_{k,i,T}^{*}(\lambda, \tilde{F})-\E[\,\mathbb{M}_{k,i,T}^{*}(\lambda,F^0)\,]\right|=o_{P}(1)$.

For $\epsilon>0$, let $B_{k,i}(\epsilon)=\left\{\lambda \in \mathcal{A}:\left\|\lambda-\lambda_{k,i}^0\right\| \leq \epsilon\right\}$. Then, it can be shown that
\[
\inf_{\lambda \in B_{k,i}^{C}(\epsilon)} \E[\,\mathbb{M}_{k,i, T}^{*}\left(\lambda, F^0\right)\,]>\E[\,\mathbb{M}_{k,i, T}^{*}\left(\lambda_{k,i}^0, F^0\right)\,]=0
\]
for arbitrary $\epsilon>0$, using the proof of Proposition 3.1 in \cite{galvao2016smoothed}, which implies the uniqueness of the minimizer of $\E[\,\mathbb{M}_{i, T}^{*}\left(\lambda, F^0\right)\,]$. Therefore, $\|\tilde{\lambda}_{k,i}-\lambda_{k,i}^0\|=o_p(1)$.

Second, we show $\|\tilde{f}_t-f_t^0\|=o_p(1)$. Note that $\tilde{f}_{t}=\underset{f \in \mathcal{F}}{\arg \min \,} \mathbb{S}_{K,N,t}^{*}(\tilde{\Lam}_1,\cdots,\tilde{\Lam}_K, f)$. Similarly, 
{\small
\begin{align*}
    \sup_{f \in \mathcal{F}}&\left|\mathbb{S}_{K,N,t}^{*}(\tilde{\Lam}_1,\cdots,\tilde{\Lam}_K, f)-\mathbb{M}_{K,N,t}^{*}(\tilde{\Lam}_1,\cdots,\tilde{\Lam}_K, f)\right|\\ 
    \leq & \sumk w_k \cdot \sup_{f \in \mathcal{F}}\left| \frac{1}{N} \sum\limits_{i=1}^{N} \left[\varrho_k\left(X_{i t}-\tilde{\lambda}_{k,i}^{\prime} f\right)-\rho_k\left(X_{i t}-\tilde{\lambda}_{k,i}^{\prime} f\right)-\left\{\varrho_k\left(X_{i t}-\lambda_{k,i}^{0\p} f_t^0\right)-\rho_k\left(X_{i t}-\lambda_{k,i}^{0\p} f_t^0\right) \right\}\right] \right| \lesssim h,
\end{align*}
}
and 
$\sup _{\lambda \in \mathcal{A}}\left|\mathbb{M}_{K,N,t}^{*}\left(\Lam_1^0,\cdots,\Lam_K^0, f \right)-\E[\,\mathbb{M}_{K,N,t}^{*}(\Lam_1^0,\cdots,\Lam_K^0, f)\,]\right|  =o_{p}(1)$. Also, 
\begin{align*}
  \sup _{f \in \mathcal{F}}\left|\mathbb{M}_{K,N,t}^{*}(\tilde{\Lam}_1,\cdots, \tilde{\Lam}_K, f)-\mathbb{M}_{K,N,t}^{*}\left(\Lam_1^0,\cdots,\Lam_K^0, f\right)\right| & \lesssim \sumk w_k \cdot \sup _{f \in \mathcal{F}}\|f\| \cdot \frac{1}{N} \sum_{i=1}^{N}\left\|\tilde{\lambda}_{k,i}-\lambda_{k,i}^0\right\| \\
& \lesssim \sumk w_k \frac{\left\|\tilde{\Lam}_k-\Lam_k^0\right\|}{\sqrt{N}} \\
& \lesssim d(\tilde{\th},\th_0)=o_{p}(1).  
\end{align*}
Therefore, $\sup_{\lambda \in \mathcal{A}}\left|\mathbb{S}_{K,N,t}^{*}(\tilde{\Lam}_1,\cdots, \tilde{\Lam}_K,f)-\E[\,\mathbb{M}_{K,N,t}^{*}(\Lam^0_1,\cdots, \Lam^0_K,f)\,]\right|=o_{P}(1)$.
Also, for $B_{t}(\epsilon)=\left\{f \in \mathcal{F}:\left\|f-f_t^0\right\| \leq \epsilon\right\}$, 
\begin{align*}
    \inf_{f \in B_{t}^{C}(\epsilon)} &\E\left[\,\mathbb{M}_{K,N,t}^{*}\left(\Lam_1^0,\cdots,\Lam_K^0, f\right)\,\right] \\ & \geq \sumk w_k  \inf_{f \in B_{t}^{C}(\epsilon)} \E\left[\, \frac{1}{N} \sum\limits_{i=1}^{N} \left\{\rho_k\left(X_{i t}-\lambda_{k,i}^{0\p} f\right)-\rho_k\left(X_{i t}-\lambda_{k,i}^{0\p} f_t^0\right)\right\} \,\right] \\ &> \sumk w_k \cdot 0 = \E\left[\,\mathbb{M}_{K,N,t}^{*}\left(\Lam_1^0,\cdots,\Lam_K^0, f_t^0 \right)\,\right].
\end{align*}
and thus, $\|\tilde{f}_{t}-f_{t}^0\|=o_p(1)$ is proved.
\end{proof}

\newpage
\begin{lemma} \label{lem:normality_4} $\|\tilde{\lambda}_{k,i}-\lambda_{k,i}^0\|=O_{P}\left(\frac{1}{\sqrt{T}h}\right)$ for each $k,i$, and ~$\|\tilde{f}_t-f_t^0\|=O_{P}\left(\frac{1}{\sqrt{N}h}\right)$ for each $t$.
\end{lemma}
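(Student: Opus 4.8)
The plan is to work from the first-order conditions (FOCs) of the smoothed problem, which exist because $\varrho_k$ is differentiable. Since $\lambda_{k,i}^0$ and $f_t^0$ are interior points (Assumption \ref{assump:normality}(c)) and $\tilde{\lambda}_{k,i},\tilde{f}_t$ are consistent by Lemma \ref{lem:normality_3}, for $N,T$ large the subproblem minimizers used in the proof of Lemma \ref{lem:normality_3} satisfy, with probability approaching one,
\begin{gather*}
\frac1T\sum_{t=1}^T \varrho_k^{(1)}\!\left(X_{it}-\tilde{\lambda}_{k,i}'\tilde{f}_t\right)\tilde{f}_t = 0 \quad (\text{each } k,i), \\
\frac1N\sum_{k=1}^K\sum_{i=1}^N w_k\,\varrho_k^{(1)}\!\left(X_{it}-\tilde{\lambda}_{k,i}'\tilde{f}_t\right)\tilde{\lambda}_{k,i} = 0 \quad (\text{each } t).
\end{gather*}
The strategy is to expand each FOC around $\theta_0$, isolate a Hessian-type matrix multiplying $\tilde{\lambda}_{k,i}-\lambda_{k,i}^0$ (resp.\ $\tilde{f}_t-f_t^0$), show that matrix is positive definite with probability approaching one, and bound the remaining terms using Lemmas \ref{lem:normality_1}, \ref{lem:consist_2}, and \ref{lem:normality_2}.

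For $\tilde{\lambda}_{k,i}$, writing $X_{it}-\tilde{\lambda}_{k,i}'\tilde{f}_t = (X_{it}-\lambda_{k,i}^{0\prime}f_t^0) - [(\tilde{\lambda}_{k,i}-\lambda_{k,i}^0)'\tilde{f}_t + \lambda_{k,i}^{0\prime}(\tilde{f}_t-f_t^0)]$ and expanding $\varrho_k^{(1)}$ with a mean-value remainder $\varrho_k^{(2)}(c_{k,it}^*)$ gives
\[
\Big[\frac1T\sum_{t=1}^T \varrho_k^{(2)}(c_{k,it}^*)\,\tilde{f}_t\tilde{f}_t'\Big](\tilde{\lambda}_{k,i}-\lambda_{k,i}^0) = \frac1T\sum_{t=1}^T \varrho_{k,it}^{(1)}\tilde{f}_t - \frac1T\sum_{t=1}^T \varrho_k^{(2)}(c_{k,it}^*)\,\tilde{f}_t\,\lambda_{k,i}^{0\prime}(\tilde{f}_t-f_t^0).
\]
Assuming the bracketed matrix is $\succeq c'I_r$ (addressed below), it suffices to bound the right-hand side. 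Splitting $\tilde f_t=f_t^0+(\tilde f_t-f_t^0)$, the term $\frac1T\sum_t\varrho_{k,it}^{(1)}f_t^0$ is $O_p(T^{-1/2})$: its conditional mean $\frac1T\sum_t\bar{\varrho}_{k,it}^{(1)}f_t^0=\bar O(h^m)=o(T^{-1/2})$ since $c>1/m$, and the centered part is a sum independent across $t$ (Assumption \ref{assump:consistency}(d)) with $\mathbb{E}[(\varrho_{k,it}^{(1)})^2]$ bounded (Lemma \ref{lem:normality_1}(iii)); the residual $\frac1T\sum_t\varrho_{k,it}^{(1)}(\tilde f_t-f_t^0)$ is at most $\big(\frac1T\sum_t(\varrho_{k,it}^{(1)})^2\big)^{1/2}\|\tilde F-F^0\|/\sqrt T = O_p(L_{NT}^{-1})$ using $\sup_\epsilon|\varrho_k^{(1)}|=\bar O(1)$ (Lemma \ref{lem:normality_1}(i)) and Lemmas \ref{lem:consist_2}, \ref{lem:normality_2}; and the last term is bounded by $\sup_\epsilon|\varrho_k^{(2)}|\cdot M_1^2\cdot\frac1T\sum_t\|\tilde f_t-f_t^0\|\lesssim h^{-1}\|\tilde F-F^0\|/\sqrt T=O_p((\sqrt T h)^{-1})$ using $\sup_\epsilon|\varrho_k^{(2)}|=\bar O(h^{-1})$. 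Since $(\sqrt T h)^{-1}$ dominates $L_{NT}^{-1}$, this yields $\|\tilde{\lambda}_{k,i}-\lambda_{k,i}^0\|=O_p((\sqrt T h)^{-1})$.

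For $\tilde f_t$ the argument is symmetric. The analogous expansion of its FOC gives $\big[\frac1N\sum_k\sum_i w_k\varrho_k^{(2)}(c_{k,it}^*)\tilde{\lambda}_{k,i}\lambda_{k,i}^{0\prime}\big](\tilde f_t-f_t^0)$ on the left, which is close to $\Psi_{N,t}$ and hence bounded below by Assumption \ref{assump:normality}(a) and the consistency of $\tilde{\lambda}_{k,i}$; a score term $\frac1N\sum_k\sum_i w_k\varrho_{k,it}^{(1)}\tilde{\lambda}_{k,i}$ that is $O_p(N^{-1/2})$ (independence across $i$, plus $\bar O(h^m)$) together with a Cauchy--Schwarz term $O_p(L_{NT}^{-1})$ using $\frac1N\sum_i\|\tilde{\lambda}_{k,i}-\lambda_{k,i}^0\|^2=\|\tilde{\Lambda}_k-\Lambda_k^0\|^2/N=O_p(L_{NT}^{-2})$; and a remainder $\lesssim h^{-1}\sum_k w_k\|\tilde{\Lambda}_k-\Lambda_k^0\|/\sqrt N=O_p((\sqrt N h)^{-1})$. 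Combining gives $\|\tilde{f}_t-f_t^0\|=O_p((\sqrt N h)^{-1})$. Note that each rate uses only the averaged bounds of Lemmas \ref{lem:consist_2}--\ref{lem:normality_2}, so the two estimates do not depend circularly on each other.

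The main obstacle is the uniform positive-definiteness of the two Hessian-type averages $\frac1T\sum_t\varrho_k^{(2)}(c_{k,it}^*)\tilde f_t\tilde f_t'$ and $\frac1N\sum_k\sum_i w_k\varrho_k^{(2)}(c_{k,it}^*)\tilde{\lambda}_{k,i}\lambda_{k,i}^{0\prime}$: the evaluation points $c_{k,it}^*$ are random and depend on the estimators, so a pointwise law of large numbers does not apply directly. I would confine the $c_{k,it}^*$ to a fixed compact set using the compactness of $\mathcal A,\mathcal F$ (Assumption \ref{assump:consistency}(a)), replace $\varrho_k^{(2)}$ by $\bar{\varrho}_k^{(2)}=h_{it}(\cdot)+\bar O(h^m)$ uniformly over a shrinking neighborhood of $\theta_0$ (controlling the centered part $\tilde{\varrho}_k^{(2)}$ via its $\bar O(h^{-1})$ second moment from Lemma \ref{lem:normality_1}(iii) together with $Th\to\infty$ and a covering/equicontinuity argument), and then invoke $h_{it}\ge c>0$ (Assumption \ref{assump:consistency}(c)), Assumption \ref{assump:normality}(a), and the consistency of $\tilde f_t,\tilde{\lambda}_{k,i}$ (Lemma \ref{lem:normality_3}) to conclude that the relevant limits $\Phi_{T,k,i}$ and $\Psi_{N,t}$ are bounded away from zero. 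This is structurally parallel to Lemma S.4 of \cite{chen2021quantile}, but more delicate for the factor equation because the second-derivative terms are aggregated across all $K$ quantile levels at once.
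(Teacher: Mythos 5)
Your proposal is sound in outline but takes a genuinely different decomposition from the paper's, and the difference matters technically. You linearize the first-order conditions with a pointwise mean-value remainder, so the coefficient of $\tilde{\lambda}_{k,i}-\lambda_{k,i}^0$ (resp.\ $\tilde{f}_t-f_t^0$) is a \emph{random} Hessian average $\tfrac1T\sum_t\varrho_k^{(2)}(c_{k,it}^*)\tilde f_t\tilde f_t'$ (resp.\ its factor-side analogue), whose positive definiteness requires a uniform law of large numbers for $\varrho^{(2)}$ at estimator-dependent evaluation points whose summands are not conditionally independent across $t$ (or $i$), since $c_{k,it}^*$ depends on the whole data set through $\tilde\theta$. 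You correctly flag this as the main obstacle, but your resolution is a sketch. The paper reverses the order of operations: it Taylor-expands $\varrho_k^{(1)}(X_{it}-\lambda_{k,i}'f_t)\lambda_{k,i}$ around $(\lambda_{k,i}^0,f_t^0)$ as in equation~\eqref{eq:Taylorexp}, sums over $i$, and \emph{takes expectations before invoking the FOC}, so the Hessian coefficient becomes $\tfrac1N\sum_k\sum_i w_k\,\bar\varrho_{k,it}^{(2)}\,\tilde\lambda_{k,i}\tilde\lambda_{k,i}'$ (equation~\eqref{eq:Psi_t}), in which $\bar\varrho_{k,it}^{(2)}$ is a deterministic bounded function close to $h_{it}(\cdot)$ and the only randomness enters through $\tilde\lambda_{k,i}$; convergence to $\Psi_t$ then follows directly from consistency without any empirical-process work on $\varrho^{(2)}$. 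The FOC enters afterward, via $\varrho^{(1)}-\bar\varrho^{(1)}=\tilde\varrho^{(1)}$ together with $\sum\varrho^{(1)}=0$, to convert the expected-score term into a centered one. Both routes yield the rates $O_p((\sqrt Th)^{-1})$ and $O_p((\sqrt Nh)^{-1})$; the paper's ordering simply dissolves the uniform-approximation step that your version still owes. The rest of your bookkeeping — the $O_p(T^{-1/2})$ score term, the Cauchy--Schwarz residual $O_p(L_{NT}^{-1})$, the $\sup_\epsilon|\varrho^{(2)}|=O(h^{-1})$-driven remainder, and the observation that both bounds rely only on the averaged quantities from Lemmas~\ref{lem:consist_2} and \ref{lem:normality_2} so the two rates are not circularly interdependent — is correct and matches what the paper needs. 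If you do want to carry through your version, using the integral form of the Taylor remainder instead of a pointwise $c_{k,it}^*$, combined with a packing argument of the kind already deployed in Lemma~\ref{lem:consist_3}, is the cleanest way to close the gap you identified.
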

\begin{proof}
    We only prove $\|\tilde{f}_t-f_t^0\|=O_{P}\left(\frac{1}{\sqrt{T}h}\right)$, since the proof of $\|\tilde{\lambda}_{k,i}-\lambda_{k,i}^0\|=O_{P}\left(\frac{1}{\sqrt{T}h}\right)$ is similar to Lemma S.4 in \cite{chen2021quantile}. For fixed $k$ and fixed $\lambda_{k,i}$, $f_t$, by expanding $\varrho_k^{(1)}\left(X_{i t}-\lambda_{k,i}^{\prime} f_{t}\right) \lambda_{k,i}$ at $f_t^0$, we can write
\begin{align*}
\varrho_k^{(1)}\left(X_{i t}-\lambda_{k,i}^{\prime} f_{t}\right) \lambda_{k,i} & = \varrho_k^{(1)}\left(X_{i t}-\lambda_{k,i}^{\prime} f_{t}^0\right) \lambda_{k,i}-\varrho_k^{(2)}\left(X_{i t}-\lambda_{k, i}^{\prime} f_{t}^0\right) \lambda_{k,i} \lambda_{k,i}^{\prime} \cdot\left(f_t-f_t^0\right) \\ & \quad + 0.5 \varrho_k^{(3)}\left(X_{i t}-\lambda_{k,i}^{\prime} f_{t}^{*}\right) \lambda_{k,i}\left[\left(f_t-f_t^0\right)^{\prime} \lambda_{k,i}\right]^{2}.
\end{align*}
By expanding $\varrho_k^{(1)}\left(X_{i t}-\lambda_{k,i}^{\prime} f_{t}^0\right) \lambda_{k,i} $ and $ \varrho_k^{(2)}\left(X_{i t}-\lambda_{k, i}^{\prime} f_{t}^0\right) $ at $\lambda_{k,i}^0$ again, we get
\begin{align} \label{eq:Taylorexp}
\begin{split}
    & \varrho_k^{(1)}\left(X_{i t}-\lambda_{k,i}^{\prime} f_{t}\right) \lambda_{k,i}  \\
    & \quad = \varrho_{k,i t}^{(1)} \lambda_{k,i}^0 + \left\{ 
    \varrho_k^{(1)}\left(X_{i t}-\lambda_{k, i}^{*\p} f_t^0\right)-\varrho^{(2)}_k\left(X_{i t}-\lambda_{k,i}^{*\p} f_{t}^{0}\right) \lambda_{k,i}^{*} f_{t}^{0\p}  
    \right\} \left(\lambda_{k,i}-\lambda_{k,i}^0\right) \\
& \quad -\varrho_{k,i t}^{(2)} \lambda_{k,i} \lambda_{k,i}^{\prime}\left(f_t-f_t^0\right) +\varrho_k^{(3)}\left(X_{i t}-\lambda_{k,i}^{*^\prime} f_{t}^{0}\right) \lambda_{k,i} \lambda_{k,i} ^{\prime} \cdot\left(f_t-f_t^0\right) f_t^{0^\prime}\left(\lambda_{k,i}-\lambda_{k,i}^0\right) \\
& \quad + 0.5 \varrho_k^{(3)}\left(X_{i t}-\lambda_{k,i}^{\prime} f_{t}^{*}\right) \lambda_{k,i}\left[\left(f_t-f_t^0\right)^{\prime} \lambda_{k,i}\right]^{2},
\end{split}
\end{align}
where $f_t^*$ lies between $f_t$ and $f_t^0$, and $\lambda_{k,i}^*$ lies between $\lambda_{k,i}$ and $\lambda_{k,i}^0$. By summing the terms across $i$ and taking expectations, we get
\begin{align*}
    &\left(\frac{1}{N} \sum_{i=1}^{N} \bar{\varrho}_{k,i t}^{(2)} \lambda_{k,i} \lambda_{k,i}^{\prime}\right)\left(f_{t}-f_t^0\right) \\ & \quad = -\frac{1}{N}\sum\limits_{i=1}^N \bar{\varrho}_k^{(1)}\left(X_{i t}-\lambda_{k,i}^{\prime} f_{t}\right) \lambda_{k,i} + \frac{1}{N}\sum\limits_{i=1}^N \bar{\varrho}_{k,i t}^{(1)} \lambda_{k,i}^0 \\
    & \quad \quad +\frac{1}{N}\sum\limits_{i=1}^N \left\{ 
    \bar{\varrho}_k^{(1)}\left(X_{i t}-\lambda_{k, i}^{*\p} f_t^0\right)-\bar{\varrho}^{(2)}_k\left(X_{i t}-\lambda_{k,i}^{*\p} f_{t}^{0}\right) \lambda_{k,i}^{*} f_{t}^{0\p} \right\} \left(\lambda_{k,i}-\lambda_{k,i}^0\right) \\
    & \quad \quad + \frac{1}{N}\sum\limits_{i=1}^N \bar{\varrho}_k^{(3)}\left(X_{i t}-\lambda_{k,i}^{*^\prime} f_{t}^{0}\right) \lambda_{k,i} \lambda_{k,i} ^{\prime} \cdot\left(f_t-f_t^0\right) f_t^{0^\prime}\left(\lambda_{k,i}-\lambda_{k,i}^0\right) \\
    & \quad \quad + \frac{1}{N}\sum\limits_{i=1}^N 0.5 \bar{\varrho}_k^{(3)}\left(X_{i t}-\lambda_{k,i}^{\prime} f_{t}^{*}\right) \lambda_{k,i}\left[\left(f_t-f_t^0\right)^{\prime} \lambda_{k,i}\right]^{2}.
\end{align*}
Next, we put $\lambda_{k,i}=\tilde{\lambda}_{k,i}$ and $f_t=\tilde{f}_t$. Then, by using Lemma \ref{lem:normality_1}, Assumption 2 and that $\frac{1}{N}\sum\limits_{i=1}^N \| \lambda_{k,i}-\lambda_{k,i}^0\| \lesssim \frac{1}{\sqrt{N}} \| \Lambda_k-\Lambda_k^0 \|$, it holds that 
\begin{align} \label{lem10_eq1}
\begin{split}
    &\left(\frac{1}{N} \sum_{i=1}^{N} \bar{\varrho}_{k,i t}^{(2)}\tilde{\lambda}_{k,i} \tilde{\lambda}_{k,i}^{\prime}\right)\left(\tilde{f}_{t}-f_t^0\right) \\ & \quad = -\frac{1}{N}\sum\limits_{i=1}^N \bar{\varrho}_k^{(1)}\left(X_{i t}-\tilde{\lambda}_{k,i}^{\prime} \tilde{f}_{t}\right) \tilde{\lambda}_{k,i} + \frac{1}{N}\sum\limits_{i=1}^N \bar{\varrho}_{k,i t}^{(1)} \lambda_{k,i}^0 +O_p\left( \frac{1}{\sqrt{N}} \| \Lambda_k-\Lambda_k^0 \| \right) \\
    & \quad \quad +O_p(\| \tilde{f}_t - f_t^0 \|)\cdot O_p\left( \frac{1}{\sqrt{N}} \| \Lambda_k-\Lambda_k^0 \| \right) + O_p(\| \tilde{f}_t - f_t^0 \|^2).
\end{split}
\end{align}
Note that by Lemmas \ref{lem:normality_1} and \ref{lem:normality_2}, and Assumption 2, 
\begin{equation} \label{eq:Psi_t}
\frac{1}{N} \sumk \sum_{i=1}^{N} w_k \cdot \bar{\varrho}_{k,i t}^{(2)}\tilde{\lambda}_{k,i} \tilde{\lambda}_{k,i}^{\prime} = \frac{1}{N}\sumk \sum_{i=1}^{N} w_k \cdot \bar{\varrho}_{k,i t}^{(2)} \lambda_{k,i}^0 \lambda_{k,i}^{0\p} +o_p(1) = \Psi_t +o_p(1).\end{equation}
By multiplying weight $w_k$ and summing across $k$ for both sides of equation (\ref{lem10_eq1}), it follows from Lemmas \ref{lem:normality_1}, \ref{lem:normality_2} and \ref{lem:normality_3} that
\begin{equation} \label{lem10_eq2}
    \Psi_t(\tilde{f}_{t}-f_t^0) + o_p(\|\tilde{f}_{t}-f_t^0 \|) = -\frac{1}{N} \sumk \sum\limits_{i=1}^N w_k \cdot \bar{\varrho}_k^{(1)}\left(X_{i t}-\tilde{\lambda}_{k,i}^{\prime} \tilde{f}_{t}\right) \tilde{\lambda}_{k,i} + O(h^m) + O_p\left({L_{NT}^{-1}}\right).
\end{equation}
Since $\tilde{\th}$ minimizes $S_{NT}(\th)$, 
\[ 
 \frac{\partial}{\partial f_t} S_{NT}(\tilde{\th}) = \frac{1}{N}\sumk \sum\limits_{i=1}^N w_k \cdot \varrho^{(1)}_k(X_{it}-\tilde{\lambda}_{k,i}^{\prime} \tilde{f}_{t}) \tilde{\lambda}_{k,i} = 0.
\]
Therefore, 
\begin{align*}
-\frac{1}{N}& \sumk \sum\limits_{i=1}^N w_k \cdot \bar{\varrho}_k^{(1)}\left(X_{i t}-\tilde{\lambda}_{k,i}^{\prime} \tilde{f}_{t}\right) \tilde{\lambda}_{k,i} \\
&= \frac{1}{N} \sumk \sum\limits_{i=1}^N w_k \cdot \tilde{\varrho}_k^{(1)}\left(X_{i t}-\tilde{\lambda}_{k,i}^{\prime} \tilde{f}_{t}\right) \tilde{\lambda}_{k,i}  \\
&= \frac{1}{N} \sumk \sum_{i=1}^{N} w_k \cdot \tilde{\varrho}_{k,it}^{(1)} \lambda_{k,i}^0 
+
\frac{1}{N} \sumk \sum_{i=1}^{N} w_k \left[\tilde{\varrho}_k^{(1)}\left(X_{i t}-\tilde{\lambda}_{k,i}^{\prime} \tilde{f}_{t}\right) \tilde{\lambda}_{k,i}-\tilde{\varrho}_{k,it}^{(1)} \lambda_{k,i}^0\right] \\
& = \frac{1}{N} \sumk \sum_{i=1}^{N} w_k \cdot \tilde{\varrho}_{k,it}^{(1)} \lambda_{k,i}^0 
+
\frac{1}{N} \sumk \sum_{i=1}^{N} w_k \left[\tilde{\varrho}_k^{(1)}\left(X_{i t}-\tilde{\lambda}_{k,i}^{\prime} \tilde{f}_{t}\right) \tilde{\lambda}_{k,i} -  \tilde{\varrho}_k^{(1)}\left(X_{i t}-\lambda_{k,i}^{0\p} \tilde{f}_{t}\right) \lambda_{k,i}^0 \right]\\
& \quad +\frac{1}{N} \sumk \sum\limits_{i=1}^N w_k \left[ \tilde{\varrho}_k^{(1)}\left(X_{i t}-\lambda_{k,i}^{0\p} \tilde{f}_{t}\right) -\tilde{\varrho}_{it}^{(1)} \right] \lambda_{k,i}^0.
\end{align*}
By Lemma \ref{lem:normality_1} and Lyapunov's CLT, $\frac{1}{N} \sumk \sum\limits_{i=1}^{N} w_k \cdot \tilde{\varrho}_{k,it}^{(1)} \lambda_{k,i}^0=O_p(N^{-1/2}) $. Also, 
\begin{align*}
 &\frac{1}{N} \sumk \sum_{i=1}^{N} w_k \left[\tilde{\varrho}_k^{(1)}\left(X_{i t}-\tilde{\lambda}_{k,i}^{\prime} \tilde{f}_{t}\right) \tilde{\lambda}_{k,i} -  \tilde{\varrho}_k^{(1)}\left(X_{i t}-\lambda_{k,i}^{0\p} \tilde{f}_{t}\right) \lambda_{k,i}^0 \right] \\
  & \quad = \frac{1}{N} \sum_{i=1}^{N} w_k \tilde{\varrho}_k^{(1)}\left(X_{i t}-{\lambda}_{k,i}^{*\p} \tilde{f}_{t}\right)\left(\tilde{\lambda}_{k,i}-\lambda_{k,i}^0\right) - \frac{1}{N} \sum_{i=1}^{N} w_k \tilde{\varrho}_k^{(2)}\left(X_{i t}-{\lambda}_{k,i}^{*\p} \tilde{f}_{t}\right) \tilde{\lambda}_{k,i}^{*} \tilde{f}_{t}^{\prime}\left(\tilde{\lambda}_{k,i}-\lambda_{k,i}^0\right) \\
  & \quad = O_p(L_{NT}^{-1}) +O_p(L_{NT}^{-1}h^{-1}),
\end{align*}
where the last equality holds by Lemmas \ref{lem:normality_1} and \ref{lem:normality_2}. Finally, it can be shown that
\[
\frac{1}{N} \sumk \sum\limits_{i=1}^N w_k \left[ \tilde{\varrho}_k^{(1)}\left(X_{i t}-\lambda_{k,i}^{0\p} \tilde{f}_{t}\right) -\tilde{\varrho}_{k,it}^{(1)} \right] \lambda_{k,i}^0 = O_p( \| \tilde{f}_{t}- f_{t}^0 \| )\cdot O_p(1/\sqrt{Nh})=o_p(\| \tilde{f}_{t}- f_{t}^0 \|),
\]
by using Lemma B.2 of \cite{galvao2016smoothed}.
Therefore, from (\ref{lem10_eq2}),
\[
    \Psi_t(\tilde{f}_{t}-f_t^0) + o_p(\|\tilde{f}_{t}-f_t^0 \|) = O_p(L_{NT}^{-1}h^{-1}) + O(h^m). 
\]
Since $\sqrt{N}h^{m+1} \rightarrow 0$ and $\Psi_t \succ 0$ by Assumption 2, we get $ \| \tilde{f}_{t}-f_t^0 \|=O_p(\frac{1}{\sqrt{N}h})$.
\end{proof}

To conduct stochastic expansions of $\tilde{\lambda}_{k,i}$ and $\tilde{f}_t$, we define the following notations. For $b>0$,
{\small
\[
\mathbb{P}_{N T}(\theta)=b\left[\frac{1}{2 N} \sum\limits_{p=1}^{r} \sum\limits_{q>p}^{r}\left(\sum\limits_{i=1}^{N} \lambda_{k^*,ip} \lambda_{k^*,i q}\right)^{2}+\frac{1}{2 T} \sum\limits_{p=1}^{r} \sum\limits_{q>p}^{r}\left(\sum\limits_{t=1}^{T} f_{t p} f_{t q}\right)^{2}+\frac{1}{8 T} \sum\limits_{k=1}^{r}\left(\sum\limits_{t=1}^{T} f_{t k}^{2}-T\right)^{2}\right] \in \mathbb{R},
\]
}
where $\lambda_{k^*,ip}$ and $f_{tp}$ denotes the $p$th element of $\lambda_{k^*,i}$ and $f_t$, respectively. Next, define\\
\resizebox{0.93\linewidth}{!}{
\begin{minipage}{\linewidth}
$$
\begin{aligned}
    \mathbb{S}^*(\th)= & \underbrace{ \left[
    -\frac{1}{\sqrt{N T}}  \sum\limits_{t=1}^{T} w_1 \bar{\varrho}_1^{(1)}\left(X_{1 t}-\lambda_{1,1}^{\prime} f_{t}\right) f_{t}^{\prime}, \ldots, -\frac{1}{\sqrt{N T}}  \sum\limits_{t=1}^{T}  w_1 \bar{\varrho}_1^{(1)}\left(X_{N t}-\lambda_{1,N}^{\prime} f_{t}\right) f_{t}^{\prime} \right.}_{1 \times Nr},   \\
     & \ldots \ldots , \underbrace{-\frac{1}{\sqrt{N T}}  \sum\limits_{t=1}^{T} w_K \bar{\varrho}_K^{(1)}\left(X_{1 t}-\lambda_{K,1}^{\prime} f_{t}\right) f_{t}^{\prime}, \ldots, -\frac{1}{\sqrt{N T}}  \sum\limits_{t=1}^{T}  w_K \bar{\varrho}_K^{(1)}\left(X_{N t}-\lambda_{K,N}^{\prime} f_{t}\right) f_{t}^{\prime}, }_{1 \times Nr} \\
     &  \underbrace{\left.-\frac{1}{\sqrt{N T}}  \sumk \sum\limits_{i=1}^{N} w_k\bar{\varrho}_k^{(1)}\left(X_{i 1}-\lambda_{k,i}^{\prime} f_1\right) \lambda_{k,i}^{\prime}, \ldots, -\frac{1}{\sqrt{N T}}  \sumk \sum\limits_{t=1}^{T} w_k \bar{\varrho}_k^{(1)}\left(X_{i T}-\lambda_{k,i}^{\prime} f_{T}\right) \lambda_{k,i}^{\prime}
     \right]\p }_{1 \times Tr} \in \mathbb{R}^{(KN+T)r}.
\end{aligned}
$$
\end{minipage}
}
\\
Also, let
\[
\mathcal{S}(\theta)=\mathcal{S}^{*}(\theta)+\partial \mathbb{P}_{N T}(\theta) / \partial \theta, \quad \mathcal{H}(\theta)=\partial \mathcal{S}^{*}(\theta) / \partial \theta^{\prime}+\partial^{2} \mathbb{P}_{N T}(\theta) / \partial \theta \partial \theta^{\prime} .
\]
By expanding $S(\tilde{\th})$ at $\th_0$, we can write
\begin{equation}\label{eq:expandS}
\mathcal{S}(\tilde{\theta})=\mathcal{S}\left(\theta_{0}\right)+\mathcal{H}(\th_0) \cdot\left(\tilde{\theta}-\theta_{0}\right)+0.5 \mathcal{R}(\tilde{\theta})
\end{equation}
for $\mathcal{R}(\tilde{\theta})=\left\{\sum\limits^{(KN+T)r}_{j=1} \frac{\partial \mathcal{H}\left(\theta^{*}\right)}{\partial \theta_{j}} \cdot\left(\tilde{\theta}_{j}-\theta_{0 j}\right)\right\}\left(\tilde{\theta}-\theta_{0}\right)$, where $\th_j$ denotes the $j$th element of $\th$, and $\th^*$ lies between $\tilde{\th}$ and $\th_0$. 
Next, define
\[
\Phi_{T,k, i}=\frac{1}{T} \sum_{t=1}^{T}  \bar{\varrho}_{k,i t}^{(2)} f_{t}^0 f_t^{0\p}, \quad \Psi_{N, t}=\frac{1}{N} \sumk \sum_{i=1}^{N} w_k \bar{\varrho}_{k,i t}^{(2)} \lambda_{k,i}^0 \lambda_{k,i}^{0\p}, 
\]
and for simplicity of notation, we denote $\Phi_{T,k, i}^* = w_k \Phi_{T,k,i}$.\\
For $\mathcal{H}_d^{\Lam_k} = \frac{\sqrt{T}}{\sqrt{N}} \operatorname{diag}\left\{\Phi_{T,k, 1}^*, \ldots, \Phi_{T,k,i}^*, \ldots, \Phi_{T,k,N}^*\right\} \in \mathbb{R}^{Nr \times Nr}$, $k=1,\ldots,K$ and \\ $\mathcal{H}_{d}^{F} =\frac{\sqrt{N}}{\sqrt{T}} \operatorname{diag}\left[\Psi_{N, 1}, \ldots, \Psi_{N, t}, \ldots, \Psi_{N, T}\right]\in \mathbb{R}^{Tr \times Tr}$, define
$\mathcal{H}_d = \begin{bmatrix}
    \mathcal{H}_d^{\Lam_1} & & & 0\\
       & \ddots &  &   \\ 
       &  & \mathcal{H}_d^{\Lam_K} & \\
      0 &  &  & \mathcal{H}_d^{F} \\
\end{bmatrix} $.  \\

\begin{lemma} \label{lem:normality_5} $\mathcal{H}(\th_0)^{-1}$ exists and $\left\|\mathcal{H}(\th_0)^{-1}-\mathcal{H}_{d}^{-1}\right\|_{\max }=O(T^{-1})$.
\end{lemma}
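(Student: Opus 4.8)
The plan is to decompose $\mathcal{H}(\theta_0) = \partial\mathcal{S}^*(\theta_0)/\partial\theta^\p + \partial^2\mathbb{P}_{NT}(\theta_0)/\partial\theta\partial\theta^\p$ into the dominant block-diagonal piece $\mathcal{H}_d$ plus a remainder whose blocks are all of order $O(T^{-1})$ in max-norm, and then invert using a block-partition (Schur-complement) argument together with a Woodbury identity, mirroring Lemma S.5 of \cite{chen2021quantile} but with the added complication that the loading block is now indexed by the pair $(k,i)$, giving a $KNr\times KNr$ loading block and $K$ groups of cross terms.

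First I would compute $\mathcal{H}(\theta_0)$ blockwise with the ordering $\theta=(\Lambda_1,\ldots,\Lambda_K,F)$. Since the $(k,i)$-loading component of $\mathcal{S}^*$ depends only on $\lambda_{k,i}$ and $F$, the matrix $\partial\mathcal{S}^*/\partial\theta^\p$ is block-diagonal across the $KN$ loading indices, with $(k,i)$-block $\frac{1}{\sqrt{NT}}\sum_t w_k\bar{\varrho}_{k,it}^{(2)}f_t^0 f_t^{0\p} = \frac{\sqrt T}{\sqrt N}\Phi_{T,k,i}^*$; it is block-diagonal across $t$ in the $F$-part with $t$-block $\frac{\sqrt N}{\sqrt T}\Psi_{N,t}$; and its cross ($\lambda_{k,i}$--$f_t$) blocks are, by Lemma \ref{lem:normality_1}(ii), equal to $\frac{w_k}{\sqrt{NT}}\bar{\varrho}_{k,it}^{(2)}f_t^0\lambda_{k,i}^{0\p}$ up to transposition and $\bar{O}(h^m/\sqrt{NT})$, because the accompanying term carries $\bar{\varrho}_{k,it}^{(1)}=\bar{O}(h^m)$. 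Hence the diagonal blocks of $\partial\mathcal{S}^*/\partial\theta^\p$ are exactly those of $\mathcal{H}_d$. For $\partial^2\mathbb{P}_{NT}/\partial\theta\partial\theta^\p$, I would use that $\theta_0$ satisfies the normalization (\ref{eq:normalization}): each quadratic $\sum_i\lambda^0_{k^*,ip}\lambda^0_{k^*,iq}$, $\sum_t f^0_{tp}f^0_{tq}$ (for $p\ne q$) and $\sum_t (f^0_{tk})^2-T$ vanishes at $\theta_0$, so only the outer-product terms $\nabla g(\nabla g)^\p$ of the penalty survive there; these are positive semidefinite, of rank $O(r^2)$, supported in the $\Lambda_{k^*}$--$\Lambda_{k^*}$ and $F$--$F$ blocks, with all entries of order $O(b/N)=O(T^{-1})$ and $O(b/T)=O(T^{-1})$. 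Thus $\mathcal{H}(\theta_0)=\mathcal{H}_d+E$ with $E=\begin{bmatrix} E_A & B\\ B^\p & E_C\end{bmatrix}$, where $E_A,E_C$ are the low-rank penalty corrections and $B$ collects the cross blocks; every block of $E$ is $O(T^{-1})$ in max-norm, even though the operator norm of $E$ is only $O(1)$ (the penalty part is low-rank but not small in operator norm).

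Next I would show $\mathcal{H}_d$ is invertible with $\|\mathcal{H}_d^{-1}\|_{\max}=O(1)$: by Lemma \ref{lem:normality_1}(ii), $\Phi_{T,k,i}^*$ and $\Psi_{N,t}$ are within $\bar{O}(h)$ of $w_k\Phi_{T,k,i}$ and $\Psi_{N,t}$ built from $h_{it}$, and using $\frac1T\sum_t f_t^0 f_t^{0\p}=I_r$, Assumption \ref{assump:consistency}(b), Assumption \ref{assump:normality}(e) and positivity of the $w_k$, these are positive definite with eigenvalues bounded away from $0$ and $\infty$ uniformly in $k,i,t$ for large $N,T$; since $N\sim T$ (Assumption \ref{assump:normality}(f)), the scalars $\sqrt T/\sqrt N$ and $\sqrt N/\sqrt T$ are bounded above and below, so every $r\times r$ block of $\mathcal{H}_d$ and of $\mathcal{H}_d^{-1}$ is $O(1)$. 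For $\mathcal{H}(\theta_0)^{-1}$ I would then use the block-inversion formula for $\begin{bmatrix} A & B\\ B^\p & C\end{bmatrix}$ with $A=A_d+E_A$, $C=C_d+E_C$ the loading and factor parts. The key computation is that $BC^{-1}B^\p$ and $B^\p A^{-1}B$ are, up to $\bar{O}(h^m)$, equal to outer-product (hence rank-$r$) matrices of the form $\Lambda^0_{k}(\cdot)\Lambda^{0\p}_{k'}/N$ and $F^0(\cdot)F^{0\p}/T$ times an overall $1/T$ factor — this uses precisely the factorized shape $B_{(k,i),t}\approx\frac{w_k}{\sqrt{NT}}\bar{\varrho}_{k,it}^{(2)}\lambda^0_{k,i}f_t^{0\p}$ from the previous step and boundedness of $(\Phi_{T,k,i}^*)^{-1}$, $(\Psi_{N,t})^{-1}$ — so they have $O(T^{-1})$ entries and, combined with the low-rank penalty pieces $E_A,E_C$, the Schur complements $A-BC^{-1}B^\p$ and $C-B^\p A^{-1}B$ equal $A_d$ (resp.\ $C_d$) plus a low-rank matrix with $O(T^{-1})$ entries. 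Inverting these via Sherman--Morrison--Woodbury — whose core $r\times r$ matrices are well-conditioned by Assumption \ref{assump:normality}(b) — I would conclude that $\mathcal{H}(\theta_0)$ is invertible and that each diagonal block of $\mathcal{H}(\theta_0)^{-1}$ differs from the corresponding block of $\mathcal{H}_d^{-1}$ by $O(T^{-1})$ in max-norm, while the off-diagonal blocks of $\mathcal{H}(\theta_0)^{-1}$ are themselves $O(T^{-1})$ and match the zero off-diagonal blocks of $\mathcal{H}_d^{-1}$ to that order. Finiteness of $K$ and boundedness of the $w_k$ make the extra sums over quantile indices harmless throughout.

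The step I expect to be the main obstacle is the max-norm (as opposed to operator-norm) control of the inverses of the Schur complements: because $E$ has operator norm $O(1)$ and the block row/column sums of $\mathcal{H}_d^{-1}E$ are only $O(1)$, a naive Neumann expansion $\mathcal{H}(\theta_0)^{-1}=\mathcal{H}_d^{-1}\sum_{j\ge0}(-E\mathcal{H}_d^{-1})^j$ gives nothing useful. The argument must exploit the \emph{approximate low-rank} structure of the cross blocks $B$ (a consequence of Lemma \ref{lem:normality_1}(ii)) and of the penalty Hessian (a consequence of the normalization satisfied by $\theta_0$): only after collapsing $BC^{-1}B^\p$ and $B^\p A^{-1}B$ to rank-$r$ outer products does a Woodbury identity apply cleanly and reproduce the entrywise $O(T^{-1})$ bound. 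This is the step that parallels, and must be extended beyond, the corresponding argument in \cite{chen2021quantile}.
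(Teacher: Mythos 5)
Your proposal diverges from the paper's route in an essential way and contains two concrete gaps.

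The paper does not invert $\mathcal{H}(\theta_0)$ via Schur complements or Woodbury at all. After the same blockwise bookkeeping you do, the bulk of its argument is devoted to proving the operator lower bound $\mathcal{H}(\theta_0)\succeq c\,\mathbb{I}$, and this is a subtle fact: the unpenalized Hessian $\mathcal{H}^*=\partial\mathcal{S}^*(\theta_0)/\partial\theta^{\prime}$ is \emph{singular}, with a non-trivial kernel spanned by vectors $\delta_1,\dots,\delta_{r(r+1)/2+r(r-1)/2}$ that encode the rotation directions $F^0\mapsto F^0R$, $\Lambda_k^0\mapsto\Lambda_k^0R^{-\prime}$. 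The paper identifies this kernel exactly (each $\delta_j$ annihilates the rank-one generators $\mu_{k,it}$ of $\mathcal{H}^*$), and then shows that the penalty Hessian $b\sum_j\gamma_j\gamma_j^{\prime}$ is strictly positive on the span of $\delta$, by computing the coefficients $\beta_j=(\delta^{\prime}\delta)^{-1}\delta^{\prime}\gamma_j$ and showing these are linearly independent; this uses Assumption 1(b) (distinct limiting eigenvalues). Only with $\mathcal{H}(\theta_0)\succeq c\,\mathbb{I}$ in hand does the paper bound the max-norm, and it does so via the exact second-order resolvent identity
\[
\mathcal{H}(\theta_0)^{-1}-\mathcal{H}_d^{-1}=-\mathcal{H}_d^{-1}\mathcal{C}\,\mathcal{H}_d^{-1}+\mathcal{H}_d^{-1}\mathcal{C}\,\mathcal{H}(\theta_0)^{-1}\mathcal{C}\,\mathcal{H}_d^{-1},\qquad \mathcal{C}=\mathcal{H}(\theta_0)-\mathcal{H}_d .
\]
The first term is $O(T^{-1})$ entrywise since $\mathcal{H}_d^{-1}$ is block-diagonal with $O(1)$ entries and $\|\mathcal{C}\|_{\max}=O(T^{-1})$. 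For the second term one uses $\mathcal{H}(\theta_0)^{-1}\preceq c^{-1}\mathbb{I}$ to get $\mathcal{H}_d^{-1}\mathcal{C}\mathcal{H}(\theta_0)^{-1}\mathcal{C}\mathcal{H}_d^{-1}\preceq c^{-1}\mathcal{H}_d^{-1}\mathcal{C}^2\mathcal{H}_d^{-1}$ in the PSD order, then the facts that (a) for PSD $A\preceq B$ each diagonal of $A$ is at most that of $B$, and (b) the max-norm of a PSD matrix is attained on its diagonal; since $\|\mathcal{C}^2\|_{\max}=O(T^{-1})$, the bound follows. This completely sidesteps the obstacle you correctly identify: the unknown, dense $\mathcal{H}(\theta_0)^{-1}$ never appears entrywise — it is replaced by its operator-norm bound and compressed to the diagonal.

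Against that, your proposal has two gaps. First, you never prove invertibility: you assert that the Schur complements are ``well-conditioned by Assumption 2(b),'' but this is exactly where the singularity of $\mathcal{H}^*$ along the rotation kernel must be closed by the penalty term, and showing that this happens uniformly is the main technical content of the lemma. Second, your claim that $BC^{-1}B^{\prime}$ and $B^{\prime}A^{-1}B$ are (approximately) rank-$r$ outer products is not correct. Writing $B_{(k,i),t}\approx(w_k/\sqrt{NT})\,\bar{\varrho}^{(2)}_{k,it}\,f_t^0\lambda_{k,i}^{0\prime}$ and $C\approx\mathcal{H}_d^{F}$, one gets
\[
\bigl(BC^{-1}B^{\prime}\bigr)_{(k,i),(k^{\prime},j)}\;\propto\;\frac{1}{NT}\sum_{t=1}^T \bar{\varrho}^{(2)}_{k,it}\,\bar{\varrho}^{(2)}_{k^{\prime},jt}\,\bigl[\lambda_{k,i}^{0\prime}\Psi_{N,t}^{-1}\lambda_{k^{\prime},j}^{0}\bigr]\,f_t^0 f_t^{0\prime},
\]
and because $\bar{\varrho}^{(2)}_{k,it}=h_{it}(\lambda_{k,i}^{0\prime}f_t^0)+\bar{O}(h^m)$ does not factor into a function of $i$ times a function of $t$, the sum over $t$ does not collapse into a bilinear form $\Lambda^0_k(\cdot)\Lambda^{0\prime}_{k^\prime}/N$. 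The Schur correction remains a dense matrix of essentially full rank with $O(T^{-1})$ entries, not a rank-$r$ perturbation. With no genuine low-rank structure, the Sherman–Morrison–Woodbury step cannot give the entrywise $O(T^{-1})$ bound, and you are back to the dense $O(1)$-operator-norm perturbation you flagged as fatal for a Neumann series. The only low-rank contribution is the penalty Hessian $b\sum\gamma_j\gamma_j^{\prime}$, and its role in the paper is not to make Woodbury tractable — it is to make $\mathcal{H}(\theta_0)$ positive definite in the first place.
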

\begin{proof}
    For simplicity, we only prove the case when $r=2$ and $K=2$. Note that it can be generalized to $r>2$ and $K>2$. Without loss of generality, we choose $k^*=1$. Then,
    \[
    \mathbb{P}_{N T}(\theta)=  b\left[\frac{1}{2 N}\left(\sum_{i=1}^{N} \lambda_{1,i 1} \lambda_{1,i 2}\right)^{2}+\frac{1}{2 T}\left(\sum_{t=1}^{T} f_{t 1} f_{t 2}\right)^{2} +\frac{1}{8 T}\left(\sum_{t=1}^{T} f_{t 1}^{2}-T\right)^{2}+\frac{1}{8 T}\left(\sum_{t=1}^{T} f_{t 2}^{2}-T\right)^{2}\right].
    \]
    The second derivative of $\mathbb{P}_{NT}(\th)$ on $\th=\th_0$ can be written as
    \[
    \frac{\partial^{2} \mathbb{P}_{N T}\left(\theta_{0}\right)}{\partial \theta \partial \theta^{\prime}}=b\cdot\sum\limits_{j=1}^{4} \gamma_{j} \gamma_{j}^{\prime},
    \]
    where 
    \begin{align*}
   \gamma_{1} & = \left[\mathbf{0}_{1 \times 2 N}, \mathbf{0}_{1 \times 2 N},\left(f_{11}^0, 0\right),\ldots,\left(f_{T1}^0, 0\right)\right]^{\prime} / \sqrt{T}, \\
   \gamma_{2} & =\left[\mathbf{0}_{1 \times 2 N}, \mathbf{0}_{1 \times 2 N},\left(0, f_{12}^0\right), \ldots,\left(0, f_{T2}^0\right)\right]^{\prime} / \sqrt{T}, \\
\gamma_{3} & =\left[\mathbf{0}_{1 \times 2 N}, \mathbf{0}_{1 \times 2 N},\left(f_{12}^0, f_{11}^0\right), \ldots,\left(f_{T2}^0, f_{T1}^0\right)\right]^{\prime} / \sqrt{T}, \\
\gamma_{4} & =\left[\left(\lambda_{1,12}^0, \lambda_{1,11}^0\right), \ldots,\left(\lambda_{1,N2}^0, \lambda_{1,N1}^0\right), 
\mathbf{0}_{1 \times 2 N},
\mathbf{0}_{1 \times 2 T}\right]^{\prime} / \sqrt{N},
   \end{align*}
are vectors of length $2 \cdot (2N+T)$. Next, define vectors \\
\resizebox{1.05\linewidth}{!}{
\begin{minipage}{\linewidth}
\begin{align*}
\delta_{1} & =\left[ 
\frac{1}{(NT)^{1/4}}\left( \lam{1}{11}^0, 0 \right), \ldots, \frac{1}{(NT)^{1/4}}\left( \lam{1}{N1}^0, 0 \right), 
\frac{1}{(NT)^{1/4}}\left( \lam{2}{11}^0, 0 \right), \ldots, \frac{1}{(NT)^{1/4}}\left( \lam{2}{N1}^0, 0 \right),
\frac{1}{(NT)^{1/4}}\left( -f_{11}^0, 0 \right), \ldots, \frac{1}{(NT)^{1/4}}\left( -f_{T1}^0, 0 \right)
\right]^{\prime} \\
\delta_{2} & =\left[ 
\frac{1}{(NT)^{1/4}}\left( \lam{1}{12}^0, 0 \right), \ldots, \frac{1}{(NT)^{1/4}}\left( \lam{1}{N2}^0, 0 \right), 
\frac{1}{(NT)^{1/4}}\left( \lam{2}{12}^0, 0 \right), \ldots, \frac{1}{(NT)^{1/4}}\left( \lam{2}{N2}^0, 0 \right),
\frac{1}{(NT)^{1/4}}\left( 0, -f_{11}^0 \right), \ldots, \frac{1}{(NT)^{1/4}}\left( 0, -f_{T1}^0\right)
\right]^{\prime} \\
\delta_{3} & =\left[ 
\frac{1}{(NT)^{1/4}}\left( 0,\lam{1}{12}^0 \right), \ldots, \frac{1}{(NT)^{1/4}}\left(0,\lam{1}{N2}^0 \right), 
\frac{1}{(NT)^{1/4}}\left( 0, \lam{2}{12}^0 \right), \ldots, \frac{1}{(NT)^{1/4}}\left(0,\lam{2}{N2}^0\right),
\frac{1}{(NT)^{1/4}}\left( 0, -f_{12}^0 \right), \ldots, \frac{1}{(NT)^{1/4}}\left( 0, -f_{T2}^0\right)
\right]^{\prime} \\
\delta_{4} & =\left[ 
\frac{1}{(NT)^{1/4}}\left( 0,\lam{1}{11}^0 \right), \ldots, \frac{1}{(NT)^{1/4}}\left(0,\lam{1}{N1}^0 \right), 
\frac{1}{(NT)^{1/4}}\left( 0, \lam{2}{11}^0 \right), \ldots, \frac{1}{(NT)^{1/4}}\left(0,\lam{2}{N1}^0\right),
\frac{1}{(NT)^{1/4}}\left( -f_{12}^0, 0 \right), \ldots, \frac{1}{(NT)^{1/4}}\left( -f_{T2}^0, 0 \right)
\right]^{\prime} \\
\end{align*}
\end{minipage}
}
of length $2 \cdot (2N+T)$ and a matrix
$\delta = [\delta_1, \, \delta_2, \, \delta_3, \, \delta_4] \in \mathbb{R}^{2(2N+T)\times 4}.$
We write $\frac{\partial \mathcal{S}^{*}(\theta_0)}{\partial \theta^{\prime}}$ as \\
\resizebox{1.07\linewidth}{!}{
\begin{minipage}{\linewidth}
\begin{align*}
    & \frac{\partial \mathcal{S}^{*}\left(\theta_{0}\right)}{\partial \theta^{\prime}} \\[10pt]
    &  = \begin{bmatrix}
\frac{w_1}{\sqrt{NT}} \operatorname{diag}\left[\left\{\sum\limits_{t=1}^{T} \bar{\varrho}_{1,i t}^{(2)} f_t^0 f_t^{0\p}\right\}_{i \leq N}\right] &  \mathbf{0}_{2N\times 2N} & \frac{w_1}{\sqrt{NT}}\left\{\bar{\varrho}_{1,i t}^{(2)} f_t^0 \lambda_{1, i}^{0\p}\right\}_{i \leq N, t \leq T} \\
\mathbf{0}_{2N\times 2N} & \frac{w_2}{\sqrt{NT}} \operatorname{diag}\left[\left\{\sum\limits_{t=1}^{T} \bar{\varrho}_{2,i t}^{(2)} f_t^0 f_t^{0\p}\right\}_{i \leq N}\right] & \frac{w_2}{\sqrt{NT}}\left\{\bar{\varrho}_{2,i t}^{(2)} f_t^0 \lambda_{2,i}^{0\p}\right\}_{i \leq N, t \leq T}\\
\frac{w_1}{\sqrt{NT}}\left\{\bar{\varrho}_{1,i t}^{(2)} f_t^0 \lambda_{1, i}^{0\p}\right\}_{i \leq N, t \leq T} & \frac{w_2}{\sqrt{NT}}\left\{\bar{\varrho}_{2,i t}^{(2)} f_t^0 \lambda_{2,i}^{0\p}\right\}_{i \leq N, t \leq T} & \frac{1}{\sqrt{NT}} \operatorname{diag}\left[\left\{\sum\limits_{k=1}^2 w_k\sum\limits_{i=1}^{N} \bar{\varrho}_{k,i t}^{(2)} \lambda_{k,i}^0 \lambda_{k,i}^{0\p} \right\}_{t \leq T}\right] 
\end{bmatrix} \\[10pt]
&= \underline{b} \begin{bmatrix}
\frac{w_1}{\sqrt{NT}} \operatorname{diag}\left[\left\{\sum\limits_{t=1}^{T}  f_t^0 f_t^{0\p}\right\}_{i \leq N}\right] &  \mathbf{0}_{2N\times 2N} & \mathbf{0}_{2N\times 2N} \\
\mathbf{0}_{2N\times 2N} & \frac{w_2}{\sqrt{NT}} \operatorname{diag}\left[\left\{\sum\limits_{t=1}^{T}  f_t^0 f_t^{0\p}\right\}_{i \leq N}\right] & \mathbf{0}_{2N\times 2N}\\
\mathbf{0}_{2N\times 2N} & \mathbf{0}_{2N\times 2N} & \frac{1}{\sqrt{NT}} \operatorname{diag}\left[\left\{\sum\limits_{k=1}^2 w_k\sum\limits_{i=1}^{N}  \lambda_{k,i}^0 \lambda_{k,i}^{0\p} \right\}_{t \leq T}\right] 
\end{bmatrix} \\[10pt]
& \quad + \underline{b} \begin{bmatrix}
\mathbf{0}_{2N\times 2N} &  \mathbf{0}_{2N\times 2N} & \quad \quad \frac{w_1}{\sqrt{NT}}\left\{ f_t^0 \lambda_{1, i}^{0\p}\right\}_{i \leq N, t \leq T} \\
\mathbf{0}_{2N\times 2N} & \mathbf{0}_{2N\times 2N} & \quad \quad \frac{w_2}{\sqrt{NT}}\left\{ f_t^0 \lambda_{2,i}^{0\p}\right\}_{i \leq N, t \leq T}\\
\frac{w_1}{\sqrt{NT}}\left\{ f_t^0 \lambda_{1, i}^{0\p}\right\}_{i \leq N, t \leq T} \quad \quad & \frac{w_2}{\sqrt{NT}}\left\{ f_t^0 \lambda_{2,i}^{0\p}\right\}_{i \leq N, t \leq T} \, & \quad \quad \mathbf{0}_{2N\times 2N}
\end{bmatrix} \\[10pt]
& \quad + \begin{bmatrix}
\frac{w_1}{\sqrt{NT}} \operatorname{diag}\left[\left\{\sum\limits_{t=1}^{T} (\bar{\varrho}_{1,i t}^{(2)}-\underline{b}) f_t^0 f_t^{0\p}\right\}_{i \leq N}\right] &  \mathbf{0}_{2N\times 2N} & \frac{w_1}{\sqrt{NT}}\left\{(\bar{\varrho}_{1,i t}^{(2)}-\underline{b}) f_t^0 \lambda_{1, i}^{0\p}\right\}_{i \leq N, t \leq T} \\
\mathbf{0}_{2N\times 2N} & \frac{w_2}{\sqrt{NT}} \operatorname{diag}\left[\left\{\sum\limits_{t=1}^{T} (\bar{\varrho}_{2,i t}^{(2)}-\underline{b}) f_t^0 f_t^{0\p}\right\}_{i \leq N}\right] & \frac{w_2}{\sqrt{NT}}\left\{(\bar{\varrho}_{2,i t}^{(2)}-\underline{b}) f_t^0 \lambda_{2,i}^{0\p}\right\}_{i \leq N, t \leq T}\\
\frac{w_1}{\sqrt{NT}}\left\{(\bar{\varrho}_{1,i t}^{(2)}-\underline{b}) f_t^0 \lambda_{1, i}^{0\p}\right\}_{i \leq N, t \leq T} & \frac{w_2}{\sqrt{NT}}\left\{(\bar{\varrho}_{2,i t}^{(2)}-\underline{b}) f_t^0 \lambda_{2,i}^{0\p}\right\}_{i \leq N, t \leq T} & \frac{1}{\sqrt{NT}} \operatorname{diag}\left[\left\{\sum\limits_{k=1}^2 w_k\sum\limits_{i=1}^{N} (\bar{\varrho}_{k,i t}^{(2)}-\underline{b}) \lambda_{k,i}^0 \lambda_{k,i}^{0\p} \right\}_{t \leq T}\right] 
\end{bmatrix}. \\[10pt]
\end{align*}
\end{minipage}
}

We denote the three terms of RHS of the above equation as $\rm{I}, \rm{II}$, and $\rm{III}$. We first consider the first term, $\rm{I}$. Note that for $k=1,2$,
\[
\frac{w_k}{\sqrt{NT}} \sum\limits_{t=1}^{T}  f_t^0 f_t^{0\p} = w_k \sqrt{\frac{T}{N}}\cdot \frac{F^{0\p}F^0}{T} = w_k \sqrt{\frac{T}{N}}\cdot \mathbb{I}_2.
\]
Also, using that $\frac{\Lam_2^0 \Lam_2^{0\p}}{N}$ is positive semidefinite, for fixed $t$, 
\[
\frac{1}{\sqrt{NT}} \sum\limits_{k=1}^2w_k\sum\limits_{i=1}^{N}  \lambda_{k,i}^0 \lambda_{k,i}^{0\p} = \sqrt{\frac{N}{T}}\sum\limits_{k=1}^2 w_k \cdot \frac{\Lam_k^0 \Lam_k^{0\p}}{N}  \succeq \sqrt{\frac{N}{T}} \cdot  w_1 \cdot \begin{bmatrix}
    \nu_1(N) & 0 \\
    0 & \nu_2(N) 
\end{bmatrix}.
\]
By Assumption 2, and since $\nu_1(N) \geq \nu_2(N)$ and $\nu_2(N) \rightarrow \nu_2 >0$, there exist a constant $c>0$ such that 
\begin{equation} \label{lem11_I}
    \rm{I} \succeq c \cdot \mathbb{I}_{2(2N+T)}.
\end{equation}
Next, we consider $\rm{II}+\underline{b} \cdot D \delta \delta\p D$ for $D= \operatorname{diag}(w_1\cdot\mathbb{I}_{2N}, w_2 \cdot\mathbb{I}_{2N}, \mathbb{I}_{2T})$. For $j=1,\ldots,4$, we denote the first $2\cdot 2N$ elements of $\delta_j$ by $\delta_{j\Lam}$, which correspond to the portions associated with the loading matrices, and the remaining $2T$ elements by $\delta_{jF}$. Then, we have \\
\resizebox{0.95\linewidth}{!}{
\begin{minipage}{\linewidth}
\[
D \delta \delta^{\prime} D = \left[\begin{array}{ccc}
\multicolumn{2}{c}{\multirow{2}{*}{\Large D( $\sum_{j=1}^{4} \delta_{j\Lam} \delta_{j\Lam}^{\prime}$) D}} & -w_1 (N T)^{-1 / 2}\left\{f_{0 t} \lambda_{0 i}^{\prime}\right\}_{i \leq N, t \leq T} \\
 & & -w_2(N T)^{-1 / 2}\left\{f_{0 t} \lambda_{0 i}^{\prime}\right\}_{i \leq N, t \leq T} \\
-w_1 (N T)^{-1 / 2}\left\{f_{0 t} \lambda_{0 i}^{\prime}\right\}_{i \leq N, t \leq T} \quad \quad & -w_2 (N T)^{-1 / 2}\left\{f_{0 t} \lambda_{0 i}^{\prime}\right\}_{i \leq N, t \leq T} & D (\sum_{k=1}^{4} \delta_{j F} \delta_{j F}^{\prime}) D
\end{array}\right]
\]
\end{minipage}
}
\\[10pt]
and that
\begin{equation} \label{lem11_II}
\rm{II}+ \underline{b} \cdot D \delta \delta^{\prime} D = \left[\begin{array}{ccc}
\multicolumn{2}{c}{\multirow{2}{*}{\large D ($\sum_{j=1}^{4} \delta_{j\Lam} \delta_{j\Lam}^{\prime}$ )D}} & \mathbf{0}_{2N \times 1} \\
 & & \mathbf{0}_{2N \times 1} \\
\mathbf{0}_{1 \times 2N} & \quad \mathbf{0}_{1 \times 2N} & \quad D (\sum_{k=1}^{4} \delta_{j F} \delta_{j F}^{\prime}) D
\end{array}\right] \succeq 0.
\end{equation}
Finally, the last term $\rm{III}$ can be written as 
\[
\text{$\rm{III}$} = \frac{1}{\sqrt{N T}} \sum_{i=1}^{N} \sum_{t=1}^{T} w_1 \left(\bar{\varrho}_{1,i t}^{(2)}-\underline{b}\right) \mu_{1,i t} \mu_{1,i t}^{\prime} + w_2 \left(\bar{\varrho}_{2,i t}^{(2)}-\underline{b}\right) \mu_{2,i t} \mu_{2,i t}^{\prime},
\]
where
\begin{align*}
    &\mu_{1,it} = \underbrace{[\, \mathbf{0}_{1 \times 2}, \ldots, f_{t}^{0\p}, \ldots, \mathbf{0}_{1 \times 2}}_{1 \times 2 N}, \, \mathbf{0}_{1 \times 2N}, \,  \underbrace{ \mathbf{0}_{1 \times 2}, \ldots, \lambda_{1,i}^{0\p}, \ldots, \mathbf{0}_{1 \times 2} \, ]^{\prime}}_{1 \times 2 T}, \\
    &\mu_{2,it} = [\,\mathbf{0}_{1 \times 2N},  \underbrace{ \mathbf{0}_{1 \times 2}, \ldots, f_{t}^{0\p}, \ldots, \mathbf{0}_{1 \times 2}  }_{1 \times 2 N}, \,  \,  \underbrace{ \mathbf{0}_{1 \times 2}, \ldots, \lambda_{2,i}^{0\p}, \ldots, \mathbf{0}_{1 \times 2} \, ]^{\prime}}_{1 \times 2 T}. 
\end{align*}
By Lemma \ref{lem:normality_1} and Assumption 2, $\bar{\varrho}_{1,it}^{(2)}-\underline{b} \geq 0$ and $\bar{\varrho}_{2,it}^{(2)}-\underline{b} \geq 0$. Therefore, $\rm{III}\succeq 0$. By combining this result with (\ref{lem11_I}), and (\ref{lem11_II}), we have
\begin{equation} \label{eq:p.d.f}
\frac{\partial \mathcal{S}^{*}\left(\theta_{0}\right)}{\partial \theta^{\prime}} +\underline{b} \cdot D \delta \delta^\prime D = \rm{I}+(\rm{II}+ \underline{b} \cdot D \delta \delta\p D)+\rm{III} \succeq c \cdot \mathbb{I}_{2(2N+T)}.
\end{equation}
We let $\mathcal{H}^*=\frac{\partial \mathcal{S}^{*}\left(\theta_{0}\right)}{\partial \theta^{\prime}}$. Then, 
$ \mathcal{H}^*= \frac{1}{\sqrt{N T}} \sum_{i=1}^{N} \sum_{t=1}^{T} (w_1 \bar{\varrho}_{1,i t}^{(2)} \mu_{1,i t} \mu_{1,i t}^{\prime} + w_2 \bar{\varrho}_{2,i t}^{(2)} \mu_{2,i t} \mu_{2,i t}^{\prime} )$,
and from $\delta_j^\prime \mu_{1,it}=\delta_j^\prime \mu_{2,it}=0$, it holds that 
$\delta_j^\prime \mathcal{H}^* = 0 $ for $j=1,\ldots,4$. 
Note that
$
\frac{\Lam_1^{0\p} \Lam_1^0} {N}=\begin{bmatrix}
    \nu_{1}(N) & 0 \\
    0 & \nu_2(N)
\end{bmatrix}, \text{\, and we denote \,}
\frac{\Lam_2^{0\p} \Lam_2^0} {N}=\begin{bmatrix}
    e_{11}(N) & e_{12}(N) \\
    e_{12}(N) & e_{22}(N)
\end{bmatrix}.$ Then, we can write

\noindent
\resizebox{0.75\linewidth}{!}{
\begin{minipage}{\linewidth}
$$\delta\p \delta = \begin{bmatrix}
    \sqrt{\frac{N}{T}} \nu_{1}(N)+  \sqrt{\frac{N}{T}} e_{11}(N)+\sqrt{\frac{T}{N}} & \quad  \sqrt{\frac{N}{T}} e_{12}(N) & 0 & 0 \\
    \sqrt{\frac{N}{T}} e_{12}(N) &  \sqrt{\frac{N}{T}} \nu_{2}(N)+  \sqrt{\frac{N}{T}} e_{22}(N)+\sqrt{\frac{T}{N}} & 0 & 0 \\
    0 & 0 & \sqrt{\frac{N}{T}} \nu_{2}(N)+ \sqrt{\frac{N}{T}} e_{22}(N)+\sqrt{\frac{T}{N}} & \sqrt{\frac{N}{T}} e_{12}(N) \\
    0 & 0 & \sqrt{\frac{N}{T}} e_{12}(N) &   \sqrt{\frac{N}{T}}\nu_{1}(N)+ \sqrt{\frac{N}{T}} e_{11}(N)+\sqrt{\frac{T}{N}} 
\end{bmatrix} .
$$
\end{minipage}
}\\
For simplicity, we denote 
{\footnotesize
\[
a_{NT} \defeq  \sqrt{\frac{N}{T}} \nu_{1}(N)+  \sqrt{\frac{N}{T}} e_{11}(N)+\sqrt{\frac{T}{N}}, \quad b_{NT}\defeq   \sqrt{\frac{N}{T}} e_{12}(N), \quad c_{NT}\defeq \sqrt{\frac{N}{T}}\nu_{2}(N)+ \sqrt{\frac{N}{T}} e_{22}(N)+\sqrt{\frac{T}{N}} .
\]
}
Let $D_{NT} \defeq  a_{NT} \cdot c_{NT} - b_{NT}^2$, which is the determinant of both two $2 \times 2$ diagonal block matrices of $\delta\p \delta$. 
Since $e_{11}(N) \cdot e_{22}(N) - e_{12}(N)^2 \geq 0$, we have $D_{NT} \geq \nu_1(N)+\nu_2(N)$, which implies that for sufficiently large $N$, $D_{NT} > \nu_2 > 0$. Hence, we get $\operatorname{rank}(\delta)$ = 4. Moreover, 
using that $\operatorname{rank}(D\delta)=\operatorname{rank}(\delta)$, the orthogonality of $\delta_j$ to $\mathcal{H}^*$, and equation (\ref{eq:p.d.f}), we obtain $\operatorname{rank}(\mathcal{H}^*)=2(2N+T)-4$.

Next, we show that for some arbitrary $v \in \mathbb{R}^{2(2N+T)}$ with $\norm{v}=1 $, $v^\prime \mathcal{H}(\th_0) v > 0$. Since $\delta_j^\prime \mathcal{H}^* = 0 $ for all $j$, $v$ can be decomposed as $v = v_{\mathcal{H}^*}+v_{\delta}$ where $v_{\mathcal{H}^*}$ and $v_{\delta}$ are orthogonal, and are in the column space of $\mathcal{H}^*$ and $\delta$, respectively.  
Then, using that $\mathcal{H}(\th_0)=\mathcal{H}^*+b
\cdot\sum\limits_{j=1}^4 \gamma_j \gamma_j^\prime$, we can write
\[
v^\prime \mathcal{H}(\th_0) v = v^\prime\mathcal{H}^*v + bv^\prime \left( \sum\limits_{j=1}^4 \gamma_j \gamma_j^\prime \right) v
= v_\mathcal{H^*}^\prime \mathcal{H^*} v_\mathcal{H^*} + bv^\prime \left( \sum\limits_{j=1}^4 \gamma_j \gamma_j^\prime \right) v.
\]
In case where $v_{\mathcal{H}^*} \neq 0$, since $v_\mathcal{H^*}^\prime \mathcal{H^*} v_\mathcal{H^*} >0$ and $bv^\prime \left( \sum\limits_{j=1}^4 \gamma_j \gamma_j^\prime \right) v \geq 0$, it holds that $v^\prime \mathcal{H}(\th_0)v >0$.

Now we consider the case where $v_{\mathcal{H}^*} = 0$. We get $v^\prime \mathcal{H}(\th_0) v = bv_\delta^\prime \left( \sum\limits_{j=1}^4 \gamma_j \gamma_j^\prime \right) v_\delta$. We project $\gamma_j$ onto $\delta$, by setting $\gamma_j = \delta \beta_j + \zeta_j$ for $j=1,2,3,4$. Here, $\beta_j = (\delta\p \delta)^{-1}\delta\p \gamma_j $.
For $j=1,\ldots,4$, $\beta_j = (\delta\p \delta)^{-1}\delta\p \gamma_j$ can be computed as \\
\resizebox{0.91\linewidth}{!}{
\begin{minipage}{\linewidth}
$$
\begin{gathered}
\beta_{1}= \frac{(T/N)^{1/4}}{D_{NT}}\begin{bmatrix}
-c_{NT}\\
b_{NT}\\
0 \\
0
\end{bmatrix}, \, \beta_{2}=\frac{(T/N)^{1/4}}{D_{NT}} \begin{bmatrix}
0 \\
0 \\
-a_{NT}\\
b_{NT} 
\end{bmatrix},\, \beta_{3}=\frac{(T/N)^{1/4}}{D_{NT}} \begin{bmatrix}
b_{NT} \\
-a_{NT} \\
b_{NT}\\
-c_{NT} 
\end{bmatrix},\, \beta_{4}=\frac{w_1 (T/N)^{1/4}}{D_{NT}} \begin{bmatrix}
-\nu_{2}(N)\cdot b_{NT}\\
\nu_{2}(N)\cdot a_{NT} \\
-\nu_{1}(N)\cdot b_{NT}\\
\nu_{1}(N)\cdot c_{NT} 
\end{bmatrix}.
\end{gathered}
$$
\end{minipage}
}
\\[10pt]
We have $\nu_1(N) \rightarrow \nu_1 $ and $ \nu_2(N) \rightarrow \nu_2$ as $N \rightarrow \infty$ and the difference $\nu_1-\nu_2$ is bounded away from zero by some positive constant for sufficiently large $N$. 
Moreover, Assumption 3(f) ensures a uniform bound for both $D_{NT}$ and $T/N$ for large $N,T$. By these conditions, it follows that $\{\beta_j: j=1,\ldots,4\}$ are linearly independent, and thus for $B=\sum\limits_{j=1}^4 \beta_k \beta\p_k$, there exists a constant $\underline{\rho}>0$ such that $\rho_{min}(B)>\underline{\rho}$, implying that $B - \underline{\rho}\mathbb{I}_r \succ 0$. 
Next, write $v_\delta = \delta w$ for some $w \neq0 \in \mathbb{R}^4$. Then, 
\[ 
v\p_\delta \gamma_j \gamma_j \p v_\delta = w \p \delta\p (\delta \beta_j +  \zeta_j) ( \beta\p_j \delta\p +  \zeta\p_j)\delta w = w\p \delta\p \delta \beta_j \beta\p_j \delta\p \delta w.
\]
Therefore, $v^\prime \mathcal{H}(\th_0) v = bv_\delta^\prime ( \sum\limits_{j=1}^4 \gamma_j \gamma_j^\prime ) v_\delta
= b w\p (\delta\p \delta B \delta\p \delta) w $. Since $\delta\p \delta B \delta\p \delta \succ 0 $, it follows that $\mathcal{H}(\th_0)$ is positive definite (hence invertible), and $\mathcal{H}(\th_0) \succeq c \cdot \mathbb{I}_{2(2N+T)}$ for some $c>0$.
Next, define
$ \mathcal{C}=\mathcal{H}(\th_0)-\mathcal{H}_d. $ We can write
\[
\mathcal{C} = \begin{bmatrix}
\mathbf{0}_{2N\times 2N} &  \mathbf{0}_{2N\times 2N} & \frac{w_1}{\sqrt{NT}}\left\{\bar{\varrho}_{1,i t}^{(2)} f_t^0 \lambda_{1, i}^{0\p}\right\}_{i \leq N, t \leq T} \\
\mathbf{0}_{2N\times 2N} & \mathbf{0}_{2N\times 2N} & \frac{w_2}{\sqrt{NT}}\left\{\bar{\varrho}_{2,i t}^{(2)} f_t^0 \lambda_{2,i}^{0\p}\right\}_{i \leq N, t \leq T}\\
\frac{w_1}{\sqrt{NT}}\left\{\bar{\varrho}_{1,i t}^{(2)} f_t^0 \lambda_{1, i}^{0\p}\right\}_{i \leq N, t \leq T} & \frac{w_2}{\sqrt{NT}}\left\{\bar{\varrho}_{2,i t}^{(2)} f_t^0 \lambda_{2,i}^{0\p}\right\}_{i \leq N, t \leq T} & \mathbf{0}_{2N\times 2N}
\end{bmatrix} + b \cdot \sum\limits_{j=1}^4 \gamma_j \gamma_j^{\p}.
\] 
Now, we show $\|\mathcal{H}(\th_0)^{-1}-\mathcal{H}_{d}^{-1}\|_{\max}=O(T^{-1})$. We can write
\begin{align*}
\|\mathcal{H}(\th_0)^{-1}-\mathcal{H}_{d}^{-1}\|_{\max} &= \| -\mathcal{H}_{d}^{-1} \mathcal{C H}_{d}^{-1}+\mathcal{H}_{d}^{-1} \mathcal{C H}^{-1} \mathcal{C H}_{d}^{-1} \|_{\max} \\
& \leq \| \mathcal{H}_{d}^{-1} \mathcal{C H}_{d}^{-1} \|_{\max} + \| \mathcal{H}_{d}^{-1} \mathcal{C H}^{-1} \mathcal{C H}_{d}^{-1} \|_{\max}.
\end{align*}
Using the prior result, we obtain
\[ \mathcal{H}_d^{-1}\mathcal{C H}^{-1}(\th_0) \mathcal{C H}_{d}^{-1} \preceq c^{-1}\mathcal{H}_d^{-1}\mathcal{C}^2 \mathcal{H}_{d}^{-1}.
\]
Since all diagonal elements are non-negative for a positive semidefinite matrix, each diagonal element of 
$c^{-1}\mathcal{H}_d^{-1}\mathcal{C}^2 \mathcal{H}_{d}^{-1}$ is greater than or equal to the corresponding diagonal element of $\mathcal{H}_d^{-1}\mathcal{C H}^{-1}(\th_0) \mathcal{C H}_{d}^{-1}$.
Furthermore, because the largest absolute value in a symmetric positive semidefinite matrix is always found in one of its diagonal elements, it follows that 
\[
\| \mathcal{H}_{d}^{-1} \mathcal{C H}^{-1} \mathcal{C H}_{d}^{-1} \|_{\max} \leq \|c^{-1}\mathcal{H}_d^{-1}\mathcal{C}^2 \mathcal{H}_{d}^{-1}\|_{\max}.
\]
Note that by Assumption 2, $H_d^{-1}$ is a block diagonal matrix with all elements that are $O(1)$, and it is easy to show that $\| C \|_{\max}=O(T^{-1})$ and $\| C^2 \|_{\max}=O(T^{-1})$. Therefore, we obtain
\[ 
\|\mathcal{H}(\th_0)^{-1}-\mathcal{H}_{d}^{-1}\|_{\max} 
\leq 
\| \mathcal{H}_{d}^{-1} \mathcal{C H}_{d}^{-1} \|_{\max} + c^{-1} \| \mathcal{H}_d^{-1}\mathcal{C}^2 \mathcal{H}_{d}^{-1}\|_{\max} =O(T^{-1}).
\]
Now, we use (\ref{eq:expandS}) to obtain stochastic expansions of $\tilde{f}_t-f_t^0$, and $\tilde{\lambda}_{k,i}-\lambda_{k,i}^0$. Note that since 
$\frac{1}{N} \sum\limits_{i=1}^N \tilde{\lambda}_{k^*,i1} \tilde{\lambda}_{k^*,i2} = 0$ and  $\frac{1}{N} \sum\limits_{i=1}^N \lambda^0_{k^*,i1} \lambda^0_{k^*,i2} = 0$, we get $\frac{\partial \mathbb{P}_{N T}(\tilde{\theta})}{\partial \theta}=\frac{\partial \mathbb{P}_{N T}\left(\theta_{0}\right)}{\partial \theta}=0$. Therefore, 
\begin{equation} \label{eq:th-th_0}
\tilde{\theta}-\theta_{0}=\mathcal{H}(\th_0)^{-1} \mathcal{S}^{*}(\tilde{\theta})-\mathcal{H}(\th_0)^{-1} \mathcal{S}^{*}\left(\theta_{0}\right)-0.5 \mathcal{H}(\th_0)^{-1} \mathcal{R}(\tilde{\theta}).
\end{equation}
We define $\tilde{\mathcal{S}}^{*}(\theta)=\mathcal{S}_{N T}^{*}(\theta)-\mathcal{S}^{*}(\theta)$, where\\
\resizebox{0.93\linewidth}{!}{
\begin{minipage}{\linewidth}
$$
\begin{aligned}
    \mathbb{S}_{NT}^*(\th)= & \underbrace{ \left[
    -\frac{1}{\sqrt{N T}}  \sum\limits_{t=1}^{T} w_1 \varrho_1^{(1)}\left(X_{1 t}-\lambda_{1,1}^{\prime} f_{t}\right) f_{t}^{\prime}, \ldots, -\frac{1}{\sqrt{N T}}  \sum\limits_{t=1}^{T} w_1 \varrho_1^{(1)}\left(X_{N t}-\lambda_{1,N}^{\prime} f_{t}\right) f_{t}^{\prime} \right.}_{1 \times Nr},   \\
     & \ldots \ldots , \underbrace{-\frac{1}{\sqrt{N T}}  \sum\limits_{t=1}^{T} w_K \varrho_K^{(1)}\left(X_{1 t}-\lambda_{K,1}^{\prime} f_{t}\right) f_{t}^{\prime}, \ldots, -\frac{1}{\sqrt{N T}}  \sum\limits_{t=1}^{T} w_K \varrho_K^{(1)}\left(X_{N t}-\lambda_{K,N}^{\prime} f_{t}\right) f_{t}^{\prime}, }_{1 \times Nr} \\
     &  \underbrace{\left.-\frac{1}{\sqrt{N T}}  \sumk \sum\limits_{i=1}^{N} w_k\varrho_k^{(1)}\left(X_{i 1}-\lambda_{k,i}^{\prime} f_t\right) \lambda_{k,i}^{\prime}, \ldots, -\frac{1}{\sqrt{N T}}  \sumk \sum\limits_{t=1}^{T} w_k \varrho_k^{(1)}\left(X_{i T}-\lambda_{k,i}^{\prime} f_{T}\right) \lambda_{k,i}^{\prime}
     \right]\p }_{1 \times Tr} \in \mathbb{R}^{(KN+T)r}.
\end{aligned}
$$
\end{minipage}
}
\\
Since $\tilde{\th}$ minimizes $\mathbb{S}_{NT}(\th)$, we have $\frac{\partial}{\partial \lambda_{k,i}} S_{NT}(\tilde{\th}) = \frac{1}{N}\sumk \sum\limits_{i=1}^N w_k \cdot \varrho^{(1)}_k(X_{it}-\tilde{\lambda}_{k,i}^{\prime} \tilde{f}_{t}) \tilde{f}_{t} = 0$ and 
$\frac{\partial}{\partial f_t} S_{NT}(\tilde{\th}) = \frac{1}{N}\sumk \sum\limits_{i=1}^N w_k \cdot \varrho^{(1)}_k(X_{it}-\tilde{\lambda}_{k,i}^{\prime} \tilde{f}_{t}) \tilde{\lambda}_{k,i} = 0$. Therefore, $S^*_{NT}(\tilde{\th})=0$ and $\tilde{\mathcal{S}}^{*}(\theta)=-\mathcal{S}^{*}(\theta)$.
Also, let $\mathcal{D} = \mathcal{H}(\th_0)^{-1}-\mathcal{H}_d^{-1}$. Then, we can write
\begin{align} \label{eq:H-1S}
\begin{split}
 \mathcal{H}(\th_0)^{-1} \mathcal{S}^{*}(\tilde{\theta}) & = -\mathcal{H}(\th_0)^{-1} \tilde{\mathcal{S}}^{*}(\tilde{\theta}) =-\mathcal{H}_{d}^{-1} \tilde{\mathcal{S}}^{*}(\tilde{\theta}) - \mathcal{D} \mathcal{S}^{*}(\tilde{\theta}) \\
& = -\mathcal{H}_{d}^{-1} \tilde{\mathcal{S}}^{*}\left(\theta_{0}\right)-\mathcal{H}_{d}^{-1}\left(\tilde{\mathcal{S}}^{*}(\tilde{\theta})-\tilde{\mathcal{S}}^{*}\left(\theta_{0}\right)\right) -\mathcal{D} \tilde{\mathcal{S}}^{*}\left(\theta_{0}\right)-\mathcal{D}\left(\tilde{\mathcal{S}}^{*}(\tilde{\theta})-\tilde{\mathcal{S}}^{*}\left(\theta_{0}\right)\right).
\end{split}
\end{align}
Next, we write
\[
\mathcal{R}(\tilde{\th}) = \left[ \mathcal{R}(\tilde{\theta})_{1}, \ldots, \mathcal{R}(\tilde{\theta})_{N}, \ldots, \mathcal{R}(\tilde{\theta})_{(K-1)N+1}, \ldots,\mathcal{R}(\tilde{\theta})_{KN}, \mathcal{R}(\tilde{\theta})_{KN+1}, \ldots, \mathcal{R}(\tilde{\theta})_{KN+T} \right] \in \mathbb{R}^{(KN+T)r}, 
\]
where $\mathcal{R}(\tilde{\theta})_{j} \in \mathbb{R}^r$ for all $j=1,\ldots, KN+T$,
and
\[
\mathcal{D} = \{ \mathcal{D}_{j,s} \}_{j \leq KN+T\, , \, s \leq KN+T} \in \mathbb{R}^{(KN+T)r \times (KN+T)r}, 
\]
where each $\mathcal{D}_{j,s}$ is an $r \times r$ matrix representing the $(j,s)$-th block of $\mathcal{D}$.
Note that by Assumption 2 and Lemma \ref{lem:normality_2}, for sufficiently large $N, T$, 
\[
\frac{1}{\sqrt{NT}}\sum\limits_{t=1}^T | \tilde{f}_{tp} -f_{tp}^0 | \leq \frac{1}{\sqrt{NT}}\sum\limits_{t=1}^T \| \tilde{f}_{t} -f_{t}^0 \| \lesssim \frac{1}{\sqrt{T}} \| \tilde{F}-F^0 \| = O_p(L_{NT}^{-1})  \text{ for all } p=1,\ldots r,
\] 
and similarly, for fixed $k$, $\frac{1}{\sqrt{NT}}\sum\limits_{i=1}^N | \tilde{\lambda}_{k,iq} -\lambda_{k,iq}^0 | = O_p(L_{NT}^{-1}) $ for all $q=1,\ldots,r$. Using these facts, we can show that for fixed $k$,
\begin{equation} \label{eq:Rlam}
\mathcal{R}(\tilde{\theta})_{(k-1)N+i}=\bar{O}_{P}(1) \cdot \|\tilde{\lambda}_{k,i}-\lambda_{k,i}^0 \|^{2}+\bar{O}_{P}(L_{NT}^{-1}) \cdot \| \tilde{\lambda}_{k,i}-\lambda_{k, i}^0 \|+\bar{O}_{P}(L_{NT}^{-2}) , \quad i=1,\ldots, N,
\end{equation}
and 
\begin{equation}\label{eq:Rfac}
\mathcal{R}(\tilde{\theta})_{KN+t}=\bar{O}_{P}(1) \cdot \|\tilde{f_{t}}-f_{0 t}\|^{2}+\bar{O}_{P}(L_{NT}^{-1}) \cdot \|\tilde{f}_{t}-f_{0 t}\|+\bar{O}_{P}(L_{NT}^{-2}), \quad t=1,\ldots, T.
\end{equation}
Also, by Lemmas \ref{lem:normality_1} and \ref{lem:normality_5}, we get 
\[
\left\|\mathcal{H}(\th_0)^{-1} \mathcal{S}^{*}\left(\theta_{0}\right)\right\|_{\max } \leq \left\|\mathcal{H}_d^{-1} \mathcal{S}^{*}\left(\theta_{0}\right)\right\|_{\max } +\left\|D \mathcal{S}^{*}\left(\theta_{0}\right)\right\|_{\max } = \bar{O}\left(h^{m}\right).
\]
Now we can obtain the stochastic expansion of $\tilde{\lambda}_{k,i}$, $\tilde{f_t}$ from (\ref{eq:th-th_0}) and (\ref{eq:H-1S}) as for fixed $k$,
\begin{align} \label{eq:lamexpand}
\begin{split}
\tilde{\lambda}_{k,i}&-\lambda_{k,i}^0 \\
= & \left(\Phi_{T, k,i}^*\right)^{-1} \frac{w_k}{T} \sum_{s=1}^{T} \tilde{\varrho}_{k,is}^{(1)} f_s^0 
+
\left(\Phi_{T, k,i}^*\right)^{-1} \frac{w_k}{T} \sum_{s=1}^{T}\left\{\tilde{\varrho}_k^{(1)}\left(X_{j t}-\tilde{\lambda}_{k,i}^{\prime} \tilde{f}_{s}\right) \tilde{f}_{s}-\tilde{\varrho}_{k,is}^{(1)} f_s^0\right\}  \\
& + \frac{1}{\sqrt{N T}} \sum\limits_{\ell=1}^K w_\ell  \sum_{j=1}^{N}\sum_{s=1}^{T} \mathcal{D}_{(k-1)N+i,(\ell-1)N+j}  \tilde{\varrho}_{\ell,j s}^{(1)}  f_{s}^0  +
\frac{1}{\sqrt{N T}} \sum\limits_{\ell=1}^K \sum_{j=1}^{N} \sum_{s=1}^{T} w_\ell \mathcal{D}_{(k-1)N+i, KN+s} \tilde{\varrho}_{\ell,j s}^{(1)}  \lambda_{\ell,j}^0 \\
& +\frac{1}{\sqrt{N T}} \sum\limits_{\ell=1}^K \sum_{j=1}^{N} \sum_{s=1}^{T} w_\ell \mathcal{D}_{(k-1)N+i, (\ell-1)N+j} \left\{\tilde{\varrho}_\ell^{(1)}\left(X_{j s}-\tilde{\lambda}_{\ell,j}^{\prime} \tilde{f}_{s}\right) \tilde{f}_{s}-\tilde{\varrho}_{\ell, j s}^{(1)} f_{s}^0 \right\} \\
& +\frac{1}{\sqrt{N T}}  \sum\limits_{\ell=1}^K \sum_{j=1}^{N} \sum_{s=1}^{T} w_\ell \mathcal{D}_{(k-1)N+i,KN+s}\left\{\tilde{\varrho}_\ell^{(1)}\left(X_{j s}-\tilde{\lambda}_{\ell,j}^{\prime} \tilde{f}_{s}\right) \tilde{\lambda}_{\ell,j}-\tilde{\varrho}_{\ell,j s}^{(1)} \lambda_{\ell,j}^0\right\} \\
& -0.5\left(\Phi_{T, k,i}^*\right)^{-1} \mathcal{R}(\tilde{\theta})_{(k-1)N+i}
-0.5 \sum\limits_{\ell=1}^K \sum_{j=1}^{N} \mathcal{D}_{(k-1)N+i, (\ell-1)N+j} \mathcal{R}(\tilde{\theta})_{(\ell-1)N+j} \\
& -0.5 \sum_{s=1}^{T} \mathcal{D}_{(k-1)N+i, KN+s} \mathcal{R}(\tilde{\theta})_{KN+s}+ \bar{O}\left(h^{m}\right),
\end{split}
\end{align}
and 
\begin{align}\label{eq:facexpand}
\begin{split}
 \tilde{f}_{t} & -f_{0 t} \\
& =  \left(\Psi_{N, t}\right)^{-1} \frac{1}{N} \sum_{\ell=1}^K \sum_{j=1}^{N} w_\ell\tilde{\varrho}_{\ell,j t}^{(1)} \lambda_{\ell,j}^0 
+
\left(\Psi_{N, t}\right)^{-1} \frac{1}{N} \sum\limits_{\ell=1}^K w_\ell \sum_{j=1}^{N}\left\{\tilde{\varrho}_\ell^{(1)}\left(X_{j t}-\tilde{\lambda}_{\ell,j}^{\prime} \tilde{f}_{t}\right) \tilde{\lambda}_{\ell,j}-\tilde{\varrho}_{\ell,j t}^{(1)} \lambda_{\ell,j}^0\right\} \\
& + \frac{1}{\sqrt{N T}} \sum\limits_{\ell=1}^K w_\ell \sum_{j=1}^{N} \sum_{s=1}^{T} \mathcal{D}_{KN+t, (\ell-1)N+j} \cdot \tilde{\varrho}_{\ell,j s}^{(1)} \cdot f_s^0 +  \frac{1}{\sqrt{N T}} \sum\limits_{\ell=1}^K  \sum_{j=1}^{N} \sum_{s=1}^{T} w_\ell \mathcal{D}_{KN+t, KN+s} \cdot \tilde{\varrho}_{\ell,j s}^{(1)} \cdot \lambda_{\ell,j}^0 \\
& + \frac{1}{\sqrt{N T}}\sum\limits_{\ell=1}^K  \sum_{j=1}^{N} \sum_{s=1}^{T} w_\ell \mathcal{D}_{KN+t, (\ell-1)N+j}\left\{\tilde{\varrho}_\ell^{(1)}\left(X_{j s}-\tilde{\lambda}_{\ell,j}^{\prime} \tilde{f}_{s}\right) \tilde{f}_{s}-\tilde{\varrho}_{\ell,j s}^{(1)} f_{s}^0\right\} \\
& +\frac{1}{\sqrt{N T}} \sum\limits_{\ell=1}^K  \sum_{j=1}^{N} \sum_{s=1}^{T} w_\ell \mathcal{D}_{KN+t, KN+s}\left\{\tilde{\varrho}_\ell^{(1)}\left(X_{j s}-\tilde{\lambda}_{\ell,j}^{\prime} \tilde{f}_{s}\right) \tilde{\lambda}_{\ell,j}-\tilde{\varrho}_{\ell,j s}^{(1)} \lambda_{\ell, j}^0 \right\} \\
& -0.5 \left(\Psi_{N, t}\right)^{-1} \mathcal{R}(\tilde{\theta})_{KN+t} - 0.5 \sum\limits_{j=1}^{KN} \mathcal{D}_{KN+t,j} \mathcal{R}(\tilde{\th})_j -0.5 \sum\limits_{s=1}^{T} \mathcal{D}_{KN+t,KN+s} \mathcal{R}(\tilde{\th})_{KN+s} + \bar{O}(h^m),
\end{split}
\end{align}

\end{proof} 

\begin{lemma} \label{lem:normality_6} Let $\{a_i\}_{i=1}^N$ and $\{b_t\}_{t=1}^T$  be sequences of uniformly bounded constants. It holds that 
\[
\frac{1}{N} \sum_{i=1}^{N} a_{i}\left(\tilde{\lambda}_{k,i}-\lambda_{k,i}^0\right)=O_{P}\left(\frac{1}{N h}\right) \text{\, and \,} ~\frac{1}{T} \sum_{t=1}^{T} b_{t}\left(\tilde{f}_{t}-f_t^0\right)=O_{P}\left(\frac{1}{T h}\right). 
\]
\end{lemma}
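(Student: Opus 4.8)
The plan is to substitute the stochastic expansions (\ref{eq:facexpand}) and (\ref{eq:lamexpand}) for $\tilde f_t-f_t^0$ and $\tilde\lambda_{k,i}-\lambda_{k,i}^0$ into the two averages and to bound the resulting terms, exploiting the extra cancellation produced by averaging against the bounded sequences. I will describe the argument for $\frac1T\sum_t b_t(\tilde f_t-f_t^0)$; the bound for $\frac1N\sum_i a_i(\tilde\lambda_{k,i}-\lambda_{k,i}^0)$ follows by the mirror-image computation, interchanging the roles of $N$ and $T$, of $F$ and $\Lambda_k$, and of (\ref{eq:facexpand}) and (\ref{eq:lamexpand}), exactly as in the corresponding averaging lemma of \cite{chen2021quantile}.

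The leading linear term in (\ref{eq:facexpand}), namely $\frac1T\sum_t b_t(\Psi_{N,t})^{-1}\frac1N\sum_{\ell=1}^K\sum_{j=1}^N w_\ell\,\tilde\varrho_{\ell,jt}^{(1)}\lambda_{\ell,j}^0$, is only $O_p(N^{-1/2})$ coordinatewise, but after the external average it becomes a doubly-indexed sum of terms that, conditional on the true factors, are mean zero and independent across $j$ and across $t$ by Assumption \ref{assump:consistency}(d), with variances $\bar O(1)$ by Lemma \ref{lem:normality_1}(iii); hence it is $O_p((NT)^{-1/2})=o_p((Th)^{-1})$ since $N\sim T$ and $h\sim T^{-c}$ with $c>0$ (Assumption \ref{assump:normality}(f)). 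Each of the four $\mathcal{D}$-weighted blocks of (\ref{eq:facexpand}) is treated by inserting $\|\mathcal{H}(\theta_0)^{-1}-\mathcal{H}_d^{-1}\|_{\max}=O(T^{-1})$ from Lemma \ref{lem:normality_5} and observing that the enclosed sums of $\tilde\varrho^{(1)}$-terms concentrate at the usual root rates, yielding contributions of order at most $O_p(T^{-1})$. For the three remainder blocks involving $\mathcal{R}(\tilde\theta)$ one invokes (\ref{eq:Rfac}) and (\ref{eq:Rlam}), but the key point is \emph{not} to bound $\|\tilde f_t-f_t^0\|$ or $\|\tilde\lambda_{k,i}-\lambda_{k,i}^0\|$ by their worst-case coordinatewise rates: after the external average, $\frac1T\sum_t\|\tilde f_t-f_t^0\|^2\le\frac1T\|\tilde F-F^0\|^2\lesssim d^2(\tilde\theta,\theta_0)=O_p(L_{NT}^{-2})$ by Lemma \ref{lem:normality_2}, and likewise $\frac1N\sum_i\|\tilde\lambda_{k,i}-\lambda_{k,i}^0\|^2=O_p(N^{-1})$, so a Cauchy--Schwarz step reduces every summand in (\ref{eq:Rfac})--(\ref{eq:Rlam}) to $O_p(L_{NT}^{-2})=o_p((Th)^{-1})$.

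The binding terms are the ``difference'' blocks, typified by $\frac1T\sum_t b_t(\Psi_{N,t})^{-1}\frac1N\sum_\ell w_\ell\sum_j\{\tilde\varrho_\ell^{(1)}(X_{jt}-\tilde\lambda_{\ell,j}'\tilde f_t)\tilde\lambda_{\ell,j}-\tilde\varrho_{\ell,jt}^{(1)}\lambda_{\ell,j}^0\}$. Expanding the curly bracket once more about $(\lambda_{\ell,j}^0,f_t^0)$ produces two kinds of pieces. The piece carrying $\tilde\varrho_{\ell,jt}^{(1)}(\tilde\lambda_{\ell,j}-\lambda_{\ell,j}^0)$ is re-expanded via the leading term of (\ref{eq:lamexpand}), turning it into $\frac1{NT^2}\sum_j\sum_{t,s}\tilde\varrho_{\ell,jt}^{(1)}\tilde\varrho_{\ell,js}^{(1)}f_s^0(\cdot)$, which splits into a diagonal part ($t=s$) of order $O_p(T^{-1})$ and an off-diagonal part of order $o_p((Th)^{-1})$, while the residual terms of (\ref{eq:lamexpand}) contribute even less. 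The piece carrying $\tilde\varrho^{(2)}$ is the genuine bottleneck: it couples the noisy average $\frac1N\sum_j\tilde\varrho_{\ell,jt}^{(2)}\lambda_{\ell,j}^0\lambda_{\ell,j}^{0\prime}$, which is $O_p((Nh)^{-1/2})$ because its summands have variance $\bar O(h^{-1})$ (Lemma \ref{lem:normality_1}(iii)), with $\|\tilde f_t-f_t^0\|=O_p((\sqrt N\,h)^{-1})$ from Lemma \ref{lem:normality_4}, giving exactly $O_p((Nh)^{-1})=O_p((Th)^{-1})$. Collecting all pieces yields $\frac1T\sum_t b_t(\tilde f_t-f_t^0)=O_p((Th)^{-1})$, and the symmetric computation gives $\frac1N\sum_i a_i(\tilde\lambda_{k,i}-\lambda_{k,i}^0)=O_p((Nh)^{-1})$. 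Relative to the single-quantile argument in \cite{chen2021quantile}, the only new feature is that every sum now also ranges over the quantile index $\ell=1,\dots,K$, so that cross-quantile moments $\tilde\varrho_{k,\cdot}^{(1)}\tilde\varrho_{k',\cdot}^{(1)}$ and $\tilde\varrho_{k,\cdot}^{(2)}\tilde\varrho_{k',\cdot}^{(1)}$ with $k\ne k'$ appear; since $K$ is fixed and these are of order $\bar O(1)$ and $\bar O(h^{-1})$ respectively (Lemma \ref{lem:normality_1}(iii)--(iv)), they affect constants only. I expect the $\tilde\varrho^{(2)}$-coupling to be the main obstacle: bounding both of its factors by their worst-case rates would lose a spurious factor $h^{-1}$, so one must carefully pair the coordinatewise rate of Lemma \ref{lem:normality_4} with the aggregate squared-norm bound of Lemma \ref{lem:normality_2}.
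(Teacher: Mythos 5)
Your proposal diverges from the paper's actual argument in the crucial step, and the divergence introduces genuine gaps.

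The paper does \emph{not} Taylor-expand the ``difference'' blocks of (\ref{eq:facexpand}) and (\ref{eq:lamexpand}). Instead it defines the empirical process
\[
\mathbb{V}_{NT}(\theta)=\frac{1}{NT}\sum_{\ell=1}^K w_\ell\sum_{j=1}^N\sum_{s=1}^T d_{\ell,j}\big\{\tilde\varrho^{(1)}_\ell(X_{js}-\lambda'_{\ell,j}f_s)f_s-\tilde\varrho^{(1)}_{\ell,js}f^0_s\big\},
\]
proves an Orlicz-norm increment bound $\|\sqrt{NT}h[\mathbb{V}_{NT}(\theta_a)-\mathbb{V}_{NT}(\theta_b)]\|_{\psi_2}\lesssim d(\theta_a,\theta_b)$, feeds it into the chaining inequality of Theorem 2.2.4 in van der Vaart and Wellner to get $\mathbb{E}[\sup_{\Theta^r(\delta)}|\mathbb{V}_{NT}(\theta)|]\lesssim\delta/(L_{NT}h)$, and then inserts $d(\tilde\theta,\theta_0)=O_p(L_{NT}^{-1})$ from Lemma \ref{lem:normality_2} to conclude $\mathbb{V}_{NT}(\tilde\theta)=O_p((L_{NT}^2h)^{-1})=O_p((Nh)^{-1})$. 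The whole point of that route is that it controls the difference block in one shot, \emph{without} producing a Taylor remainder.

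Your plan replaces this with a further first-order Taylor expansion of the curly bracket about $(\lambda^0_{\ell,j},f^0_t)$, and here there are two concrete problems. First, the expansion has a second-order remainder $R_{\ell,jt}$ that you never mention. Its generic piece is $\tilde\varrho^{(3)}(\cdot)$ or $\tilde\varrho^{(2)}(\cdot)$ at an intermediate random point times a quadratic form in $(\tilde\lambda_{\ell,j}-\lambda^0_{\ell,j},\tilde f_t-f^0_t)$; the only uniform bound available is $|\varrho^{(j)}|=O(h^{1-j})$ from Lemma \ref{lem:normality_1}(i), and after the $\frac{1}{NT}\sum_{j,t}$ average a Cauchy--Schwarz step combined with $\frac{1}{N}\sum_j\|\tilde\lambda_{\ell,j}-\lambda^0_{\ell,j}\|^2=O_p(L_{NT}^{-2})$ and $\frac{1}{T}\sum_t\|\tilde f_t-f^0_t\|^2=O_p(L_{NT}^{-2})$ only yields $O_p(h^{-2}L_{NT}^{-2})=O_p((Th^2)^{-1})$, which exceeds the target $O_p((Th)^{-1})$ by a factor $h^{-1}$. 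The chaining route sidesteps this entirely, which is precisely why the paper uses it. Second, you substitute only the ``leading term of (\ref{eq:lamexpand})'' for $\tilde\lambda_{\ell,j}-\lambda^0_{\ell,j}$ and wave away the rest with ``the residual terms of (\ref{eq:lamexpand}) contribute even less''. But the second term of (\ref{eq:lamexpand}) is itself a loading-side difference block of the very same structure, so this move is either circular (if you appeal to a bound of the kind you are trying to prove) or requires exactly the chaining argument you are trying to avoid. Incidentally, the re-expansion is unnecessary: the $\tilde\varrho^{(1)}$ piece is already $O_p(T^{-1})=o_p((Th)^{-1})$ by a direct Cauchy--Schwarz argument, pairing $\frac{1}{N}\sum_j\|\frac{1}{T}\sum_t b_t(\Psi_{N,t})^{-1}\tilde\varrho^{(1)}_{\ell,jt}\|^2=O_p(T^{-1})$ with $\frac{1}{N}\sum_j\|\tilde\lambda_{\ell,j}-\lambda^0_{\ell,j}\|^2=O_p(L_{NT}^{-2})$, so the recursion you introduce only creates trouble.

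A smaller but telling error: for the $\tilde\varrho^{(2)}$ piece you multiply $O_p((Nh)^{-1/2})$ by $O_p((\sqrt Nh)^{-1})$ and claim the result is ``exactly $O_p((Nh)^{-1})$''. The product is in fact $O_p(N^{-1}h^{-3/2})$, which is strictly larger than the target $(Nh)^{-1}$. The coordinatewise-rate pairing you use does not close the gap; only the aggregate Cauchy--Schwarz step (pairing $\frac{1}{T}\sum_t\|\frac{1}{N}\sum_j\tilde\varrho^{(2)}_{\ell,jt}\lambda^0_{\ell,j}\lambda^{0\prime}_{\ell,j}\|^2=O_p((Nh)^{-1})$ with $\frac{1}{T}\sum_t\|\tilde f_t-f^0_t\|^2=O_p(L_{NT}^{-2})$) gives $O_p(N^{-1}h^{-1/2})=o_p((Th)^{-1})$, so this term is in fact \emph{not} the bottleneck you identify. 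The binding rate $O_p((Nh)^{-1})$ in the paper comes from the chaining bound on $\mathbb{V}_{NT}(\tilde\theta)$, not from a Taylor-expanded $\tilde\varrho^{(2)}$ coupling. Your treatment of the leading CLT term and of the $\mathcal{R}$- and $\mathcal{D}$-blocks agrees with the paper, but the treatment of the difference blocks needs to be replaced by the empirical-process argument.
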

\begin{proof}
We first show the first result. For $\ell=1,\ldots,K$ and $ j=1, \ldots, T$, let 
\[
d_{\ell,j}=\sqrt{N T} \cdot \frac{1}{N} \sum\limits_{i=1}^{N} a_{i} \cdot \mathcal{D}_{(k-1)N+i,(\ell-1)N+j},
\]
and for $s=1,\ldots,T$,
\[
d_s = \sqrt{N T} \cdot \frac{1}{N} \sum\limits_{i=1}^{N} a_{i} \cdot \mathcal{D}_{(k-1)N+i,KN+s}.
\]
Then, by (\ref{eq:lamexpand}), we get 
\begin{align*} 
\begin{split}
\frac{1}{N}&\sum\limits_{i=1}^N a_i (\tilde{\lambda}_{k,i}-\lambda_{k,i}^0) \\
& =  \frac{1}{N}\sum\limits_{i=1}^N a_i  \left(\Phi_{T,k,i}^*\right)^{-1} \frac{w_k}{T} \sum_{s=1}^{T} \tilde{\varrho}_{k,js}^{(1)} f_s^0 
+
\frac{1}{N}\sum\limits_{i=1}^N a_i \left(\Phi_{T,k,i}^*\right)^{-1} \frac{w_k}{T} \sum_{s=1}^{T}\left\{\tilde{\varrho}_k^{(1)}\left(X_{j t}-\tilde{\lambda}_{k,j}^{\prime} \tilde{f}_{s}\right) \tilde{f}_{s}-\tilde{\varrho}_{k,js}^{(1)} f_s^0\right\}  \\
& + \frac{1}{NT}\sum\limits_{\ell=1}^K w_\ell  \sum_{j=1}^{N}\sum_{s=1}^{T} d_{\ell,i} \cdot \tilde{\varrho}_{\ell,j s}^{(1)} \cdot f_{s}^0  
+
\frac{1}{NT}\sum\limits_{\ell=1}^K  w_\ell \sum_{j=1}^{N} \sum_{s=1}^{T} d_s \cdot \tilde{\varrho}_{\ell,j s}^{(1)} \cdot \lambda_{\ell,j}^0 \\
& +\frac{1}{NT} \sum\limits_{\ell=1}^K w_\ell \sum_{j=1}^{N} \sum_{s=1}^{T} d_{\ell,j} \left\{\tilde{\varrho}_\ell^{(1)}\left(X_{j s}-\tilde{\lambda}_{\ell,j}^{\prime} \tilde{f}_{s}\right) \tilde{f}_{s}-\tilde{\varrho}_{\ell, j s}^{(1)} f_{s}^0 \right\} \\
& +\frac{1}{NT}  \sum\limits_{\ell=1}^K w_\ell \sum_{j=1}^{N} \sum_{s=1}^{T}  d_s\left\{\tilde{\varrho}_\ell^{(1)}\left(X_{j s}-\tilde{\lambda}_{\ell,j}^{\prime} \tilde{f}_{s}\right) \tilde{\lambda}_{\ell,j}-\tilde{\varrho}_{\ell,j s}^{(1)} \lambda_{\ell,j}^0\right\} \\
& -0.5 \frac{1}{N} \sum\limits_{i=1}^N a_i \left(\Phi_{T,k,i}^*\right)^{-1} \mathcal{R}(\tilde{\theta})_{(k-1)N+i}
-0.5 \frac{1}{\sqrt{NT}} \sum\limits_{\ell=1}^K \sum_{j=1}^{N} d_{\ell,j} \mathcal{R}(\tilde{\theta})_{(\ell-1)N+j} \\
& -0.5  \frac{1}{\sqrt{NT}} \sum_{s=1}^{T} d_s \mathcal{R}(\tilde{\theta})_{KN+s}+ \bar{O}\left(h^{m}\right).
\end{split}
\end{align*}
Note that by Lemma \ref{lem:normality_4}, for all $\ell=1,\ldots,K$, $\underset{1 \leq j \leq N}{\max}\left\|d_{\ell,j}\right\|$  and $\underset{1 \leq s \leq T}{\max} \left\|d_s \right\|$ are bounded. The first, third, and fourth terms are $O_p((NT)^{-1/2})$ by Lyapunov's CLT. Also, for the last four terms, $\bar{O}(h^m)$ is $O_p(L_{NT}^{-2})$ by Assumption 3, and the other three terms can be shown to be $O_p(L_{NT}^{-2})$ by using 
\[
\frac{1}{N}\sum_{i=1}^N \mathcal{R}(\tilde{\theta})_{(k-1)N+i} = \bar{O}_p(L_{NT}^{-2}) \text{\, and \,} \frac{1}{T}\sum_{t=1}^T \mathcal{R}(\tilde{\theta})_{KN+s} = \bar{O}_p(L_{NT}^{-2}), 
\]
which can be derived from (\ref{eq:Rlam}) and (\ref{eq:Rfac}). Now, we show that the remaining terms, the second, fifth, and sixth terms, are $O_p((Nh)^{-1})$. We focus on showing that 
\[
\frac{1}{NT}  \sum\limits_{\ell=1}^K w_\ell \sum_{j=1}^{N} \sum_{s=1}^{T}  d_{\ell,j}\left\{\tilde{\varrho}_\ell^{(1)}\left(X_{j s}-\tilde{\lambda}_{\ell,j}^{\prime} \tilde{f}_{s}\right) \tilde{f}_s-\tilde{\varrho}_{\ell,j s}^{(1)}f_s^0\right\} = O_p((Nh)^{-1}),
\]
since the results of the other two terms can be shown in the same way. Let 
\[
\mathbb{V}_{N T}(\theta)=\frac{1}{N T} \sum\limits_{\ell=1}^K w_\ell \sum_{j=1}^{N} \sum_{s=1}^{T} d_{\ell,j}\left\{\tilde{\varrho}_\ell^{(1)}\left(X_{j s}-\lambda_{\ell,j}^{\prime} f_{s}\right) f_{s}-\tilde{\varrho}_{\ell,j s}^{(1)} f_{ s}^0\right\}
\]
and define
\[
\Delta_{N T}\left(\theta_{a}, \theta_{b}\right)=\sqrt{N T} h\left[\mathbb{V}_{N T}\left(\theta_{a}\right)-\mathbb{V}_{N T}\left(\theta_{b}\right)\right].
\]
Then, we can write
\begin{align*}
    \Delta_{N T}\left(\theta_{a}, \theta_{b}\right)&=\frac{h}{\sqrt{N T}} \sum\limits_{\ell=1}^K w_\ell \sum_{j=1}^{N} \sum_{s=1}^{T} d_{j} \cdot \tilde{\varrho}_{\ell,j}^{(1)}\left(X_{j s}-\lambda_{\ell, j}^{a\p} f_{ s}^a\right) \cdot\left(f_{s}^a-f_{s}^b\right)
    \\
    & \qquad~~~+
    \frac{h}{\sqrt{N T}} \sum\limits_{\ell=1}^K w_\ell \sum_{j=1}^{N} \sum_{s=1}^{T} d_{j} \cdot\left[\tilde{\varrho}_{\ell,j}^{(1)}\left(X_{j s}-\lambda_{\ell,j}^{a\p} f_{s}^a\right)-\tilde{\varrho}_{\ell,j}^{(1)}\left(X_{j s}-\lambda_{\ell,j}^{b\p} f_{s}^b\right)\right] \cdot f_{s}^b \\
    & = \Delta_{1, N T}\left(\theta_{a}, \theta_{b}\right) 
    + 
    \Delta_{2, N T}\left(\theta_{a}, \theta_{b}\right). 
\end{align*}
Following a similar way as in the proof of Lemma S.7 in \cite{chen2021quantile}, we can show  
$\| \Delta_{NT}(\th^a, \th^b) \|_{\psi_2} \lesssim d(\th^a, \th^b)$ for sufficiently small $d(\th^a, \th^b)$, and furthermore, it holds that\\ $\mathbb{E}\left[\sup _{\theta \in \Theta^{r}(\delta)}\left|\mathbb{V}_{N T}(\theta)\right|\right] \lesssim \frac{\delta}{L_{N T} h}$. Finally, by Lemma \ref{lem:normality_2}, we get $\mathbb{V}_{NT}(\tilde{\th})=O_p((L_{NT}^2h)^{-1})$. By combining the results, we have
\[
\frac{1}{N} \sum_{i=1}^{N} a_{i}\left(\tilde{\lambda}_{k,i}-\lambda_{k,i}^0\right) = O_p((NT)^{-1/2})+O_p((L_{NT}^{-2})) + O_p((L_{NT}^2h)^{-1}),
\]
which is $O_p((Nh)^{-1})$ by Assumption 2. 

The second result can be shown in a similar way by using (\ref{eq:facexpand}) and similarly defining 
\[
d^*_{\ell,j}=\sqrt{N T} \cdot \frac{1}{T} \sum\limits_{t=1}^{T} b_{t} \cdot \mathcal{D}_{KN+t,(\ell-1)N+j},
\]
for $\ell=1,\ldots,K$, $ j=1, \ldots, T$, and
\[
d^*_s = \sqrt{N T} \cdot \frac{1}{T} \sum\limits_{t=1}^{T} b_{t} \cdot \mathcal{D}_{KN+t,KN+s}
\]
for $s=1,\ldots,T$.
\end{proof}

\begin{lemma} \label{lem:normality_7}
For each $k$ and $t$, it holds that
\[
\frac{1}{N} \sum_{i=1}^{N} \tilde{\varrho}_{k,i t}^{(1)}\left(\tilde{\lambda}_{k,i}-\lambda_{k,i}^0\right)=O_{P}\left(\frac{1}{N h}\right) \quad \text { and } \quad \frac{1}{N} \sum_{i=1}^{N} \tilde{\varrho}_{k,i t}^{(2)} \lambda_{k,i}^0\left(\tilde{\lambda}_{k,i}-\lambda_{k,i}^0\right)^{\prime}=O_{P}\left(\frac{1}{N h^{2}}\right).
\]
Also, for each $k$ and $i$,
\[
\frac{1}{T} \sum_{t=1}^{T} \tilde{\varrho}_{k,i t}^{(1)}\left(\tilde{f}_{t}-f_t^0\right)=O_{P}\left(\frac{1}{T h}\right) \quad \text { and } \quad \frac{1}{T} \sum_{t=1}^{T} \tilde{\varrho}_{k,i t}^{(2)} f_{t}^0\left(\tilde{f}_{t}-f_{t}^0\right)^{\prime}=O_{P}\left(\frac{1}{T h^{2}}\right).
\]
\end{lemma}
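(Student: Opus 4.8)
\textbf{Proof proposal for Lemma \ref{lem:normality_7}.}

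The plan is to treat all four claims the same way, since each is an average of a centered (tilde) score multiplied by an estimation error that has already been expanded in Lemma \ref{lem:normality_5}. Concretely, for the first claim I would substitute the stochastic expansion \eqref{eq:lamexpand} for $\tilde{\lambda}_{k,i}-\lambda_{k,i}^0$ into $\frac{1}{N}\sum_{i=1}^N \tilde{\varrho}_{k,it}^{(1)}(\tilde{\lambda}_{k,i}-\lambda_{k,i}^0)$, producing a sum of terms of two broad types: (a) ``leading'' terms in which $\tilde{\varrho}_{k,it}^{(1)}$ multiplies a quantity of the form $\frac{1}{T}\sum_s \tilde{\varrho}^{(1)}_{k,is} f_s^0$ (times $(\Phi_{T,k,i}^*)^{-1}$ or $\mathcal D$-blocks), and (b) ``remainder'' terms involving the increments $\tilde{\varrho}^{(1)}_\ell(X_{js}-\tilde\lambda_{\ell,j}'\tilde f_s)\tilde f_s - \tilde\varrho^{(1)}_{\ell,js} f_s^0$ and the second-order residual blocks $\mathcal R(\tilde\theta)_\bullet$. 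The key observation that makes the $\tilde\varrho^{(1)}_{k,it}$ factor harmless is Assumption \ref{assump:consistency}(d): conditional on $\{f^0_t\}$, the errors are independent across $i,t$, and $\tilde\varrho^{(1)}_{k,it}$ is mean zero by construction; so whenever it multiplies a fixed (nonrandom or previously-controlled) vector and we average over $i$, we gain a factor $N^{-1/2}$ from a Lyapunov CLT, exactly as in Lemmas \ref{lem:normality_4} and \ref{lem:normality_6}.

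The main technical steps, in order, are: (i) for each summand, isolate the double average over $i$ (outer) and whatever index the inner expansion sums over; (ii) for type-(a) terms, note that $E[\tilde\varrho_{k,it}^{(1)}\tilde\varrho_{k',i's'}^{(1)}] = 0$ unless $i=i'$ by independence, collapsing the double sum to a single sum over $i$ with variance $O((NT)^{-1}) \cdot \|(\Phi^*)^{-1}\|^2$ plus the diagonal $s=t$ contribution which, using Lemma \ref{lem:normality_1}(iii), is $O(1/N)$; hence these are $O_p(1/N) = O_p(1/(Nh))$ trivially; (iii) for the $\mathcal D$-weighted terms, use Lemma \ref{lem:normality_5} ($\|\mathcal H(\theta_0)^{-1}-\mathcal H_d^{-1}\|_{\max}=O(T^{-1})$, hence each $\mathcal D$-block is $O(T^{-1})$) together with $N\sim T$ to absorb them into $O_p(L_{NT}^{-2})$; (iv) for type-(b) increment terms, invoke Lemma \ref{lem:normality_6} (or re-run its stochastic-equicontinuity argument with the extra bounded factor $\tilde\varrho^{(1)}_{k,it}$), which yields the $O_p((Nh)^{-1})$ rate — this is where the $h^{-1}$ genuinely enters, via $\sup_\epsilon|\varrho^{(1)}_k(\epsilon)|$ being $O(1)$ but the relevant $\psi_2$-norm of the empirical-process increment scaling like $d(\theta_a,\theta_b)$ with a $\sqrt{NT}\,h$ normalization; (v) for the $\mathcal R(\tilde\theta)_\bullet$ blocks, plug in \eqref{eq:Rlam}--\eqref{eq:Rfac} together with $\|\tilde\lambda_{k,i}-\lambda^0_{k,i}\| = O_p(1/(\sqrt T h))$ and $\|\tilde f_t - f^0_t\| = O_p(1/(\sqrt N h))$ from Lemma \ref{lem:normality_4}, so each such block is $O_p(L_{NT}^{-2}h^{-2})$, which under Assumption \ref{assump:normality}(f) ($h\sim T^{-c}$, $c<1/6$, $N\sim T$) is $o_p((Nh)^{-1})$. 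Summing the contributions gives the first bound. The second claim, $\frac1N\sum_i \tilde\varrho_{k,it}^{(2)}\lambda^0_{k,i}(\tilde\lambda_{k,i}-\lambda^0_{k,i})'=O_p(1/(Nh^2))$, is identical in structure except that every occurrence of $\varrho^{(1)}$ is replaced by $\varrho^{(2)}$, which by Lemma \ref{lem:normality_1}(i),(iii) carries an extra $h^{-1}$ (pointwise bound $O(h^{-1})$, second moment $O(h^{-1})$); tracking this one extra factor through steps (ii)--(v) converts $O_p(1/(Nh))$ into $O_p(1/(Nh^2))$. The two $f_t$-statements are proved by the symmetric argument using \eqref{eq:facexpand} in place of \eqref{eq:lamexpand}, averaging over $t$ instead of $i$, and using independence across $t$.

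I would organize the write-up so that the bulk is: ``decompose via \eqref{eq:lamexpand}; the off-diagonal cross terms vanish in expectation by Assumption \ref{assump:consistency}(d) and the remaining pieces are handled exactly as in Lemma \ref{lem:normality_6} after attaching the uniformly bounded factor $\tilde\varrho^{(j)}_{k,it}$ (which contributes at most $h^{-(j-1)}$ by Lemma \ref{lem:normality_1}(i))'', and then simply state that the details mirror Lemmas \ref{lem:normality_4} and \ref{lem:normality_6}.

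The hard part will be step (iv): making rigorous that attaching the data-dependent, only-boundedly-many-times-differentiable factor $\tilde\varrho^{(1)}_{k,it}$ (resp.\ $\tilde\varrho^{(2)}_{k,it}$) to the empirical-process increment does not spoil the maximal-inequality/stochastic-equicontinuity bound that Lemma \ref{lem:normality_6} relies on. One must verify that the product still has a $\psi_2$-norm controlled by $d(\theta_a,\theta_b)$ up to the stated power of $h^{-1}$, i.e., re-examine the chaining argument of Lemma S.7 of \cite{chen2021quantile} with the extra factor present; the uniform boundedness of $h^{j-1}\varrho^{(j)}_k$ from Lemma \ref{lem:normality_1}(i) is what rescues this, but it has to be combined carefully with the covering-number bound for $\Theta^r(\delta)$. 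Everything else is bookkeeping: matching powers of $L_{NT}$, $h$, and $\sqrt{NT}$ and checking, via Assumption \ref{assump:normality}(f), that all lower-order terms are $o_p$ of the claimed rate.
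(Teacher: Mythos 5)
Your overall skeleton (plug the stochastic expansion \eqref{eq:lamexpand} into the sum, treat the CLT-type pieces, $\mathcal D$-weighted pieces, empirical-process increments, and $\mathcal R$-blocks separately) matches the paper. The paper works out the second display $\frac1N\sum_i\tilde\varrho^{(2)}_{k,it}\lambda^0_{k,i}(\tilde\lambda_{k,i}-\lambda^0_{k,i})'$ in detail (you start from the first), but that is a presentational choice and not a substantive difference. However, two of your concrete steps do not close, and one of them is the part the paper spends most of its effort on.

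First, step (iii) — using $\|\mathcal D\|_{\max}=O(T^{-1})$ to ``absorb the $\mathcal D$-weighted terms into $O_p(L_{NT}^{-2})$'' — is false. These terms are triple sums over $\ell\le K$, $j\le N$, $s\le T$ with a $\frac{1}{\sqrt{NT}}$ prefactor; an elementwise $O(T^{-1})$ bound on the $\mathcal D$-blocks leaves you with $\frac{1}{\sqrt{NT}}\cdot O(NT)\cdot O(T^{-1})\cdot O(h^{-1})=O(h^{-1})$ (and $O(1)$ for the $\varrho^{(1)}$ case), which is not even $o_p(1)$. The paper's actual argument for those terms is quite different: it first uses the first-order condition $\sum_{\ell,j}w_\ell\tilde\lambda'_{\ell,j}\varrho^{(1)}_\ell(X_{js}-\tilde\lambda'_{\ell,j}\tilde f_s)=0$ to replace the uncontrolled $\varrho^{(1)}_\ell$ factor by $-\bar\varrho^{(1)}_\ell$ (which is small since it vanishes at true parameters), and then applies Cauchy--Schwarz, separating $\sqrt{\frac1N\sum_s\|\mathcal Z_{k,t,s}\|^2}$ (controlled via a variance computation using the independence and $E[(\tilde\varrho^{(2)}_{it})^2]=O(h^{-1})$) from $\sqrt{\frac{1}{NT}\sum_{\ell,j,s}[w_\ell\bar\varrho^{(1)}_\ell(\cdot)]^2\|\tilde\lambda_{\ell,j}\|^2}\lesssim O(h^m)+d(\tilde\theta,\theta_0)=O_p(L_{NT}^{-1})$. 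This yields $O_p(1/(N\sqrt h))$, which is what makes the lemma go through; your crude $\mathcal D$-bound throws away exactly the structure that this argument exploits.

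Second, step (v) contains an arithmetic error that is fatal as stated. You bound the $\mathcal R$-blocks using the pointwise rates from Lemma \ref{lem:normality_4} ($\|\tilde\lambda_{k,i}-\lambda^0_{k,i}\|=O_p(1/(\sqrt T h))$), so $\|\tilde\lambda_{k,i}-\lambda^0_{k,i}\|^2=O_p(L_{NT}^{-2}h^{-2})$, and you assert this is $o_p((Nh)^{-1})$. Under $N\sim T$ one has $L_{NT}^{-2}h^{-2}=1/(Nh^2)$ while $(Nh)^{-1}=1/(Nh)$, and $\frac{1/(Nh^2)}{1/(Nh)}=h^{-1}\to\infty$; so your bound is \emph{larger} than the claimed rate, not smaller. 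The paper avoids this by always averaging first: it uses $\frac1N\sum_i\mathcal R(\tilde\theta)_{(k-1)N+i}=\bar O_p(L_{NT}^{-2})$, which follows from the Frobenius-norm consistency rate $\|\tilde\Lambda_k-\Lambda^0_k\|/\sqrt N=O_p(L_{NT}^{-1})$ of Theorem \ref{thm:consistency}, not the pointwise rate of Lemma \ref{lem:normality_4}. That averaged bound is better by exactly the factor $h^{-2}$ you lose, and the attached factor $\tilde\varrho^{(2)}_{k,it}=O(h^{-1})$ then gives $O_p(L_{NT}^{-2}h^{-1})=o_p(1/(Nh^2))$, which is what the lemma needs. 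You should replace the pointwise bounds in step (v) by the averaged ones, and replace step (iii) by the first-order-condition plus Cauchy--Schwarz argument.
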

\begin{proof}
    The four results can be shown in a similar way, so we focus on proving the second result. Note that 
    \begin{align*}
    \mathcal{H}(\th_0)^{-1} \mathcal{S}^{*}(\tilde{\theta})
    &=
    \mathcal{H}_{d}^{-1} \mathcal{S}^{*}(\tilde{\theta})+\mathcal{D} \mathcal{S}^{*}(\tilde{\theta})\\ 
    &= 
    -\mathcal{H}_{d}^{-1} \mathcal{S}^{*}(\tilde{\theta})+\mathcal{D} \tilde{\mathcal{S}}^{*}(\tilde{\theta})\\
    &= -\mathcal{H}_{d}^{-1} \tilde{\mathcal{S}}^{*}\left(\theta_{0}\right)-\mathcal{H}_{d}^{-1}\left(\tilde{\mathcal{S}}^{*}(\tilde{\theta})-\tilde{\mathcal{S}}^{*}\left(\theta_{0}\right)\right)+\mathcal{D} \mathcal{S}^{*}(\tilde{\theta}). 
    \end{align*}
    Then, using (\ref{eq:th-th_0}), we get 
\begin{align} \label{eq:lem13_1}
\begin{split}
& \frac{1}{N} \sum_{i=1}^N \tilde{\varrho}_{k,it}^{(2)} \lambda_{k,i}^0\left(\tilde{\lambda}_{k,i}-\lambda_{k,i}^0\right)^{\prime} \\
& =\frac{w_k}{N T} \sum_{i=1}^N \sum_{s=1}^T \tilde{\varrho}_{k, i t}^{(2)} \tilde{\varrho}_{k, i s}^{(1)} \lambda_{k, i}^0 f_s^0\left(\Phi_{T,k,i}^*\right)^{-1} \\
& +\frac{w_k}{N T} \sum_{i=1}^N \sum_{s=1}^T \tilde{\varrho}_{k, i t}^{(2)} \lambda_{k,i}^0 \cdot \left\{ \tilde{\varrho}_k^{(1)}\left(X_{i s}-\lambda_{k,i}^{\prime} \tilde{f}_s\right) \tilde{f}_s^{\prime}-\tilde{\varrho}_{k, is}^{(1)} f_s^0 \right\} \left(\Phi_{T,k,i}^*\right)^{-1} \\
& +\frac{1}{\sqrt{N T}} \sum_{l=1}^k \sum_{j=1}^N \sum_{s=1}^T\left(\frac{1}{N} \sum_{i=1}^N w_l \tilde{\varrho}_{k,it}^{(2)} \lambda_{k,i}^0 \tilde{f}_s^{\prime} \mathcal{D}\p_{(k-1) N+i,(\ell-1) N+j} \cdot \tilde{\varrho}_\ell^{(1)}\left(X_{j s}-\tilde{\lambda}_{\ell, j}^{\prime} \tilde{f}_s\right)\right) \\
& +\frac{1}{\sqrt{N T}} \sum_{l=1}^k \sum_{j=1}^N \sum_{s=1}^T\left(\frac{1}{N} \sum_{i=1}^N w_l \tilde{\varrho}_{k,it}^{(2)} \lambda_{k,i}^0 \tilde{\lambda}_{\ell,j}^{\prime} \mathcal{D}\p_{(k-1) N+i,KN+s} \cdot \tilde{\varrho}_\ell^{(1)}\left(X_{j s}-\tilde{\lambda}_{\ell, j}^{\prime} \tilde{f}_s\right)\right) \\
& -0.5 \frac{1}{N} \sum_{i=1}^N \tilde{\varrho}_{k,i t}^{(2)} \lambda_{k,i}^0 \mathcal{R}(\tilde{\theta})\p_{(k-1) N+i}\left(\Phi_{T,k,i}^*\right)^{-1}
-
0.5  \frac{1}{N} \sum_{i=1}^N \sum_{l=1}^k \sum_{j=1}^N  \tilde{\varrho}_{k,i t}^{(2)} \lambda_{k,i}^0  \mathcal{R}(\tilde{\theta})\p_{(\ell-1)N+j} \mathcal{D}_{(k-1) N+i,(\ell-1)N+j}\p\\
& -0.5 \frac{1}{N} \sum_{i=1}^N \sum_{s=1}^T\tilde{\varrho}_{k,i t}^{(2)} \lambda_{k,i}^0  \mathcal{R}(\tilde{\theta})\p_{KN+s}\mathcal{D}_{(k-1) N+i, KN+s}\p 
+{O}\left(h^{m-1}\right).
\end{split}
\end{align}
We now determine the stochastic order of each term of the RHS of the above equation.
The first term can be written as
\begin{align} \label{eq:lem11_1st}
\begin{split}
    \frac{w_k}{N T} \sum_{i=1}^N \sum_{s=1}^T & \tilde{\varrho}_{k, i t}^{(2)} \tilde{\varrho}_{k, i s}^{(1)} \lambda_{k, i}^0 f_s^0\left(\Psi_{N,k,i}\right)^{-1} \\
    & = \frac{w_k}{N T} \sum_{i=1}^N \tilde{\varrho}_{k, i t}^{(2)} \tilde{\varrho}_{k, i t}^{(1)} \lambda_{k, i}^0 f_t^0\left(\Phi_{T,k,i}^*\right)^{-1}
    +
    \frac{w_k}{N T} \sum_{i=1}^N \sum_{s\neq t}^T \tilde{\varrho}_{k, i t}^{(2)} \tilde{\varrho}_{k, i s}^{(1)} \lambda_{k, i}^0 f_s^0\left(\Phi_{T,k,i}^*\right)^{-1}.
    \end{split}
\end{align}
By Lemma \ref{lem:normality_1} and Assumption 3, the first term of (\ref{eq:lem11_1st}) is $O_p(\frac{1}{Th})$. For the second term of (\ref{eq:lem11_1st}), using that $\tilde{\varrho}_{k, i t}^{(2)}$ and $\tilde{\varrho}_{k, i s}^{(1)}$ are independent, by Lyapunov's CLT and Lemma \ref{lem:normality_1}, we can show that it is $ O_p(\frac{1}{\sqrt{NTh}})$. Therefore, the first term is $O_p(\frac{1}{Th})$.

Next, the second term of (\ref{eq:lem11_1st}) can be written as
\begin{align} \label{eq:lem11_2nd}
\begin{split}
\frac{w_k}{N T} \sum_{i=1}^N \sum_{s=1}^T & \tilde{\varrho}_{k, i t}^{(2)} \lambda_{k,i}^0 \cdot \left\{ \tilde{\varrho}_k^{(1)}\left(X_{i s}-\lambda_{k,i}^{\prime} \tilde{f}_s\right) \tilde{f}_s^{\prime}-\tilde{\varrho}_{k, is}^{(1)} f_s^0 \right\} \left(\Psi_{N, k,i}\right)^{-1}  \\
& = \frac{w_k}{N T} \sum_{i=1}^N \tilde{\varrho}_{k, i t}^{(2)} \lambda_{k,i}^0 \cdot \left\{ \tilde{\varrho}_k^{(1)}\left(X_{i t}-\lambda_{k,i}^{\prime} \tilde{f}_t\right) \tilde{f}_t^{\prime}-\tilde{\varrho}_{k, it}^{(1)} f_t^0 \right\} \left(\Psi_{N, k,i}\right)^{-1} \\
& ~~~ +  \frac{w_k}{N T} \sum_{i=1}^N \sum_{s \neq t}^T \tilde{\varrho}_{k, i t}^{(2)} \lambda_{k,i}^0 \cdot \left\{ \tilde{\varrho}_k^{(1)}\left(X_{i s}-\lambda_{k,i}^{\prime} \tilde{f}_s\right) \tilde{f}_s^{\prime}-\tilde{\varrho}_{k, is}^{(1)} f_s^0 \right\} \left(\Psi_{N, k,i}\right)^{-1}.  
\end{split}
\end{align}
By Lemma \ref{lem:normality_1} and Assumption 2, the first term of (\ref{eq:lem11_2nd}) is $O_p(\frac{1}{Th})$. The second term has the a similar form with $\mathbb{V}_{NT}(\tilde{\th})$ in Lemma \ref{lem:normality_6}, just need to consider that $\underset{1 \leq i \leq N}{\max}\|\tilde{\varrho}_{k, i t}^{(2)} \lambda_{k,i}^0 \left(\Psi_{N, k,i}\right)^{-1} \| = O(h^{-1})$. Therefore, it can be shown that the second term is $O_p(\frac{1}{Nh^2})$ in a similar way in Lemma \ref{lem:normality_6}. 

The third and fourth terms have the same order and can be proved in the same way. Thus, we only show the order of the fourth term.
Using that 
\[
\frac{\partial}{\partial f_t} S_{NT}(\tilde{\th}) = \frac{1}{N}\sum\limits_{\ell=1}^K \sum\limits_{i=1}^N w_\ell \cdot \varrho^{(1)}_\ell(X_{it}-\tilde{\lambda}_{\ell,i}^{\prime} \tilde{f}_{t}) \tilde{\lambda}_{\ell,i} = 0,\] we get
\begin{align*}
    \frac{1}{\sqrt{N T}} & \sum_{l=1}^K \sum_{j=1}^N \sum_{s=1}^T\left(\frac{1}{N} \sum_{i=1}^N w_l \tilde{\varrho}_{k,it}^{(2)} \lambda_{k,i}^0 \tilde{\lambda}_{\ell,j}^{\prime} \mathcal{D}_{(k-1) N+i,KN+s} \cdot \tilde{\varrho}_\ell^{(1)}\left(X_{j s}-\tilde{\lambda}_{\ell, j}^{\prime} \tilde{f}_s\right)\right) \\
    & = \frac{1}{\sqrt{N T}} \sum_{s=1}^T\left\{\frac{1}{N} \sum_{i=1}^N \tilde{\varrho}_{k,it}^{(2)} \lambda_{k,i}^0 \left( \sum_{l=1}^K \sum_{j=1}^N w_l \tilde{\lambda}_{\ell,j}^{\prime} \cdot \tilde{\varrho}_\ell^{(1)}\left(X_{j s}-\tilde{\lambda}_{\ell, j}^{\prime} \tilde{f}_s\right) \right) \mathcal{D}\p_{(k-1) N+i,KN+s} \right\}\\
    & = \frac{1}{\sqrt{N T}} \sum_{s=1}^T\left\{\frac{1}{N} \sum_{i=1}^N \tilde{\varrho}_{k,it}^{(2)} \lambda_{k,i}^0 \left( \sum_{l=1}^K \sum_{j=1}^N w_l \tilde{\lambda}_{\ell,j}^{\prime} \cdot \bar{\varrho}_\ell^{(1)}\left(X_{j s}-\tilde{\lambda}_{\ell, j}^{\prime} \tilde{f}_s\right) \right) \mathcal{D}\p_{(k-1) N+i,KN+s} \right\}.
\end{align*}
Now we define
\[
\mathcal{Z}_{k,t,s} = \frac{1}{N}\sum\limits_{i=1}^N \tilde{\varrho}_{k,it}^{(2)}\lambda_{k,ip}^0 \cdot \sqrt{NT}\mathcal{D}_{(k-1)N+i,KN+s}[q],
\]
where $\mathcal{D}_{(\cdot,\cdot)}[q]$ denotes the $q$th row of $\mathcal{D}_{(\cdot,\cdot)}$.
Then, the $(p,q)$th element of the fourth term can be written as
\[
\frac{1}{NT}\sum_{\ell=1}^K \sum_{j=1}^N \sum_{s=1}^Tw_\ell \bar{\varrho}_\ell^{(1)}\left(X_{j s}-\tilde{\lambda}_{\ell, j}^{\prime} \tilde{f}_s\right) \tilde{\lambda}_{\ell, j}\p \mathcal{Z}_{k,t,s}\p. 
\]
By Cauchy-Schwarz inequality, its norm is bounded by
\begin{align*}
\frac{1}{NT} & \sqrt{KN \sum_{s=1}^{T}\left\|\mathcal{Z}_{k,t,s}\right\|^{2}} \cdot \sqrt{ \sum_{\ell=1}^K \sum_{j=1}^{N} \sum_{s=1}^{T}\left[w_\ell \bar{\varrho}_\ell^{(1)}\left(X_{j s}-\tilde{\lambda}_{\ell, j}^{\prime} \tilde{f}_s\right)\right]^{2} \left\| \tilde{\lambda}_{\ell, j} \right\|^{2} }\\
& = \sqrt{\frac{KN}{T}} \sqrt{\frac{1}{N} \sum_{s=1}^{T}\left\|\mathcal{Z}_{k,t,s}\right\|^{2}} \cdot \sqrt{ \frac{1}{NT} \sum_{\ell=1}^K \sum_{j=1}^{N} \sum_{s=1}^{T}\left[w_\ell \bar{\varrho}_\ell^{(1)}\left(X_{j s}-\tilde{\lambda}_{\ell, j}^{\prime} \tilde{f}_s\right)\right]^{2} \left\| \tilde{\lambda}_{\ell, j} \right\|^{2} }
.
\end{align*}
We let $v_{i,s}=\lambda_{k,ip}^0 \sqrt{NT}\mathcal{D}_{(k-1)N+i,KN+s}[q]$, so that $\mathcal{Z}_{k,t,s}=\frac{1}{N} \sum\limits_{i=1}^N \tilde{\varrho}_{k,it}^{(2)}v_{i,s}$. Then,
\[
N^2\|{\mathcal{Z}_{k,t,s}}\|^2 = \| \sum\limits_{i=1}^N \tilde{\varrho}_{k,it}^{(2)}v_{i,s} \|^2 = \sum\limits_{i=1}^N \| \tilde{\varrho}_{k,it}^{(2)} v_{i,s} \|^2 + \sum\limits_{i=1}^N \sum\limits_{j \neq i}^N v_{i,s}\p v_{j,s} \tilde{\varrho}_{k,it}^{(2)} \tilde{\varrho}_{k,jt}^{(2)}.
\]
By taking expectations on both sides, we get
\[
N^2 \E\|{\mathcal{Z}_{k,t,s}}\|^2 = \sum\limits_{i=1}^N  \|v_{i,s}\|^2\cdot \E\left[\left(\tilde{\varrho}_{it}^{(2)}\right)^2\right].
\]
Note that Lemma \ref{lem:normality_5} implies that $\|v_{i,s} \|$ is uniformly bounded, and by Lemma \ref{lem:normality_1}, $\E\left[\left(\tilde{\varrho}_{it}^{(2)}\right)^2\right]=\E\left[\left(\varrho_{it}^{(2)}\right)^2\right]-\left(\bar{\varrho}_{it}^{(2)}\right)^2=\bar{O}(h^{-1})$. Therefore, $\E\|{\mathcal{Z}_{k,t,s}}\|^2=O(\frac{1}{Nh})$. Also, by Lemma \ref{lem:normality_1}, for fixed $\ell$,
\[
\left[w_\ell \bar{\varrho}_\ell^{(1)}\left(X_{j s}-\tilde{\lambda}_{\ell, j}^{\prime} \tilde{f}_s\right)\right]^{2} 
\lesssim 
[w_\ell \bar{\varrho}_{\ell,js}^{(1)}+w_\ell \bar{\varrho}_\ell^{(2)}(X_{js}-C^*)\cdot (\lambda_{\ell,j}^{0\p}f_s^0-\tilde{\lambda}_{\ell,j}\p\tilde{f}_s)]^2
\lesssim
O(h^{2m})+(\lambda_{\ell,j}^{0\p}f_s^0-\tilde{\lambda}_{\ell,j}\p\tilde{f}_s)^2,
\]
where $C^*$ is a value between $\lambda_{\ell,j}^{0\p}f_s^0$ and $\tilde{\lambda}_{\ell,j}\p\tilde{f}_s$. By summing both sides across $\ell,j$, and $s$, we get
\[
\sqrt{ \frac{1}{NT}\sum_{\ell=1}^K \sum_{j=1}^{N} \sum_{s=1}^{T}\left[w_\ell \bar{\varrho}_\ell^{(1)}\left(X_{j s}-\tilde{\lambda}_{\ell, j}^{\prime} \tilde{f}_s\right)\right]^{2} \left\| \tilde{\lambda}_{\ell, j} \right\|^{2} } \lesssim O(h^m)+d(\tilde{\th},\th_0)=O_p(L_{NT}^{-1}).
\]
by Lemma \ref{lem:normality_2} and Assumption 2. Therefore, the fourth term of (\ref{eq:lem13_1}) is $O_p(\frac{1}{N\sqrt{h}})$.

Next, it is easily shown that the fifth term of (\ref{eq:lem13_1}) is $O_p(L_{NT}^{-2}h^{-1})$ by using (\ref{eq:Rlam}) and the result that $\|\tilde{\Lam}_k-\Lam_k^0\| / \sqrt{N} =O_p(L_{NT}^{-1})$. 

For the sixth and seventh terms, the order can be obtained in a similar way to the third and fourth terms, and we only focus on the seventh term. We can write the $p,q$th element of the seventh term as
\[
-\frac{0.5}{\sqrt{NT}} \sum\limits_{s=1}^T \mathcal{Z}_{k,s,t} \mathcal{R}(\tilde{\th})_{KN+s}.
\]
Then, by Cauchy-Schwarz inequality, its norm is bounded by
\begin{align*}
\frac{0.5}{\sqrt{NT}} \sqrt{\sum\limits_{s=1}^T \|\mathcal{Z}_{k,s,t} \|^2 }
\cdot 
\sqrt{\sum\limits_{s=1}^T \left\| \mathcal{R}(\tilde{\th})_{KN+s}\right\|^2} & \leq
0.5 \frac{T}{\sqrt{N}} \sqrt{\frac{1}{T}\sum\limits_{s=1}^T \|\mathcal{Z}_{k,s,t} \|^2 }
\cdot \frac{1}{T} \sum\limits_{s=1}^T  \left\| \mathcal{R}(\tilde{\th})_{KN+s}\right\|\\
& =  O_p(L_{NT}) \cdot O_p(\frac{1}{\sqrt{Nh}}) \cdot O_p(L_{NT}^{-2})=O_p(L_{NT}^{-2}h^{-1/2}).
\end{align*}
The seventh term has the same order of $O_p(L_{NT}^{-2}h^{-1/2})$, which can be shown in the same way. Finally, by combining the orders of all seven terms and noting that the second term has the largest order $O_p(\frac{1}{Nh^2})$, we obtain the desired result.

As mentioned, the remaining results can be proved in a similar way. In particular, just as we utilized the uniform boundedness of $h \cdot \varrho_{k,it}^{(2)}(\cdot)$ in the proof of the second result, the proofs of the first and third results rely on the uniform boundedness of $\varrho_{k,it}^{(1)}(\cdot)$. Additionally, we use the boundedness of  $\| f_t^0 \|$, $\| \lambda_{k,i}^0 \|$ for each $t, k$ and $i$. 
\end{proof}

\noindent \textbf{Proof of Theorem 2:}
    We first show the asymptotic distribution of $\sqrt{N}\left(\tilde{f}_t-f_t^0 \right)$.
    For $\lambda_{k,i}^* \in \mathcal{A}$ in (\ref{eq:Taylorexp}), we can write
    \begin{equation}\label{eq:thm4_expand1}
    \varrho_k^{(1)}\left(X_{i t}-\lambda_{k, i}^{*\p} f_t^0\right) = \varrho_{k,it}^{(1)} + \varrho_k^{(2)}(X_{it}-\lambda_{k, i}^{**\p} f_t^0)f_t^{0\p}(\lambda_{k, i}^*-\lambda_{k, i}^0),
    \end{equation}
    and
    \begin{align} \label{eq:thm4_expand2}
    \begin{split}
    \varrho^{(2)}_k\left(X_{i t}-\lambda_{k,i}^{*\p} f_{t}^{0}\right) \lambda_{k,i}^{*}
    & =
    \varrho^{(2)}_{k,it}\lambda_{k,i}^0 +
    \varrho^{(2)}_k\left(X_{i t}-\lambda_{k,i}^{**\p} f_{t}^{0}\right) \left(\lambda_{k,i}^*-\lambda_{k,i}^0\right) \\
    & \quad + \varrho^{(3)}_k\left(X_{i t}-\lambda_{k,i}^{**\p} f_{t}^{0}\right)  \lambda_{k,i}^{**}f_t^{0\p}
    \left(\lambda_{k,i}^*-\lambda_{k,i}^0\right),
    \end{split}
    \end{align}
    where $ \lambda_{k,i}^{**}$ lies between $ \lambda_{k,i}^{*}$ and $ \lambda_{k,i}^{0}$.
    We substitute these results in (\ref{eq:Taylorexp}), multiply the weight $w_k$ on both sides, and sum up both sides across $i=1,\ldots,N$ and $k=1,\ldots K$. Then, we take expectation and put $\lambda_{k,i}=\tilde{\lambda}_{k,i}$, $f_t =\tilde{f}_t$. Using that $\| \lambda_{k,i}^*-\lambda_{k,i}^0 \| \leq \| \tilde{\lambda}_{k,i}-\lambda_{k,i}^0 \| $, and by Lemma \ref{lem:normality_1}, we can easily obtain
    \begin{align*}
         &\frac{1}{N}\sumk\sum\limits_{i=1}^N w_k \cdot \bar{\varrho}_k^{(1)}\left(X_{i t}-\tilde{\lambda}_{k,i}^{\prime} \tilde{f}_{t}\right) \tilde{\lambda}_{k,i} =  \frac{1}{N} \sumk \sum\limits_{i=1}^N w_k \bar{\varrho}_{k,i t}^{(1)} \lambda_{k,i}^0 
         +
         \frac{1}{N} \sumk \sum\limits_{i=1}^N w_k \bar{\varrho}_{k,i t}^{(1)} \left( \tilde{\lambda}_{k,i}-\lambda_{k,i}^0  \right) \\
         &\quad \quad -\frac{1}{N} \sumk \sum\limits_{i=1}^N w_k\bar{\varrho}_{k,i t}^{(2)} \lambda_{k,i}^0 f_t^{0\p} 
         \left( \tilde{\lambda}_{k,i}-\lambda_{k,i}^0  \right) 
         -
         \frac{1}{N}\sum\limits_{i=1}^N w_k \bar{\varrho}_{k,i t}^{(2)}  \tilde{\lambda}_{k,i} \tilde{\lambda}\p_{k,i} \left( \tilde{f}_t - f_t^0 \right) \\ 
         & \quad \quad + \sumk  O_p(\|\tilde{f}_t - f_t^0\|)\cdot O_p\left(\frac{\| \tilde{\Lam}_k-\Lam_k^0 \|}{\sqrt{N} } \right) + O_p(\| \tilde{f}_t - f_t^0\|^2) + \sumk O_p\left(\frac{\| \tilde{\Lam}_k-\Lam_k^0 \|^2}{N}\right).
    \end{align*}
    Using (24) and by Lemmas \ref{lem:normality_1}, \ref{lem:normality_2}, \ref{lem:normality_4}, along with Assumption 2, we have
    \begin{align*}
     & \Psi_t(\tilde{f}_t -f_t^0)  +o_p(\|  \tilde{f}_t - f_t^0 \| ) 
     = 
     -\frac{1}{N}\sumk\sum\limits_{i=1}^N w_k \cdot \bar{\varrho}_k^{(1)}\left(X_{i t}-\tilde{\lambda}_{k,i}^{\prime} \tilde{f}_{t}\right) \tilde{\lambda}_{k,i} + \frac{1}{N} \sumk \sum\limits_{i=1}^N w_k \bar{\varrho}_{k,i t}^{(1)} \lambda_{k,i}^0 \\ 
     & \quad +
     \frac{1}{N} \sumk \sum\limits_{i=1}^N w_k \bar{\varrho}_{k,i t}^{(1)} \left( \tilde{\lambda}_{k,i}-\lambda_{k,i}^0  \right) -\frac{1}{N} \sumk \sum\limits_{i=1}^N w_k\bar{\varrho}_{k,i t}^{(2)} \lambda_{k,i}^0 f_t^{0\p} 
         \left( \tilde{\lambda}_{k,i}-\lambda_{k,i}^0  \right) + O_p(L_{NT}^{-2}).
    \end{align*}
    By Lemmas \ref{lem:normality_1} and \ref{lem:normality_7}, 
    \begin{equation} \label{eq:Thm4_total}
       \Psi_t(\tilde{f}_t -f_t^0)  +o_p(\|  \tilde{f}_t - f_t^0 \| ) 
     = 
     -\frac{1}{N}\sumk\sum\limits_{i=1}^N w_k \cdot \bar{\varrho}_k^{(1)}\left(X_{i t}-\tilde{\lambda}_{k,i}^{\prime} \tilde{f}_{t}\right) \tilde{\lambda}_{k,i} + O(h^m) +  O_p\left(\frac{1}{Nh^2}\right). 
    \end{equation}
    Since $\frac{\partial}{\partial f_t} S_{NT}(\tilde{\th}) = \frac{1}{N}\sumk \sum\limits_{i=1}^N w_k \cdot \varrho^{(1)}_k(X_{it}-\tilde{\lambda}_{k,i}^{\prime} \tilde{f}_{t}) \tilde{\lambda}_{k,i} = 0$, 
    \begin{align*}
    & -\frac{1}{N}\sumk\sum\limits_{i=1}^N w_k \cdot \bar{\varrho}_k^{(1)}\left(X_{i t}-\tilde{\lambda}_{k,i}^{\prime} \tilde{f}_{t}\right) \tilde{\lambda}_{k,i} = \frac{1}{N}\sumk\sum\limits_{i=1}^N w_k \cdot \tilde{\varrho}_k^{(1)}\left(X_{i t}-\tilde{\lambda}_{k,i}^{\prime} \tilde{f}_{t}\right) \tilde{\lambda}_{k,i}, 
    \end{align*}
    and by expanding the RHS around $\lambda_{k,i}^{0\p}f_t^0$, we get
    \begin{align*}
        -\frac{1}{N}\sumk\sum\limits_{i=1}^N w_k \bar{\varrho}_k^{(1)}\left(X_{i t}-\tilde{\lambda}_{k,i}^{\prime} \tilde{f}_{t}\right) \tilde{\lambda}_{k,i} 
        & = \frac{1}{N}\sumk\sum\limits_{i=1}^N w_k \tilde{\varrho}_{k,it}^{(1)}\tilde{\lambda}_{k,i} 
        -\frac{1}{N}\sumk\sum\limits_{i=1}^N w_k \tilde{\varrho}_{k,it}^{(2)} \left(\tilde{\lambda}_{k,i}^{\prime} \tilde{f}_{t} - \lambda_{k,i}^{0\p}f_t^0 \right) \tilde{\lambda}_{k,i} \\
        &  \quad + 0.5\frac{1}{N}\sumk\sum\limits_{i=1}^N w_k \tilde{\varrho}_{k,it}^{(3)}(X_{it}-c_{k,it}^*)\left(\tilde{\lambda}_{k,i}^{\prime} \tilde{f}_{t} - \lambda_{k,i}^{0\p}f_t^0 \right)^2 \tilde{\lambda}_{k,i},
    \end{align*}
    where $c_{k,it}^*$ is a value between $\tilde{\lambda}_{k,i}^{\p}\tilde{f}_t$ and $\lambda_{k,i}^{0\p}f_t^0$. We now compute the order of each term of RHS in the above equation.
First, by Lemma \ref{lem:normality_7}, we have
\begin{equation} \label{eq:Thm4_first}
\frac{1}{N}\sumk\sum\limits_{i=1}^N w_k \tilde{\varrho}_{k,it}^{(1)}\tilde{\lambda}_{k,i} = \frac{1}{N}\sumk\sum\limits_{i=1}^N w_k \tilde{\varrho}_{k,it}^{(1)}\lambda_{k,i}^0 + O_p(\frac{1}{Nh}).
\end{equation}
Next,
\begin{align*}
    & \frac{1}{N}\sumk\sum\limits_{i=1}^N w_k \tilde{\varrho}_{k,it}^{(2)} \left(\tilde{\lambda}_{k,i}^{\prime} \tilde{f}_{t} - \lambda_{k,i}^{0\p}f_t^0 \right) \tilde{\lambda}_{k,i} \\
    & \quad =  \frac{1}{N}\sumk\sum\limits_{i=1}^N w_k \tilde{\varrho}_{k,it}^{(2)} \tilde{\lambda}_{k,i}\left( \tilde{\lambda}_{k,i}-\lambda_{k,i}^0 \right)\p \tilde{f}_t 
    + 
    \frac{1}{N}\sumk\sum\limits_{i=1}^N w_k \tilde{\varrho}_{k,it}^{(2)} \tilde{\lambda}_{k,i} \lambda_{k,i}^{0\p} \left( \tilde{f}_t-f_t^0 \right) \\
    & \quad = \frac{1}{N}\sumk\sum\limits_{i=1}^N w_k \tilde{\varrho}_{k,it}^{(2)} \left( \tilde{\lambda}_{k,i} -\lambda_{k,i}^0 \right) \left( \tilde{\lambda}_{k,i}-\lambda_{k,i}^0 \right)\p \tilde{f}_t 
    +
    \frac{1}{N}\sumk\sum\limits_{i=1}^N w_k \tilde{\varrho}_{k,it}^{(2)} \lambda_{k,i}^0 \left( \tilde{\lambda}_{k,i}-\lambda_{k,i}^0 \right)\p \tilde{f}_t\\
    & \quad +
    \frac{1}{N}\sumk\sum\limits_{i=1}^N w_k \tilde{\varrho}_{k,it}^{(2)} \left(\tilde{\lambda}_{k,i} - \lambda_{k,i}^0\right) \lambda_{k,i}^{0\p}(\tilde{f}_t-f_t^0) 
    + 
    \frac{1}{N}\sumk\sum\limits_{i=1}^N w_k \tilde{\varrho}_{k,it}^{(2)} \lambda_{k,i}^0 \lambda_{k,i}^{0\p}(\tilde{f}_t-f_t^0) \\
    & \quad = O_p\left(\frac{1}{L_{NT}^2 h}\right) + O_p\left(\frac{1}{Nh^2}\right) + O_p\left( \frac{\| \tilde{\Lam}_k -\Lam_k^0 \|} {\sqrt{N}h} \right) O_p\left( \| \tilde{f}_t -f_t^0 \| \right) +O_p\left(\frac{1}{\sqrt{Nh}}\right) O_p(\|\tilde{f}_t -f_t^0  \|)
\end{align*}
by Lemmas \ref{lem:normality_1}, \ref{lem:normality_2}, \ref{lem:normality_7}, and Lyapunov's CLT. Since $\frac{\| \tilde{\Lam}_k -\Lam_k^0\|}{\sqrt{N}}=O_p(L_{NT}^{-1})$ and $\sqrt{N}h \rightarrow \infty$, we get
\begin{equation}\label{eq:Thm4_second}
\frac{1}{N}\sumk\sum\limits_{i=1}^N w_k \tilde{\varrho}_{k,it}^{(2)} \left(\tilde{\lambda}_{k,i}^{\prime} \tilde{f}_{t} - \lambda_{k,i}^{0\p}f_t^0 \right) \tilde{\lambda}_{k,i} 
= 
O_p\left(\frac{1}{Nh^2}\right) + o_p(\|\tilde{f}_t-f_t^0 \|).
\end{equation}
Finally, by Lemmas \ref{lem:normality_1} and \ref{lem:normality_2},
\begin{align*}
    & \left\| \frac{1}{N}\sumk\sum\limits_{i=1}^N w_k \tilde{\varrho}_{k,it}^{(3)}(X_{it}-c_{k,it}^*)\left(\tilde{\lambda}_{k,i}^{\prime} \tilde{f}_{t} - \lambda_{k,i}^{0\p}f_t^0 \right)^2 \tilde{\lambda}_{k,i} \right\| \\
    & \quad \lesssim \frac{1}{N} \sumk \sum\limits_{i=1}^N 
    w_k | \tilde{\varrho}_{k,it}^{(3)}(X_{it}-c_{k,it}^*) | \cdot \left\| \tilde{\lambda}_{k,i} - \lambda_{k,i}^0 \right\|^2 
    + \left\| \tilde{f}_t -f_t^0 \right\|^2 \frac{1}{N} \sumk \sum\limits_{i=1}^N 
    w_k| \tilde{\varrho}_{k,it}^{(3)}(X_{it}-c_{k,it}^*) | \\
    & \quad  = \sumk  O_p\left( h^{-2 }\frac{ \| \tilde{\Lam}_{k} - \Lam_{k}^0 \|^2}{N} \right)
    + O_p(\| \tilde{f}_{t} - f_t^0 \|) \cdot O_p\left(\frac{h^{-2}}{\sqrt{N}h}\right),
\end{align*}
which simplifies to 
\begin{equation} \label{eq:Thm4_third}
O_p\left(\frac{1}{Nh^2}\right)+o_p(\| \tilde{f}_{t} - f_t^0 \|)
\end{equation}
by Lemma \ref{lem:normality_2}, given that $\sqrt{N}h^3 \rightarrow \infty$.
Then, by (\ref{eq:Thm4_total}), (\ref{eq:Thm4_first}), (\ref{eq:Thm4_second}), and (\ref{eq:Thm4_third}), we have
\begin{equation} \label{eq:Thm4_final}
  \Psi_t \cdot (\tilde{f}_t -f_t^0)  +o_p(\|  \tilde{f}_t - f_t^0 \| ) 
     = 
    \frac{1}{N}\sumk\sum\limits_{i=1}^N w_k \tilde{\varrho}_{k,it}^{(1)}\lambda_{k,i}^0+ O(h^m) +  O_p\left(\frac{1}{Nh^2}\right). 
\end{equation}
Note that by Lemma \ref{lem:normality_1},  
\begin{align*}
\operatorname{Var}\left( \sumk w_k \tilde{\varrho}_{k,it}^{(1)}\lambda_{k,i}^0  \right) &= \sumk\sum\limits_{k\p=1}^K w_k w_{k\p} \operatorname{Cov}\left(\tilde{\varrho}_{k,it}^{(1)}, \tilde{\varrho}_{k\p,it}^{(1)}\right) \lambda_{k,i}^0 \lambda_{k\p,i}^{0\p} + \bar{O}(h) \\
& = \sumk\sum\limits_{k\p=1}^K w_k w_{k\p} \operatorname{min}(\tau_k, \tau_{k\p})\cdot (1-\operatorname{max}(\tau_k, \tau_{k\p})) \lambda_{k,i}^0 \lambda_{k\p,i}^{0\p} + \bar{O}(h),
\end{align*}
and thus, it can be shown by Lyapunov's CLT that
\[
 \frac{1}{\sqrt{N}}  \sum\limits_{i=1}^N \sumk  w_k \tilde{\varrho}_{k,it}^{(1)} \lambda_{k,i}^0 \xrightarrow{d} \mathcal{N}\left(0,  \sumk\sum\limits_{k\p=1}^K w_k w_{k\p} \operatorname{min}(\tau_k, \tau_{k\p}) (1-\operatorname{max}(\tau_k, \tau_{k\p})) \Sigma_{k,k\p} \right).
\]
Also, since $\sqrt{N}h^m \rightarrow 0$ and $\frac{1}{\sqrt{N}h^2} \rightarrow 0$, it follows from (\ref{eq:Thm4_final}) that
\[
\sqrt{N} \left( \tilde{f}_t-f_t^0 \right) \xrightarrow{d} \mathcal{N}\left(0, \Psi_t^{-1} \sumk\sum\limits_{k\p=1}^K \left\{ w_k w_{k\p} \operatorname{min}(\tau_k, \tau_{k\p}) (1-\operatorname{max}(\tau_k, \tau_{k\p})) \Sigma_{k,k\p} \right\} \Psi_t^{-1} \right).
\]

Next, we derive the asymptotic distribution of $\sqrt{T} \left( \tilde{\lambda}_{k,i}-\lambda_{k,i}^0 \right)$. As the proof follows a similar approach to the previous result, we reduce some details for brevity. For fixed $k$, by expanding $\varrho_k^{(1)}\left(X_{i t}-\lambda_{k,i}^{\prime} f_{t}\right) f_{t}$ on $\lambda_{k,i}^0$ and $f_t^0$, similarly to (\ref{eq:Taylorexp}), we get
\begin{align*}
& \varrho_k^{(1)}\left(X_{i t}-\lambda_{k,i}^{\prime} f_{t}\right) f_t  \\
    & \quad = \varrho_{k,i t}^{(1)} f_t^0 + \left\{ 
    \varrho_k^{(1)}\left(X_{i t}-\lambda_{k,i}^{0\p} f_{t}^{*}\right)-\varrho^{(2)}_k\left(X_{i t}-\lambda_{k,i}^{0\p} f_{t}^{*}\right)  f_{t}^{*}  \lambda_{k,i}^{0\p}
    \right\} \left(f_t-f_t^0\right) \\
& \quad\quad -\varrho_{k,i t}^{(2)} f_t f_t^{\prime}\left(\lambda_{k,i}-\lambda_{k,i}^0\right) 
+
\varrho_k^{(3)}\left(X_{i t}-\lambda_{k,i}^{0^\prime} f_{t}^{*}\right) f_t f_t^{\prime} \cdot\left(\lambda_{k,i}-\lambda_{k,i}^0\right) \lambda_{k,i}^{0^\prime}\left(f_t-f_t^0\right) \\
& \quad\quad + 0.5 \varrho_k^{(3)}\left(X_{i t}-\lambda_{k,i}^{\prime} f_{t}^{*}\right) \lambda_{k,i}\left[\left(f_t-f_t^0\right)^{\prime} \lambda_{k,i}\right]^{2}.
\end{align*}
Then, by expanding $ \varrho_k^{(1)}\left(X_{i t}-\lambda_{k,i}^{0\p} f_{t}^{*}\right)$ and $ \varrho^{(2)}_k\left(X_{i t}-\lambda_{k,i}^{0\p} f_{t}^{*}\right)  f_{t}^{*} $ around $f_t^0$ similarly to (\ref{eq:thm4_expand1}) and (\ref{eq:thm4_expand2}), we can get
\begin{align*}
    &\frac{1}{T}\sum\limits_{t=1}^T \bar{\varrho}_k^{(1)}\left(X_{i t}-\tilde{\lambda}_{k,i}^{\prime} \tilde{f}_{t}\right) \tilde{f}_t
    =  \frac{1}{T}  \sum\limits_{t=1}^T \bar{\varrho}_{k,i t}^{(1)} f_t^0 
    +
    \frac{1}{T}  \sum\limits_{t=1}^T \bar{\varrho}_{k,i t}^{(1)} \left( \tilde{f}_t-f_t^0  \right) \\
     &\quad \quad -\frac{1}{T}  \sum\limits_{t=1}^T \bar{\varrho}_{k,i t}^{(2)} f_t^0 \lambda_{k,i}^{0\p} 
    \left( \tilde{f}_t-f_t^0  \right) 
         -
    \frac{1}{T}\sum\limits_{t=1}^T \bar{\varrho}_{k,i t}^{(2)}  \tilde{f}_t \tilde{f}_t\p \left( \tilde{\lambda}_{k,i} - \lambda_{k,i}^0 \right) \\ 
    & \quad \quad +   O_p(\| \tilde{\lambda}_{k,i} - \lambda_{k,i}^0 \|)\cdot O_p\left(\frac{\| \tilde{F}-F^0 \|}{\sqrt{T} } \right) + O_p(\|  \tilde{\lambda}_{k,i} - \lambda_{k,i}^0 \|^2) +  O_p\left(\frac{\| \tilde{F}-F^0 \|^2}{T}\right),
\end{align*}
which then follows from Lemmas \ref{lem:normality_1} and \ref{lem:normality_7} that
\begin{equation}\label{eq:Thm4_final2}
       \Phi_{k,i}(\tilde{\lambda}_{k,i} - \lambda_{k,i}^0)  +o_p(\|  \tilde{\lambda}_{k,i} - \lambda_{k,i}^0 \| ) 
     = 
     -\frac{1}{T}\sum\limits_{t=1}^T \bar{\varrho}_k^{(1)}\left(X_{i t}-\tilde{\lambda}_{k,i}^{\prime} \tilde{f}_{t}\right) \tilde{f}_t + O(h^m) +  O_p\left(\frac{1}{Th^2}\right). 
\end{equation}
Since $\frac{\partial}{\partial \lambda_{k,i}} S_{NT}(\tilde{\th}) = \frac{1}{T} \sum\limits_{t=1}^T \varrho^{(1)}_k(X_{it}-\tilde{\lambda}_{k,i}^{\prime} \tilde{f}_{t}) \tilde{f}_t = 0$, 
    \begin{align*}
    & -\frac{1}{T}\sum\limits_{t=1}^T \bar{\varrho}_k^{(1)}\left(X_{i t}-\tilde{\lambda}_{k,i}^{\prime} \tilde{f}_{t}\right) \tilde{f}_t = \frac{1}{T} \sum\limits_{t=1}^T \tilde{\varrho}_k^{(1)}\left(X_{i t}-\tilde{\lambda}_{k,i}^{\prime} \tilde{f}_{t}\right) \tilde{f}_t, 
    \end{align*}
    and by expanding the RHS around $\lambda_{k,i}^{0\p}f_t^0$, we get
    \begin{align*}
        -\frac{1}{T}\sum\limits_{t=1}^T \bar{\varrho}_k^{(1)}\left(X_{i t}-\tilde{\lambda}_{k,i}^{\prime} \tilde{f}_{t}\right) \tilde{f}_t
        & = \frac{1}{T}\sum\limits_{t=1}^T \tilde{\varrho}_{k,it}^{(1)}\tilde{f}_t
        -\frac{1}{T}\sum\limits_{t=1}^T \tilde{\varrho}_{k,it}^{(2)} \left(\tilde{\lambda}_{k,i}^{\prime} \tilde{f}_{t} - \lambda_{k,i}^{0\p}f_t^0 \right) \tilde{f}_t \\
        &  \quad + 0.5\frac{1}{T}\sum\limits_{t=1}^T \tilde{\varrho}_{k,it}^{(3)}(X_{it}-d_{k,it}^*)\left(\tilde{\lambda}_{k,i}^{\prime} \tilde{f}_{t} - \lambda_{k,i}^{0\p}f_t^0 \right)^2 \tilde{f}_t,
    \end{align*}
    where $d_{k,it}^*$ is a value between $\tilde{\lambda}_{k,i}^{\p}\tilde{f}_t$ and $\lambda_{k,i}^{0\p}f_t^0$.
In a similar way to deriving the asymptotic distribution of factors, we can show that the first term of the above equation satisfies
\[
\frac{1}{T}\sum\limits_{t=1}^T \tilde{\varrho}_{k,it}^{(1)}\tilde{f}_t = \frac{1}{T}\sum\limits_{t=1}^T  \tilde{\varrho}_{k,it}^{(1)}f_t^0 + O_p(\frac{1}{Th}),
\]
while the second and third terms can be shown to be $O_p(\frac{1}{Th^2})+o_p(\| \tilde{\lambda}_{k,i} - \lambda_{k,i}^0 \|)$. Therefore, from (\ref{eq:Thm4_final2}), we have
\[
\Phi_{k,i}(\tilde{\lambda}_{k,i} - \lambda_{k,i}^0)  +o_p(\|  \tilde{\lambda}_{k,i} - \lambda_{k,i}^0 \| ) 
     = 
     \frac{1}{T}\sum\limits_{t=1}^T \tilde{\varrho}_{k,it}^{(1)}f_t^0 + O(h^m) +  O_p\left(\frac{1}{Th^2}\right). 
\]
Since it follows from Lemma \ref{lem:normality_1} that 
\[
\operatorname{Var}\left( \tilde{\varrho}_{k,it}^{(1)}f_t^0 \right) = \tau_k(1-\tau_k)f_t^0 f_t^{0\p} + \bar{O}(h),
\]
we can finally derive
\[
\sqrt{T}\left( \tilde{\lambda}_{k,i}-\lambda_{k,i}^0 \right) \xrightarrow{d} \mathcal{N}(0,  \tau_k(1-\tau_k) \Phi_{k,i}^{-2}).
\]

\section{Proof of Theorems \ref{thm:factornumber1} and \ref{thm:factornumber2}} 
Now, we prove Theorems \ref{thm:factornumber1} and \ref{thm:factornumber2}. Suppose that Assumptions 1 and 3 hold. Let $s>r$. For $\th^s \in \Theta^s$, let $d(\th^s,\th_0)=\sqrt{ \frac{1}{NT} \sumk \norm{ \Lambda_k F\p-\Lambda^0_k F^{0\p} }^2}$. Write $\Lambda_k^s = \left[ \Lambda_k^{s,r} \, \Lambda_k^{s,-r} \right]$, where $\Lambda_k^{s,r} \in \R^{N \times r}$ and $\Lambda_k^{s,-r} \in \R^{N \times (s-r)}$ for $k=1,\ldots,K$. Also, define $\Theta^s(\delta)=\{\th^s \in \Theta^s : d(\th^s, \th_0)\leq \delta\}$.

\begin{lemma} \label{lem:facnum_1} Let ~$r < s <\infty$. Then, for any $\th^s \in \Theta^s(\delta)$ and sufficiently small $\delta$, it holds that
\[
\frac{\norm{F^{s,r}-F^0S}}{\sqrt{T}} \lesssim \delta \text{\, and \,} \frac{\norm{\Lambda_k^{s,r}-\Lambda^0S}}{\sqrt{N}} \lesssim \delta, ~\, \frac{\norm{\Lambda_k^{s,-r}} }{\sqrt{N}} \lesssim \delta, \text{\, for \,} k=1,\ldots, K,
\]
where $S=\text{sgn}((F^{s,r})\p F^0/T)$. 
\end{lemma}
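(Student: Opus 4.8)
\textbf{Proof proposal for Lemma \ref{lem:facnum_1}.}
The plan is to mimic Lemma \ref{lem:consist_2} while carefully separating the $r$ "genuine" directions from the $s-r$ spurious ones. Fix $\th^s\in\Theta^s(\delta)$; recall the normalizations $F^{s\p}F^s/T=I_s$ and $\Lambda_{k^*}^{s\p}\Lambda_{k^*}^s/N=\text{diag}(d_1^s,\ldots,d_s^s)$ with $d_1^s\geq\cdots\geq d_s^s\geq 0$, and write $N_r\defeq\Lambda_{k^*}^{0\p}\Lambda_{k^*}^0/N=\text{diag}(\nu_1(N),\ldots,\nu_r(N))$, which by Assumption 1(b) has distinct limits $\nu_1>\cdots>\nu_r>0$. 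First I would introduce the rotation $H\defeq F^{0\p}F^s/T\in\R^{r\times s}$, partitioned as $H=[H_1\ H_2]$ with $H_1\in\R^{r\times r}$; note $\norm{H}\leq\sqrt{rs}$. Since $d(\th^s,\th_0)\leq\delta$ gives $\norm{\Lambda_k^s F^{s\p}-\Lambda_k^0 F^{0\p}}/\sqrt{NT}\leq\delta$ for each $k$, right-multiplying by $F^s/T$ and using $F^{s\p}F^s/T=I_s$ yields $\norm{\Lambda_k^s-\Lambda_k^0 H}/\sqrt{N}\leq\delta$ for every $k$; in particular $\Lambda_k^{s,r}\approx\Lambda_k^0 H_1$ and $\Lambda_k^{s,-r}\approx\Lambda_k^0 H_2$ in this sense, with the same $H_1,H_2$ for all $k$.

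Next, taking $k=k^*$ and pre/post-multiplying $\Lambda_{k^*}^s-\Lambda_{k^*}^0 H$ appropriately, one gets $\norm{\text{diag}(d_1^s,\ldots,d_s^s)-H^\p N_r H}\lesssim\delta$; since $H^\p N_r H$ has rank at most $r$, Weyl's inequality gives $d_{r+1}^s,\ldots,d_s^s\lesssim\delta$, and reading off the off-diagonal block gives $\norm{H_2^\p N_r H_1}\lesssim\delta$. Then I would lower-bound $d_r^s$: by Weyl's inequality for singular values $\big|\sigma_r(\Lambda_{k^*}^s F^{s\p})-\sigma_r(\Lambda_{k^*}^0 F^{0\p})\big|\leq\norm{\Lambda_{k^*}^s F^{s\p}-\Lambda_{k^*}^0 F^{0\p}}\leq\delta\sqrt{NT}$, and since the two normalizations give $\sigma_r(\Lambda_{k^*}^0 F^{0\p})^2/(NT)=\nu_r(N)$ and $\sigma_r(\Lambda_{k^*}^s F^{s\p})^2/(NT)=d_r^s$, we obtain $d_r^s\geq(\sqrt{\nu_r(N)}-\delta)^2$, bounded away from $0$ for $\delta$ small and $N$ large. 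Combining $d_r^s\gtrsim 1$ with $\text{diag}(d_1^s,\ldots,d_r^s)\approx H_1^\p N_r H_1$ yields $\sigma_{\min}(H_1)\gtrsim 1$; then $\norm{H_2^\p N_r H_1}\lesssim\delta$ together with invertibility of $H_1$ and of $N_r$ gives $\norm{H_2}\lesssim\delta$, whence $\norm{\Lambda_k^{s,-r}}/\sqrt{N}\leq\delta+\norm{\Lambda_k^0 H_2}/\sqrt{N}\lesssim\delta$ for all $k$.

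It remains to show $H_1$ is close to $S\defeq\mbox{sgn}((F^{s,r})^\p F^0/T)=\mbox{sgn}(H_1^\p)$. Subtracting the now-negligible term $\Lambda_{k^*}^{s,-r}F^{s,-r\p}$ from $\Lambda_{k^*}^s F^{s\p}\approx\Lambda_{k^*}^0 F^{0\p}$, right-multiplying by $F^0/T$, and using $\Lambda_{k^*}^{s,r}\approx\Lambda_{k^*}^0 H_1$ together with the full-rankness of $\Lambda_{k^*}^0$ (so that $\norm{\Lambda_{k^*}^0 A}/\sqrt{N}\geq\sqrt{\nu_r(N)}\,\norm{A}$), one gets $\norm{H_1 H_1^\p-I_r}\lesssim\delta$, i.e.\ $H_1$ is approximately orthogonal. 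Since $H_1^\p N_r H_1$ is also approximately diagonal with non-increasing entries and $N_r$ is diagonal with distinct non-increasing entries, eigenvector perturbation theory forces $H_1$ to be within $O(\delta)$ of a signed identity matrix, which by the matching ordering is $S$; thus $\norm{H_1-S}\lesssim\delta$. This gives $\norm{\Lambda_k^{s,r}-\Lambda_k^0 S}/\sqrt{N}\lesssim\delta$ for all $k$, and finally, left-multiplying $\Lambda_{k^*}^{s,r}F^{s,r\p}\approx\Lambda_{k^*}^0 F^{0\p}$ (with $\Lambda_{k^*}^{s,r}\approx\Lambda_{k^*}^0 S$) by $\Lambda_{k^*}^{0\p}/N$ and inverting $N_r$ yields $\norm{F^{s,r}-F^0 S}/\sqrt{T}\lesssim\delta$ (using $S^\p=S$ and $SS=I_r$).

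The main obstacle is the last paragraph: upgrading "$H_1$ approximately orthogonal" to "$H_1$ approximately a signed identity," which genuinely requires the distinct-eigenvalue Assumption 1(b) and a careful eigenvector perturbation argument; the remaining steps are bookkeeping of the same flavor as Lemma \ref{lem:consist_2}. One must also keep track that several bounds ($d_r^s\gtrsim 1$, invertibility of $H_1$, $\sqrt{\nu_r(N)}-\delta>0$) hold only once $\delta$ is below a threshold fixed by $\nu_r$ and $N$ is large, which is why the statement is phrased for sufficiently small $\delta$.
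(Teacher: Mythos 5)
Your proposal follows essentially the same strategy as the paper: project onto $F^s$ to convert $d(\theta^s,\theta_0)\leq\delta$ into closeness of $\Lambda_k^s$ to $\Lambda_k^0 H$, read off the eigenvalue structure of $\Lambda_{k^*}^{s\p}\Lambda_{k^*}^s/N \approx H^\p N_r H$ to control the $s-r$ spurious directions via Weyl's inequality, and then invoke eigenvector perturbation theory on the distinct-eigenvalue diagonal matrix $N_r$ to pin $H_1$ to the signed identity. Your $H$ is exactly the paper's $R_T^\p$, $H_1 = R_T^{r\p}$, $H_2 = R_T^{-r\p}$.

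Two technical points worth noting. First, you obtain the sharper bound $\norm{H_2}\lesssim\delta$ by reading off the off-diagonal block $H_2^\p N_r H_1$ and dividing out $\sigma_{\min}(N_r H_1)\gtrsim 1$, whereas the paper only establishes $\norm{R_T^{-r}}^2\lesssim\delta$ (so $\norm{R_T^{-r}}\lesssim\sqrt{\delta}$). The paper's weaker bound is deliberate and sufficient there, since $R_T^{-r}$ enters only quadratically (in $D_{NT}$ and in the expansion of $I-R_T^{r\p}R_T^r$). Your sharper bound is correct and a bit cleaner. Second, and more substantively: you correctly flag the eigenvector perturbation step as the real burden, but you leave it at the level of "perturbation theory forces $H_1$ to be within $O(\delta)$ of a signed identity." The paper does the work here: it constructs a perturbation matrix $D_{NT}$ so that $R_T^{r\p}$ literally satisfies the eigenvector relation $(N_r + D_{NT})R_T^{r\p} = R_T^{r\p}\,\mathrm{diag}(\sigma_1^{k^*}(N),\ldots,\sigma_r^{k^*}(N))$, bounds $\norm{D_{NT}}\lesssim\delta$, and introduces a column-normalization matrix $V_T$ so that $\norm{R_T^{r\p}V_T S - I_r}\lesssim\delta$ via a citable perturbation lemma, before separately showing $\norm{V_T-I_r}\lesssim\delta^2$. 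Your route (polar decomposition to get $H_1\approx U$ orthogonal, then $U^\p N_r U\approx\mathrm{diag}$ with matching ordering, then Davis--Kahan) is equivalent but you would need to supply the normalization bookkeeping that $V_T$ handles in the paper, since the columns of $H_1$ are not unit vectors a priori. So there is no wrong turn, but the hardest step is the one you have not written out.
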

\begin{proof}
Define $\Lambda_{k}^{0*}=\left[ \Lambda_k^0 \, \, \mathbf{0} \right] \in \R^{N \times s}, ~F^{0*}=\left[ F^0 \, V \right] \in \R^{T \times s}$ for $V \in \R^{T \times (s-r)}$ s.t. $\frac{(F^{0*})\p F^{0*}}{T}=\I_s$. For $\th^s \in \Theta^s(\delta)$, define
$ 
U^*= \begin{bmatrix}
S & 0 \\[-5pt]
0 & 0
\end{bmatrix} \in \R^{s \times s}
$. Then, using $\Lambda_k^{0*}(F^{0*})\p=\Lambda_k^0 F^{0\p}$ and $\Lambda_k^{0*} U^*F\p=\Lambda_0S (F^{s,r})\p$, 
\[
\begin{aligned}
    \frac{\norm{\Lambda_k-\Lambda_k^{0*} U^*}}{\sqrt{N}} &= \frac{\norm{(\Lambda_k-\Lambda_k^{0*} U^*)F\p}}{\sqrt{NT}}\\ 
    &=\frac{\norm{\Lambda_k F\p -\Lambda_k^0 (F^{0})\p+\Lambda_k^{0} (F^{0})\p- \Lambda_k^{0*} U^* F\p}}{\sqrt{NT}} \\
    & \leq \frac{\norm{\Lambda_k F\p -\Lambda_k^0 F^{0\p}}}{\sqrt{NT}}+\frac{\norm{\Lambda_k^{0} F^{0\p}- \Lambda_k^{0} S (F^{s,r})\p}}{\sqrt{NT}}\\
    & \leq d(\th^s,\th_0)+ M_1 \cdot \frac{\norm{F^{s,r}-F^0S}}{\sqrt{T}}.
\end{aligned}
\]
Also, since $\Lambda_k-\Lambda_k^{0*}U^*=[\Lambda_k^{s,r} \, \Lambda_k^{s,-r}]-[\Lambda_k^0 S \,\, \mathbf{0}]=[\Lambda_k^{s,r}-\Lambda_k^0 S \, \quad \Lambda_k^{s,-r}]$, 
\[
\frac{\norm{\Lambda_k^{s,r}-\Lambda_k^0 S}}{\sqrt{N}}\leq  \frac{\norm{\Lambda_k-\Lambda_k^{0*} U^*}}{\sqrt{N}}~~ \mbox{and} ~~\frac{\norm{\Lambda_k^{s,-r}}}{\sqrt{N}} \leq  \frac{\norm{\Lambda_k-\Lambda_k^{0*} U^*}}{\sqrt{N}},
\]
which hold for all $k=1,\ldots,K$. Therefore, it suffices to show that $ \frac{\norm{F^{s,r}-F^0S}}{\sqrt{T}} \lesssim d(\th^s,\th_0)$. Let $R_T=\frac{F^{s\p}F^0}{T}$. Then,
\begin{equation} \label{eq:D_NT 1}
\begin{aligned}
    \norm{\frac{\Lamsp\Lams}{N}-R_T(\frac{\Lamzp\Lamz}{N})R_T\p} 
    &=\norm{\frac{\Lamsp\Lams}{N}-\frac{\Lamsp \Lamz R_T\p}{N}+\frac{\Lamsp \Lamz R_T\p}{N}-R_T(\frac{\Lamzp\Lamz}{N})R_T\p} \\
    & = \frac{1}{N}\norm{\Lamsp(\Lams-\Lamz R_T\p)+(\Lamsp-R_T \Lamzp)\Lamz R_T\p}\\
    & \leq \frac{1}{\sqrt{N}}\norm{\Lams-\Lamz R_T\p}\cdot\bigg(\frac{\norm{\Lamsp}}{\sqrt{N}}+\frac{\norm{\Lamz R_T\p}}{\sqrt{N}}\bigg).
\end{aligned}
\end{equation}
Note that from
\[
    \frac{1}{\sqrt{NT}}\norm{(\Lams F^{s\p}-\Lamz F^{0\p})\cdot \frac{F^{s}F^{s\p}}{T}} \leq \frac{1}{\sqrt{NT}}\norm{(\Lams F^{s\p}-\Lamz F^{0\p})}\cdot \norm{\frac{F^{s}F^{s\p}}{T}} \leq d(\th^s,\th_0)\cdot \sqrt{s},
\]
we have $\frac{1}{\sqrt{N}}\norm{\Lams-\Lamz R_T\p} \lesssim d(\th^s,\th_0)$. Therefore, by (\ref{eq:D_NT 1}), 
\begin{equation} \label{eq:D_NT2}
\norm{\frac{\Lamsp\Lams}{N}-R_T(\frac{\Lamzp\Lamz}{N})R_T\p} \lesssim d(\th,\th_0).
\end{equation}
Let $\rho_i(A)$ be the $i$-th largest eigenvalue of $A$ and $\sigma_{i}^k(N)=\rho_i(\frac{\Lam_k^{s\p}\Lam_k^s}{N})$. Then,
\begin{equation} \label{eq:eigenvalues}
\Big|\rho_{r+j}\Big(\frac{\Lamsp \Lams}{N}\Big)-\rho_{r+j}\Big(\, R_T\p\Big(\frac{\Lamzp \Lamz}{N}\Big)R_T\,\Big)\Big|=\big|\sigma^k_{r+j}(N)-0\big| \leq \norm{\frac{\Lamsp\Lams}{N}-R_T\Big(\frac{\Lamzp\Lamz}{N}\Big)R_T} \lesssim d(\th^s,\th_0).
\end{equation}
Now, define $R_T^r=\frac{(F^{k,r})\p F^0}{T}$ and $R_T^{-r}=\frac{(F^{k,-r})\p F^0}{T}$. Note that $R_T=[\,R_T^r \, \, R_T^{-r}\,]$.
From (\ref{eq:D_NT2}),
\[
\norm{
\begin{bmatrix}
\frac{(\Lamsr)\p\Lamsr}{N} & \frac{(\Lamsr)\p\Lamsnr}{N} \\
\frac{(\Lamsnr)\p\Lamsr}{N} & \frac{(\Lamsnr)\p\Lamsnr}{N}
\end{bmatrix}-
\begin{bmatrix}
R_T^r(\frac{\Lamzp \Lamz}{N}) R_T^{r\p} & R_T^r(\frac{\Lamzp \Lamz}{N}) (R_T^{-r})\p \\
R_T^{-r}(\frac{\Lamzp \Lamz}{N}) R_T^{r\p} & R_T^{-r}(\frac{\Lamzp \Lamz}{N}) (R_T^{-r})\p
\end{bmatrix}
} \lesssim d(\th^s,\th_0).
\]
Therefore, $\norm{\frac{(\Lamsnr)\p\Lamsnr}{N}-R_T^{-r}(\frac{\Lamzp \Lamz}{N})(R_T^{-r})\p}\lesssim d(\th^s,\th_0)$. Also, since
\[
\norm{\frac{(\Lamsnr)\p\Lamsnr}{N}}=\sqrt{\{\sigma_{r+1}^k(N)\}^2+\cdots+\{\sigma_{s}^k(N)\}^2} \lesssim d(\th^s,\th_0), 
\]
it holds that 
\[
\{\sigma_{r+1}^k(N)\}^2 \cdot \norm{R_T^{-r}}^2 \lesssim \norm{R_T^{-r}(\frac{\Lamzp \Lamz}{N})(R_T^{-r})\p} \leq d(\th,\th_0)+\norm{\frac{(\Lamsnr)\p\Lamsnr}{N}} \lesssim d(\th^s,\th_0).
\]
Therefore,  $\norm{R_T^{-r}}^2 \lesssim d(\th^s,\th_0)$.

Next, using that
\[
\begin{aligned}
I=\frac{F^{0\p}F^0}{T}&=R_T\p R_T+\frac{F^{0\p}F^0}{T}-R_T\p \frac{F^{s\p}F^s}{T} R_T\\
&=R_T^{r\p} R_T^r+(R_T^{-r})\p R_T^{-r}+\frac{F^{0\p}}{\sqrt{T}}\cdot \frac{F^0-F^s R_T}{\sqrt{T}},
\end{aligned}
\]
for $k^*$ used in the normalization condition, we have
{\small
\[
\begin{aligned}
    \frac{\Lambda_{k^*}^{0\p} \Lambda_{k^*}^0 }{N}
    &=R_T\p(\frac{\Lambda_{k^*}^{s\p} \Lambda_{k^*}^s}{N})R_T+\frac{\Lambda_{k^*}^{0\p} \Lambda_{k^*}^0}{N}-R_T\p(\frac{\Lambda_{k^*}^{s\p} \Lambda_{k^*}^s}{N})R_T
    \\ 
    &= R_T^{r^{\prime}}\cdot \operatorname{diag}(\sigma_{1}^{k^*}(N), \ldots, \sigma_{r}^{k^*}(N)) \cdot (R_T^{r\p})^{-1} + R_T^{r^{\prime}}\cdot \operatorname{diag}(\sigma_{1}^{k^*}(N), \ldots, \sigma_{r}^{k^*}(N))\cdot (R_T^{r\p})^{-1}(R_T^{r\p}R_T^r-\I) \\
    & \quad +R_T^{-r^{\prime}}\cdot \operatorname{diag}(\sigma_{r+1}^{k^*}(N), \ldots, \sigma_{s}^{k^*}(N))\cdot R_T^{-r} + \frac{\Lambda_{k^*}^0}{\sqrt{N}}\cdot\frac{\Lambda_{k^*}^0-\Lambda_{k^*}^s R_T}{\sqrt{N}} -\frac{\Lambda_{k^*}^0-\Lambda_{k^*}^s R_T}{\sqrt{N}}\cdot\frac{\Lambda_{k^*}^s R_T}{\sqrt{N}}.
\end{aligned}
\]
}
Then, 
\[
\left(\frac{\Lam_{k^*}^{0\p}\Lam_{k^*}^0}{N}+D_{NT}\right)R_T^{r\p}=R_T^{r\p}\cdot \operatorname{diag}(\sigma_{1}^{k^*}(N),\ldots,\sigma_{r}^{k^*}(N)),
\]
where
\begin{equation} \label{eq:D_NT3}
\begin{aligned}
D_{NT}&=R_T^{r^{\prime}}\cdot \operatorname{diag}(\sigma_{1}^{k^*}(N), \ldots, \sigma_{r}^{k^*}(N))\cdot (R_T^{r\p})^{-1}(\I-R_T^{r\p}R_T^r)-R_T^{-r^{\prime}}\cdot \operatorname{diag}(\sigma_{r+1}^{k^*}(N), \ldots, \sigma_{s}^{k^*}(N)) \cdot R_T^{-r}\\
& \quad \quad - \frac{\Lambda_{k^*}^{0\p}}{\sqrt{N}}\cdot\frac{\Lambda_{k^*}^0-\Lambda_{k^*}^s R_T}{\sqrt{N}} -\frac{(\Lambda_{k^*}^{0\p}-\Lambda_{k^*}^s R_T)\p}{\sqrt{N}}\cdot\frac{\Lambda_{k^*}^s R_T}{\sqrt{N}}.
\end{aligned}
\end{equation}
Note that 
\begin{equation} \label{eq:D_NT3(1)}
\begin{aligned}
    \frac{1}{\sqrt{N}}\norm{\Lam_{k^*}^0-\Lam_{k^*}^s R_T} &=\frac{1}{\sqrt{NT}}\norm{(\Lam_{k^*}^0F^{0\p}-\Lam_{k^*}^sF^{s\p})P_{F^0}} \leq d(\th^s,\th_0)\cdot \sqrt{r}.
\end{aligned}
\end{equation}
Also, for $M_{F^0}=I-P_{F^0}$ and using that $M_{F^0}^2=M_{F^0}$,
\[
\begin{aligned}
    \frac{1}{\sqrt{NT}}\norm{(\Lam_{k^*}^sF^{s\p}-\Lam_{k^*}^0F^{0\p})M_{F^0}}  &= \frac{1}{\sqrt{NT}}\norm{\Lambda_{k^*}^0 F^{0\p}M_{F^s}}=\sqrt{Tr\left(\frac{\Lambda_{k^*}^{0\p}\Lambda_{k^*}^0}{N}\cdot \frac{F^{0\p}M_{F^s}F^0}{T}\right)} \\
    &\geq \sqrt{\rho_{\min}\left(\frac{\Lambda_{k^*}^{0\p}\Lambda_{k^*}^0}{N}\right)\cdot \frac{\norm{M_{F^s} F^0}^2}{T}} = \nu_r(N)\cdot \frac{\norm{M_{F^s} F^0}}{\sqrt{T}},
\end{aligned}
\]
and
\[
\frac{1}{\sqrt{NT}}\norm{(\Lam_{k^*}^sF^{s\p}-\Lam_{k^*}^0F^{0\p})M_{F^0}} \leq d(\th^s,\th_0)\cdot 2\sqrt{r}.
\]
Then, we obtain
\[
    \frac{\norm{F^0-F^s R_T}}{\sqrt{T}} = \frac{\norm{M_{F^s} F^0}}{\sqrt{T}}  \lesssim d(\th^s,\th_0),
\]
and 
\begin{equation} \label{eq:D_NT3(2)}
    \norm{\I-R_T^{r\p}R_T^r} = \norm{(R_T^{-r})\p R_T^{-r}+\frac{F^0}{\sqrt{T}}\cdot \frac{F^0-F^s R_T}{\sqrt{T}}}\lesssim d(\th^s,\th_0).
\end{equation}
From (\ref{eq:D_NT3}), (\ref{eq:D_NT3(1)}), and (\ref{eq:D_NT3(2)}), $\norm{D_{NT}} \lesssim d(\th^s, \th_0)$. Then, by Weyl's inequality,
\begin{equation} \label{eq:eigen for <r}
    |\sigma_{j}^{k^*}(N)-\nu_j(N)| \leq \norm{D_{NT}} \lesssim d(\th^s,\th_0), \quad j=1,\ldots,r
\end{equation}
and by the perturbation theory for eigenvectors, for $V_T=\operatorname{diag}\big((R^r_{T, 1} R^{r\p}_{T, 1})^{-1 / 2}, \ldots,(R^r_{T, r} R^{r\p}_{T, r})^{-1 / 2}\big)$ where $R_{T, j}^{r\p}$ denotes the $j$th column of $R_T^{r\p}$, we have
\begin{equation} \label{eq:rvs-i}
\norm{R_T^{r\p} V_T \mathrm{~S}-\mathbb{I}_r}=\norm{R_T^{r\p} V_T-\mathrm{S}} \lesssim d(\theta^s, \theta_0).
\end{equation}
Also, it holds that 
\begin{equation}\label{eq:v-r}
\norm{V_T-\I_r} \lesssim \norm{R_T^r R_T^{r\p}-\I_r} = \frac{\norm{F^{s,r}M_{F^0}F^{s,r}}}{T} \leq \frac{\norm{M_{F^0}F^{s,r}}^2}{T} \lesssim d^2(\th^s,\th_0).
\end{equation}
The last inequality holds since 
\[
\sigma_r^{k^*}(N)\cdot \frac{\norm{M_{F^0}F^{s,r}}}{\sqrt{T}} \lesssim d(\th^s,\th_0), 
\]
and 
\[
\big|\sigma_r^{k^*}(N)-\nu_r(N)\big| \lesssim d(\th^s,\th_0) \text{ which implies } \sigma_r^{k^*}(N)>c \text{\, for some constant } c>0.
\]
By (\ref{eq:rvs-i}) and (\ref{eq:v-r}), 
\begin{equation} \label{eq:rt-s}
    \norm{R_T^{r\p}-S} \leq \norm{R_T^{r\p} V_T-S}+\norm{R_T^{r\p}V_T-R_T^{r}}\lesssim d(\th^s,\th_0). 
\end{equation}
On the other hand,
\[
\begin{aligned}
    \frac{\norm{F^{s,r}-F^0 S}}{\sqrt{T}} &= \frac{1}{\sqrt{T}}\norm{F^{s,r}S-F^0(\frac{F^{0\p}F^{s,r}S}{T})+F^0(\frac{F^{0\p}F^{s,r}S}{T})-F^0}\\
    & \leq \frac{1}{\sqrt{T}}\norm{F^{s,r}- \frac{F^0F^{0\p}}{T}F^{s,r}}+\norm{R_T\p-S} = \frac{1}{\sqrt{T}}\norm{M_{F^0}F^{s,r}}+\norm{R_T\p-S}.
\end{aligned}
\]
It follows from the last inequality of (\ref{eq:v-r}) and (\ref{eq:rt-s}) that $\frac{\norm{F^{s,r}-F^0 S}}{\sqrt{T}} \leq d(\th^s,\th_0)$ and this concludes the proof.
\end{proof}
\vspace{5mm}

\begin{lemma} \label{lem:facnum_2}
Let $r<s<\infty$. Then for sufficiently small $\delta$, it holds that
\[
\E[\underset{\th^s \in \Theta^s(\delta)}{\sup}|\W_{NT}(\th^s)|] \lesssim \frac{\delta}{L_{NT}}.
\]
\end{lemma}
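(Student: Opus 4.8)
The plan is to mimic the chaining argument already used for Lemma~\ref{lem:consist_3}, but now on the enlarged parameter space $\Theta^s(\delta)$, exploiting the geometric control provided by Lemma~\ref{lem:facnum_1}. First I would record that $\Theta^s(\delta)$ is compact and $\W_{NT}(\th^s)$ is continuous in $\th^s$, so $\{\W_{NT}(\th^s):\th^s\in\Theta^s(\delta)\}$ is a separable process; this lets us invoke the maximal-inequality machinery of \cite{vaart1996weak}. Next, exactly as in the proof of Lemma~\ref{lem:consist_1}, one uses the boundedness of $\lambda^s_{k,i}$, $f^s_t$ on the compact sets $\mathcal{A}^s,\mathcal{F}^s$, the Lipschitz bound $\rho_\tau(x)-\rho_\tau(y)\le 2|x-y|$, the Cauchy--Schwarz step controlling $\sum_k w_k|\lambda^{s\p}_{k,i}f^s_t-\cdot|$ by $(\sum_k w_k^2)^{1/2}d(\cdot,\cdot)$, Hoeffding's inequality (Assumption~1(d)), and Lemma~2.2.1 of \cite{vaart1996weak} to obtain an Orlicz-norm sub-Gaussian increment bound
\[
\big\|\sqrt{NT}\,\big(\W_{NT}(\th^s_a)-\W_{NT}(\th^s_b)\big)\big\|_{\psi_2}\le M\,d(\th^s_a,\th^s_b)
\]
for $\th^s_a,\th^s_b\in\Theta^s(\delta)$. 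Then Theorem~2.2.4 of \cite{vaart1996weak} gives
\[
\E\Big[\sup_{\th^s\in\Theta^s(\delta)}|\W_{NT}(\th^s)|\Big]\lesssim \frac{1}{\sqrt{NT}}\int_0^{\delta}\sqrt{\log D(\epsilon,d,\Theta^s(\delta))}\,d\epsilon,
\]
so it remains to bound the covering/packing integral by $O(\sqrt{KN+T}\,\delta)$.

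The covering-number estimate is where Lemma~\ref{lem:facnum_1} does the work. As in Lemma~\ref{lem:consist_3}, I would first dominate the pseudometric $d$ by a Euclidean-type metric $d^*$ on the coordinates $(F^{s},\Lambda_1^s,\dots,\Lambda_K^s)$: using $d(\th^s_a,\th^s_b)\le\sum_k(NT)^{-1/2}\|\Lambda_k^{a}F^{a\p}-\Lambda_k^{b}F^{b\p}\|$ and the triangle/split trick, one gets $d(\th^s_a,\th^s_b)\le M\,d^*(\th^s_a,\th^s_b)$ with $d^*(\th^s_a,\th^s_b)=M_4\sqrt{\|F^a-F^b\|^2/T+\sum_k\|\Lambda_k^a-\Lambda_k^b\|^2/N}$. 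Then, decomposing over the finite set $\mathcal{S}$ of sign matrices $U=\mathrm{diag}(\pm1,\dots,\pm1)\in\R^{s\times s}$, Lemma~\ref{lem:facnum_1} shows that $\Theta^s(\delta)$ is contained in a union, over $U\in\mathcal{S}$, of sets of the form
\[
\Big\{\th^s:\ \tfrac{\|F^{s,r}-F^0U_r\|}{\sqrt T}+\sum_k\tfrac{\|\Lambda_k^{s,r}-\Lambda_k^0U_r\|}{\sqrt N}+\sum_k\tfrac{\|\Lambda_k^{s,-r}\|}{\sqrt N}\le M_3\delta\Big\},
\]
(here $U_r$ the top-left $r\times r$ block of $U$), i.e.\ the $r$ ``true'' directions are pinned within $O(\delta)$ of $\Lambda_k^0,F^0$ up to a fixed sign, while the spurious $(s-r)$-dimensional pieces of both $F^s$ and each $\Lambda_k^s$ have norm $O(\delta)$. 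Consequently each such set is contained in a $d^*$-ball of radius $\lesssim\delta$ in $\R^{(KN+T)s}$, whose $\epsilon$-packing number is $(C\delta/\epsilon)^{(KN+T)s}$; since $|\mathcal{S}|=2^s$ is a constant, $\log D(\epsilon,d,\Theta^s(\delta))\lesssim (KN+T)\log(C\delta/\epsilon)$ and the entropy integral is $O(\sqrt{KN+T}\,\delta)$. Combining with the $(NT)^{-1/2}$ prefactor and $\sqrt{KN+T}/\sqrt{NT}\le c/L_{NT}$ (here using $s,K$ fixed) yields the claimed $\delta/L_{NT}$ bound.

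The main obstacle, I expect, is the covering-number step for the spurious coordinates: unlike in Lemma~\ref{lem:consist_3}, where every $f_t$ and $\lambda_{k,i}$ lay in a fixed compact ball, here the ``extra'' columns $F^{s,-r}$ and $\Lambda_k^{s,-r}$ are only constrained through the normalization $(F^s)\p F^s/T=I_s$ and through the $O(\delta)$ norm bounds of Lemma~\ref{lem:facnum_1}, so one has to be careful that the relevant neighbourhood in $\Theta^s(\delta)$ really is contained in a Euclidean ball of radius $O(\delta)$ (not merely $O(1)$) in all $(KN+T)s$ coordinates — this is exactly what Lemma~\ref{lem:facnum_1} guarantees, and why invoking it is essential. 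A secondary technical point is verifying the sub-Gaussian increment bound uniformly over $\Theta^s(\delta)$ with constants independent of $N,T$, which follows the Lemma~\ref{lem:consist_3} template verbatim (the weights $w_k$ and the factor $\sum_k w_k^2$ are absorbed into the constant). Once these are in place, the argument is a routine adaptation and I would state the remaining bookkeeping as ``similar to the proof of Lemma~\ref{lem:consist_3} and omitted.''
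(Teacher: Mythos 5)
Your overall scaffold — separability, the $\psi_2$ increment bound $\|\sqrt{NT}(\W_{NT}(\th^s_a)-\W_{NT}(\th^s_b))\|_{\psi_2}\lesssim d(\th^s_a,\th^s_b)$, chaining via Theorem~2.2.4 of van der Vaart \& Wellner, the sign-matrix decomposition of $\Theta^s(\delta)$, and leaning on Lemma~\ref{lem:facnum_1} for the geometric containment — is exactly the paper's route. But there is a genuine gap in the covering-number step, precisely at the point you flagged as the ``main obstacle.'' You assert that the set is contained in a Euclidean $d^*$-ball of radius $O(\delta)$ in \emph{all} $(KN+T)s$ coordinates, and that this is ``exactly what Lemma~\ref{lem:facnum_1} guarantees.'' That is not what the lemma says: it controls $F^{s,r}$, $\Lambda_k^{s,r}$, and $\Lambda_k^{s,-r}$, but says nothing about $F^{s,-r}$. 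The spurious factor columns $F^{s,-r}$ are constrained only by $(F^s)'F^s/T=I_s$, so $\|F^{s,-r}\|/\sqrt{T}=\sqrt{s-r}$ — this is $O(1)$, not $O(\delta)$. Under the unweighted metric $d^*(\th^s_a,\th^s_b)=M_4\sqrt{\|F^a-F^b\|^2/T+\sum_k\|\Lambda_k^a-\Lambda_k^b\|^2/N}$ you wrote down, the $\epsilon$-packing number of the $F^{s,-r}$ block is $(C/\epsilon)^{T(s-r)}$, not $(C\delta/\epsilon)^{T(s-r)}$, and the entropy integral $\int_0^\delta\sqrt{T(s-r)\log(C/\epsilon)}\,d\epsilon$ picks up an extra $\sqrt{\log(1/\delta)}$ factor that the claimed bound $\delta/L_{NT}$ does not have.

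The fix, which is what the paper actually does, is to exploit the fact that the domination of $d$ by Euclidean quantities naturally comes with a $\delta$-weight in the $F^{s,-r}$ direction: expanding $\Lambda_k^s F^{s\prime}-\tilde\Lambda_k^s\tilde F^{s\prime}$ block-by-block and using $\|\tilde\Lambda_k^{s,-r}\|/\sqrt{N}\leq M_3\delta$ from Lemma~\ref{lem:facnum_1} gives
\[
d(\th^s,\tilde\th^s)\lesssim \frac{\|F^{s,r}-\tilde F^{s,r}\|}{\sqrt{T}}+\sum_k\frac{\|\Lambda_k^{s,r}-\tilde\Lambda_k^{s,r}\|}{\sqrt{N}}+\sum_k\frac{\|\Lambda_k^{s,-r}-\tilde\Lambda_k^{s,-r}\|}{\sqrt{N}}+\delta\cdot\frac{\|F^{s,-r}-\tilde F^{s,-r}\|}{\sqrt{T}},
\]
so the controlling metric must be taken as a $\delta$-\emph{weighted} Euclidean metric in which the $F^{s,-r}$ coordinates are scaled by $\delta$. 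With that metric the set $\Theta^s(\delta;U)$ does lie in a ball of radius $O(\delta)$, and a volume-comparison argument in the rescaled coordinates gives a packing number $(C\delta/\epsilon)^{(KN+T)s}$ for $\epsilon\leq\delta$, which then yields the entropy integral $O(\sqrt{KN+T}\,\delta)$ and the desired $\delta/L_{NT}$ bound. So the missing idea is not Lemma~\ref{lem:facnum_1} itself but the realization that the \emph{pseudometric} $d$, not the ambient Euclidean distance, dampens the $F^{s,-r}$ direction by a factor of $\delta$ precisely because the paired loadings $\Lambda_k^{s,-r}$ are small.
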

\begin{proof}
As in Lemma \ref{lem:consist_3}, for $\th_a,~\th_b \in \Theta^s$, $\norm{\sqrt{NT|\W_{NT}(\th_a)-\W_{NT}(\th_b)|}} \lesssim d(\th_a, \th_b)$.
Define $\Theta^S(\delta)=\{\th^s \in \Theta^s:d(\th^s,\th_0)\leq d(\th^s,\th_0))\}$. Since $\W_{NT}(\th)$ is separable, 
\[
\begin{aligned}
    \E\bigg[ \underset{\th^s  \in \Theta^s(\delta)}{\sup}|\W_{NT}(\th^s)|\bigg] &\leq \norm{\underset{\th^s \in \Theta^s(\delta)}{\sup} \sqrt{NT}|\W_{NT}(\th)|}_{\psi_2}\\ 
    &\leq \norm{\underset{\th_a,\th_b \in \Theta^s(\delta)}{\sup} \sqrt{NT}|\W_{NT}(\th_a)-\W_{NT}(\th_b)|}_{\psi_2} \\
    & \lesssim \int_0^{2\delta}\psi^{-1}(D(\epsilon,d,\Theta^r(\delta))) d\epsilon \\
    & \lesssim \int_0^{2\delta} \sqrt{\log D(\frac{\epsilon}{2},d,\Theta^r(\delta)}) d\epsilon = 2 \int_0^{\delta} \sqrt{\log D(\epsilon,d,\Theta^r(\delta)}) d\epsilon. 
\end{aligned}
\]
Therefore, it suffices to show that $\int_0^{\delta} \sqrt{\log D(\epsilon,d,\Theta^r(\delta)}) d\epsilon = O(\sqrt{KN+T} \cdot \delta)$.\\
For $U \in \mathcal{S}=\{U: U=\operatorname{diag}(u_1,\ldots, u_r), u_i \in \{-1,1\}\}$,
define \[
\Theta^s(\delta;U)=\bigg\{\th \in \Theta^s:\frac{\norm{F^{s,r}-F^0U}}{\sqrt{T}}+\sumk\bigg\{\frac{\norm{\Lamsr-\Lamz U}}{\sqrt{N}}+ \frac{\norm{\Lamsnr}}{\sqrt{N}}
\bigg\} \leq K_3 \delta \bigg\}.
\]
By Lemma \ref{lem:facnum_1}, if $d(\th^s,\th_0) \leq \delta$, then $\th^s \in \Theta^s(\delta;U)$ for some $U\in\mathcal{S}$. Note that for $\th^s,\tilde{\th}^s \in \Theta^s(\delta)$,
\[
\begin{aligned}
    \frac{1}{\sqrt{NT}}\norm{\Lam_k^s F^{s\p}-\tilde{\Lam}^s_k\tilde{F}^{s\p}}&=\frac{1}{\sqrt{NT}} \norm{\Lam_k^{s,r}(F^{s,r})\p+\Lam_k^{s,-r}(F^{s,-r})\p-\tilde{\Lam}_k^{s,r}(\tilde{F}^{s,r})\p-\tilde{\Lam}_k^{s,-r}(\tilde{F}^{s,-r})\p} \\
    & \leq \frac{1}{\sqrt{NT}}\norm{\Lam_k^{s,r}(F^{s,r})\p-\tilde{\Lam}_k^{s,r}(\tilde{F}^{s,r})\p}+\frac{1}{\sqrt{NT}}\norm{\Lam_k^{s,-r}(F^{s,-r})\p-\tilde{\Lam}_k^{s,-r}(\tilde{F}^{s,-r})\p}\\
    & \leq \frac{\norm{F^{k,r}}}{\sqrt{T}}\cdot \frac{\norm{\Lam_k^{s,r}-\tilde{\Lam}_k^{s,r}}}{\sqrt{N}} + \frac{\norm{\tilde{\Lam}_k^{s,r}}}{\sqrt{N}}\cdot \frac{\norm{F_k^{s,r}-\tilde{F}_k^{s,r}}}{\sqrt{T}}  \\ 
    &  ~~~~~~~~+\frac{\norm{F^{k,-r}}}{\sqrt{T}}\cdot \frac{\norm{\Lam_k^{s,-r}-\tilde{\Lam}_k^{s,-r}}}{\sqrt{N}} + \frac{\norm{\tilde{\Lam}_k^{s,-r}}}{\sqrt{N}}\cdot \frac{\norm{F_k^{s,-r}-\tilde{F}_k^{s,-r}}}{\sqrt{T}}.
\end{aligned}
\]
Using that $\frac{\norm{F^{s,r}}}{\sqrt{T}}=\sqrt{r}, \frac{\norm{F^{s,-r}}}{\sqrt{T}}=\sqrt{s-r},  \frac{\norm{\tilde{\Lam}_k^{s,r}}}{\sqrt{N}}\leq M_1$, $ \frac{\norm{\tilde{\Lam}_k^{s,-r}}}{\sqrt{N}}\leq M_3 \delta$, and $d(\th^s,\th_0)\leq \frac{1}{\sqrt{NT}}\sumk\norm{\Lam_k^s F^{s\p}-\tilde{\Lam}^s_k\tilde{F}^{s\p}}$, we have
{\small
\[ 
d(\th^s,\th_0) \leq \sqrt{r}\sumk\frac{\norm{\Lamsr-\tilde{\Lam}_k^{s,r}}}{\sqrt{N}} + \sqrt{s-r}\sumk\frac{\norm{\Lamsnr-\tilde{\Lam_k^{s,-r}}}}{\sqrt{N}}+KM_1 \frac{\norm{F^{s,r}-\tilde{F}^{s,r}}}{\sqrt{T}}+K M_3  \delta \frac{\norm{F^{s,-r}-\tilde{F}^{s,-r}}}{\sqrt{T}}.
\]
}
Next, for $\th^s,~\tilde{\th}^s \in \underset{U\in \mathcal{S}}{\cup}\Theta^s(\delta;U)$, define $d^*$ as
{\small
\[
d^*(\th^s,\tilde{\th}^s)\defeq M_5\cdot \sqrt{\frac{\norm{F^{s,r}-\tilde{F}^{s,r}}^2}{T}+\sumk \frac{\norm{\Lamsr-\tilde{\Lam}_k^{s,r}}^2}{N}+\sumk \frac{\norm{\Lamsnr-\Lam_k^{s,-r}}^2}{N}}+M_6\cdot \delta \cdot \frac{\norm{F^{s,-r}-\tilde{F}^{s,-r}}}{\sqrt{T}},
\]
}
where $M_5=\sqrt{s+(KM_1)^2}$ and $M_6=KM_3$.
Then, by Cauchy-Schwarz inequality, $d(\th^s,\tilde{\th}^s) \leq d^*(\th^s,\tilde{\th}^s)$ and it holds that
\[
D\big(\epsilon,d,\Theta^s(\delta)\big) \leq D\big(\epsilon,d^*,\Theta^s(\delta) \big)\leq D\Big(\frac{\epsilon}{2},d^*,\underset{U \in \mathcal{S}}{\cup}\Theta^s(\delta;U)\Big) \lesssim D\Big(\frac{\epsilon}{2},d^*,\Theta^s(\delta;\I_r)\Big).
\]
The last inequality holds since $|S|=2^r$ and $D(\frac{\epsilon}{2},d^*,\Theta^s(\delta;U_1))=D(\frac{\epsilon}{2},d^*,\Theta^s(\delta;U_2)), \forall~ U_1, U_2 \in \mathcal{S}$. 
Define $\Theta^{s*}(\delta)$ as 
\[
\begin{aligned}
\Theta^{s*}(\delta)&\defeq \bigg\{\th^s \in  \Theta^{s}:M_5\cdot \sqrt{\frac{\norm{F^{s,r}-\tilde{F}^{s,r}}^2}{T}+\sumk \frac{\norm{\Lamsr-\tilde{\Lam}_k^{s,r}}^2}{N}+\sumk \frac{\norm{\Lamsnr}^2}{N}}\\
&~~~~~~~~~~~~~~~~~~~~~~+M_6\cdot \delta \cdot \frac{\norm{F^{s,-r}}}{\sqrt{T}} \leq M_5 M_3 \delta + M_6 \delta \sqrt{s-r} \bigg\}.
\end{aligned}
\]
Since $\frac{\norm{F^{s,-r}}}{\sqrt{T}} \leq \sqrt{s-r}$, it follows that $\Theta^s(\delta;\I_r) \subset \Theta^{s*}(\delta)$.

Now, define $\Lam_k^{0*}=[\Lamz \, \, \mathbf{0}] \in \R^{N \times s}$ and $F^{0*}=[F \, \, V] \in \R^{T \times s}$ with some $V \in \R^{T \times (s-r)}$ 
so that it satisfies $(F^{0*})\p F^{0*}/T = \I_s$. 
Then, if $\th^s \in \Theta^{s*}(\delta)$, 
\[
d^*(\th^s,\th_0^*) \leq M_5 M_3\delta +M_6 \cdot 2\sqrt{s-r}\cdot \delta, 
\]
since $\frac{\norm{F^{s,-r}-V}}{\sqrt{T}} \leq \frac{\norm{F^{s,-r}}}{\sqrt{T}}+\frac{\norm{V}}{\sqrt{T}} \leq 2\sqrt{s-r}$. 
Therefore, for $M_7=M_5 M_3 +M_6 \cdot 2\sqrt{s-r}$,
\[
\Theta^s(\delta;\I_r)\subset B_{d^*}(\theta_0^*,M_7\delta),
\]
and thus,
$D(\epsilon,d,\Theta^s(\delta)) \lesssim D(\frac{\epsilon}{4},d^*,B_{d^*}(M_7 \delta))\leq C(\frac{\epsilon}{8}, d^*, B_{d^*}(M_7 \delta))$.

We now calculate $C(\frac{\epsilon}{8}, d^*, B_{d^*}(\theta_0^*,M_7\delta))$. Let $\eta=\epsilon/8$ and $\th_1^*, \ldots, \th_J^* \in \R^{(KN+T)s}$ be a maximal set of points in $\Theta^{s*}(\delta)$ s.t. $d^*(\th_j^*, \th_\ell^*)>\eta$. 
From $\Theta^{s*}(\delta) \subset \underset{j}{\cup\,} B_{d^*}(\th_j^*,\eta)$, we get $C(\frac{\epsilon}{8}, d^*, B_{d^*}(\theta_0^*,M_7\delta)) \leq J$. Also, by triangular inequality,  
\[
\underset{j}{\cup\,} B_{d^*}(\th_j^*,\frac{\eta}{4} ) \subset B_{d^*}(\theta_0^*,M_7 \delta+\frac{\eta}{4}).
\]
For $\th^s,\tilde{\th}^s \in \Theta^s$, define $d^{**}(\th^s,\tilde{\th}^s)$ as
\[
\begin{aligned}
d^{**}(\th^s,\tilde{\th}^s)=\sqrt{\frac{\norm{F^{s,r}-\tilde{F}^{s,r}}^2}{T/M_5^2}+\sumk \frac{\norm{\Lamsr-\tilde{\Lam}_k^{s,r}}^2}{N/M_5^2}+\sumk \frac{\norm{\Lamsnr-\tilde{\Lam}_k^{s,-r}}^2}{N/M_5^2}+\frac{\norm{F^{s,-r}-\tilde{F}^{s,-r}}^2}{T/(M_6\delta)^2}}.
\end{aligned}
\]
Then, $d^{**}(\th^s,\tilde{\th}^s) \leq d^{*}(\th^s,\tilde{\th}^s) \leq 2 d^{**}(\th^s,\tilde{\th}^s)$ and it follows that 
\[
\underset{j}{\cup \,}{B_{d^{**}}\Big(\th_j^*,\frac{\eta}{4}}\Big) \subset \underset{j}{\cup\,}B_{d^*}\Big(\th_j^*,\frac{\eta}{2}\Big) \subset \ B_{d^*}\Big(\theta_0^*,M_7\delta+\frac{\eta}{2}\Big) \subset B_{d^{**}}\Big(\theta_0^*,M_7\delta+\frac{\eta}{2}\Big).
\]
Note that ${B_{d^{**}}(\th_j^*,\frac{\eta}{4}})$ for $j=1,\ldots,J$ are disjoint, so the volume of $\underset{j}{\cup \,}{B_{d^{**}}(\th_j^*,\frac{\eta}{4}})$ would be $J\cdot \operatorname{V}(B_{d^{**}}(\th_0^*,\frac{\eta}{4}))$, where $\operatorname{V}(\cdot)$ represents the volume. Also, considering the volume of $B_{d^{**}}(\theta_0^*,M_7\delta+\frac{\eta}{2})$, we get 
\[
\begin{aligned}
    J & \leq \frac{\operatorname{V}(B_{d^{**}}(\theta_0^*,M_7\delta+\frac{\eta}{2}))}{\operatorname{V}(B_{d^{**}}(\th_0^*,\frac{\eta}{4}))} = \left(\frac{4M_7\delta+2\eta}{\eta}\right)^{(KN+T)s}=\left(\frac{32M_7\delta+16\epsilon}{\epsilon}\right)^{(KN+T)s} \\
    & \leq \left(\frac{(32M_7+16)\delta}{\epsilon}\right)^{(KN+T)s}, \text{\, for } \epsilon \leq \delta.
\end{aligned}
\]
Therefore, 
\[
\int_0^{\delta} \sqrt{\log D(\epsilon,d,\Theta^r(\delta)}) d\epsilon \leq \int_0^{\delta} \sqrt{\log \, C(\frac{\epsilon}{8},d^*,B_{d^*}(M_7\delta)}) d\epsilon \leq \sqrt{(KN+T)s} \int_0^\delta \sqrt{\frac{(32M_7+16)\delta}{\epsilon}} d\epsilon.
\]
Since $\int_0^\delta \sqrt{\frac{(32M_7+16)\delta}{\epsilon}} d\epsilon=O(\delta)$, we have the desired result. 
\end{proof}
\vspace{5mm}

\begin{lemma} \label{lem:facnum_3} $d(\hat{\th}^s,\th_0)=O_p(L_{NT}^{-1})$.
\end{lemma}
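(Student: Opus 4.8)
\textbf{Proof proposal for Lemma \ref{lem:facnum_3}.}
The plan is to mirror the structure of the consistency proof for $\hat\theta$ (Lemmas \ref{lem:consist_1}--\ref{lem:consist_3} and the peeling argument in Theorem 1), but now working in the enlarged parameter space $\Theta^s$ with $s>r$ and tracking the rate rather than merely $o_p(1)$. The starting point is the same quadratic lower bound for the expected excess loss: exactly as in \eqref{eq:exp-d}, Assumption \ref{assump:consistency}(c) gives a uniform lower bound on the conditional densities $g_{k,it}$ on compact sets, so
\[
\mathbb{E}\big[\mathbb{M}_{NT}^*(\theta^s)\big] \;\geq\; c_2 \cdot d^2(\theta^s,\theta_0), \qquad \forall\,\theta^s \in \Theta^s,
\]
where $d(\theta^s,\theta_0)$ is the distance defined on $\Theta^s$ at the start of this section. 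Combined with the fact that $\mathbb{M}_{NT}^*(\hat\theta^s)\le 0$ (which holds because $\theta^*$ of Assumption \ref{assump:facnum}(b) lies in $\Theta^s$ and satisfies $\Lambda_k^*F^{*\prime}=\Lambda_k^0 F^{0\prime}$, so $\mathbb{M}_{NT}(\hat\theta^s)\le \mathbb{M}_{NT}(\theta^*)=\mathbb{M}_{NT}(\theta_0)$), this yields $d^2(\hat\theta^s,\theta_0)\lesssim -\mathbb{W}_{NT}(\hat\theta^s) \le \sup_{\theta^s\in\Theta^s}|\mathbb{W}_{NT}(\theta^s)|$. Since $\mathcal{A}^s,\mathcal{F}^s$ are compact, the argument of Lemma \ref{lem:consist_1} carries over verbatim (the packing-number bound only uses boundedness and the dimension count $(KN+T)s$), giving $\sup_{\theta^s\in\Theta^s}|\mathbb{W}_{NT}(\theta^s)|=o_p(1)$, hence $d(\hat\theta^s,\theta_0)=o_p(1)$.

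With consistency in hand, I would run the standard peeling / maximal-inequality argument to upgrade this to the rate $O_p(L_{NT}^{-1})$. For $\delta>0$ set $\Theta^s(\delta)=\{\theta^s\in\Theta^s: d(\theta^s,\theta_0)\le\delta\}$. The key localized bound is
\[
\mathbb{E}\Big[\sup_{\theta^s\in\Theta^s(\delta)}|\mathbb{W}_{NT}(\theta^s)|\Big] \;\lesssim\; \frac{\delta}{L_{NT}},
\]
which is precisely the content of Lemma \ref{lem:facnum_2}. Then, for any fixed $V$ and $\eta>0$, decompose $\{L_{NT}\,d(\hat\theta^s,\theta_0)>2^V\}$ over shells $S_j=\{\theta^s\in\Theta^s: 2^{j-1}<L_{NT}\,d(\theta^s,\theta_0)\le 2^j\}$:
\[
P\big(L_{NT}\,d(\hat\theta^s,\theta_0)>2^V\big) \;\le\; \sum_{\substack{j>V\\ 2^{j-1}\le \eta L_{NT}}} P\Big(\inf_{\theta^s\in S_j}\mathbb{M}_{NT}^*(\theta^s)\le 0\Big) \;+\; P\big(d(\hat\theta^s,\theta_0)\ge\eta\big).
\]
On $S_j$ one has $\mathbb{E}[\mathbb{M}_{NT}^*(\theta^s)]\ge c_2\,2^{2(j-1)}L_{NT}^{-2}$, so $\inf_{S_j}\mathbb{M}_{NT}^*\le 0$ forces $\sup_{S_j}|\mathbb{W}_{NT}|\ge c_2\,2^{2(j-1)}L_{NT}^{-2}$; applying Markov's inequality together with the Lemma \ref{lem:facnum_2} bound (with $\delta = 2^j L_{NT}^{-1}$) gives $P(\inf_{S_j}\mathbb{M}_{NT}^*\le 0)\lesssim 2^{-j}$, whose tail sum over $j>V$ is $\lesssim 2^{-V}$. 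The second term vanishes as $N,T\to\infty$ because $d(\hat\theta^s,\theta_0)=o_p(1)$. Letting $N,T\to\infty$ and then $V\to\infty$ establishes $d(\hat\theta^s,\theta_0)=O_p(L_{NT}^{-1})$.

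The routine parts — the quadratic lower bound, the crude $o_p(1)$ consistency, and the peeling mechanics — are essentially copied from Theorem \ref{thm:consistency} and from \cite{chen2021quantile}. The genuine obstacle, and the reason a separate lemma (Lemma \ref{lem:facnum_1}) is needed, is the localized maximal inequality in the over-parameterized model: in $\Theta^s$ the map $\theta^s\mapsto(\Lambda_k^s,F^s)$ has a much larger rotational indeterminacy, and the ``extra'' columns $\Lambda_k^{s,-r}$ are not pinned down by the normalization — they are only controlled through $d(\theta^s,\theta_0)$ itself. Lemma \ref{lem:facnum_1} resolves this by showing that on $\Theta^s(\delta)$ the relevant components ($F^{s,r}-F^0S$, $\Lambda_k^{s,r}-\Lambda_k^0 S$, and crucially $\Lambda_k^{s,-r}$) are all $\lesssim\delta$ up to a sign matrix, which is what lets the covering-number computation in Lemma \ref{lem:facnum_2} go through with the correct $\delta/L_{NT}$ rate. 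Thus the present lemma is a fairly direct consequence of Lemmas \ref{lem:facnum_1} and \ref{lem:facnum_2} plus the peeling argument, and I would present it by invoking those two lemmas and then carrying out the shell decomposition above.
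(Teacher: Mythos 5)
Your proposal follows essentially the same route as the paper: establish $\mathbb{M}_{NT}^*(\hat{\theta}^s)\le 0$ via the padded parameter $\theta_0^*$ from Assumption~\ref{assump:facnum}(b), obtain $o_p(1)$ consistency by the packing/maximal-inequality argument of Lemma~\ref{lem:consist_1} (with the dimension count adjusted from $(KN+T)r$ to $(KN+T)s$), and then upgrade to the $O_p(L_{NT}^{-1})$ rate via the shell/peeling argument of Theorem~\ref{thm:consistency} using the localized bound from Lemma~\ref{lem:facnum_2}. The paper's own proof states exactly this plan in one sentence and omits the details; your write-up is a correct and more explicit rendering of that same argument, and correctly identifies the role of Lemma~\ref{lem:facnum_1} in controlling the extra loading columns $\Lambda_k^{s,-r}$ so that the localized maximal inequality holds on $\Theta^s(\delta)$.
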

\begin{proof}
Consider $\th_0^*$ as defined in Lemma \ref{lem:facnum_2}. By the definition of $\hat{\th}^s$, we have $\M_{NT}(\hat{\th}^s)=\M_{NT}(\hat{\th}^s)-\M_{NT}(\th_0^*)\leq 0$. Then, using a similar approach as in the proof of Lemma \ref{lem:consist_1}, it can be shown that $d(\hat{\th}^s,\th_0)=o_p(1)$. Following this, in a similar way to the proof of Theorem \ref{thm:consistency}, it can also be proved that $d(\hat{\th}^s,\th_0)=O_p(L_{NT}^{-1})$, and the details are omitted. 
\end{proof}
\vspace{5mm}

\noindent \textbf{Proof of Theorem 3:} First, consider $\ell<r$. We aim to show that $\M_{NT}(\hat{\th}^{\ell})-\M_{NT}(\hat{\th}^r)>C+o_p(1)$ for some constant $C>0$. We use 
\begin{equation} \label{eq:thm3start}
    \M_{NT}(\hat{\th}^{\ell})-\M_{NT}(\hat{\th}^r) = \M^*_{NT}(\hat{\th}^{\ell})-\M_{NT}^*(\hat{\th}^r)=\W_{NT}(\hat{\th}^\ell)+\E[\,M_{NT}(\hat{\th}^{\ell})\,]-\M_{NT}^*(\hat{\th}^r).
\end{equation}
It can be shown by a similar way in Lemma \ref{lem:consist_1} that $|\underset{\th \in \Theta^\ell}{\sup} \W_{NT}(\th)|=o_p(1)$ which then implies $\W_{NT}(\hat{\th}^\ell)=o_p(1)$. Next, for the last term $M_{NT}^*(\hat{\th}^r)$ in (\ref{eq:thm3start}), 
\begin{equation}
    \begin{aligned}
        |M_{NT}^*(\hat{\th}^r)| &=\frac{1}{NT} \left| \sumkit \rho_{\tau_k}(X_{it}-\hat{\lambda}_{k,i}^{\p} \hat{f}_t) -\rho_{\tau_k}(X_{it}-\lambda_{k,i}^{0\p} f_t^0) \right| \\
        & \leq \frac{1}{NT}\sumkit 2|\hat{\lambda}_{k,i}^{\p} \hat{f}_t-\lambda_{k,i}^{0\p} f_t^0| \leq 2 \sqrt{K}\cdot d(\hat{\th}^r,\th_0). 
    \end{aligned}
\end{equation}
Therefore, $M_{NT}^*(\hat{\th}^r)=o_p(1)$. The last inequality is caused by Cauchy-Schwarz inequality. Now, we consider the second term $\E[\,M_{NT}(\hat{\th}^{\ell})\,]$ in (\ref{eq:thm3start}). Like in the proof of Lemma \ref{lem:consist_1}, it can be shown that $\E[\,\M_{NT}(\hat{\th}^\ell)\,] \gtrsim d^2(\hat{\th}^\ell,\th_0)$. Also, we have
\[
\begin{aligned}
    \frac{1}{\sqrt{NT}} \norm{(\hat{\Lam}_{k^*}^\ell \hat{F}^{\ell\p}-\Lam_{k^*}^0 F^{0\p})M_{\hat{F}^\ell}} &= \frac{1}{\sqrt{NT}}\norm{\Lam_{k^*}^0 F^{0\p}M_{\hat{F}^\ell}}\\
    &=\frac{1}{\sqrt{NT}}\sqrt{Tr\left(\frac{\Lam_{k^*}^{0\p} \Lam_{k^*}^{0\p}}{N}\cdot \frac{F^{0\p} M_{\hat{F}^\ell}F^0}{T}\right)} \geq \sqrt{\sigma_{r}(N)}\cdot \frac{\norm{M_{\hat{F}^\ell}F^0}}{\sqrt{T}},
\end{aligned}
\]
and since
\[
\frac{1}{\sqrt{NT}} \norm{(\hat{\Lam}_{k^*}^\ell \hat{F}^{\ell\p}-\Lam_{k^*}^0 F^{0\p})M_{\hat{F}^\ell}} \lesssim \frac{1}{\sqrt{NT}} \norm{(\hat{\Lam}_{k^*}^\ell \hat{F}^{\ell\p}-\Lam_{k^*}^0 F^{0\p})} \leq d(\hat{\th}^\ell,\th_0),
\]
we have 
\[
 \frac{\norm{M_{\hat{F}^\ell}F^0}^2}{T} \lesssim d(\hat{\th}^\ell,\th_0)^2 \lesssim \E[\,\M_{NT}(\hat{\th}^\ell)\,].
\]
Note that 
\[
\frac{\norm{M_{\hat{F}^\ell}F^0}^2}{T} = \frac{1}{T}Tr\Big(F^{0\p}\Big(\I-\frac{\hat{F}^\ell \hat{F^{\ell\p}}}{T}\Big)F^0\Big)=Tr\Big(\I_r-\frac{F^{0\p} \hat{F}^{\ell} \hat{F}^{\ell} F^{0}}{T^2}\Big) \geq \rho_1\Big(\I_r-\frac{F^{0\p} \hat{F}^{\ell} \hat{F}^{\ell} F^{0}}{T^2}\Big).
\]
By Weyl's inequality, 
\[
\rho_1\Big(\I_r-\frac{F^{0\p} \hat{F}^{\ell} \hat{F}^{\ell} F^{0}}{T^2}\Big)+\rho_r\Big(\frac{F^{0\p} \hat{F}^{\ell} \hat{F}^{\ell} F^{0}}{T^2}\Big) \geq \rho_1(\I_r)=1.
\]
Since
$\rank(F^{0\p} \hat{F}^{\ell} \hat{F}^{\ell} F^{0}) \leq \rank (\hat{F}^{\ell})=\ell < r,$ it holds that $\rho_r(\frac{F^{0\p} \hat{F}^{\ell} \hat{F}^{\ell} F^{0}}{T^2})=0$, and thus, $\frac{\norm{M_{\hat{F}^\ell}F^0}^2}{T} \geq 1.$ Therefore, $\E[\,M_{NT}(\hat{\th}^{\ell})\,] \geq C $ for some constant $C>0$.
From (\ref{eq:thm3start}), we obtain $\M_{NT}(\hat{\th}^{\ell})-\M_{NT}(\hat{\th}^r)>C+o_p(1)$. 
Note that if $\hat{r}=\ell$, by definition of $\hat{r}$, we have
\[
\M_{NT}(\hat{\th}^{\ell})+\ell \cdot P_{NT} < \M_{NT}(\hat{\th}^r)+r \cdot P_{NT}.
\] Then,
\[
C+o_p(1)<\M_{NT}(\hat{\th}^{\ell})-\M_{NT}(\hat{\th}^r)<(r-\ell)\cdot P_{NT}.
\]
It implies that 
\[
P(\hat{r}=\ell) \leq P\left(C+o_p(1)<(r-\ell)\cdot P_{NT}\right) \leq P\left(\frac{C}{2}<o_p(1)\right)+P\left(\frac{C}{2}<(r-\ell)\cdot P_{NT}\right).
\]
Hence, $P(\hat{r}=\ell)  \rightarrow 0$ as $N,~T \rightarrow \infty$ for all $\ell<r$. Therefore, $P(\hat{r}<r)\rightarrow 0$ as $N,~T \rightarrow \infty$.

Now, we consider the case $\ell>r$. In this case, we aim to show that $\M_{NT}(\hat{\th}^{\ell})-\M_{NT}(\hat{\th}^r)=O_p(L_{NT}^{-2})$.
We use 
\begin{equation} \label{eq:thm3start2}
    \M_{NT}(\hat{\th}^{\ell})-\M_{NT}(\hat{\th}^r) = \W_{NT}(\hat{\th}^{\ell})-\W_{NT}(\hat{\th}^r)+\E[\,\M_{NT}^*(\hat{\th}^\ell)\,]-\E[\,\M_{NT}^*(\hat{\th}^r)\,].
\end{equation}
Using the inequality $\E[\underset{\th^\ell \in \Theta^\ell(\delta)}{\sup}|\W_{NT}(\th^\ell)|] \lesssim \frac{\delta}{L_{NT}}$ from Lemma \ref{lem:facnum_2}, and the fact that $d(\hat{\th}^\ell,\th_0)=O_p(L_{NT}^{-1})$, we show that $\W_{NT}(\hat{\th}^\ell)=O_p(L_{NT}^{-2})$. For $C_1,~C_2>0$,
\[
\begin{aligned}
P(L_{NT}^2|\W_{NT}(\hat{\th}^\ell)|>C_1)  &\leq P(L_{NT}^2|\W_{NT}(\hat{\th}^\ell)|>C_1, L_{NT}\cdot d(\hat{\th}^\ell,\th_0)<C_2) +P(L_{NT}\cdot d(\hat{\th}^\ell,\th_0)>C_2)\\
& \leq P\bigg(\underset{d(\th^l,\th_0)<\frac{C_2}{L_{NT}}}{\sup}L_{NT}^2|\W_{NT}(\hat{\th}^\ell)|>C_1\bigg)+P(L_{NT}\cdot d(\hat{\th}^\ell,\th_0)>C_2) \\
& \leq \frac{1}{C_1}L_{NT}^2 \E\left[ \, |\W_{NT}(\hat{\th}^\ell)|\,\right]+ P(L_{NT}\cdot d(\hat{\th}^\ell,\th_0)>C_2) \\
&\lesssim \frac{1}{C_1}L_{NT}^2\cdot \frac{C_2}{L_{NT}^2}+P(L_{NT}\cdot d(\hat{\th}^\ell,\th_0)>C_2).
\end{aligned}
\]
Given $\epsilon>0$, we can choose $C_1>0$ s.t. $P(L_{NT}^2|\W_{NT}(\hat{\th}^\ell)|>C_1) < \epsilon$ using the previous inequality and that $L_{NT}\cdot d(\hat{\th}^\ell,\th_0)=O_p(1)$. Therefore, $\W_{NT}(\hat{\th}^\ell)=O_p(L_{NT}^{-2})$.

By Lemma \ref{lem:consist_3}, $\E\left[\underset{\th^r \in \Theta^r(\delta)}{\sup}|\W_{NT}(\th^r)|\right] \lesssim \frac{\delta}{L_{NT}}$ and by Theorem \ref{thm:consistency}, $d(\hat{\th}^r,\th_0)=o_p(1)$. It again follows that $ \, \W_{NT}(\hat{\th}^r)=O_p(L_{NT}^{-2})$. Now, we consider $\E[\,\M_{NT}^*(\hat{\th}^\ell)\,]-\E[\,\M_{NT}^*(\hat{\th}^r)\,]$. Similar to the proof of Lemma \ref{lem:consist_1} and by Assumption 3, we have 
\[
|\E[\,\M_{NT}^*(\hat{\th}^\ell)\,]| \lesssim d^2(\hat{\th}^\ell,\th_0) \text{ and } |\E[\,\M_{NT}^*(\hat{\th}^r)\,]| \lesssim d^2(\hat{\th}^r,\th_0).
\]
Then,
\[
\left|\E[\,\M_{NT}^*(\hat{\th}^\ell)\,]-\E[\,\M_{NT}^*(\hat{\th}^r)\,]\right| \leq \left|\E[\,\M_{NT}^*(\hat{\th}^\ell)\,]\right|+\left|\E[\,\M_{NT}^*(\hat{\th}^r)\,]\right| \lesssim d^2(\hat{\th}^\ell,\th_0)+d^2(\hat{\th}^r,\th_0), 
\]
and therefore, $\E[\,\M_{NT}^*(\hat{\th}^\ell)\,]-\E[\,\M_{NT}^*(\hat{\th}^r)\,]=O_p(L_{NT}^{-2})$. Now, from (\ref{eq:thm3start2}), we have $\M_{NT}(\hat{\th}^{\ell})-\M_{NT}(\hat{\th}^r)=O_p(L_{NT}^{-2})$. 
Note that if $\hat{r}=\ell$,
\[
(\ell-r)\cdot P_{NT}<\M_{NT}(\hat{\th}^{r})-\M_{NT}(\hat{\th}^\ell).
\]
It implies that 
\[
P(\hat{r}=\ell) \leq P\left[(\ell-r)\cdot P_{NT}L_{NT}^2<L_{NT}^2(\M_{NT}(\hat{\th}^{r})-\M_{NT}(\hat{\th}^\ell))\right].
\]
Hence, $P(\hat{r}=\ell)  \rightarrow 0$ as $N,~T \rightarrow \infty$ for all $\ell<r$. Therefore, $P(\hat{r}<r)\rightarrow 0$ as $N,~T \rightarrow \infty$.

It follows from the results from the two cases that $P(\hat{r}=r)\rightarrow 1$ as $N,~T \rightarrow \infty$. 

\vspace{5mm}

\noindent \textbf{Proof of Theorem 4:}
By (\ref{eq:eigen for <r}) in the proof of Lemma \ref{lem:facnum_1}, $|\hat{\sigma}^{k^*}_j(N)-\nu_j(N)| \lesssim d(\hat{\th}^s,\th_0)$ for $j=1,\ldots,r$, where $\nu_j(N)$ is the $j$th eigenvalue of $\frac{\Lambda_{k^*}^{0\p}{\Lambda_{k^*}^{0}}}{\sqrt{N}}$. Thus, by Lemma \ref{lem:facnum_3}, $|\hat{\sigma}^{k^*}_j(N)-\nu_j(N)|=o_p(1)$. Also, since $|\nu_{j}(N)-\nu_j|=o(1)$ by Assumption 1(b), we have $|\hat{\sigma}^{k^*}_j(N)-\nu_j|=o_p(1)$ for $j=1,\ldots,r$. Then,
\[
\begin{aligned}
P(\hat{r}<r) &= P(\underset{k}{\max} \sum\limits_{j=1}^s I(\hat{\sigma}^{k}_{j}(N)>\kappa_{NT})<r) \\
& \leq P(\sum\limits_{j=1}^s I(\hat{\sigma}^{k^*}_j(N)>\kappa_{NT})<r))\\
& \leq P(\hat{\sigma}^{k^*}_r(N) \leq \kappa_{NT}).
\end{aligned}
\]
Also, if $\hat{\sigma}^{k^*}_r(N) \leq \kappa_{NT}$, then $\nu_{r}  \leq \kappa_{NT}+ |\hat{\sigma}^{k^*}_r(N)-\nu_r|$. Therefore, $P(\hat{r}<r) \leq P(\nu_r  \leq \kappa_{NT}+ |\hat{\sigma}^{k^*}_r(N)-\nu_r|)$. 
Since $\kappa_{NT}$ and $|\hat{\sigma}^{k^*}_r(N)-\nu_r|$ are $o_p(1)$,
\begin{equation} \label{eq:rhat<r}
    P(\hat{r}<r) \rightarrow 0 \text{ as } N,~T \rightarrow \infty.
\end{equation}
Next, by Lemma \ref{lem:facnum_3}, $d(\hat{\th}^s,\th_0)=O_p(L_{NT}^{-1})$. Fix $k \in \{1,\ldots, K\}$.\\ 
Define $A = \begin{bmatrix}
    \frac{\Lamzp \Lamz}{N} & 0 \\
     0 & 0 \\
\end{bmatrix} \in \mathbb{R}^{s \times s}$ and $D_{NT}=\frac{\hat{\Lam}_{k}^{s\p}\hat{\Lam}_{k}^{s}}{N}-A$. 
By Weyl's theorem for the Hermitian matrix,
\begin{equation} \label{eq:sigma<DNT}
    \big|\hat{\sigma}^{k}_{r+1}(N)-0\big|=\Big|\rho_{r+1}\Big(\frac{\hat{\Lam}_k^{s\p} \hat{\Lam}_k^s}{N}\Big)-\rho_{r+1}(A)\Big| \leq  \max\big[\,|\rho_1(D_{NT})|, |\rho_s(D_{NT})|\,\big] \leq \norm{D_{NT}}.
\end{equation}
Note that by Lemma \ref{lem:facnum_1},
\begin{equation} \label{eq:DNT<Op}
    \begin{aligned} 
\norm{D_{NT}} &\leq \norm{\frac{(\hat{\Lam}_k^{s,r})\p \hat{\Lam}_k^{s,r}}{N}-\frac{\Lamzp\Lamz}{N}}+\norm{\frac{(\hat{\Lam}_k^{s,r})\p\hat{\Lam}_k^{s,-r}}{N}}+\norm{\frac{(\hat{\Lam}_k^{s,-r})\p\hat{\Lam}_k^{s,r}}{N}}+\norm{\frac{(\hat{\Lam}_k^{s,-r})\p\hat{\Lam}_k^{s,-r}}{N}} \\
& \leq \frac{\norm{\hat{\Lam}_k^{s,r}}}{\sqrt{N}}\cdot \frac{\norm{\hat{\Lam}_k^{s,r}-\Lamz}}{\sqrt{N}} + \frac{\norm{\Lamz}}{\sqrt{N}}\cdot \frac{\norm{\hat{\Lam}_k^{s,r}-\Lamz}}{\sqrt{N}}   \\ 
&~~~~~~~~~~~ +\frac{\norm{\hat{\Lam}_k^{s,r}}}{\sqrt{N}}\cdot \frac{\norm{\hat{\Lam}_k^{s,-r}}}{\sqrt{N}}+\frac{\norm{\hat{\Lam}_k^{s,-r}}}{\sqrt{N}}\cdot \frac{\norm{\hat{\Lam}_k^{s,r}}}{\sqrt{N}}+\bigg(\frac{\norm{\hat{\Lam}_k^{s,-r}}}{\sqrt{N}}\bigg)^2\\
& \lesssim d(\hat{\th}^s,\th_0)=O_p(L_{NT}^{-1}). 
\end{aligned}
\end{equation}
Also, it holds that 
\begin{equation} \label{eq:Prob<Op}
\begin{aligned}
     P(\hat{r}>r) &\leq \sumk P(\sum\limits_{j=1}^s I(\hat{\sigma}^{k}_j(N)>\kappa_{NT})>r)\\
     & \leq \sumk P(\hat{\sigma}^{k}_{r+1}(N)>\kappa_{NT}) \\
    & \leq K \cdot P(\sumk \hat{\sigma}^{k}_{r+1}(N) >\kappa_{NT})\\
    &=K\cdot P\Big(L_{NT} \sumk \hat{\sigma}^{k}_{r+1}(N)>L_{NT}\kappa_{NT}\Big).
\end{aligned}
\end{equation}
By (\ref{eq:sigma<DNT}) and (\ref{eq:DNT<Op}), $\sumk \hat{\sigma}^{k}_{r+1}(N) = O_p(L_{NT}^{-1})$ for all $k$. Then, it follows from (\ref{eq:Prob<Op}) that 
\begin{equation} \label{eq:rhat>r}
    P(\hat{r}>r) \rightarrow 0 \text{ as } N,~T \rightarrow \infty.
\end{equation}
Finally, the desired result is followed by (\ref{eq:rhat<r}) and (\ref{eq:rhat>r}). 


\end{document}